\documentclass[11pt]{article}
\usepackage{amsmath}
\usepackage{amssymb}
\usepackage{amsthm}
\usepackage[backend=biber,style=alphabetic,doi=false]{biblatex}
\usepackage{url}
\usepackage{hyperref}
\usepackage{xcolor}
\usepackage{dsfont}
\usepackage{float}
\usepackage{mathrsfs}
\usepackage{yfonts}
\newtheorem{definition}{Definition}
\newtheorem*{example}{Example}

\newtheorem{remark}[definition]{Remark}
\newtheorem{theorem}[definition]{Theorem}
\newtheorem{lemma}[definition]{Lemma}
\newtheorem{corollary}[definition]{Corollary}

\usepackage{graphicx}
\usepackage{caption}
 \usepackage{subcaption}
 \usepackage{verbatim}
\usepackage[title]{appendix}
\usepackage{upgreek}
\usepackage{tikz}
\usepackage{quantikz}
\usetikzlibrary{shapes.geometric,plotmarks,backgrounds,fit,calc,circuits.ee.IEC}
\usetikzlibrary {arrows.meta}
\usetikzlibrary{patterns}
\usetikzlibrary{decorations.pathreplacing}

\usepackage{geometry}
\bibliography{references}
\newcommand{\bL}{\mathbf{L}}

\newcommand{\bF}{\mathbb{F}}
\newcommand{\bC}{\mathbb{C}}

\newcommand{\cC}{{\cal C}}
\newcommand{\cD}{{\cal D}}
\newcommand{\cE}{{\cal E}}

\newcommand{\cG}{{\cal G}}

\newcommand{\cI}{{\cal I}}

\newcommand{\cN}{{\cal N}}

\newcommand{\cT}{{\cal T}}
\newcommand{\cU}{{\cal U}}
\newcommand{\cV}{{\cal V}}
\newcommand{\cW}{{\cal W}}

\newcommand{\cZ}{{\cal Z}}

\newcommand{\bA}{\mathbf{A}}

\newcommand{\be}{\mathbf{e}}

\newcommand{\bm}{\mathbf{m}}

\newcommand{\bu}{\mathbf{u}}
\newcommand{\bv}{\mathbf{v}}

\newcommand{\by}{\mathbf{y}}

\newcommand{\ovr}{\overline{r}}
\newcommand{\on}{\overline{n}}
\newcommand{\om}{\overline{m}}

\newcommand{\cnot}{\mathrm{CNOT}}

\newcommand{\ident}{\mathds{1}}
\newcommand{\dnorm}[1]
{{\ensuremath{\|#1\|_\diamond}}}

\newcommand{\ot}{\overline{T}}

\newcommand{\odelta}{\overline{\delta}}

\usepackage{braket}

\newtheorem{notation}{Notation}

\DeclareMathOperator{\tr}{Tr}
\DeclareMathOperator{\poly}{poly}
\DeclareMathOperator{\pr}{Pr}
\DeclareMathOperator{\loc}{Loc}

\DeclareMathOperator{\wt}{wt}

 \tikzset{
operator/.style = {draw,fill=white,minimum size=1.5em},
operator2/.style = {draw,fill=white,minimum height=3cm},
phase/.style = {draw,fill,shape=circle,minimum size=5pt,inner sep=0pt},
surround/.style = {fill=blue!10,thick,draw=black,rounded corners=2mm},
cross/.style={path picture={ 
\draw[thick,black](path picture bounding box.north) -- (path picture bounding box.south) (path picture bounding box.west) -- (path picture bounding box.east);
}},
crossx/.style={path picture={ 
\draw[thick,black,inner sep=0pt]
(path picture bounding box.south east) -- (path picture bounding box.north west) (path picture bounding box.south west) -- (path picture bounding box.north east);
}},
circlewc/.style={draw,circle,cross,minimum width=0.3 cm},
}
\tikzset{meter/.append style={draw, inner sep=8, rectangle, font=\vphantom{A}, minimum width=25, line width=.8,
 path picture={\draw[black] ([shift={(.1,.3)}]path picture bounding box.south west) to[bend left=50] ([shift={(-.1,.3)}]path picture bounding box.south east);\draw[black,-latex] ([shift={(0,.1)}]path picture bounding box.south) -- ([shift={(.3,-.1)}]path picture bounding box.north);}}}

\title{Fault-tolerant interfaces for quantum LDPC codes}

\author{Matthias Christandl$^1$, Omar Fawzi$^2$, and Ashutosh Goswami$^{1}$ \\[2mm]
    {\small $^1$Department of Mathematical Sciences, University of Copenhagen, Denmark} \\
    {\small $^2$Universit\'e de Lyon, Inria, ENS de Lyon, UCBL, LIP, France}\\
  {\small christandl@math.ku.dk, omar.fawzi@ens-lyon.fr, akg@math.ku.dk}}
\date{}

\date{}
\begin{document}

\maketitle

\begin{abstract}
The preparation of a quantum state using a noisy quantum computer (gate noise strength $\delta$), will necessarily affect an O($\delta$)-fraction of the qubits, no matter which protocol is used. Here, we show that fault-tolerant quantum state preparation can be achieved with constant space overhead improving on previous constructions requiring polylogarithmic overhead.

To achieve this, we add to the toolbox of fault-tolerant schemes for circuits with quantum input and output. More specifically, we construct fault-tolerant interfaces that decrease the level of protection for quantum low-density parity-check (LDPC) codes. When information is encoded in multiple code blocks, our interfaces have constant space overhead. 

In our decoder construction that change the level of protection by an arbitrary amount, we circumvent bottlenecks to error pileup and overhead by gradual lowering of the level of encoding at the same time as we increase the number of blocks on which decoding is carried out simultaneously. 
\end{abstract}

\tableofcontents

\section{Introduction}
Fault-tolerant quantum computation is typically motivated from the perspective of realizing quantum computations that have classical input and output, e.g. in the computation of a function. One of the main results in the field of fault-tolerant quantum computation is the \emph{threshold theorem}, which states that any quantum circuit with classical input and output can be performed reliably, using a fault-tolerant circuit, provided that the noise rate is below a constant, i.e., an error rate not depending on the number of locations (gates) in the (original) circuit~\cite{aharonov1997fault, kitaev, preskill1998fault, Aliferis2006quantum}. 

\smallskip While the threshold theorem addresses quantum computations with classical input and output, many important quantum-information tasks involve circuits with quantum input/output. Examples include quantum communication protocols, quantum learning tasks, and, more generally, information processing tasks that involve interactions with black boxes. Such black boxes may have inputs and outputs that are quantum and encoded in a way that we do not control. In such settings, the notion of fault-tolerance must be adapted. In particular, a fault-tolerant implementation must preserve the original circuit’s input/output relation in the encoding fixed by the black boxes. As the quantum inputs and outputs are affected by noise, the natural guarantee to aim for is that the fault-tolerant circuit realizes the original one up to a weak noise applied on input and output systems.

A framework for fault-tolerant circuits with quantum input/output was presented in Ref.~\cite{christandl2024fault} based on the fault-tolerance formalism of~\cite{kitaev}. This framework builds on the fault-tolerance literature and unifies and improves on previous specific results such as state preparation circuits~\cite{gottesman2013fault} and preparation and measurement interfaces~\cite{CChMH-FT2022,belzig2023fault}. The framework was used to construct a constant-overhead fault-tolerant scheme for general noise~\cite{christandl2025fault} and to construct a fault-tolerant QRAM~\cite{dalzell2025distillation}.

An important primitive in fault-tolerant quantum input/output is the fault-tolerant preparation of an $n$-qubit quantum state described by a circuit. This primitive was achieved in~\cite{gottesman2013fault, CChMH-FT2022, belzig2023fault, christandl2024fault} using concatenated codes and thus incurs a qubit overhead that grows polylogarithmically with $n$.

\smallskip In this work, we ask  whether it is possible to achieve fault-tolerant realization of state preparation circuits, with constant overhead. It is worth emphasizing that the fault-tolerant circuit must preserve the same input-output relation as the original circuit, i.e, it should prepare the target state up to a weak noise acting on it, rather than an encoded version of that state in a quantum error-correcting code. 

\smallskip We answer this question positively in the model of circuit-level stochastic noise, where the noise acts independently on each gate. The noisy version  of a  gate $g$, under circuit-level stochastic noise with error rate $\delta \in [0, 1]$, realizes the gate $g$ with probability $(1 - \delta)$, while with probability $\delta$, it realizes an arbitrary quantum gate $\tilde{g}$ with same input and output dimension $g$. 
In our first main result, we show that any state preparation circuit can be realized fault-tolerantly up to a local stochastic noise, while incuring only a constant qubit overhead. Our state preparation is according to the following theorem. 

\begin{theorem}[Informal version of Theorem \ref{thm:main-stprep}] \label{thm:main-stprep-informal} 
There exists a threshold value $\delta_{th} > 0$ and a constant $\kappa>0$ such that the following holds. Consider a state preparation circuit $\Phi$, with $x$ qubit output and operating on $O(x)$ qubits and  having size $|\Phi| = \mathrm{poly}(x)$. Then, there exists a quantum circuit $\overline{\Phi}$, with the same input and output systems as $\Phi$ and working on $O(x)$ qubits, such that its noisy realization $\tilde{\cT}_{\overline{\Phi}}$ under circuit level stochastic noise with parameter $\delta < \delta_{th}$ approximately simulates $\Phi$ up to a layer of local stochastic noise  with strength $\delta'=O(\delta^{\kappa})$ applied at the output.
\end{theorem}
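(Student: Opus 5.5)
The plan is to build $\overline{\Phi}$ as a composition of three pieces: an encoded preparation, an encoded computation, and a fault-tolerant decoding interface.

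\textbf{Encoded simulation.} Split the $x$ output qubits into $m$ blocks of $k$ qubits each, with $k = \Theta(\log x)$ or a suitable polylog. Encode each block in a quantum LDPC code with constant rate and linear (or polynomial) distance, such as a quantum expander or good qLDPC code. With constant-overhead fault-tolerant schemes in the style of Gottesman, and the general-noise version in \cite{christandl2025fault}, we then simulate $\Phi$ on the encoded data. Since $\Phi$ has no quantum input, we only need fault-tolerant preparation of encoded ancillas followed by the encoded computation. At the end of this stage, all $x$ logical qubits sit in the LDPC code blocks, the total number of physical qubits is $O(x)$, and the residual error is local stochastic with strength $O(\delta)$. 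The logical failure probability is at most $|\Phi|\, e^{-\Omega(k^{c})}$, which is $1/\poly(x)$ small. This probability is absorbed into the approximate-simulation error.

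\textbf{Decoding interface.} The main new ingredient is an interface that takes LDPC-encoded blocks to bare qubits while leaving only local stochastic noise on the output. Decoding a single block of length $n$ needs a decoding circuit of depth that grows with $n$. Errors accumulated during that circuit would spread, or the ancilla cost per block would break constant overhead. To avoid this, we lower the level of protection gradually, in $L$ stages. At stage $j$, the data lives in codes of length $n_j$, with $n_0 = n > n_1 > \dots > n_L = 1$, for instance $n_{j+1} \approx n_j^{1/2}$ or $n_j/2$ with a geometric schedule. At the same stage, the decoding is performed jointly on a number $b_j$ of blocks that grows correspondingly.

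Each step is itself a code-conversion gadget: a fault-tolerant measurement of the old code's stabilizers combined with teleportation into the new, smaller code. The new code is protected well enough to tolerate the errors made during that step. To analyse this, we use the quantum input/output framework of \cite{christandl2024fault}. Each stage realizes an ideal decoder-then-encoder up to local stochastic noise on its input and output with strength $O(\delta^{\kappa_j})$. Composing the stages is then a telescoping argument, provided the noise parameters stay bounded. We choose the schedule so that $\sum_j$ (overhead of stage $j$) $= O(x)$ and the noise strength stays bounded by $O(\delta^{\kappa})$ uniformly, which requires $\kappa_j$ to stay bounded below.

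\textbf{Main obstacle.} The hardest step is controlling error pileup across the decoding stages. The final unencoded stage necessarily leaves $O(\delta)$ noise per output qubit. At intermediate stages, however, the noise must not compound multiplicatively across the $L = \omega(1)$ levels. First, we would show that each conversion gadget maps local stochastic noise of strength $p$ on its input to strength at most $C(\delta + p^{c})$ on its output, using the fact that the target code still corrects a constant fraction of errors. A fixed-point argument then bounds the noise by $O(\delta^{\kappa})$ at every level. Combining this with the logical failure bound from the first stage yields the theorem.
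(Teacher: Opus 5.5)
Your overall architecture is the paper's: a Gottesman-style encoded simulation in many QLDPC blocks, followed by an interface that teleports each block into smaller codes level by level, while the fraction of blocks processed in parallel grows. The gap is in the error analysis, which is where the real difficulty lies.

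Your recursion $p \mapsto C(\delta + p^{c})$ and its fixed point only control the \emph{physical} noise inside blocks whose conversion succeeded. In the paper this is the easy part: the weight bound $\mu n$ is simply preserved by error correction and by a successful partial interface. The recursion says nothing about \emph{logical} failures of a conversion gadget, such as input noise beyond the correction radius, a failed resource-state preparation, or a wrongly decoded Bell outcome. Such a failure has probability roughly $\poly(m)(c\delta)^{c'm}$ for a block carrying $m$ logical qubits, and it destroys all $m$ of them together with everything downstream of that block.

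At the bottom levels this probability is a constant, while there are $\Theta(x)$ blocks. So these failures occur with certainty, and they cannot be pushed into the approximation error $\epsilon(x)$ as you do for the encoded-simulation stage. You have to show that they themselves form local stochastic noise at the output, even though a single high-level failure is a large correlated event. Your telescoping condition that $\kappa_j$ stay bounded below is also not enough. Composing $L=\omega(1)$ stages that each add local stochastic noise of strength $\delta^{\kappa}$ yields strength of order $L\delta^{\kappa}$, not $O(\delta^{\kappa})$.

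The missing idea is that each level halves the number of logical qubits per block. Hence the failure probability of the level-$r$ partial interface obeys $\tau_r \le \overline{\delta}^{\,2^{r-\ovr}}$ above a fixed cutoff $\ovr$ (Lemma~\ref{lem:choice-ovr}), i.e.\ it decays doubly exponentially in the level. Two consequences follow. A failure whose block feeds $N$ output blocks of level $r'$ has probability at most $\overline{\delta}^{\,N}$, which is compatible with a local stochastic bound despite being correlated. And the contributions of the ancestors of any output block form a convergent series.

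The paper then records failures as block error patterns on a binary tree. Using a recursive union bound that relies on this convergence (Lemmas~\ref{lem:rec} and~\ref{lem:final-upper-bound}), it proves that the set $F$ of corrupted output blocks at a fixed level $r'$ satisfies $\Pr(T\subseteq F)\le (2\overline{\delta})^{|T|}$. Finally, the descent stops at a constant level $r'$, and the last jump to bare qubits is done on all blocks in parallel. Converting block-level to qubit-level stochasticity costs the exponent $\kappa_2 = 1/m_{r'}$. This is where $\delta' = O(\delta^{\kappa})$ with a possibly small $\kappa$ comes from, rather than the $O(\delta)$ your fixed point suggests.

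You also leave open how the resource states for your conversion gadgets are prepared. In the paper this costs $\poly(m)$ qubits per gadget, which is exactly why only a $1/\poly(m)$ fraction of blocks can be processed at a time. Relatedly, the relevant overhead is the maximum over stages, not a sum.
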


\smallskip In order to establish this result, we consider quantum low-density parity-check (QLDPC) codes. The reason for this is as follows: first, they encode logical qubits at constant rate, \emph{i.e.}, with a constant overhead, and  have  good error correction performance and efficient decoders~\cite{tillich2013quantum, leverrier2015quantum, fawzi2018efficient,  breuckmann2021balanced, breuckmann2021quantum}. Moreover, there are families of QLDPC codes with a minimum distance scaling linearly with the code length~\cite{leverrier2022quantum, panteleev2022asymptotically}. Second, a constant overhead scheme for fault-tolerant computation has been constructed using QLDPC codes~\cite{gottesman2013fault, fawzi2020constant, christandl2025fault}. Moreover, progress in recent years have made QLDPC codes a leading path towards fault-tolerant quantum computing~\cite{yoder2025tour, he2025extractors, cross2024improved, xu2025batched, bonilla2025constant, webster2025explicit, webster2026thepinnacle}.

\smallskip Concretely, we construct a decoding interface circuit, which fault-tolerantly maps logical information encoded in multiple blocks of a QLDPC code into bare physical qubits, up to a local stochastic noise. Moreover, if the number of blocks is sufficiently large, the qubit overhead of the decoding interface circuit is constant. Our decoding interface is according to the following theorem.

\begin{theorem}[Informal version of Theorem \ref{thm:ft-cons-int-main}] \label{thm:ft-cons-int-main-inf}
For certain families of QLDPC codes $C_r, r= 1, 2, \dots$, where $\cC_r$ encodes $m_r$ logical qubits in $n_r$ physical qubits, the following holds. There exists a polynomial $p(\cdot)$, constants $\mu, \kappa > 0$ and a threshold value $\delta_{th} > 0$, such that for all encoding levels $r$, quantum information encoded in $h$ blocks of $C_r$ can be decoded into bare physical qubits with a decoding interface circuit affected by circuit-level stochastic noise of strength $\delta$. The error on the output is local stochastic channel with strength $\delta'=O(\delta^{\kappa})$ if the input has at most $\mu n_r$ errors in each block. The qubit overhead of the decoding interface is constant for $h \geq p(m_r)$.
\end{theorem}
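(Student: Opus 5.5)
The plan follows the strategy sketched in the abstract. We lower the encoding level gradually, from $r$ down to $0$ (bare qubits), and at each step increase the number of blocks decoded simultaneously.

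\textbf{The code family.} I would take $C_r$ to be a family in which each level is built from the previous one, for instance a concatenation-like or tower construction. The requirement is that a block of $C_r$ can be split into blocks of $C_{r-1}$ by a constant-depth (or at least LDPC-compatible) Clifford circuit, up to teleportation through encoded ancillas. The basic primitive is then a \emph{one-level decoding interface} $D_r$. It takes $h_r$ blocks of $C_r$, holding $h_r m_r$ logical qubits, and outputs $h_{r-1} = h_r m_r / m_{r-1}$ blocks of $C_{r-1}$ carrying the same logical information.

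\textbf{Implementing $D_r$.} I would realize $D_r$ by logical teleportation. First, prepare encoded Bell pairs between $C_r$ and $C_{r-1}$, using the fault-tolerant state preparation gadgets for QLDPC codes with constant overhead (as in \cite{gottesman2013fault, fawzi2020constant, christandl2025fault}). Next, perform transversal Bell measurements on the $C_r$ side. Finally, apply Pauli corrections on the $C_{r-1}$ side, obtained from a classical decoder of $C_r$ applied to the measurement outcomes. The analysis would use the framework of \cite{christandl2024fault}. Assume the input has at most $\mu n_r$ errors per block. Because $C_r$ has linear (or sufficiently large) distance and a single-shot or efficient decoder, the logical outcome is correct except with probability $\exp(-\Omega(n_r))$ per block. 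The output blocks of $C_{r-1}$ then carry a local stochastic error whose strength is governed by $\delta$, not by the accumulated input error. This is the mechanism that prevents error pileup: each level refreshes the noise to $O(\delta)$ per location.

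\textbf{Composition.} Compose $D_r, D_{r-1},\dots,D_1$ to obtain the full interface. At level $0$ there are bare qubits with local stochastic noise of strength $O(\delta^{\kappa})$. The exponent $\kappa$ comes from converting circuit-level noise into local stochastic noise on the output, via the standard conversion of the pieces of the teleportation circuit that touch each output qubit.

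\textbf{Overhead.} The failure probability of level $j$ is about $h_j e^{-\Omega(n_j)}$. Since $h_j \le h\, m_r/m_j$ grows polynomially while failures are exponentially unlikely in $n_j$, a union bound over all levels is controlled for $j$ above a constant level. The bottom constant levels are handled as local stochastic noise. Each level uses qubits proportional to its input size. If $n_j$ grows geometrically, so that the total is $\sum_j h_j n_j = O(h n_r)$, the overhead is constant. The extra ancillas needed for Bell-pair preparation and classical decoding are amortized when $h \ge p(m_r)$. The polynomial $p$ is set by the size of the state preparation gadget for the encoded Bell pairs, which is $\poly(m_r)$ per batch.

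\textbf{Main obstacle.} I expect the hardest step to be showing that the noisy encoded Bell-pair preparation for mixed levels $C_r \otimes C_{r-1}$ is itself fault-tolerant with constant overhead. The error model on its output must be local stochastic in a form compatible with the $\mu n_r$ input assumption of the next level. My first approach would be to prepare the pairs via the constant-overhead scheme at level $r$, followed by a recursive application of the interface on only one half. This requires a careful induction on $r$ in which the hypothesis states the output noise strength uniformly in $r$.
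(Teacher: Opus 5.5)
Your proposal has two genuine gaps. The first is that it never actually achieves constant overhead. You identify the constant-overhead fault-tolerant preparation of the mixed-level resource $(V_r\otimes V_{r-1}^{\otimes 2})$ applied to $m_r$ Bell pairs as the main obstacle, and you leave it open. The recursion you sketch for it is circular: converting a $C_r$--$C_r$ Bell pair into a $C_r$--$C_{r-1}$ pair by ``applying the interface to one half'' requires exactly the one-level interface $D_r$ whose resource state you are trying to prepare. The constant-overhead schemes you cite also only give an additive error that vanishes polynomially in the system size. The error analysis instead needs a per-block failure probability exponentially small in $m_r$. The paper needs no such gadget. It prepares the resource with the concatenated-code scheme of Theorem~\ref{thm:state-prep} with $k=O(m_r)$, which costs $\mathrm{poly}(m_r)$ qubits per block and fails with probability $(c\delta)^{\Omega(m_r)}$. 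It keeps the width at $\theta p(m_r)m_r+\theta' m_r h$ by applying $\Gamma_{r'',r''-1}$ to only a $1/(\theta p_1(m_{r''}))$ fraction of the blocks at a time, while all other blocks undergo error correction. Your phrase ``amortized over batches'' hides the real difficulty, which is the waiting time of blocks sitting in memory. With this level-dependent fraction, the number of sequential rounds at level $r''$ is $\mathrm{poly}(m_{r''})$, independent of $h$, so the accumulated memory failure $\mathrm{poly}(m_{r''})(c\delta)^{c'n_{r''}}$ stays small. The fraction grows as the level drops, and the descent stops at a fixed level $r'\ge\overline{r}$, where all blocks are decoded in parallel by $\Gamma_{r',1}$ at constant cost. (Incidentally, $\sum_j h_j n_j$ is not $O(hn_r)$: every level holds $\Theta(hn_r)$ qubits. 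This is harmless only because the levels act one after another.)

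The second gap is that a union bound cannot produce the stated conclusion. The output must be exactly a local stochastic channel with parameter independent of $h$ and $r$. A term $\epsilon\cdot(\text{arbitrary channel on all } hm_r \text{ qubits})$ is in general compatible with parameter $\delta'$ only if $\epsilon\le\delta'^{\,hm_r}$. Moreover, $\sum_j h_j e^{-\Omega(n_j)}$ is not small for any fixed cutoff level, because the number of blocks just above the cutoff is $\Theta(h2^{r})$, which is unbounded. Raising the cutoff to $\Theta(\log(hm_r))$ makes the bottom codes grow with $h$ and $r$, so neither the final noise strength nor $\kappa$ stays constant. The missing ingredient is structural. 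A failure of $\Gamma_{r-y,r-y-1}$ on a given block corrupts exactly its $2^{z-y}$ descendant output blocks, where $z=r-r'$. By Lemma~\ref{lem:choice-ovr}, it occurs with probability at most $\overline{\delta}^{\,2^{z-y}}$, which is doubly exponentially small in the height because $m$ doubles per level. So large correlated failures are exactly as rare as a local stochastic model demands. Failures that share ancestors are correlated. The paper controls this with the recursive inclusion-probability bound on the binary tree (Lemmas~\ref{lem:rec} and~\ref{lem:final-upper-bound}), which gives $\Pr(T\subseteq F)\le(2\overline{\delta})^{|T|}$ for sets of output blocks. It then composes with $\Gamma_{r',1}$ and passes from blocks to qubits. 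In particular, $\kappa=1/m_{r'}$ arises because a corrupted bottom block affects $m_{r'}$ qubits at once, not from converting circuit noise inside the teleportation circuit as you suggest.
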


\smallskip We emphasize that our methods to obtain Theorem~\ref{thm:main-stprep-informal} and Theorem~\ref{thm:ft-cons-int-main-inf} are significantly different from methods used in~\cite{gottesman2013fault} to obtain constant overhead fault-tolerant quantum computing.  In fact, a straightforward generalization of the protocols in in~\cite{gottesman2013fault} will not lead to a fault-tolerant decoding interface. To highlight this difference, we below first describe the construction due to~\cite{gottesman2013fault} and outline the challenges in generalizing it to state preparation circuits. We then present our own construction and highlight the main innovations.

\paragraph{Description of Gottesman's constant overhead scheme for quantum computation:}  The fault-tolerant construction in~\cite{gottesman2013fault} considers sequential circuits, such that at any layer only one or a few non-trivial gates are applied on a selected set of logical qubits, while the idle gate is applied on the remaining logical qubits. Any quantum circuit $\Phi$, operating on $x'$ qubits, can be turned into a sequential circuit $\Phi_{\mathrm{seq}}$ by incurring only a $O(x')$ overhead in the size (i.e., total number of locations) and depth of the circuit. We note that this sequentialization is key to ensuring constant overhead.

\smallskip Consider the sequential circuit $\Phi_{\mathrm{seq}}$, operating on $x'$ logical qubits. Consider a QLDPC code family $\cC_r, r= 1, \dots$, encoding $m_r$ logical qubits in $n_r$ physical qubits such that the encoding rate $\frac{m_r}{n_r}$ is constant. We first divide $x'$ logical qubits into $h = \frac{x'}{m_r}$ blocks, each containing $m_r$ logical qubits. For the fault-tolerant circuit $\Phi_{\mathrm{FT}}$, we encode each block of $m_r$ logical qubits in a code block of QLDPC code $\cC_r$, containing $n_r$ physical qubits. We replace a layer of gates in $\Phi_{\mathrm{seq}}$ by a logical layer in $\Phi_{\mathrm{FT}}$  as follows:

\smallskip If the idle gate is applied on a block in $\Phi_{\mathrm{seq}}$, we apply the error correction on the corresponding block in $\Phi_{\mathrm{FT}}$. If a non-trivial gate is applied on some of the qubits in a block in $\Phi_{\mathrm{seq}}$, we apply the corresponding logic gate on the corresponding block in $\Phi_{\mathrm{FT}}$. A logic gate on a QLDPC code block is applied using \emph{gate teleportation} technique on the corresponding logical block~\cite{gottesman2013fault, fawzi2020constant, grospellier2019constant}. 

\smallskip The error correction circuit for QLDPC codes operates on $O(m_r)$ qubits; therefore, the qubit overhead of $\Phi_{\mathrm{EC}^r}$ is constant for the blocks on which error correction is applied. However, for the logic gate implementation using gate teleportation, the overhead is not constant with respect to the block on which it is applied. But since we only need one non-trivial gate, by choosing $m_r$ to be sub-linearly growing with $x'$, for example $m_r = \sqrt{x'}$, the total overhead can be made constant with respect to the total number of qubits $x'$.

\paragraph{Challenges in generalizing constant overhead scheme to state preparation:} 
As mentioned before, we need a decoding interface to extend constant overhead scheme to state preparation. We note that in principle, a teleportation based decoding interface can be obtained using teleportation as depicted in Fig~\ref{fig:dec-tele}.  Let  $V_r$ be the encoding isometry of $\cC_r$ mapping $m_r$ qubits into $n_r$ qubits. One can teleport any code state of $\cC_r$ out of the code space by using the entangled state 
\begin{equation} \label{eq:ent-state-int}
  \ket{\Psi}_{r} := (V_r \otimes \ident_{m_r}) \left(\frac{\ket{0} \ket{0} + \ket{1} \ket{1}}{\sqrt{2}}\right)^{\otimes m_r},   
\end{equation}
where $V_r$ acts on $m_r$ qubits corresponding to the first half of $m_r$-EPR pairs.

\begin{figure}
\centering
\begin{tikzpicture}[thick,>=stealth]

  \tikzstyle{qubit}=[line width=0.8pt];
  \tikzstyle{cbit}=[double, line width=0.6pt];

  \def\ypsi{1.0}
  \def\yEPRa{0.2}
  \def\yEPRb{-0.6}

  \draw[qubit] (-1,\ypsi) node[left] {$\ket{\psi_r}$} -- (0,\ypsi);

  \draw[qubit] (-1,\yEPRa) -- (0,\yEPRa);
  \draw[qubit] (-1,\yEPRb) -- (0,\yEPRb);

  \draw (-1,\yEPRa) -- (-1.4,-0.2);
  \draw (-1,\yEPRb) -- (-1.4,-0.2);
  \node[left] at (-1.4,-0.2) {$\ket{\Psi_r}$};

  \draw (0,-0.1) rectangle (2,1.3);
  \node at (1,0.6) {Bell meas.};

  \draw[qubit] (0,\ypsi) -- (0,\ypsi);
  \draw[qubit] (0,\yEPRa) -- (0,\yEPRa);

  \draw[qubit] (0,\yEPRb) -- (5.0,\yEPRb);

  \draw[cbit] (2,\ypsi)  -- (3.0,\ypsi);
  \draw[cbit] (2,\yEPRa) -- (3.0,\yEPRa);

  \draw (3.0,\yEPRa-0.2) rectangle (4.3,\ypsi+0.2);
  \node at (3.65,0.6) {\scriptsize classical};
  \node at (3.65,0.35) {\scriptsize proc.};

  \draw[cbit] (4.3,\ypsi)  -- (5.4,\ypsi);
  \draw[cbit] (4.3,\yEPRa) -- (5.4,\yEPRa);

  \fill (5.4,\ypsi)  circle (1.5pt);
  \fill (5.4,\yEPRa) circle (1.5pt);

  \draw[cbit] (5.4,\ypsi) -- (5.4,\yEPRa);
  \draw[cbit] (5.4,\yEPRa) -- (5.4,\yEPRb+0.4);

  \draw (5.0,\yEPRb-0.4) rectangle (5.8,\yEPRb+0.4);
  \node at (5.4,\yEPRb) {$P$};

  \draw[qubit] (5.8,\yEPRb) -- (7.0,\yEPRb);

\end{tikzpicture}

\caption{
Teleportation-based decoding interface. A code state is teleported out of the code space by performing a logical Bell measurement between the input state $\ket{\psi_r}$ and one half of the entangled  resource state $\ket{\Psi_r}$ as defined in Eq.~\eqref{eq:ent-state-int}. Based on the two classical outcomes of the Bell measurement a Pauli correction $P$ is
applied to the remaining half of $\ket{\Psi_r}$.
}
\label{fig:dec-tele}
\end{figure}
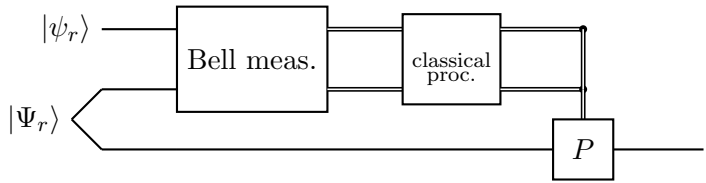
Two main challenges in generalizing constant overhead scheme to decoding interface are: 
\begin{itemize}
    \item[(i)]  the teleportation-based interface is not fault-tolerant itself,

    \item[(ii)] the sequentialization procedure is not robust for the decoding interface.
\end{itemize}
We note that teleportation involves logical Bell measurement on two code blocks of the code $\cC_r$ (see Fig.~\ref{fig:dec-tele}). Logical Bell measurements corresponds to transversal Bell measurements on physical qubits of the two code blocks~\cite[Chapter 6]{grospellier2019constant}. The outcome of the physical Bell measurements are further classically processed to get the outcome of logical Bell measurement, based on which a Pauli correction is applied on the teleported, unencoded state. The main obstacle here is that the time complexity of this classical processing depends on the code length $n_r$ of $\cC_r$. For QLDPC codes, the algorithms for classical processing 
scales $O(\mathrm{polylog}(n_r))$~\cite{fawzi2018efficient, grospellier2019constant, gu2024single, leverrier2023decoding}. This implies that teleported unencoded qubits need to wait for $O(\mathrm{polylog}(n_r))$ time steps. Since there is no active error correction on the unencoded qubits, the quantum information will eventually be garbled up as $r \to \infty$. We emphasize that for a fixed $r$, the waiting time is not a problem as long $\delta$ is sufficiently small; the problem arises only when considering a sequence of codes $\cC_r, r=1, 2, \dots$, and a fixed error rate $\delta$.

\smallskip We note that the teleportation based interface does not have a constant overhead due to fault-tolerant preparation of the state $\ket{\Psi_r}$, which is done using concatenated codes as in gate teleportation.  
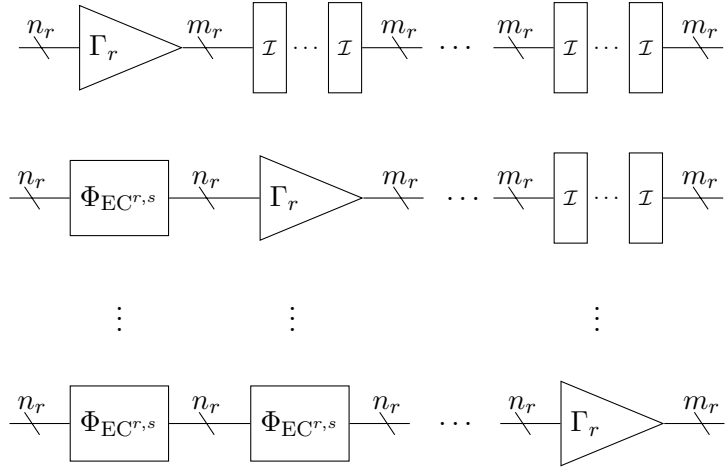
\begin{figure}[!t]
    \centering
\begin{tikzpicture}

\draw
(-0.2, 0) node[isosceles triangle, draw, minimum size = 1cm](a){$\Gamma_r$}
(0, -2) node[draw, minimum size = 1cm](b){$\Phi_{\mathrm{EC}^{r, s}}$}

(0, -5) node[draw, minimum size = 1cm](c){$\Phi_{\mathrm{EC}^{r, s}}$}
;

\draw
(a.west) to ++(-0.8, 0)
($(a.west) +(-0.6, 0.15)$) to node[above](){$n_r$} ($(a.west) +(-0.4, -0.15)$)
(b.west) to ++(-0.8, 0)
($(b.west) +(-0.6, 0.15)$) to node[above](){$n_r$} ($(b.west) +(-0.4, -0.15)$)
(c.west) to ++(-0.8, 0)
($(c.west) +(-0.6, 0.15)$) to node[above](){$n_r$} ($(c.west) +(-0.4, -0.15)$)
($0.5*(b) + 0.5*(c)$) node[](){$\vdots$} 
;

\draw
(2.0, 0) node[draw, minimum height = 1.2cm] (d){{\scriptsize $\cI$}}

(3, 0) node[draw, minimum height = 1.2cm] (e){{\scriptsize $\cI$}}

(2.2, -2) node[isosceles triangle, draw, minimum size = 1cm](f){$\Gamma_r$}

(2.4, -5) node[draw, minimum size = 1cm](g){$\Phi_{\mathrm{EC}^{r, s}}$}
;

\draw
(a.east) to (d.west)
($(a.east) +(0.2, 0.15)$) to node[above](){$m_r$} ($(a.east) +(0.4, -0.15)$)
(b.east) to (f.west)
($(b.east) +(0.4, 0.15)$) to node[above](){$n_r$} ($(b.east) +(0.6, -0.15)$)
(c.east) to (g.west)
($(c.east) +(0.4, 0.15)$) to node[above](){$n_r$} ($(c.east) +(0.6, -0.15)$)

($0.5*(d) + 0.5*(e)$) node[](){{\scriptsize$\dots$}}
($0.5*(f) + 0.5*(g)$) node[](){$\vdots$}

(e.east) to ++(0.8, 0)
($(e.east) +(0.4, 0.15)$) to node[above](){$m_r$} ($(e.east) +(0.6, -0.15)$)
(f.east) to ++(0.8, 0)
($(f.east) +(0.4, 0.15)$) to node[above](){$m_r$} ($(f.east) +(0.6, -0.15)$)
(g.east) to ++(0.8, 0)
($(g.east) +(0.4, 0.15)$) to node[above](){$n_r$} ($(g.east) +(0.6, -0.15)$)
;

\draw
(6, 0) node[draw, minimum height = 1.2cm] (h){{\scriptsize $\cI$}}

(7, 0) node[draw, minimum height = 1.2cm] (i){{\scriptsize $\cI$}}

(6, -2) node[draw, minimum height = 1.2cm] (j){{\scriptsize $\cI$}}

(7, -2) node[draw, minimum height = 1.2cm] (k){{\scriptsize $\cI$}}

(6.2, -5) node[isosceles triangle, draw, minimum size = 1cm](l){$\Gamma_r$}
;

\draw
(h.west) to ++(-0.8, 0)
($(h.west) +(-0.6, 0.15)$) to node[above](){$m_r$} ($(h.west) +(-0.4, -0.15)$)
(j.west) to ++(-0.8, 0)
($(j.west) +(-0.6, 0.15)$) to node[above](){$m_r$} ($(j.west) +(-0.4, -0.15)$)
(l.west) to ++(-0.8, 0)
($(l.west) +(-0.6, 0.15)$) to node[above](){$n_r$} ($(l.west) +(-0.4, -0.15)$)

($0.5*(h) + 0.5*(i)$) node[](){{\scriptsize$\dots$}}
($0.5*(j) + 0.5*(k)$) node[](m){{\scriptsize$\dots$}}
($0.5*(m) + 0.5*(l)$) node[](m){$\vdots$}

($0.5*(e) + 0.5*(h)$) node[](){$\dots$}
($0.5*(f) + 0.5*(j) + (0.5, 0)$) node[](){$\dots$}
($0.5*(g) + 0.5*(l) + (0.2, 0)$) node[](){$\dots$}

(i.east) to ++(0.8, 0)
($(i.east) +(0.4, 0.15)$) to node[above](){$m_r$} ($(i.east) +(0.6, -0.15)$)
(k.east) to ++(0.8, 0)
($(k.east) +(0.4, 0.15)$) to node[above](){$m_r$} ($(k.east) +(0.6, -0.15)$)
(l.east) to ++(0.8, 0)
($(l.east) +(0.4, 0.15)$) to node[above](){$m_r$} ($(l.east) +(0.6, -0.15)$)

;

\end{tikzpicture}
\caption{The figure represents the construction of a constant overhead interface using sequential implementation of a non-constant overhead interface $\Gamma_r$. Each horizontal wire represents either a block of $n_r$ qubits or a block of $m_r$ qubits as written on top of the wire. In the first layer, $\Gamma_r$ is applied on the first wire mapping $n_r$ qubits to $m_r$ qubits. While $\Gamma_r$ is being applied on the first wire, error correction steps are applied on the remaining wires. On the second layer, $\Gamma_r$ is applied on the second wire, while on the first wire multiple layers of idle gate is applied, and on the wires $i = 3, 4, \dots$ correction steps are applied. Similarly, on $j^{th}$ layer for $j \leq h$, where $h$ is the total number of wires,  $\Gamma_r$ is applied on the $j^{th}$ wire, and on all the wires before it $i = 1, \dots, j-1$, idle gates are applied and on all the wires after it $i = j+1, \dots, h$ error correction steps are applied.}
\label{fig:con-ovr-intf}
\end{figure}

We now consider challenge $\mathrm{(ii)}$.

\smallskip \noindent We suppose that there is an interface circuit $\Gamma_{r}$ (not necessarily using teleportation) that maps logical information encoded in $\cC_r$ onto bare physical qubits fault-tolerantly (i.e., up to a weak noise on the output), but requires a qubit overhead that grows with $n_r$, i.e., the number of physical qubits in $\cC_r$. Suppose further that we want to upgrade the interface to a constant overhead interface using sequentialization technique used for logic gate implementation in Gottesman's scheme.

\smallskip As depicted in Fig.~\ref{fig:con-ovr-intf}, we consider $h$ code blocks of $\cC_r$, and apply the interface $\Gamma_r$ on only one (or a few) of the chosen blocks in a layer, and continue in this way until the interface has been applied on all $h$ blocks. We choose the code parameter $r$ such that $m_r$ grows sub-linearly with respect to $x'$, i.e., the total number of logical qubits. This in turn implies that $h = \tfrac{x'}{m_r}$ grows (i.e., not constant) with respect to $x'$. This sequentialization procedure does not lead to a fault-tolerant decoding interface, as we now explain:

\smallskip Once the decoding interface is applied on a block, the logical qubits therein are not anymore protected by the code. However, such an unprotected block needs to wait until the interface has been applied on the remaining blocks. Since $h$ is growing with $x'$, the unprotected block of $m_r$ qubits needs to wait for $t$ time steps that grows with $x'$ for the remaining blocks to complete. As $t$ is not constant, the logical information encoded in the unprotected block will be completely lost as $x' \to \infty$.

\begin{figure}[t]
  \centering
  \begin{tikzpicture}[x=0.8cm,y=0.9cm,every node/.style={font=\small}]

    \node at (-1.8,0) {$\cC_r$};

    \foreach \i/\x in {0/0,1/2.4,2/4.8,3/7.2} {
      \draw (\x,0) rectangle ++(2,0.6);
    }
    \fill[gray!40] (0,0) rectangle ++(2,0.6);

    \node at (4.0,1.0) {$y=0$};

    \draw[->] (4.0,-0.1) -- (4.0,-0.7)
      node[midway,right,xshift=2pt] {$\Gamma_{r,r-1}$};

    \node at (-1.8,-1.8) {$\cC_{r-1}$};

    \foreach \x in {0,1.1} {
      \draw (\x,-1.8) rectangle ++(0.9,0.5);
    }
    \foreach \x in {2.4,3.5} {
      \draw (\x,-1.8) rectangle ++(0.9,0.5);
    }
    \foreach \x in {4.8,5.9} {
      \draw (\x,-1.8) rectangle ++(0.9,0.5);
    }
    \foreach \x in {7.2,8.3} {
      \draw (\x,-1.8) rectangle ++(0.9,0.5);
    }

    \foreach \x in {0,1.1,2.4,3.5} {
      \fill[gray!40] (\x,-1.8) rectangle ++(0.9,0.5);
    }

    \node at (4.0,-0.9) {$y=1$};

    \draw[->] (4.0,-1.9) -- (4.0,-2.5)
      node[midway,right,xshift=2pt] {$\Gamma_{r-1,r-2}$};

    \node at (-1.8,-3.6) {$\cC_{r-2}$};

    \foreach \x in {0,0.55,1.10,1.65} {
      \draw (\x,-3.6) rectangle ++(0.35,0.5);
    }
    \foreach \x in {2.4,2.95,3.50,4.05} {
      \draw (\x,-3.6) rectangle ++(0.35,0.5);
    }
    \foreach \x in {4.8,5.35,5.90,6.45} {
      \draw (\x,-3.6) rectangle ++(0.35,0.5);
    }
    \foreach \x in {7.2,7.75,8.30,8.85} {
      \draw (\x,-3.6) rectangle ++(0.35,0.5);
    }

    \foreach \x in {0,0.55,1.10,1.65,
                    2.4,2.95,3.50,4.05,
                    4.8,5.35,5.90,6.45} {
      \fill[gray!40] (\x,-3.6) rectangle ++(0.35,0.5);
    }

    \node at (4.0,-2.7) {$y=2$};

    \draw (0,-5.3) rectangle +(0.8,0.5);
    \node[anchor=west] at (1.1,-5.1) {idle / only EC};

    \fill[gray!40] (0,-6.2) rectangle +(0.8,0.5);
    \node[anchor=west] at (1.1,-6.0) {interface $\Gamma_{r-y,r-y-1}$ applied};
  \end{tikzpicture}
  \caption{Constant overhead interface: sequential application of the partial interfaces $\Gamma_{r-y,r-y-1}$ for $y = 0, 1, 2$ and $h = 4$. As the level decreases from $r$ to $r-1$ to $r-2$,  the number of blocks increases $(4, 8, 16)$, and the fraction of blocks on which the interface is applied in parallel (shaded) also grows $(1/4, 1/2, 3/4)$.}
\label{fig:sequential-interface}
\end{figure}
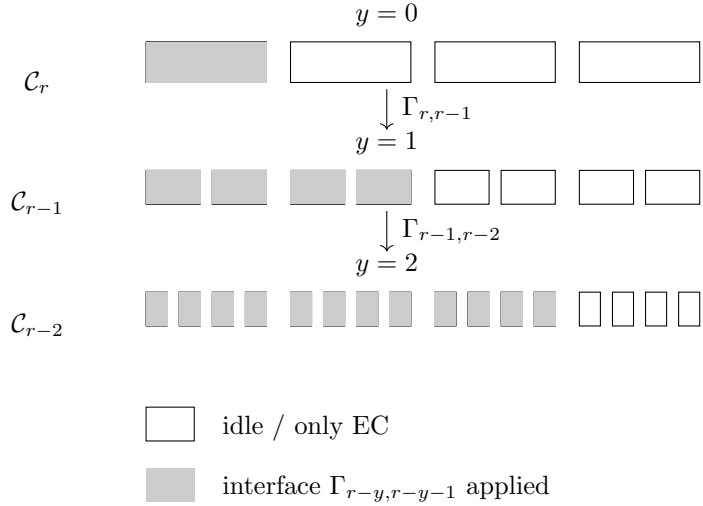

\smallskip We now present our construction.
\paragraph{Fault-tolerant decoding interface with constant overhead:}
Consider $h$ code blocks of $\cC_r$. Instead of mapping the information completely out of a given code block, we consider a partial decoding interface $\Gamma_{r, r-1}$, which maps logical qubits encoded in one block of $\cC_r$ into two blocks of $\cC_{r-1}$. The interface $\Gamma_{r, r-1}$ is based on teleportation; therefore, requires fault-tolerant preparation of an entangled resource state. Hence, $\Gamma_{r, r-1}$ does not have a constant overhead. We note that the interface $\Gamma_{r, r-1}$ avoids challenge (i) from above since the teleported information is still encoded in the code space of $\cC_{r-1}$ on which error correction is applied while the classical processing is done on physical Bell measurements. 
\smallskip To achieve constant overhead, we consider a different sequentialization procedure as explained below (see also Fig.~\ref{fig:sequential-interface}).

\smallskip\noindent Consider $h$ blocks of $\cC_r$. To keep the overhead constant, we choose a small fraction of blocks of $\cC_r$ on which the circuit $\Gamma_{r, r-1}$ is applied in parallel, while performing error correction on the remaining blocks. This procedure is repeated until $\Gamma_{r, r-1}$ has been applied on all $h$ blocks, after which we obtain $2h$ blocks of $\cC_{r-1}$. Since $\Gamma_{r-1, r-2}$ has a smaller qubit overhead than $\Gamma_{r, r-1}$, we may choose a larger fraction of blocks on which to apply $\Gamma_{r-1, r-2}$ in parallel, and repeat it until $\Gamma_{r-1, r-2}$ has been applied on all the blocks.

\smallskip Iterating the above procedure, we decrease the level of QLDPC code through sequence $r, r-1, r-2, \dots$. The fraction of blocks on which $\Gamma_{r - y, r-y-1}$ is applied in parallel increases with $y$; hence the processing time at level $r-y$, i.e., time it takes to apply $\Gamma_{r - y, r-y-1}$ on all $2^{y}h$ blocks, decreases with $y = 0,1, 2, \dots$. This property is crucial for fault-tolerance since the logical error rate of $\cC_{r-y}$ increases with $y$ and the decreasing processing time prevents errors from accumulating excessively.  We stop the iteration $r, r-1, r-2, \dots, $ at some fixed level $r'$. Since $r'$ is fixed,  we can map from level $r'$ to $1$ by applying $\Gamma_{r'}$ on all the blocks in parallel, and only blowing up the overhead by a constant factor.

\smallskip It is worth emphasizing that fault-tolerant state preparation and decoding interface with constant overhead have several potential applications. We remark on its applications in fault-tolerant computation and communication in Section~\ref{sec:apps}.

\paragraph{Structure of manuscript}
The manuscript is structured as follows. After some preliminary notation, which is presented in Section~\ref{sec:prelim}, we discuss properties of QLDPC codes in Section~\ref{sec:QLDPC}. In Section~\ref{sec:mainsection}, we present and discuss our main results on fault-tolerant interfaces. The error analysis of the interfaces will be presented in a separate section,  Section~\ref{sec:err-analysis}. In Section \ref{sec:FT-state-prep}, we derive constant-overhead fault-tolerant state preparation and discuss its applications.

\paragraph{Related work}
We note that in independent concurrent work,  
constructions in the fault-tolerant quantum input/output paradigm have been developed using constant rate concatenated codes \cite{belzig2026constant}.

\section{Preliminaries}\label{sec:prelim}
\paragraph{Notation.} We denote the single-qubit Hilbert space by $\bC^2$. The set of linear operators on $\bC^2$ is denoted by $\bL(\bC^2)$. Quantum channels are denoted by calligraphic letters such as $\cN, \cT, \cV, \cW, \cU$, etc. For any positive integer $n$, we write $[n] := \{1, \dots, n\}$. We shall denote the identity operator as $\ident$ and identity channel as $\cI$.

In the following, we recall the necessary background from quantum codes, noise models in the context of fault-tolerant quantum computing. In the last subsection, we recall a recent statement on fault-tolerant state preparation, with a growing qubit overhead from~\cite{christandl2024fault}.

\subsection{Quantum Codes}

\medskip Let $[n] := \{1, \dots, n\}$ be a set of labels for $n$-qubits. We define weight of operators as follows.
\begin{definition}[Weight of an operator] \label{def:weight-z}
An operator $E \in \bL((\bC^2)^{\otimes n } \otimes R)$, with $R$ being a reference system, is said to have weight $z \leq n$ with respect to the $n$ qubit system if it can be written as linear combination of the operators of following type for a subset $A \subseteq [n]$, $|A| \leq z$
\begin{equation}
   E = (\otimes_{i \in [n]\setminus A} \ident_i) \otimes E'[A] \otimes Z[R],
\end{equation}
where $E'[A]$ and $Z[R]$ are arbitrary operator acting on $A$ on $R$, respectively. The subset $A \subseteq [n]$ can be different for each term in the linear combination. 
\end{definition}
We note that for two copies of a $n$-qubit system, Def.~\ref{def:weight-z} allows to define weight with respect to each copy by considering the other system as part of the reference.
\begin{definition}[Weight of a quantum channel] \label{def:weight}
A quantum channel $\cN$ acting on a $n$-qubit system and a reference system $R$ is said to have weight $z \leq n$ with respect to the $n$-qubit system if there exists a Kraus representation $ \cN = \sum_i E_i (\cdot) E_i^\dagger$, where all the operators $E_i \in \bL((\bC^2)^{\otimes n } \otimes R)$ have weight $z$ with respect to the $n$-qubit system.  
\end{definition}
 
\smallskip We will use the diamond norm as a measure of distance between two superoperators~\cite{kitaev, watrous2009semidefinite, kretschmann2008information}.
\begin{definition}[Diamond norm]
 For any superoperator $\cT: \bL(M) \to \bL(N)$, the diamond norm $\dnorm{T}$ is defined as,
\begin{equation}
    \dnorm{\cT} := \sup_{G} \| \cI_G \otimes \cT  \|_1,
\end{equation}
where $G$ is a reference system and the one-norm is given by,
\begin{equation}
 \| \cT  \|_1   := \sup_{\substack{\| \rho\|_1 \leq 1 \\ \rho \in \bL(M)}}  \| \cT(\rho) \|_1. 
\end{equation}    
\end{definition}

\paragraph{Quantum codes}
 A quantum code $\cC$ of type $(n, m)$, that is, encoding $m$ logical qubits in $n$ physical qubits is defined by an isometry $V : (\bC^2)^{\otimes m} \to (\bC^2)^{\otimes n}$, embedding $m$ qubit Hilbert space into $n$ qubit Hilbert space. The $\mathrm{Im}(V) \subseteq (\bC^2)^{\otimes n}$ is referred to as the corresponding the code space. In this work, we will consider a particular family of quantum codes called \emph{Calderbank-Shor-Steane (CSS)} codes, which are a subset of a larger family of codes known as  \emph{stabilizer codes}~\cite{gottesman-thesis, calderbank1996good, steane1996multiple}.

 \paragraph{Pauli group:} Pauli operators on a single qubit correspond to the set of matrices $P_1 := \{ \pm 1, \pm i \} \times \{\ident, X, Y, Z\}$, where
\[
\ident = \begin{pmatrix} 1 & 0 \\ 0 & 1 \end{pmatrix},\quad
X = \begin{pmatrix} 0 & 1 \\ 1 & 0 \end{pmatrix},\quad
Y = \begin{pmatrix} 0 & -i \\ i & 0 \end{pmatrix},\quad
Z = \begin{pmatrix} 1 & 0 \\ 0 & -1 \end{pmatrix}.
\]
The set $P_1$ is a group under matrix multiplication.  Pauli operators on $n$-qubits are obtained by taking $n$-fold tensor product of single qubit Pauli operators. The Pauli group on $n$-qubits corresponds to the set $P_n := \{ g_1 \otimes g_2 \otimes \cdots \otimes g_n \mid g_i \in P_1, \forall i \in [n] \}$. Using the identity $Y = i X Z$, any $n$-qubit Pauli operator $g$ can be written as follows,
\begin{equation} \label{eq:bin-Pauli}
    g = \phi \: X^{u_1} Z^{v_1} \otimes  X^{u_2} Z^{v_2} \otimes \cdots \otimes X^{u_n} Z^{v_n}, \text{ where } \phi \in \{\pm 1, \pm i\}.
\end{equation}
where $u_1, v_1, u_2, v_2, \dots, u_n, v_n \in \{0, 1\}$. Therefore, ignoring multiplicative phase factor in $\{\pm 1, \pm i\}$, we can uniquely identify Pauli operators by their $X$ and $Z$ binary vectors $\bu = (u_1, u_2, \dots, u_n)$  and $\bv = (v_1, v_2, \dots, v_n)$. If $\bv$ is a zero vector, we refer to $g$ as a Pauli operator of $X$ type. Similarly, if $\bu$ is a zero vector, we refer to $g$ as Pauli operator of $Z$ type.

 \paragraph{Stabilizer codes.}   A stabilizer code of type $(n, m)$ is defined by a subgroup $S$ of $n$ qubit Pauli group $P_n$, such that elements in $S$ pairwise commute with each other and $- \ident_n \not \in S$, where $\ident_n = \ident \otimes \cdots \otimes \ident$, with $\ident$ being the identity operator on $\bC^2$. The subgroup $S$ is described an independent generating set $\cG := \{ g_1, \dots, g_m \} \in S$, containing $m$ pairwise commuting Pauli operators.
 
\paragraph{CSS codes.} A stabilizer code $\cC$ with generating set $\cG$ is referred to as CSS if there exist sets $\cG_X \subseteq P_n$, containing only $X$ type Pauli operators and $\cG_Z \subseteq P_n$, containing only $Z$ type Pauli operators such that $\cG = \cG_X \cup \cG_Z$. In other words, the generating set can be partitioned into sets containing only $X$ and $Z$ type Pauli operators. The sets $\cG_X$ and $\cG_Z$ are referred to as the $X$ and $Z$ type generating sets, respectively.

\smallskip We associate with $\cG_X$ a binary matrix $H_X$ of dimension $|\cG_X| \times n$, such that rows of $H_X$ correspond to binary representations of the Pauli operator in $\cG_X$. Similarly, we associate with $\cG_Z$ a binary matrices $H_Z$ of dimension $|\cG_Z| \otimes n$. The pairwise commutativity of elements in $\cG = \cG_X \cup \cG_Z$ implies that $H_X H_Z^\top = [0]_{|\cG_X| \times |\cG_Z|}$.

\smallskip Note that the CSS code $\cC$ corresponding to $H_X$ and $H_Z$ encode $m := n - |\cG_X| - |\cG_Z|$.  logical qubits into $n$ physical qubits. One can define classical codes $\cC_X$ and $\cC_Z$ by taking binary matrices $H_X$ and $H_Z$ as their parity check matrices, respectively. Let $\cC_X^\perp$ and $\cC_Z^\perp$ be dual codes of  $\cC_X$ and $\cC_Z$, respectively\footnote{ The dual code $\cC^\perp_X$ of a binary code $\cC_X \subseteq \{0, 1\}^n$ is defined as $\cC^\perp_X := \{ \bu \in \{0, 1\}^n \mid \bu \cdot \bv = 0, \forall \bv \in \cC_X\}$.}. The condition $H_X H_Z^\top = 0$ implies that $\cC_X^\perp \subseteq \cC_Z$ and $\cC_Z^\perp \subseteq \cC_X$. Therefore, one can define quotient groups $C_Z / C_X^\perp$. Note that dimension of the space $\dim(C_Z / C_X^\perp)$, i.e., the number of independent element in it,  is equal to $m$.

\smallskip Consider a $m \times n$ matrix $G_q$ defined by taking a set of $m$ independent vectors in $C_Z / C_X^\perp$. Then, the logical basis state $\ket{\bu_L}$ of the code $\cC$ corresponding to $\bu \in \{0, 1\}^m$ is given by, 
\begin{equation} \label{eq:basis-state-log}
    \ket{\bu_L} := \sum_{\by \in \cC_X^\perp} \ket{ \overline{\bu}\oplus\by },  \text{ for  } \overline{\bu} = G_q \bu \in \cC_Z / \cC_X^\perp.
\end{equation}
The encoding isometry $V$ of $\cC$ is given by,
\begin{equation}
    V \ket{\bu} = \ket{\bu_L}, \: \forall \bu \in \{0, 1\}^m.
\end{equation}

\paragraph{Minimum Distance:} Let $d^X_{\min} := \min_{x \in \cC_X \setminus \{(0, 0, \dots 0)\} } \wt(x)$, where $\wt(x)$ denotes the Hamming weight of the bit string $x$, be the minimum distance of $\cC_X/\cC_Z^\perp$. Let $d^Z_{\min}$ be the minimum distance of $\cC_Z/\cC_X^\perp$. Then the minimum distance of the CSS code $\cC$, corresponding to classical codes $\cC_X$ and $\cC_Z$ is defined as,
\begin{equation}
    d_{\min} = \min(d^X_{\min}, d^Z_{\min})
\end{equation}

\paragraph{Error correction procedure for CSS codes.} Given a Pauli error $ E = X^{\be_X} Z^{\be_Z}$, where $\be_X, \be_Z \in \{0, 1\}^n$, the corresponding error syndromes are given by $(H_X \be_Z, H_Z \be_X)$. The error syndromes can be obtained by performing Pauli measurements corresponding to the stabilizer generators of the CSS code. Then, using classical decoders for the codes $\cC_X$ and  $\cC_Z$, the $Z$ and $X$ errors are estimated to be $\hat{\be}_Z$ and $\hat{\be}_X$, respectively. The errors $\be_X, \be_Z$  are \emph{correctable} if the residual errors $\hat{\be}_X + \be_X$ and $\hat{\be}_Z + \be_Z$ are equivalent to identity up to a stabilizer, that is,  $\hat{\be}_X + \be_X \in \cC_X^\perp$ and $\hat{\be}_Z + \be_Z \in \cC_Z^\perp$.

\smallskip The notion of correctability extends by linearity beyond Pauli operators. Any linear operator $E \in \bL\left((\bC^2)^{\otimes n}\right)$ is said to be correctable if it can be written as linear combination of correctable errors, i.e., $E = \sum_{\be_X, \be_Z  \text{ correctable} } \alpha_{\be_X, \be_Z} X^{\be_X} Z^{\be_Z}$.   In the same spirit, a superoperator $\cV: \bL\left((\bC^2)^{\otimes n}\right) \to  \bL\left((\bC^2)^{\otimes n}\right)$ is correctable if it admits a decomposition as $\cV(\cdot) = \sum_i E_i (\cdot) E'_i$, where $E_i, E'_i$ are correctable errors for all $i$.

\smallskip Let $\cE := V(\cdot)V^\dagger$ be the encoding channel of $\cC$. Let $\cD$ be the ideal error correction procedure of $\cC$, including syndrome extraction, classical decoding and applying the error estimates. Then for any correctable superoperator $\cV$, we have 
\begin{equation}
    \cD \circ \cV \circ \cE = \kappa(\cV) \cE, 
\end{equation}
where $\kappa(\cV)$ is a constant depending only on $\cV$. In particular, $\kappa(\cV) = 1$ when $\cV$ is a quantum channel.

\paragraph{Minimum distance and error correction:}
The minimum distance of a quantum code is connected with its ability to correct errors. In order to state this connection, we need a notion of reduced stabilizer weight as given in Def.~\ref{def:stab-red-weight}, Def.~\ref{def:reduced-op} and Def.~\ref{def:reduced-supop}.

\begin{definition}[Stabilizer reduced error weight of a Pauli operator~\cite{gu2024single}] \label{def:stab-red-weight}
Consider a CSS code of type $(n, m)$, and let $X^{\be_x}, \be_x  \in \{0, 1\}^{n}$ be a Pauli-$X$ error. Let $|\be_x|$ denote the Hamming weight of $\be_x$. The stabilizer reduced weight $|\be_x|_\mathrm{red}$ of $\be_x$ is defined as, 
\begin{equation*} 
   |\be_x|_\mathrm{red} := \min \{ |\be'_x| : X^{\be'_x} \text{ is  equivalent to $X^{\be_x}$ up to stabilizer multiplication} \}. 
\end{equation*}
 Similarly, we can define stabilizer reduced weight for Pauli-$Z$ errors.  The stabilizer reduced weight of the total error $\be = (\be_x, \be_z)$ is defined as $|\be|_\mathrm{red} := \max \{|\be_x|_\mathrm{red}, |\be_z|_\mathrm{red}\}$. 
\end{definition}

In Def.~\ref{def:reduced-op} and Def.~\ref{def:reduced-supop}, we extend the definition of the reduced error weight to linear error operators and superoperators, respectively.
To this end, we consider error operators that act jointly on the $n$ physical qubits of the code and on a $\overline{n}$-qubit auxiliary system. This is done keeping in mind that we will later consider error correction of multiple code blocks in parallel; therefore, we would need a notion of reduced error weight with respect to each code block.

\begin{definition}[Stabilizer reduced error weight of a linear operator~\cite{christandl2025fault}] \label{def:reduced-op} 
    Consider a CSS code of type $(n, m)$. Let $E \in \bL((\mathbb{C}^2)^{\otimes \overline{n}} \otimes (\mathbb{C}^2)^{\otimes n})$ be an operator, jointly acting on a code block containing $n$ physical qubits and on an auxiliary system containing $\overline{n}$-qubits.  We expand $E$ as,
\begin{equation} \label{eq:expansion-k}
    E = \sum_{\be_x, \be_z \in \{0, 1\}^n}    E_{\be_x, \be_z} \otimes X^{\be_x} Z^{\be_z},
\end{equation}
where $E_{\be_x, \be_z} := \frac{1}{2^n} \tr_{[n]}(E (\ident_{\overline{n}} \otimes Z^{\be_z} X^{\be_x}) ) \in  \bL((\mathbb{C}^2)^{\otimes \overline{n}}) $, where $\tr_{[n]}$ refers to tracing out the $n$-qubits in the code block. The reduced weight of $E$ with respect to the code block is defined as 
\begin{equation*} 
    |E|_\mathrm{red} := \max \{ |\be|_\mathrm{red}, \be = (\be_x, \be_z) \in \{0, 1\}^{2n} : E_{\be_x, \be_z} \neq 0\}.
\end{equation*}
\end{definition}
\medskip
\begin{definition} [Stabilizer reduced error weight of a superoperator~\cite{christandl2025fault}] \label{def:reduced-supop}
For a superoperator $\cV: \bL((\mathbb{C}^2)^{\otimes \overline{n}} \otimes (\mathbb{C}^2)^{\otimes n}) \to \bL((\mathbb{C}^2)^{\otimes \overline{n}} \otimes (\mathbb{C}^2)^{\otimes n})$, we say $|\cV|_\mathrm{red} \leq t$ if there exists a decomposition,
\begin{equation}
    \cV = \sum_{i} E_i (\cdot) E_i^{\prime \dagger},
\end{equation}
such that $\max_i \{ |E_i|_\mathrm{red}, |E'_i|_\mathrm{red} \} \leq t$.
\end{definition}
We note that the Pauli errors of reduced weight strictly less than $d_{\min}$ are detectable. Moreover, ideal error correction is guaranteed to correct all Pauli errors of stabilizer-reduced weight strictly less than $\frac{d_{\min}}{2}$. By linearity, the same guarantee extends from Pauli errors to linear operators and superoperators whose stabilizer-reduced weight is strictly less than $\frac{d_{\min}}{2}$.

\subsection{Noise model}
In this section, we describe the models of quantum circuits and noise used in this paper. We assume a stochastic circuit noise model, where each gate is independently affected by noise. For the purpose of error correction, however, we adopt the more general framework of local stochastic noise channels, which may introduce correlated errors across multiple qubits. This choice reflects the fact that stochastic circuit noise can induce correlated errors depending on the structure of the circuit. 
\subsubsection*{Quantum circuit}
We work in the circuit model of quantum computation, with a finite gate set $\bA$ consisting of an idle gate, initialization and measurement in the computational (Pauli-$Z$) basis, a universal set of unitary gates (e.g., $\cnot$, Hadamard $H$, and $T$), and their corresponding classically controlled versions. For any gate $g \in \bA$, we denote by $\cT_g$ the quantum channel realized by $g$. 

\smallskip We shall consider the following definition of quantum circuit~\cite{kitaev, christandl2024fault}.

\begin{definition}[Quantum circuit] \label{def:quantum-circuit}
A quantum circuit $\Phi$ of depth $d$ is a collection of the following objects
\begin{itemize}
 \item[$(1)$]   A sequence of finite sets $\Delta_0, \dots, \Delta_d$, called \emph{layers}. The zeroth layer $\Delta_0$ is called the \emph{input} of the circuit, and the final layer $\Delta_d$ is the \emph{output}. Each layer $\Delta_i$ (for $i = 0, \dots, d$) decomposes as $\Delta_i = \Delta_i^c \cup \Delta_i^q$, where $\Delta_i^c$ contains \emph{classical wires} and $\Delta_i^q$ contains \emph{quantum wires}. If the circuit has only classical input, then $\Delta_0^q = \emptyset$; if the circuit has no input, then $\Delta_0 = \emptyset$.
We say $\Phi$ operates on $x_q$ qubits and $x_c$ bits, where
\begin{equation}
    x_q := \max \{|\Delta^q_i|\}_{i = 0}^d \text{ and } x_c := \max \{|\Delta^q_i|\}_{i = 0}^c.
\end{equation}
  
\item[$(2)$] Partitions of each set $\Delta_{i-1}$, $i = 1, \dots, d$ into ordered subsets (registers) $A_{i1}, \dots, A_{is_i}$, and a corresponding partition of a superset $\Delta'_i \supseteq \Delta_i$ into registers $A'_{i1}, \dots, A'_{is_i}$.

\item[$(3)$] A collection of gates $g_{ij}$, where $j \leq s_i$, drawn from the gate set $\mathbf{A}$, such that register $A_{ij}$ and $A'_{ij}$ are input and outputs of $g_{ij}$, respectively. The gate $g_{ij}$ or their numbers $(i,j)$ are called the locations of the quantum circuit. Note that for a fixed $i, 1 \leq i \leq d$, the set of locations $\{g_{ij}: 1 \leq j \leq s_j \}$ corresponds to the locations in a single layer of the quantum circuit.
\end{itemize}

\end{definition}

\smallskip A quantum circuit $\Phi$ according to Definition~\ref{def:quantum-circuit} realizes a quantum channel 
\begin{equation}
    \cT = \cT_d \circ \cdots \circ \cT_1,
\end{equation}
 where $\cT_i$ acts as follows on any  $\rho \in \bL({\mathbb{C}^2}^{\otimes |\Delta_{i-1}|})$,
\begin{equation}
    \cT_i (\rho) = \tr_{\Delta'_i \setminus \Delta_i} \Big[ \cT_{g_{i,1}}[A'_{i1}; A_{i1}] \otimes \dots \otimes \cT_{g_{i, s_i}}[A'_{is_i}; A_{is_i}]  \Big],
\end{equation}
where $\cT_{g_{i, j}}[A'_{ij}; A_{ij}]$ denotes the quantum channel $\cT_{g_{i,j}}$, applied to the register $A_{ij}$, whose output is stored in the register $A'_{ij}$.

\begin{notation}
    For a quantum circuit $\Phi$, we denote the set of its locations by $\loc(\Phi)$. The size of the circuit $\Psi$, that is, the number of locations in it is denoted by $|\Phi| := |\loc(\Phi)|$. 
\end{notation}

\subsubsection*{Circuit noise}
We consider the following noise model for quantum circuits.
\begin{definition}[Stochastic circuit-level noise] \label{def:stoc-circ}
Consider a quantum gate $g$, realizing quantum channel $\cT_g$. The noisy version of the gate $g$ under the stochastic circuit-level noise with parameter $\delta > 0$ realizes the following channel,
\begin{equation} \label{eq:noisy-real-gate}
    \tilde{\cT}_g = (1- \delta) \cT_g + \delta \: \cZ,
\end{equation} 
where $\cZ$ is an arbitrary quantum channel, with the same input and output systems as $\cT_g$. For a quantum circuit $\Phi$, the corresponding noisy circuit, under the stochastic circuit-level noise with parameter $\delta$ refers to the quantum circuit obtained by replacing each location in $\Phi$ by the corresponding noisy location. Note that $\cZ$ can be different in each location.  
\end{definition}
The noisy quantum circuit corresponding to $\Phi$ under the stochastic circuit-level noise with parameter $\delta$ realizes the following quantum channel,
\begin{equation} \label{eq:noisy-real-circuit-1}
    \tilde{\cT} = \tilde{\cT}_d \circ \cdots \circ \tilde{\cT}_1,
\end{equation}
 where $\cT_i$ acts as follows on any  $\rho \in \bL({\mathbb{C}^2}^{\otimes |\Delta_{i-1}|})$,
\begin{equation} \label{eq:noisy-real-circuit-2}
    \tilde{\cT}_i  = \tr_{\Delta'_i \setminus \Delta_i} \Big[\tilde{\cT}_{g_{i,1}}[A'_{i1}; A_{i1}] \otimes \dots \otimes \tilde{\cT}_{g_{i, s_i}}[A'_{is_i}; A_{is_i}] \Big],
\end{equation}
where $\tilde{\cT}_{g_{i, j}}$ is according to Eq.~\eqref{eq:noisy-real-gate}.
\begin{remark} \label{rem:noisy-gate}
We note when $g$ is either unitary or state-preparation gate, we can push the noise in their noisy channel $\tilde{\cT}_g = (1- \delta) \cT_g + \delta \cZ$ from Eq.~\eqref{eq:noisy-real-gate} to the left as below
\begin{equation}
    \tilde{\cT}_g = \bigl((1- \delta)\cI + \delta  \cZ'\bigr) \circ \cT_g,
\end{equation}
where \(\cI\) denotes the identity channel on the output of \(\cT_g\). For a unitary gate \(g\), we set
\(\cZ' := \cZ \circ \cT_g^\dagger\), and for a state-preparation gate we set
\(\cZ' := \cZ \circ \tr\), where \(\tr\) denotes the trace over the output system of \(\cT_g\). 

\smallskip For the gate $g$ corresponding to the computation basis measurement, we can push the noise to right as below
\begin{equation} \label{eq:noisy-measurement}
\tilde{\cT}_g = \cT_g \circ \bigl((1- \delta)\cI + \delta  \cZ'\bigr).
\end{equation}
We now show that Eq.~\eqref{eq:noisy-measurement} holds. For computational basis measurement, we have
\begin{equation}
     \cT_g(\cdot) = \sum_{a \in \{0,1\}} \tr(\cdot \proj{a}) \proj{a}. 
\end{equation}
For the noisy measurement $\tilde{\cT}_g = (1- \delta) \cT_g + \delta \cZ$, the quantum channel $\cZ$ is an arbitrary two-outcome quantum measurement. It can be described by positive operators $P_a \geq 0$, $a \in \{0,1\}$, with $\sum_{a} P_a = \ident$, as
\begin{equation} 
    \cZ(\cdot) = \sum_{a \in \{0, 1\}} \tr(\cdot P_a) \proj{a}.
\end{equation}
We can write the channel $\cZ$ as first applying a quantum channel $\cZ'$ and then applying where the perfect measurement $\cT_g$, that is,
\begin{equation} \label{eq:measur-error}
    \cZ = \cT_g \circ \cZ',
\end{equation}
where the quantum channel \(\cZ'\) is given by
\begin{equation}
    \cZ'(\cdot) := \tr_A\!\bigl( U_{A \to AA'} (\cdot)\, U^\dagger_{A \to AA'} \bigr),  
\end{equation}
with \(A, A'\) being qubit systems, and \(U_{A \to AA'}\) an isometry defined as
\begin{equation*}
    U_{A \to AA'} = \sum_{a \in \{0, 1\}} \sqrt{P_a} \otimes \ket{a}.
\end{equation*}
Substituting the expression of $\cZ$ from Eq.~\eqref{eq:measur-error} in $\tilde{\cT}_g = (1- \delta) \cT_g + \delta \cZ$, we have
\begin{equation*}
    \tilde{\cT}_g = \cT_g \circ \bigl((1- \delta)\cI + \delta  \cZ'\bigr).
\end{equation*}
\end{remark}

\subsubsection*{Faulty circuit}
We consider the following notion of fault pattern and faulty circuit as given in Def.~\ref{def:ft-circuit}. As given in Remark~\ref{rem:prob-dist}, noisy version of a circuit according to Def.~\ref{def:stoc-circ} can be described as a probability distribution on faulty circuits. 
\begin{definition}[fault pattern and faulty circuit] \label{def:ft-circuit}
We say a gate $g$ is faulty if it realizes an arbitrary channel instead of the intended channel $\cT_g$. Consider a quantum circuit $\Psi$ and let $\loc(\Psi)$ be its set of locations. A fault pattern refers to a subset $\Delta \subseteq \loc(\Psi)$, and the corresponding faulty circuit $(\Psi, \Delta)$ refers to the quantum circuit obtained from $\Psi$ by replacing gates corresponding to locations in $\Delta$ by faulty gates.
\end{definition}

\begin{remark} \label{rem:prob-dist}
Note that the noisy version of a circuit $\Phi$, under the stochastic circuit-level noise with parameter $\delta$ can be seen as a probability distribution over faulty circuits $(\Psi, \Delta)$, where the fault pattern $\Delta \in \loc(\Psi)$ happens with probability 
\begin{equation}
\pr(\Delta) = (1 - \delta)^{|\Psi| - |\Delta|} \delta^{|\Delta|}.    
\end{equation}
Let $\cT_\Delta$ denote the channel realized by the faulty circuit $(\Psi, \Delta)$. Then, for the channel $\tilde{\cT}_\Phi$ realized by the noisy circuit corresponding to $\Phi$ under circuit-level stochastic channel with parameter $\delta$, according to Eq.~\eqref{eq:noisy-real-circuit-1} and Eq.~\eqref{eq:noisy-real-circuit-2}, we have:
\begin{equation}
    \tilde{\cT}_\Psi =  \sum_{\Delta \subseteq \loc(\Psi)} \pr(\Delta) \: \cT_\Delta.
\end{equation}
\end{remark}

\subsection*{Local stochastic noise}
Even though we have considered circuit-level stochastic noise, which acts independently on each location, it may lead to a correlated noise on qubits, namely local stochastic noise (see also Remark~\ref{rem:noisy-g-ls} and Theorem~\ref{thm:state-prep} below). In this section, we provide definition and some useful properties of local stochastic noise.

\begin{definition}[Local stochastic channel] \label{def:loc-stoc-noise}
Let $B$ be a $n$-qubit quantum system. A quantum channel $\cW[B, R]$, acting jointly on $B$ and a reference system $R$, is said to be local stochastic with parameter $\delta$ with respect to $B$ if
\begin{equation} \label{eq:W-exp-1}
    \cW[B, R] = \sum_{A \subseteq B} \pr_{\cW}(A)  \: \cI[ B \setminus A] \otimes \cN[A,R], 
\end{equation}
where $ \{ \pr_{\cW}(A) \}_{A \subseteq B}$ is probability distribution, i.e.,  $\pr_{\cW}(A) \geq 0$ and $\sum_{A \subseteq B} \pr_{\cW}(A) = 1$, and  $\cN[A,R]$ is an arbitrary  quantum channel, such that for any $T \subseteq B$,
the probability that $T$ is included in $A$ is given by\footnote{Here $\pr_{\cW}(T \subseteq A) := \sum_{A: T\subseteq A} \pr_{\cW}(A)$.}
\begin{equation} \label{eq:local-stoc-prop}
    \pr_{\cW}(T \subseteq A) \leq \delta^{|T|}.
\end{equation} 
We shall drop the subscript from $\pr_{\: \cW}$, when no confusion is possible. 
\end{definition}
We also consider a definition of local stochastic channel with respect to multiple quantum systems as given in Def.~\ref{def:loc-stoc-mult}. This will be useful when considering multiple code blocks of an error correcting code.
\begin{definition} \label{def:loc-stoc-mult}
Consider quantum systems $B_1, B_2, \dots, B_h$, containing $n_1, n_2, \dots, n_h$ qubits, respectively. We say a quantum channel $\cW[B_1, B_2, \dots, B_h, R]$, where $R$ is a reference system, is a local stochastic channel with parameter $\delta_1, \delta_2, \dots, \delta_t$,  with respect to quantum systems $B_1, B_2, \dots, B_h$, respectively if 
\begin{multline} \label{eq:W-exp}
   \cW[B_1, B_2, \dots, B_h, R] = \\ \sum_{A_i \subseteq B_i, i \in [h]} \pr_{\: \cW}(A_1, A_2, \dots, A_h)  \: \cI[ B_1 \setminus A_1, B_2 \setminus A_2, \dots, B_h \setminus A_h] \otimes \cN[A_1, \dots, A_h, R],      
\end{multline}
where $\cN$ is an arbitrary channel and for $T_i \subseteq B_i, i \in [h]$, the probability that $T_i$ is included in $A_i$ is given by
\begin{equation} \label{eq:W-prob}
    \pr_{\: \cW}(T_i \subseteq A_i) \leq \delta_i^{|T_i|}, \: \forall i \in [h]. 
\end{equation}
\end{definition}

\smallskip It is straightforward to see that a quantum channel $\cW$, which is local stochastic with respect to $B$ is also local stochastic with respect to any $B' \subseteq B$. Conversely, a local stochastic channel with respect to quantum systems $B_1, \dots, B_h$ is not necessarily local stochastic with respect to the joint system $\cup_{i \in [h]} B_i$. In what follows, we will mainly care about local stochastic property with respect to a code block, as error correction is done with respect to a code block.

\smallskip In the following remark, we note that the noisy version of a transversal gate under circuit-level stochastic noise is equal to the ideal gate up to a specific local stochastic noise.
\begin{remark} \label{rem:noisy-g-ls}
    Consider a  unitary gate $g$ acting on $t \geq 1$ qubits. Let $S_i, i \in [t]$ denote quantum systems, each containing $n$ qubits, and let $S_{i, j}$ denote a qubit in $i$th block. Consider the transversal implementation of $g$ on them, that is,
    \begin{equation}
        g_S := \otimes_{j \in [n]} g_j,
     \end{equation}
where $g_j$ denotes a copy of $g$, acting on $t$ qubits corresponding to $S_{1,j}, S_{2,j}, \dots, S_{t,j}$. The noisy realization of $g_S$ corresponds to the noisy perfect realization of $g_S$ followed by a local stochastic channel as follows,   
\begin{equation}
    \tilde{\cT}_{g_S} = \cW \circ \cT_{g_S},
\end{equation}
where $\cW$ is a local stochastic channel with parameter $\delta$ with respect to each block $S_i$. This can be seen as follows:

from Remark~\ref{rem:noisy-gate}, we have
\begin{equation}
    \tilde{\cT}_{g_S} = \left(\otimes_{j \in [n]}  \left( (1- \delta) \cI_j +  \delta \cZ_j  \right) \right) \circ \cT_{g_S},
\end{equation}
where $\cI_j$ and $\cZ_j$ are identity and an arbitrary channel, respectively, on $t$ qubits corresponding to $S_{1,j}, S_{2,j}, \dots, S_{t,j}$.  We define
\begin{align}
    \cW &:=  \otimes_{j \in [n]}  \left( (1- \delta) \cI_j +  \delta \cZ_j  \right) \nonumber \\ 
    & = \sum_{A \subseteq [n]} (1 - \delta)^{n-|A|} \delta^{|A|} (\otimes_{j \in [n]\setminus A} \cI_j) \otimes (\otimes_{j \in A}\cZ_j) \label{eq:ls-g1}
\end{align}
For any subset $T \subseteq [n]$, we have
\begin{align}
    \sum_{A: T \subseteq A} (1 - \delta)^{n-|A|} \delta^{|A|} & = \sum_{i = 0}^{n- |T|- i} \binom{n-|T|}{i} (1- \delta)^{n-|T|-i} \delta^{|T| + i} \nonumber \\
    & = \delta^{|T|} \label{eq:ls-g2}
\end{align}
From Eq.~\eqref{eq:ls-g1} and Eq.~\eqref{eq:ls-g2}, it follows that $\cW$ is a local stochastic channels with parameter $\delta$, with respect to each system $S_i, i \in [t]$.  
However, for $t > 1$, we note that since $\cZ_j$ acts on $t$ qubits, $\cW$ is not a local stochastic channel with parameter $\delta$ with respect to the joint system $\cup_{i = 1}^t S_i$. 

\smallskip Additionally, for $t = 1$ and $S = S_1$, and $g$ being a single qubit measurement, it can be seen using Remark~\ref{rem:noisy-gate} that 
\begin{equation}
     \tilde{\cT}_{g_S} = \cT_{g_S} \circ \cW, 
\end{equation}
where $\cW$ is a local stochastic channel with paramer $\delta$ with respect to $S$.
\end{remark}

In the following lemma, we consider a local stochastic channel $\cW$ with parameter $\delta$ with respect to a $n$ qubit system. We show that probability that the corresponding error acts non-trivially on a set of size greater than $\mu n$ decreases exponentially in $n$ as long as $\delta$ is below a threshold depending only on $\mu$.

\begin{lemma} \label{lem:ls-lw}
Consider constant $0< \mu < 1$ and $\delta < 2^{-\frac{h_2(\mu)}{\mu}}$, where $h_2(x) := -x \log_2 x - (1 - x) \log_2 x$.
Let $\cW[B_1, \dots, B_h, R]$, with $B_i, i \in [h]$ being a $n_i$-qubit system and $R$ being a reference, be a local stochastic channel with parameter $\delta$ with respect to system $B_i, i \in [h]$. Then, the following holds,
\begin{equation} \label{eq:ls-lw}
    \cW[B_1, \dots, B_h, R] = (1 - \tau) \:\cN[B_1, \dots, B_h, R] + \tau \: \cZ[B_1, \dots, B_h, R],
\end{equation}
where $\cN[B_1, \dots, B_h, R]$ is a channel of weight $\mu n_i$ with respect to $B_i$ for all $i \in [h]$ and $\tau \leq h  (2^{\frac{h_2(\mu)}{\mu}} \delta)^{\mu n}$, with $n := \min\{n_i \mid i \in [h]\}$ and $\cZ$ is an arbitrary channel.

In particular, the probability that $\cW$ acts non-trivially on a subset of $B_i$ with size greater than $\mu n_i$ goes to zero as $n \to \infty$.
\end{lemma}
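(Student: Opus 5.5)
The plan is to start from the decomposition in Eq.~\eqref{eq:W-exp} of Def.~\ref{def:loc-stoc-mult} and split the sum over tuples $(A_1,\dots,A_h)$ into a \emph{good} part and a \emph{bad} part. The good part collects the tuples with $|A_i|\le \mu n_i$ for every $i\in[h]$; the bad part collects the rest. Set
\begin{equation*}
\tau := \sum_{(A_1,\dots,A_h)\ \text{bad}} \pr_{\cW}(A_1,\dots,A_h).
\end{equation*}
We then define $\cN$ as the good part renormalized by $(1-\tau)^{-1}$, and $\cZ$ as the bad part renormalized by $\tau^{-1}$. If $\tau=0$ or $\tau=1$, the corresponding term is simply dropped.

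Both $\cN$ and $\cZ$ are convex combinations of channels, so both are channels. Each term in the good part has the form $\cI[B\setminus A]\otimes\cN[A,R]$ with $|A_i|\le\mu n_i$. Expanding $\cN[A,R]$ in Kraus operators therefore gives Kraus operators that act as identity outside $A_i$ on block $B_i$. Taking the union of these Kraus families over all good tuples, each rescaled by the square root of its weight, yields a Kraus representation of $\cN$ in which every operator has weight $\mu n_i$ with respect to $B_i$, in the sense of Def.~\ref{def:weight}. This establishes Eq.~\eqref{eq:ls-lw}.

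What remains is to bound $\tau$. By the union bound over blocks,
\begin{equation*}
\tau \le \sum_{i\in[h]} \pr(|A_i|>\mu n_i).
\end{equation*}
For a fixed $i$, let $k:=\lceil \mu n_i\rceil$. If $|A_i|>\mu n_i$, then $A_i$ contains some subset $T\subseteq B_i$ with $|T|=k$. A union bound over such $T$, combined with Eq.~\eqref{eq:W-prob}, gives
\begin{equation*}
\pr(|A_i|>\mu n_i)\le \binom{n_i}{k}\delta^{k}.
\end{equation*}
We now use the standard entropy estimate $\binom{n_i}{k}\le 2^{n_i h_2(k/n_i)}$. This needs $k/n_i\le 1/2$; I would handle that by assuming $\mu\le 1/2$, or alternatively by bounding with $2^{n_i h_2(\mu)}$ directly when $\mu<1/2$, which is where $h_2$ is increasing. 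Then
\begin{equation*}
\binom{n_i}{k}\delta^{k}\le 2^{n_ih_2(\mu)}\delta^{\mu n_i}=\bigl(2^{h_2(\mu)/\mu}\delta\bigr)^{\mu n_i},
\end{equation*}
using $\delta<1$ and $k\ge\mu n_i$. The hypothesis $\delta<2^{-h_2(\mu)/\mu}$ makes the base smaller than $1$, so the bound is monotone decreasing in $n_i$ and is at most its value at $n=\min_i n_i$. Summing over $i$ gives $\tau\le h\,(2^{h_2(\mu)/\mu}\delta)^{\mu n}$.

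The main obstacle I expect is the rounding issue: $k=\lceil\mu n_i\rceil$ versus $\mu n_i$ inside the binomial and entropy estimates, together with the range of $\mu$ for which $h_2$ is monotone. I would handle this either by restricting to $\mu\le1/2$, which costs nothing in later applications, or by using the cruder bound $\binom{n_i}{k}\le 2^{n_i}$ for general $\mu$ and adjusting constants. I would also silently read $h_2$ as the binary entropy, since the displayed formula has a typo. The final ``in particular'' claim then follows immediately, because $\tau\to0$ as $n\to\infty$ for fixed $h$, or more generally whenever $h$ grows subexponentially in $n$.
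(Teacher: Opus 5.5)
Your overall route is the paper's: split the decomposition of Def.~\ref{def:loc-stoc-mult} into tuples with $|A_i|\le\mu n_i$ for all $i$ and the rest, renormalize, and bound $\tau$ with a union bound over blocks, a union bound over size-$k$ subsets $T\subseteq B_i$ via Eq.~\eqref{eq:W-prob}, and an entropy estimate. You are right that rounding needs care; the paper simply sets $t=\mu n_i$ as if it were an integer. But your treatment of the rounding, the step you flagged as the main obstacle, does not work as written.

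First, the single-term estimate $\binom{n}{k}\le 2^{n h_2(k/n)}$ holds for every $0\le k\le n$, so no condition $k/n_i\le 1/2$ is needed there. The real issue is replacing $h_2(k/n_i)$ by $h_2(\mu)$. Since $k=\lceil\mu n_i\rceil\ge\mu n_i$, that replacement is valid for $\mu\ge 1/2$, where $h_2$ is decreasing. It fails for $\mu<1/2$, precisely because $h_2$ is increasing there. This is the opposite of what you wrote, and $\mu<1/2$ is the regime the paper uses later. So $\binom{n_i}{k}\le 2^{n_i h_2(\mu)}$ is false in general: for $\mu=1/100$ and $n_i=10$ you get $k=1$ and $\binom{10}{1}=10$, whereas $2^{10\,h_2(1/100)}<2$. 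Restricting to $\mu\le 1/2$ therefore does not help, and the crude bound $\binom{n_i}{k}\le 2^{n_i}$ changes the constant, so it no longer proves the lemma as stated.

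The product bound you want is nevertheless true, but you must spend the slack $\delta^{k-\mu n_i}$ together with the hypothesis on $\delta$. Starting from $\binom{n}{k}\mu^k(1-\mu)^{n-k}\le 1$,
\begin{equation*}
\binom{n_i}{k}\delta^{k}\le(1-\mu)^{-n_i}\Bigl(\frac{\delta(1-\mu)}{\mu}\Bigr)^{k}\le(1-\mu)^{-n_i}\Bigl(\frac{\delta(1-\mu)}{\mu}\Bigr)^{\mu n_i}=\bigl(2^{h_2(\mu)/\mu}\delta\bigr)^{\mu n_i}.
\end{equation*}
The middle step uses $k\ge\mu n_i$ and $\delta(1-\mu)/\mu\le 1$. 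The latter follows from the hypothesis, since $2^{-h_2(\mu)/\mu}=\mu(1-\mu)^{(1-\mu)/\mu}\le\mu$. This works for all $\mu\in(0,1)$ without any case split, and with it the rest of your proof goes through unchanged.
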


\begin{proof}
For the local stochastic channel $\cW$, we have
\begin{align} 
   &\cW[B_1, B_2, \dots, B_h, R]  \nonumber\\ & = \sum_{A_i \subseteq B_i, i \in [h]} \pr_{\: \cW}(A_1, A_2, \dots, A_h)  \: \cI[ B_1 \setminus A_1, B_2 \setminus A_2, \dots, B_h \setminus A_h] \otimes \cN[A_1, \dots, A_h, R],  \nonumber \\
   & = (1 - \tau) \:\cN[B_1, \dots, B_h, R] + \tau \: \cZ[B_1, \dots, B_h, R], \label{eq:w-parts}
\end{align}
where 
\begin{align}
    \cN[B_1, B_2, \dots, B_h, R]= \frac{1}{1 - \tau} \sum_{i \in [h]} \sum_{A_i : |A_i| \leq \mu n} \pr_{\: \cW}(A_1, A_2, \dots, A_h) \: \cI[ B \setminus A] \otimes \cN[A_1, \dots, A_h, R],
\end{align}
with $\tau = \pr_W(|A_i| > \mu n \text{ for some } i \in [h])$.

\smallskip We will now give an upper bound on $\pr_\cW(|A_i| \ge \mu n)$ for any $i \in [h]$ and use this bound together with union bound to obtain an upper bound on $\tau$. For any positive integer $0 \leq t \leq n$, we have 
by a union bound,
\begin{align}
\pr_\cW(|A_i| \ge t)
& = \sum_{A_i \subseteq B_i, |A_i| \geq t } \pr_\cW(A_i) \nonumber\\
& \leq \sum_{T_i \subseteq B_i:\,|T|= t}\pr_\cW(T_i \subseteq A_i) \nonumber\\
&\leq \sum_{T_i \subseteq B_i:\,|T|= t} \delta^t \nonumber\\ 
&= \binom{n_i}{t}\,\delta^{t}, \label{eq:binom_delta}
\end{align}
where the first inequality uses a union bound and the fact that any $A_i \subseteq B_i, |A| \geq t$ must contain a subset $T_i \subseteq B_i, |T| = t$, and the second inequality uses local stochastic property of $\cW$.

\smallskip Now using the entropy bound $\binom{n_i}{t}\le 2^{n_i h_2(t/n_i)}$ and taking $t = \mu n_i$, we have
\begin{equation}\label{eq:entropy_step}
\pr_\cW(|A_i|\ge \mu n_i) \leq 2^{n_i h_2(\mu)}\,\delta^{\mu n_i} = \big(2^{h_2(\mu)/\mu}\delta\big)^{\mu n_i}.
\end{equation}
Using Eq.~\eqref{eq:entropy_step} and a union bound, we have
\begin{equation} \label{eq:tau-up}
    \tau \leq \pr(|A_i| \geq \mu n_i, \text{ for some } i \in [h])\leq \sum_{i \in [h]} \pr_\cW(|A_i| > \mu n) \leq h \big(2^{h_2(\mu)/\mu}\delta\big)^{\mu n}, 
\end{equation}
where $n = \min\{n_i \mid i \in [h]\}$.
From Eq.~\eqref{eq:w-parts} and Eq.~\eqref{eq:tau-up}, it follows that Eq.~\eqref{eq:ls-lw} holds.

For the last claim, if $\delta<2^{-h_2(\mu)/\mu}$ then $2^{h_2(\mu)/\mu}\delta<1$, and therefore
$\big(2^{h_2(\mu)/\mu}\delta\big)^{\mu n} \to 0$ as $n\to\infty$.
\end{proof}

\subsection{Fault-tolerant quantum state preparation}
We  recall below the state preparation theorem from~\cite[Corollary 61]{christandl2024fault}.
\begin{theorem}[Fault-tolerant state preparation~\cite{christandl2024fault}] \label{thm:state-prep}
There exists a fixed threshold value $\delta_{th} > 0$ such that the following holds:

Let $\Phi$ be a quantum circuit with no input and quantum output, and working on $x'$ qubits and depth $d$. Then, for any positive integer $k$, there exists another quantum circuit $\overline{\Phi}$, with the same input and output systems as $\Phi$, working on $O(x' \: \mathrm{poly}(k))$ qubits and of depth $O(d \:\mathrm{poly}(k))$, such that for the channel $\tilde{\cT}_{\overline{\Phi}}$ corresponding to the noisy realization of   $\overline{\Phi}$ under stochastic circuit-level noise with parameter $\delta < \delta_{th}$, we have
\begin{equation} \label{eq:prep-ls}
  \tilde{\cT}_{\overline{\Phi}} = (1 - \tau) \:  \cW \circ \cT_{\Phi} + \tau \:\cZ,
\end{equation}
where $\cW$ is a local stochastic channel with parameter $ 2 c\delta$, $\tau \leq O(|\Phi|(c\delta)^k)$, and $\cZ$ is an arbitrary channel.
\end{theorem}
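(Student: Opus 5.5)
The plan is to reduce the statement to the standard concatenated-code threshold theorem. The extra ingredient is a \emph{decoding interface} that peels off the concatenation levels one at a time. The analysis that converts faults in this interface into a local stochastic output channel is the crux.

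\emph{Construction.} Fix a distance-$3$ CSS code, e.g.\ the $7$-qubit Steane code, together with its standard gadgets (state preparation, transversal Cliffords, magic-state $T$ gates, Steane-type error correction). Let $\ell := \lceil \log_2 k \rceil$, so that $2^\ell \ge k$. The circuit $\overline{\Phi}$ is built in two stages.
\begin{itemize}
\item[(i)] The first stage is the level-$\ell$ concatenated simulation of $\Phi$ in the sense of Aharonov--Ben-Or, Kitaev and Aliferis--Gottesman--Preskill. Since $\Phi$ has no input, it starts from fault-tolerant preparations of encoded $\ket{0}$.
\item[(ii)] The second stage is an interface $\Gamma_\ell\circ\cdots\circ\Gamma_1$. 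Each $\Gamma_j$ maps every level-$j$ block to a level-$(j-1)$ block. It does so by teleporting through a fault-tolerantly prepared resource state, namely the level-$(j-1)$ encoding of half of a level-$1$-encoded Bell pair. Equivalently, $\Gamma_j$ is an ideal level-$1$ decoder implemented with level-$(j-1)$ encoded gates and followed by level-$(j-1)$ error correction.
\end{itemize}
A level-$\ell$ block uses $7^\ell = k^{\log_2 7}$ qubits, and the depth grows by a factor of $\mathrm{poly}(7^\ell)$. This gives the $O(x'\,\mathrm{poly}(k))$ qubit count and the $O(d\,\mathrm{poly}(k))$ depth.

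\emph{Bulk analysis.} I would use the exRec counting argument. With $c$ the inverse threshold of the level-$1$ gadgets, the probability that some level-$\ell$ exRec is bad is at most $O(|\Phi|)(c\delta)^{2^\ell} \le O(|\Phi|)(c\delta)^k$. This event is absorbed into $\tau\,\cZ$. On its complement, the level-$\ell$ ideal decoder applied after stage (i) yields exactly $\cT_\Phi$.

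\emph{Interface analysis.} Condition on all level-$\ell$ exRecs being good. Inductively, the output of $\Gamma_\ell\circ\cdots\circ\Gamma_{j+1}$ is the level-$j$ encoding of $\cT_\Phi$, up to a residual error whose distribution I would track per level-$j$ block. For each block, the relevant events are whether the level-$j$ interface gadget for that block is good or bad.
\begin{itemize}
\item If all gadgets in the history of a final output qubit $q$ are good, then $q$ is error-free.
\item Otherwise, some level-$j$ gadget in its history is bad. Level-$j$ gadgets are disjoint sets of locations, each is bad with probability at most $(c\delta)^{2^{j}}$, and badness of distinct gadgets is determined by disjoint fault sets. Hence $\pr(T\subseteq A)$ factorizes over the gadgets that must be bad for every qubit of $T$ to be hit.
\end{itemize}
At the bottom level, the physical qubit that finally holds $q$ passes only through $O(1)$ level-$0$ locations of $\Gamma_1$ and the last teleportation. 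A fault there affects at most $2$ output qubits, which costs a factor $2$ in the effective rate and gives parameter $2c\delta$. A bad level-$j$ gadget with $j\ge 1$ can corrupt up to $2^{\,j}$ (or $7^{\,j}$) output qubits, but it has probability $(c\delta)^{2^j}$. Since $(c\delta)^{2^j} \le (c\delta)^{\#\text{qubits affected}}$ after adjusting constants, the per-qubit rate remains $O(c\delta)$.

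\emph{Main obstacle.} I expect the hardest step to be making this last bound rigorous, in the form $\pr(T\subseteq A)\le(2c\delta)^{|T|}$ for every $T$. The approach I would try first is to cover $T$ by a minimal set of bad gadgets $\{G_i\}$ across all levels and bound the probability of that cover. For this I need that the number of covers with prescribed total weight is at most exponential in $|T|$. Any such exponential factor is absorbed into the constant $c$, using that each output qubit has a tree-shaped history with $\ell$ ancestors, one per level. A second subtlety is that the stochastic mixture has to be expressed exactly in the form of Eq.~\eqref{eq:prep-ls}. For that, I would push all fault channels of good-but-faulty gadgets to the output using Remark~\ref{rem:noisy-gate} and the exRec ``good implies correct'' lemma, so that they become identity-on-complement channels on the affected output qubits.
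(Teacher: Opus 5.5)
\emph{Note on the comparison.} The paper does not prove this statement; it imports it from \cite{christandl2024fault}, where it is obtained with concatenated codes. Your overall route is the standard one for that setting: a level-$\ell$ concatenated simulation with $2^\ell\ge k$, followed by an interface that removes one level at a time. Your resource counts and the bulk bound $\tau\le O(|\Phi|)(c\delta)^{2^\ell}$ are fine.

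\textbf{Gap 1: the mechanisms you invoke for the local-stochastic step are wrong.}
\begin{itemize}
\item A single fault of probability $\Theta(\delta)$ that hits two output qubits $q,q'$ does not ``cost a factor 2''. It gives $\Pr(\{q,q'\}\subseteq A)=\Theta(\delta)$, whereas parameter $p$ requires this to be at most $p^2$, forcing $p=\Omega(\sqrt{\delta})$ (compare Remark~\ref{rem:noisy-g-ls}).
\item Likewise, a gadget that fails with probability $(c\delta)^{2^j}$ and could corrupt $7^j$ outputs yields a per-qubit parameter $(c\delta)^{(2/7)^j}\to 1$. No adjustment of constants repairs $2^j$ versus $7^j$.
\end{itemize}
What actually makes the statement true is a fact you do not use. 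With one logical qubit per block, the interface of output $q$ is a chain of gadgets acting only on $q$'s own block and its own resource states. So every interface gadget, including the final physical locations, can corrupt exactly one output qubit. There is no fan-out and no tree; that structure belongs to the paper's QLDPC interface, not to Steane concatenation. The per-qubit failure probability is then a sum over the levels of the chain: at most $\sum_{j\ge 1} A\,(c\delta)^{2^{j-1}}\le 2c\delta$ for $c\delta\le 1/2$, after absorbing the gadget size $A$ into $c$. This sum is where the $2$ comes from, and the cover-counting argument you propose is unnecessary.

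\textbf{Gap 2: factorization fails after your conditioning.} For a good interface step to act as the ideal level-$1$ decoder, you need its leading EC, which is the trailing EC of the last bulk exRec on that block. Once you condition on all bulk exRecs being good, failure events of different outputs are no longer independent. For example, a good bulk CNOT exRec couples the faults in its two trailing ECs. So disjointness of the interface gadgets does not give factorization. There are two ways to fix this:
\begin{itemize}
\item Avoid conditioning and split the sum over fault patterns directly.
\item Use monotonicity. ``All bulk exRecs good'' is a decreasing event. ``Every $q\in T$ has a bad interface chain'' is an intersection of increasing events on disjoint location sets, hence increasing. The Harris--FKG inequality for the product fault measure then gives $\Pr(T\subseteq A\mid \text{good})\le\prod_{q\in T}\Pr(q\text{ bad})\le (2c\delta)^{|T|}$.
\end{itemize}
Combine this with the interface version of ``good implies correct'': a good level-$j$ step maps a level-$j$ block carrying a correctable error to a level-$(j-1)$ block of the same logical state. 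Together these yield Eq.~\eqref{eq:prep-ls}.
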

\section{Quantum LDPC codes}\label{sec:QLDPC}
For our constant-overhead protocols, we consider quantum low-density parity-check (quantum LDPC or QLDPC) codes of CSS type~\cite{mackay2004sparse, breuckmann2021quantum, }. We require QLDPC codes to satisfy some additional properties, including constant encoding rate, linear minimum distance, and having efficient error correction protocols. We show that such QLDPC codes can be obtained using quantum Tanner codes~\cite{leverrier2022quantum}, and decoder analyses of~\cite{leverrier2023decoding, gu2024single}.

\subsection{Rates of QLDPC codes} \label{sec:rates-qldpc}
Below, we first provide definition of QLDPC codes.
\begin{definition}[QLDPC codes] \label{def:qldpc-rate-constant}
Let $\cC_r,  r = 1, 2, \dots $  be a family of CSS codes, where $\cC_r$ is a code of type $(n_r, m_r)$, defined by the $X$ and $Z$ type parity check matrices $H_X^r$ and $H_Z^r$, respectively. It is said to be QLDPC if the Hamming weight of any rows or column of the parity check matrices $H^r_X$ and $H_Z^r$ is constant, i.e., independent on $r$.  
\end{definition}
Several constructions of QLDPC codes with constant rate and minimum distance growing sub-linearly have been proposed, for instance~\cite{tillich2013quantum, couvreur2013construction, evra2022decodable,hastings2021fiber}. The existence of QLDPC codes with linear minimum distance remained an open problem for many years, which was settled in Refs.~\cite{panteleev2022asymptotically, leverrier2022quantum}, where  the existence of such codes is explicitly demonstrated.

\smallskip In what follows, we fix a family of QLDPC code  $\{ \cC_r \mid r = 1, 2, \dots \}$, where $\cC_r$ is of type $(n_r, m_r)$ and has minimum distance $d^{(r)}_{\min}$. Moreover, let $V_r : (\bC^2)^{\otimes m_r} \to (\bC^2)^{\otimes n_r}$ be the encoding isometry corresponding to $\cC_r$ and the corresponding channel by $\cE_r := V_r (\cdot) V_r^\dagger$.  

We suppose that $\cC_1$ is the trivial code encoding one logical qubit into one physical qubit. We suppose that the following properties are satisfied for all $r > r_0$,
\begin{align}
m_r &= 2 m_{r - 1} \label{eq:m-r-r-1} \\
m_r &\geq \alpha n_r \label{eq:n-r}  \\
d^{(r)}_{\min} &\geq \beta n_r \label{eq:min-distance},
\end{align}
for some constant $\alpha, \beta \in (0, 1]$. Using Eq.~\eqref{eq:m-r-r-1}-\eqref{eq:min-distance}, we have that
\begin{equation} \label{eq:n-r-1-bound}
    n_r \leq \frac{m_r}{\alpha} =  \frac{2 m_{r-1}}{\alpha} \leq \frac{2 n_{r-1}}{\alpha}.
\end{equation}
Moreover, for our QLDPC code $\cC_r, r= 1, 2, \dots$, we assume an \emph{efficient classical decoder} that can correct a constant fraction of errors, that is, it outputs a valid correction for any Pauli error of stabilizer reduced weight (see also Def.~\ref{def:stab-red-weight}) bounded by $\frac{\beta}{2} n_r - 1$, with $\beta$ in Eq.~\eqref{eq:min-distance}.

\smallskip In Lemma~\ref{lem:qldpc-new} below, we construct a QLDPC code family using quantum tanner codes~\cite{leverrier2022quantum} that satisfies Eq.~\eqref{eq:m-r-r-1}-\eqref{eq:n-r-1-bound}. The proof of Lemma~\ref{lem:qldpc-new} relies on the following theorem. 
\begin{theorem} \label{thm:qldpc-tanner}
    There exists an integer $C > 0$ and $\beta > 0$ and a family of QLDPC codes $\{\cC_r\}_{r \geq 1}$ satisfying for all $r \geq 1$
    \begin{align*}
        n_{r} &\leq C n_{r-1} \\
        m_{r} &\geq \frac{1}{4} n_r \\
        m_{r} &\geq 6 m_{r-1} \\
        d_{\min}^{(r)} &\geq \beta n_{r}.
    \end{align*}
Moreover, there exists a parallel decoder running in time $O(\log n_r)$ that can decode a linear number of errors.
\end{theorem}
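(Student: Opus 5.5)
This theorem should follow by instantiating the quantum Tanner code construction of~\cite{leverrier2022quantum} with good parameters and then selecting a subsequence of the resulting family. The decoder claim comes from the decoder analyses of~\cite{leverrier2023decoding, gu2024single}.

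First, quantum Tanner codes are built from a square Cayley complex on a group $G$ with generating sets $A,B$ of constant size $\Delta$, together with local codes $C_A, C_B$ of length $\Delta$ and their duals. Choose these with rates close to suitable values so that the rate of the quantum code is at least some constant. The rate bound is $k \ge (2\rho-1)^2 n$ when the local codes have rate $\rho$; taking $\rho$ close to $1$ (e.g. $\rho \ge 3/4$ with product-expansion/robustness still holding) gives $m \ge \tfrac14 n$. This requires quoting the existence of locally robust (product-expanding) code pairs with rates arbitrarily close to prescribed values, which I would take from~\cite{leverrier2022quantum}.

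Linear distance $d_{\min} \ge \beta n$ holds once the Cayley graphs are Ramanujan-type expanders with $\lambda \le \epsilon\Delta$. Here I would use explicit infinite families, e.g. $G=\mathrm{PSL}_2(q)$ over growing prime powers $q$.

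Second, I would handle growth. The natural family indexed by $q$ has $n = |G|\Delta^2/4 \sim q^3$. Since the rate is also bounded above by $1$, I would pass to a subsequence $q_1<q_2<\dots$ such that $6 n_{r-1} \le n_r$ while $n_r \le C n_{r-1}$. The condition $m_r \ge 6 m_{r-1}$ then follows if $n_r \ge 24 n_{r-1}$, using $m_{r-1} \le n_{r-1}$ and $m_r \ge n_r/4$. The upper bound $n_r\le Cn_{r-1}$ requires the available sizes to be dense on a logarithmic scale. By Bertrand-type results, a prime $q$ in $[2q',4q']$ exists, so the sizes are multiplicatively spaced by bounded factors. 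I would therefore pick $q_r$ to be a prime in $[3q_{r-1}, 6q_{r-1}]$. This gives $n_r/n_{r-1}\in[27,216\cdot O(1)]$, which yields the integer $C$. The base case $r=1$ is handled by convention, or the statement is taken for $r\ge 2$ with $\cC_1$ arbitrary.

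Third, for decoding I would invoke~\cite{leverrier2023decoding} or~\cite{gu2024single}. They give a parallel decoder in $O(\log n)$ rounds correcting all errors of (reduced) weight up to a constant fraction of $n$, provided the expansion parameter is small enough. This may force $\Delta$ larger, which is compatible with the choices above. The main obstacle is reconciling all constraints simultaneously: rate at least $1/4$, robustness needed for distance, and the decoder's stronger expansion/robustness requirement. I would first try to take all of these from~\cite{leverrier2022quantum, gu2024single} with $\Delta$ a sufficiently large constant, accepting smaller $\beta$.
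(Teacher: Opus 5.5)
Your proposal is correct in outline and uses the same overall strategy as the paper: instantiate quantum Tanner codes on a left-right Cayley complex, choose the rate parameter so that $(1-2\rho)^2 = 1/4$, read off linear distance, and cite the Leverrier--Z\'emor parallel decoder. Where it differs is in how you obtain the growth conditions.

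The paper fixes a single odd prime $q$, so $\Delta = q+1$ and the local codes are chosen once and for all. It then takes the groups $\mathrm{PSL}_2(q^r)$ over the tower of extensions $\mathbb{F}_{q^r}$. Since $|\mathrm{PSL}_2(q^r)| = (q^{3r}-q^r)/2$, consecutive lengths differ by a factor within constants of $q^3$. This gives $n_r \le C n_{r-1}$ and $n_r \ge (q^3-1)n_{r-1}$ at once, and hence $m_r \ge n_r/4 \ge (q^3-1)n_{r-1}/4 \ge 6m_{r-1}$ once $q \ge 3$. Every member of the family is used and no number theory is needed.

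You instead vary the prime field $\mathbb{F}_q$ and extract a subsequence with $q_r \in [3q_{r-1}, 6q_{r-1}]$. Your arithmetic is fine: $n_r/n_{r-1} \ge 27 \ge 24$, and the ratio is bounded above by a constant. However, plain Bertrand is not enough here. With fixed-degree LPS-type generators only certain primes are admissible. To get a non-bipartite Ramanujan Cayley graph on $\mathrm{PSL}_2(\mathbb{F}_q)$ you need, e.g., $q \equiv 1 \pmod 4$ and $p$ a quadratic residue mod $q$. You therefore need a prime in a fixed set of residue classes inside $[x, 2x]$. This holds for large $x$ by the prime number theorem for arithmetic progressions, so you should start the subsequence at a sufficiently large $q_1$.

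What your route buys is flexibility in the choice of group family. What the paper's buys is that bounded-ratio growth comes for free from the field-extension tower.

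One minor correction: in the Leverrier--Z\'emor construction the two local codes have rates $\rho$ and $1-\rho$, not both $\rho$. That asymmetry is what yields $k \ge (1-2\rho)^2 n$.
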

\begin{proof}
    We can for instance take the construction of left-right Cayley complex for the group $\mathrm{PSL}_2(q^r)$ for some odd prime $q$. This construction is described after~\cite[Theorem 1]{leverrier2022quantum} and comes from~\cite{dinur2022locally}. By choosing the parameter $\rho = 1/4$, we obtain: $n_{r} = (q+1)^2 (q^{3r} - q^{r}) / 2$ and $m_r \geq n_r/4$ and $d_{\min}^{(r)} \geq 1/(1000(q+1)^2) n_{r})$. It is simple to check that for this choice $(q^{3} - 1) n_{r-1} \leq n_{r} \leq q^3 n_{r-1}$. As a result, $m_{r} \geq n_r/4 \geq (q^3-1)n_{r-1}/4 \geq 6m_{r-1}$. The decoding algorithm follows from~\cite{leverrier2023decoding}.
\end{proof}

\begin{lemma} \label{lem:qldpc-new}
Consider a family of quantum codes $\cC_r, r= 1, 2, \dots, $ of type $(n_r, m_r)$ such that the following properties hold,
  \begin{align*}
        m_{r} &\geq \alpha n_r \\
        m_r &\geq C_2 m_{r-1}   \\
        d_{\min}^{(r)} &\geq \beta n_{r} \\
        n_{r} &\leq C_1 n_{r-1} 
    \end{align*}
for some constant $\alpha, \beta \in (0, 1]$ and $C_1 \geq C_2 > 1$. Then, there exists a code family $\overline{\cC}_{s}, s = 1, 2, \dots, $ of type $(\overline{n}_{s}, \overline{m}_{s})$ with the following property,
  \begin{align*}
\om_{s} &=  2^s\\
\overline{m}_{s} &\geq \frac{\alpha}{C_1} \on_{s} \\ 
\overline{d}_{\min}^{(s)} &\geq \beta \on_{s} \\
\on_{s}   &\leq  \frac{2C_1}{\alpha} \on_{s - 1} 
\end{align*}  
Therefore, the code $\overline{\cC}_s, s = 1, 2, \dots$ satisfies Eq.~\eqref{eq:m-r-r-1}-\eqref{eq:n-r-1-bound}. 
\end{lemma}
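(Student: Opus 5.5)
The idea is to build $\overline{\cC}_s$ by taking several copies of some $\cC_r$ and discarding surplus logical qubits, so that exactly $2^s$ remain. Fixing logical qubits to $\ket{0}$ (turning logical $Z$'s into stabilizers) keeps the code CSS and LDPC only if the added generators are low weight. That is not guaranteed, so I would instead handle the exactness of $\om_s = 2^s$ as follows.

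\textbf{Choosing the level.} For each $s$, let $r(s)$ be the smallest $r$ with $m_r \geq 2^s$. Minimality gives $m_{r(s)-1} < 2^s$, hence
\[
n_{r(s)} \leq C_1 n_{r(s)-1} \leq \frac{C_1}{\alpha} m_{r(s)-1} < \frac{C_1}{\alpha} 2^s .
\]

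\textbf{Getting exactly $2^s$ logical qubits.} Define $\overline{\cC}_s$ as $\cC_{r(s)}$ with $m_{r(s)} - 2^s$ of its logical qubits designated as gauge/unused qubits. Concretely, choose a symplectic logical basis and keep only $2^s$ pairs $(\bar X_j,\bar Z_j)$ as the encoded qubits, fixing the others to $\ket{0}$ in the encoding isometry. The stabilizer description (the parity-check matrices $H_X, H_Z$) is unchanged, so the code is still QLDPC and the same efficient decoder applies. The distance does not decrease, because every nontrivial logical operator of the restricted code is still a nontrivial logical of $\cC_{r(s)}$. Set $\on_s := n_{r(s)}$.

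\textbf{Verifying the bounds.} For the rate, $\om_s = 2^s > \frac{\alpha}{C_1} n_{r(s)} = \frac{\alpha}{C_1}\on_s$ by the choice of level. The distance bound is $\overline{d}^{(s)}_{\min} \geq d^{(r(s))}_{\min} \geq \beta \on_s$. For growth, $\on_s < \frac{C_1}{\alpha}2^s$, and
\[
\on_{s-1} = n_{r(s-1)} \geq m_{r(s-1)} \geq 2^{s-1},
\]
so $\on_s \leq \frac{2C_1}{\alpha}\on_{s-1}$. The relation $\om_s = 2\om_{s-1}$ holds by construction, which gives Eqs.~\eqref{eq:m-r-r-1}--\eqref{eq:n-r-1-bound} with $\alpha$ replaced by $\alpha/C_1$. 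For small $s$, below the first $r$ with $m_r \geq 2$, take the trivial code. This only affects finitely many $s$, which is allowed since the properties are required for $r > r_0$.

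\textbf{Main obstacle.} The delicate point is making sense of "a code with exactly $2^s$ logical qubits" within the paper's definition, where a CSS code has $m = n - |\cG_X| - |\cG_Z|$ and $m$ is dictated by the stabilizers. If a strict stabilizer code is required, one must add the unused logical $Z$'s as stabilizers. Those may have large weight, which breaks LDPC but not the four listed inequalities, since the lemma as stated does not demand LDPC. I would therefore first state the lemma's conclusion for the subsystem/encoding-isometry viewpoint and remark that the decoder and distance guarantees transfer unchanged. The hypothesis $C_2 > 1$ is used only to guarantee $m_r \to \infty$, so that $r(s)$ exists for every $s$.
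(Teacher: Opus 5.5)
Your proposal is correct and follows essentially the same route as the paper: choose $r$ with $m_{r-1} < 2^s \le m_r$, use $\cC_r$ with only $2^s$ of its logical qubits encoded, and obtain the rate, distance and growth bounds from the chain $n_r \le C_1 n_{r-1} \le \frac{C_1}{\alpha} m_{r-1} < \frac{C_1}{\alpha}2^s$ together with $\on_{s-1} \ge \om_{s-1} = 2^{s-1}$. Your additional point adds a useful clarification the paper leaves implicit: fix the surplus logical qubits in the encoding isometry while keeping $H_X, H_Z$ unchanged, so that the LDPC structure, the decoder and the distance bound all carry over. The small-$s$ edge case you flag (when $2^s \le m_1$) is also glossed over in the paper's own proof.
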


\begin{proof}
We will obtain the code family $\overline{\cC}_{s}, s= 1, 2, \dots$, using the given code family $\cC_r, r = 1, 2, \dots $. In particular, we will be using codes from $\cC_r, r = 1, 2, \dots $, with a smaller encoding rate than they can support. It is worth noting that the decreasing encoding rate can only increase minimum distance.

\smallskip The code $\overline{\cC}_s$ for any $s > 0$ is obtained as follows:
consider $r$ such that $m_{r-1} < 2^s \leq m_r$. Since $m_r \geq C_2^{r-1} m_{1}$ is strictly increasing (as $C_2 > 1$), there exists an $r$ for any $s$ satisfying $m_{r-1} < 2^s \leq m_r$. Note that for any $s' \geq s$, the corresponding $r'$, such that $m_{r'-1} < 2^{s'} \leq m_{r'}$, satisfies $r' \geq r$. Let $\overline{\cC}_s$ be the code obtained by encoding $2^s$ logical qubits in $n_r$ physical qubits corresponding to $\cC_r$.  By construction, we have that $ \overline{d}_{\min}^{(s)} \geq d_{\min}^{r} \geq \beta \on_{s}$. The encoding rate of $\overline{\cC}_s$ is given by
\begin{equation} \label{eq:rate-s}
    \frac{\om_{s}}{\on_{s}} > \frac{m_{r-1}}{n_{r}} \geq \alpha \frac{n_{r-1}}{n_{r}} \geq \frac{\alpha}{C_1}.
\end{equation}
Moreover,
\begin{align}
    \on_s \leq \frac{C_1}{\alpha} \om_s = \frac{2C_1}{\alpha} \om_{s-1} \leq \frac{2C_1}{\alpha} \on_{s-1}.
\end{align}
\end{proof}

\subsection{Fault-tolerant error correction}
For our QLDPC code $\cC_r, r= 1, 2, \dots$, we need a fault-tolerant error correction circuit. In the fault-tolerant setting, syndrome extraction is noisy; therefore, faults in the syndrome extraction circuit can introduce additional errors on the physical qubits of the code block and, crucially, can also corrupt the  syndrome bits. The goal of fault-tolerant error correction is not to eliminate all errors, but to ensure that, despite noisy syndrome extraction, the error on the code state remains controlled. In particular, since our QLDPC codes can correct a constant fraction of errors, it suffices that fault-tolerant error correction keeps the physical qubit error below this constant fraction.

\smallskip While many constructions of fault-tolerant error correction rely on multiple rounds of syndrome measurement, some code families admit \emph{single-shot} fault-tolerant error correction, requiring only one round of noisy syndrome extraction. For quantum Tanner codes, our main candidate family in this paper (see Section~\ref{sec:rates-qldpc}), single-shot fault-tolerant error correction was established in Ref.~\cite{gu2024single}. Provided that (i) the input error is within a  constant fraction and (ii) the syndrome-bit errors are themselves bounded by a constant fraction, the construction of Ref.~\cite{gu2024single} guarantees that the post-correction error on the code state remains within (slightly smaller) constant fraction.

\smallskip A key property of QLDPC codes is that they admit constant-depth syndrome-extraction circuits, since their parity-check matrices have constant-weight rows and columns. Therefore, in the syndrome extraction circuit, a single fault can affect only $O(1)$ data qubits and corrupt only $O(1)$ syndrome bits. Consequently, under circuit-level stochastic noise with a fixed constant parameter, the total error introduced during a round of noisy syndrome extraction has weight with high probability at most a constant fraction of the data qubits and a constant fraction of the syndrome bits (see also~\cite[Lemma~2]{christandl2025fault}). Combining this property of syndrome extraction with the single-shot fault-tolerant error correction of Ref.~\cite{gu2024single} motivates the following definition of fault-tolerant error correction (see also~\cite[Theorem~5]{christandl2025fault}), which we will use throughout this work.
\begin{definition}[Fault-tolerant error correction] \label{def:ft-err-cor}
A quantum circuit $\Phi_{\mathrm{EC}^r}$, with $n_r$ qubit input and output, implementing error correction on the code $\cC_r$ is said to be fault-tolerant if it is of constant depth $O(1)$ and requires $O(m_r)$ ancilla qubits, and the following holds for a fixed threshold value $\delta_{th} > 0$ and some constants $\beta_1, \beta_2, \kappa, c, c' > 0$:

\smallskip Let $\cN: \bL((\bC^2)^{\otimes h n_r} \otimes R) \to  \bL((\bC^2)^{\otimes h n_r} \otimes R)$, with $R$ being a reference system, be an arbitrary quantum channel with stabilizer reduced weight $ |\cN|_{\mathrm{red}} \leq \beta_1 n_r +  \beta_2 n_r$\footnote{Stabilizer reduced weight of a superoperator is defined in Def.~\ref{def:reduced-supop}.}, with respect to each $n_r$ qubit system in its input. Then, the noisy realization $\tilde{\cT}_{\Phi_{\mathrm{EC}^r}}$ of $\Phi_{\mathrm{EC}^r}$ under circuit-level stochastic noise with parameter $\delta < \delta_{th}$ applied in parallel to $h$ blocks of $\cC_r$ gives 
\begin{equation} \label{eq:T-del}
(\tilde{\cT}^{\otimes h}_{\Phi_{\mathrm{EC}^r}} \otimes \cI[R]) \circ \cN \circ (\cE^{\otimes h}_r \otimes \cI[R])  =  (1 - \tau) \: \cN' \circ (\cE^{\otimes h}_r \otimes \cI[R]) + \tau \:  \cZ \circ \cN \circ (\cE^{\otimes h}_r \otimes \cI[R]), 
\end{equation}
where $\cN'$ is a quantum channel with stabilizer reduced weight $\beta_2 n_r +  \kappa n_r\delta$ and $\tau \leq h \varepsilon(r, \delta)$, with $\varepsilon(r, \delta) =  (c \delta)^{ c' n_r}$. $\cZ$ is an arbitrary error channel. The parameter $\varepsilon(r, \delta)$ is referred to as the ``fault-tolerant parameter". 
\end{definition}
\paragraph{Fault-tolerant memory:} Consider fault-tolerant error correction according to Def.~\ref{def:ft-err-cor}. We note that by choosing $\delta < \frac{\beta_1}{\kappa}$, we can make sure that the fault-tolerant error correction suppresses the error weight from $\beta_1 n_r + \beta_2 n_r$ to $\beta_2 n_r + \beta'_1 n_r$, where $\beta'_1 := \kappa \delta < \beta_1$. The error suppression allows to create a fault-tolerant memory, where logical information can be preserved for arbitrarily long time by choosing $r$ sufficiently big, as a function of storage duration. 

\smallskip Using the fault-tolerant error correction circuit $\Phi_{\mathrm{EC}^r}$, we define $s > 0$ error correction steps as follows:
\begin{equation} \label{eq:s-cor-step}
     \Phi_{\mathrm{EC}^{r, s}} := \Phi_{\mathrm{EC}^{r}} \circ \stackrel{ s \text{ times}}{\cdots} \circ \: \Phi_{\mathrm{EC}^{r}}.
\end{equation}
The following result on $\Phi_{\mathrm{EC}^{r, s}}$ follows using a union bound on multiple instances of fault-tolerant error correction from Def.~\ref{def:ft-err-cor}.

\begin{lemma}[Multiple rounds of error correction] \label{lem:mult-ec}
Assuming the fault-tolerant error correction according to Def.~\ref{def:ft-err-cor} and the threshold value $\delta_{th} > 0$ and constants $\beta_1, \beta_2, \kappa, c, c' > 0$ therein, the following holds:
   
\smallskip Let $\cN: \bL((\bC^2)^{\otimes h n_r} \otimes R) \to  \bL((\bC^2)^{\otimes h n_r} \otimes R)$ be an arbitrary quantum channel with stabilizer reduced weight $|\cN|_{\mathrm{red}} \leq \beta_1n_r + \beta_2n_r$. Then, for the noisy realization $\tilde{\cT}_{\Phi_{\mathrm{EC}^{r, s}}}$ of $\Phi_{\mathrm{EC}^{r, s}}$ defined in Eq.~\eqref{eq:s-cor-step}, under circuit-level stochastic noise with parameter $\delta < \min(\delta_{th}, \frac{\beta_1}{\kappa})$, we have
\begin{equation} \label{eq:T-del-1}
(\tilde{\cT}^{\otimes h}_{\Phi_{\mathrm{EC}^{r,s}}} \otimes \cI[R]) \circ \cN \circ (\cE^{\otimes h}_r \otimes \cI[R])  =  (1 - \tau) \: \cN' \circ (\cE^{\otimes h}_r \otimes \cI[R]) + \tau \:  \cZ \circ \cN \circ (\cE^{\otimes h}_r \otimes \cI[R]), 
\end{equation}
where $\cN'$ is a quantum channel with stabilizer reduced weight $\beta_2 n_r + \kappa n_r\delta$, and $\tau \leq sh \varepsilon(r, \delta)$, with $\varepsilon(r, \delta) =  (c \delta)^{ c' n_r}$. $\cZ$ is an arbitrary error channel.
\end{lemma}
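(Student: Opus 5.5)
The plan is to induct on the number of rounds $s$, applying Def.~\ref{def:ft-err-cor} once per round and paying for each round's failure event with a union bound. At every round we keep the channel in the form ``good part with bounded reduced weight, applied after the encoding'' plus a leftover failure branch.

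\textbf{Base case.} For $s=1$ the statement is exactly Def.~\ref{def:ft-err-cor}. Here $\tau \le h\varepsilon(r,\delta)$, and $\cN'$ has stabilizer reduced weight at most $\beta_2 n_r+\kappa n_r\delta$.

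\textbf{Inductive step.} Suppose after $s-1$ rounds we have
\[
(1-\tau_{s-1})\,\cN'_{s-1}\circ(\cE_r^{\otimes h}\otimes\cI[R]) \;+\; \tau_{s-1}\,\cZ_{s-1}\circ\cN\circ(\cE_r^{\otimes h}\otimes\cI[R]),
\]
with $|\cN'_{s-1}|_{\mathrm{red}}\le \beta_2 n_r+\kappa n_r\delta$ with respect to each block, and $\tau_{s-1}\le (s-1)h\varepsilon(r,\delta)$. We now apply one more round $\tilde{\cT}^{\otimes h}_{\Phi_{\mathrm{EC}^r}}\otimes\cI[R]$, treating the two branches separately.

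The key check is on the good branch. Since $\delta<\beta_1/\kappa$, we have $\kappa n_r\delta<\beta_1 n_r$, so $\cN'_{s-1}$ satisfies the hypothesis $|\cdot|_{\mathrm{red}}\le\beta_1 n_r+\beta_2 n_r$ of Def.~\ref{def:ft-err-cor}. Applying the definition to this branch gives $(1-\tau')\,\cN'_s\circ\cE + \tau'\,\cZ'\circ\cN'_{s-1}\circ\cE$, where $\tau'\le h\varepsilon(r,\delta)$ and $|\cN'_s|_{\mathrm{red}}\le\beta_2 n_r+\kappa n_r\delta$.

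The failure branch is handled by linearity: applying any channel to it keeps it of the form (channel)$\circ\cN\circ\cE$. For the good-branch failure term $\cZ'\circ\cN'_{s-1}\circ\cE$, we need to express it in the form $\cZ''\circ\cN\circ\cE$, which requires a careful formulation of the induction invariant (see the last paragraph).

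\textbf{Collecting terms.} The good part carries weight $(1-\tau_{s-1})(1-\tau')$. Set $\tau_s:=1-(1-\tau_{s-1})(1-\tau')\le\tau_{s-1}+\tau'\le s h\varepsilon(r,\delta)$. Everything else, normalized, is a convex combination of channels and hence a channel $\cZ$, which gives Eq.~\eqref{eq:T-del-1}.

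\textbf{Main obstacle.} The delicate point is the form of the bad branch. The statement writes it as $\cZ\circ\cN\circ\cE$, but after the first round the failure terms are naturally of the form $\cZ'\circ\cN'_{j}\circ\cE$, where $\cN'_j$ arises from the earlier rounds rather than from $\cN$. To handle this, I would phrase the induction invariant on the \emph{full composed channel} $\tilde{\cT}^{\otimes h}_{\Phi_{\mathrm{EC}^{r,s}}}\circ\cN\circ\cE$. The reason this works is that the decomposition in Def.~\ref{def:ft-err-cor} comes from splitting over fault patterns: each rounds's noisy circuit is a mixture of faulty circuits, and the ``bad'' fault patterns have total probability at most $h\varepsilon$. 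So I would split the $s$-round circuit's fault patterns into ``every round's pattern is good'' and its complement. The complement has probability at most $sh\varepsilon$ by the union bound, and the corresponding conditional channel is just some channel applied after $\cN\circ\cE$, which is of the required form $\cZ\circ\cN\circ\cE$. On the all-good event, we apply the single-round guarantee $s$ times, using the weight reduction $\kappa\delta<\beta_1$ at each step to re-enter the hypothesis of Def.~\ref{def:ft-err-cor}.
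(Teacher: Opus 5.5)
Your proposal is correct and takes the same route as the paper, whose proof is the one-line remark that the lemma follows from a union bound over $s$ applications of Def.~\ref{def:ft-err-cor}: this is your induction, using $\kappa\delta<\beta_1$ to re-enter the hypothesis each round and $1-\prod_j(1-\tau_j)\le sh\varepsilon(r,\delta)$ to bound the failure probability. You are right that later-round failure terms naturally come out as $\cZ'\circ\cN'_j\circ\cE$ rather than $\cZ\circ\cN\circ\cE$, a point the paper passes over; splitting over the fault patterns of all $s$ rounds is the implicit justification behind its ``union bound'', though it uses a per-fault-pattern form of the single-round guarantee that is slightly stronger than Def.~\ref{def:ft-err-cor} as literally stated.
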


\section{Fault-tolerant interfaces}
\label{sec:mainsection}
In this section, we construct fault-tolerant decoding interfaces with constant overhead for QLDPC codes. Our construction assumes QLDPC code $\cC_r, r= 1, 2, \dots$, satisfying Eq.~\eqref{eq:m-r-r-1}-\eqref{eq:min-distance}. Moreover, efficient classical decoding and fault-tolerant error correction according to Def.~\ref{def:ft-err-cor} are assumed. As shown in Section~\ref{sec:QLDPC}, quantum Tanner codes combined with the decoder analysis of Refs.~\cite{leverrier2023decoding, gu2024single} 
satisfy these assumptions.

\smallskip This section is organized as follows: In Section~\ref{sec:main-result}, we state our main result on the fault-tolerant decoding interface $\Xi^{[h]}_{r}$, which relies on the fault-tolerance of partial decoding interfaces $\Gamma_{r, r'}$. The remaining sections are dedicated to proving this result: In Section~\ref{sec:ft-interface-non-constant}, we discuss the partial decoding interface and in Section~\ref{sec:proof-succ-int} we provide its error analysis. Section~\ref{sec:many-blocks} provides an extension to multiple code blocks. In Section~\ref{sec:const-ovr-interface} we present the main argument regarding the constant overhead. The error analysis is more involved and will be provided in a separate Section~\ref{sec:err-analysis}. 
\subsection{Main result} \label{sec:main-result}
Our goal is to construct a decoding interface circuit $\Xi^{[h]}_{r}$ that maps $m_r h$ logical qubits encoded in $h$ blocks of $\cC_r$ to bare physical qubits for any $r > 0$, while having a constant qubit overhead, i.e. operating on $O(m_r h)$ qubits. Moreover, for fault-tolerance, we require that the noisy version of $\Xi^{[h]}_{r}$ preserves the logical information encoded in $h$ blocks of $\cC_r$ up to a weak local stochastic noise applied on the bare physical qubits.

\smallskip Our main result is the following Theorem~\ref{thm:ft-cons-int-main}, where we provide our fault-tolerant decoding interfaces for QLDPC codes.

\begin{theorem} \label{thm:ft-cons-int-main}
There exists a polynomial $p(\cdot)$, a constant $\mu > 0$, a threshold value $\delta_{th} > 0$, qubit overhead constants $\theta, \theta' > 0$, and error rate constants $\kappa_1, \kappa_2 > 0$, such that the following holds:

\smallskip For any $r, h \in \mathbb{N}$, any $\delta < \delta_{th}$, and any quantum channel 
$$\cN: \bL((\bC^2)^{\otimes n_r h} \otimes R) \to \bL((\bC^2)^{\otimes n_r h} \otimes R),$$ with stabilizer reduced weight relative to $\cC_r$ bounded as $|\cN|_{\mathrm{red}} \leq \mu n_r$ with respect to the $n_r$ qubits in each block of the input, there exists a quantum circuit $\Xi^{[h]}_r$, with $n_r h$ qubit input and $m_r h$ qubit output and operating on fewer than $ \theta p(m_r) m_r+ \theta' m_r h$ qubits, such that its noisy realization under circuit level stochastic noise with parameter $\delta$ satisfies:
 \begin{equation} \label{eq:interface-xi-r}
    (\tilde{\cT}_{\Xi^{[h]}_r} \otimes \cI[R]) \circ \cN \circ (\cE_r^{\otimes h} \otimes \cI[R])  =  \cV', 
 \end{equation}
 where $\cV':\bL((\bC^2)^{\otimes m_r h} \otimes R) \to \bL((\bC^2)^{\otimes m_r h} \otimes R)$ is a local stochastic channel with parameter $(\kappa_1 \delta)^{\kappa_2}$ with respect to $m_r h$ qubits in its input.

 In particular, if $h \geq p(m_r)$, $\Xi^{[h]}_{r}$ operates on fewer than $(\theta + \theta') m_rh$ qubits; therefore, the qubit overhead is given by the constant $\theta + \theta'$.
\end{theorem}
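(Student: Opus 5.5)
The plan is to build $\Xi^{[h]}_r$ exactly as sketched in the introduction: a cascade of partial interfaces $\Gamma_{r-y,r-y-1}$, each mapping one block of $\cC_{r-y}$ into two blocks of $\cC_{r-y-1}$ (using $m_{r-y}=2m_{r-y-1}$ from Eq.~\eqref{eq:m-r-r-1}), followed by a final constant-size interface from a fixed level $r'$ down to bare qubits. First I would construct $\Gamma_{s,s-1}$ by logical teleportation. The resource state $(V_{s-1}^{\otimes 2}\otimes V_s)$ applied to $m_s$ EPR pairs is prepared with Theorem~\ref{thm:state-prep}, with $k$ chosen as a suitable multiple of $n_s$ (or at least $\log$ of the relevant block count), which costs $\mathrm{poly}(n_s)$ qubits. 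Then I would perform a transversal Bell measurement between the input $\cC_s$ block and its $\cC_s$ half, decode the outcome classically with the efficient decoder, and apply the Pauli frame to the two $\cC_{s-1}$ blocks. While the classical processing takes $O(\mathrm{polylog}\, n_s)$ steps, those blocks are protected by $\Phi_{\mathrm{EC}^{s-1,t}}$ steps.

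The error analysis of $\Gamma_{s,s-1}$ would go as follows. By Remark~\ref{rem:noisy-g-ls} the transversal gates and measurements contribute local stochastic noise. Theorem~\ref{thm:state-prep} leaves a local stochastic error of rate $2c\delta$ on the resource, except with probability $\tau$. Lemma~\ref{lem:ls-lw} turns this into weight at most $\mu n_{s-1}$ per block, except with probability exponentially small in $n_{s-1}$. The input weight $\mu n_s$ together with the measurement faults has reduced weight below $\beta n_s/2$, so the logical Bell outcome is correct. Then Lemma~\ref{lem:mult-ec} keeps the output blocks at weight $\beta_2 n_{s-1}+\kappa\delta n_{s-1}$, with failure probability at most $t\,\varepsilon(s-1,\delta)$ per block. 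The result is that $\Gamma_{s,s-1}$, applied in parallel to blocks with reduced weight at most $\mu n_s$, yields blocks with reduced weight at most $\mu n_{s-1}$. This holds up to a failure event of probability at most $\mathrm{poly}(n_s)(c\delta)^{c'' n_{s-1}}$ per block.

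For the scheduling and overhead, at level $s=r-y$ there are $2^y h$ blocks. I would apply $\Gamma_{s,s-1}$ to $b_s$ blocks in parallel and run $\Phi_{\mathrm{EC}^{s}}$ or $\Phi_{\mathrm{EC}^{s-1}}$ on all other blocks. I would choose $b_s$ so that $b_s\,\mathrm{poly}(n_s)\le \theta' m_r h$, or $\le \theta\,p(m_r)m_r$ when $h$ is small. Since $\mathrm{poly}(n_s)$ shrinks geometrically in $y$ while the block count doubles, the number of rounds $T_s=2^y h/b_s$ decreases geometrically. I would then take $p$ large enough that $T_r\,\mathrm{poly}(\log n_r)\cdot(\text{per-round failure})$ is small, and stop at a constant level $r'$ where $\Gamma_{r'}$ is applied to all blocks at once.

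Failures would be handled by a union bound over blocks and rounds. The total contribution at level $s$ is about $2^y h\,T_s\,\mathrm{poly}(n_s)(c\delta)^{c''n_{s-1}}$. The output channel has to be local stochastic rather than just small in total probability, so I would bound failures block by block. A failure at level $s$ corrupts at most the $m_s$ output qubits descending from that block. For any set $T$ of output qubits, the qubits that are hit must be covered by failed ancestor blocks, and the probability of that is a product over disjoint ancestors of terms at most $(c\delta)^{\Omega(n_s)}\le(\kappa_1\delta)^{\kappa_2 m_s}$. This requires independence of faults in disjoint regions, together with the fact that the ancestor failure bounds hold even conditioned on other blocks. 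At the fixed level $r'$, Theorem~\ref{thm:state-prep} gives local stochastic noise of rate $O(\delta)$ directly.

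The main obstacle is this last conversion: turning block-wise failure probabilities, including the waiting time $T_s$, into a genuine local-stochastic bound with parameter $(\kappa_1\delta)^{\kappa_2}$ on the final $m_rh$ qubits. I would first attempt it by defining a hierarchical ``bad block'' tree and showing that the events are stochastically dominated by independent events with probability at most $\mathrm{poly}(n_s)\,T_s\,(c\delta)^{c''n_{s-1}}$. I would then choose $r'$ large enough that this is at most $(\kappa_1\delta)^{\kappa_2 m_{s}}$ for all $s>r'$, absorbing the polynomial factors into the exponent.
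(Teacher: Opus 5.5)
\textbf{Where the proposal falls short: turning block failures into a local stochastic bound.}
Your construction, scheduling and qubit count coincide with the paper's. You use a cascade of teleportation-based $\Gamma_{s,s-1}$, with resource states prepared by Theorem~\ref{thm:state-prep} at $k=\Theta(n_s)$. You process a $1/\mathrm{poly}(m_s)$ fraction of the blocks per round, so the number of rounds at level $s$ is about $\mathrm{poly}(n_s)/m_s$, independent of $h$. And you stop at a fixed bottom level $r'$. Your requirement that a level-$s$ block fail with probability at most $(\kappa_1\delta)^{\kappa_2 m_s}$ is the paper's doubly-exponential bound, Lemma~\ref{lem:choice-ovr}.

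The gap is in the step you yourself call the main obstacle. A ``product over disjoint ancestors'' bounds the probability that one \emph{fixed} family of ancestors fails, but which ancestors cover $T$ is random. Write $\overline{T}$ for the set of level-$r'$ blocks touched by $T$. What you actually have is
\[
\pr(\overline{T}\ \text{all hit}) \le \sum_{A} \prod_{a\in A} \tau_a ,
\]
where $A$ ranges over all antichains of nodes of the block tree that cover $\overline{T}$. Each term is at most $(\kappa_1\delta)^{\kappa_2|T|}$. However, the number of covers grows with the depth $r-r'$: two leaves in different halves of the tree already have $\Theta((r-r')^2)$ covers. So this crude bound does not give an $r$-independent parameter. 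You must use the slack that $\tau_a\le\epsilon^{W(a)}$, where $W(a)$ is the number of leaf blocks below $a$, can be far smaller than $\epsilon^{t_a}$, where $t_a$ is the number of elements of $\overline{T}$ below $a$. This is the content of the paper's Lemma~\ref{lem:rec} and Lemma~\ref{lem:final-upper-bound}. They recurse over the tree while conditioning on ancestors being good, and end with $(2\odelta)^{2|\overline{T}|}$.

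\textbf{How your coupling idea closes it.}
Your stochastic-domination idea actually closes this more cheaply than the paper's conditioned recursion. Corollary~\ref{cor:succ-int-h} says that, given its parent succeeded, each child block fails independently with probability $\tau$. So you may take independent $Y_a\sim\mathrm{Bernoulli}(\tau_a)$ at every node and declare a leaf hit iff some ancestor (itself included) has $Y_a=1$.

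Let $P(a)$ be the probability that all $t_a\ge1$ elements of $\overline{T}$ below $a$ are hit from within the subtree of $a$. Then $P(a)\le\tau_a+\prod_c P(c)$, with the product over children $c$ having $t_c\ge1$. Suppose $\tau_a\le\epsilon^{W(a)}$ (your choice of $r'$ provides this) and $\epsilon\le\tfrac14$. Then induction with the hypothesis $P(a)\le\tfrac12(4\epsilon)^{t_a}-\epsilon^{W(a)}$ goes through:
\begin{itemize}
\item at a leaf it reads $\tau_a\le\epsilon$;
\item with one relevant child it needs $2\epsilon^{W(c)}\le1$;
\item with two relevant children it needs $8\epsilon^{W(a)}\le(4\epsilon)^{t_a}$, which holds since $W(a)\ge t_a\ge2$.
\end{itemize}
This gives $\pr(\overline{T}\ \text{all hit})\le(4\epsilon)^{|\overline{T}|}$.

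\textbf{Smaller corrections.}
\begin{itemize}
\item The level-$r'$ blocks must be the leaves of this tree. Contrary to your last sentence, $\Gamma_{r',1}$ does not merely add $O(\delta)$ local stochastic noise. By Lemma~\ref{lem:succ-int-1} it also fails on a whole block of $m_{r'}$ qubits with probability $\lambda_{r'}(c\delta)^{c'm_{r'}}=O(\delta)$. This is why $\kappa_2\approx 1/m_{r'}$.
\item A round lasts $\mathrm{poly}(n_s)$ steps, not $\mathrm{polylog}(n_s)$, because the resource preparation with $k=\Theta(n_s)$ has polynomial depth.
\item What keeps the waiting under control is not a large $p$. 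It is that each block waits only $\mathrm{poly}(n_s)$ steps, independently of $h$, so the error-correction failure $\mathrm{poly}(n_s)(c\delta)^{c'n_s}$ folds into $\tau_s$. This is the paper's effective interface.
\end{itemize}
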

To construct our decoding interface $\Xi^{[h]}_{r}$, we proceed as follows:  we first construct a partial decoding interface $\Gamma_{r, r'}$ in Lemma~\ref{lem:succ-int} for $r > r' > 1$ such that $r - r' = O(1)$ (i.e., the interface reduces the encoding by a constant number of levels), mapping a block of $\cC_r$ into $2^{r-r'}$ blocks of $\cC_{r'}$. The interface $\Gamma_{r, r'}$ does not have a constant overhead. We also construct an interface $\Gamma_{r, 1}$ in Lemma~\ref{lem:succ-int-1} that works for any constant $r$, that is, $r = O(1)$. 

\smallskip We then combine $\Gamma_{\hat{r}, \hat{r}-1}$ for different values of $\hat{r}$ to construct an interface $\Xi^{[h]}_{r, r'}$ for arbitrary $r, r'$ (Lemma~\ref{lem:cons-over-interface}), which maps $h$ blocks of $\cC_r$ into $2^{r-r'}h$ blocks of $\cC_{r'}$.  Finally, we obtain the interface $\Xi^{[h]}_{r}$ for Theorem~\ref{thm:ft-cons-int-main} (here $r$ is arbitrary) by composing interfaces $\Xi^{[h]}_{r, r'}$ and $\Gamma_{r', 1}$ for some fixed $r'$.

\subsection{Partial decoding interfaces for QLDPC codes} \label{sec:ft-interface-non-constant}
In this section, we provide a partial decoding interface based on quantum teleportation, which will be used to construct the decoding interface $\Xi^{[h]}_r$ for Theorem~\ref{thm:ft-cons-int-main} in Section~\ref{sec:const-ovr-interface}. 

\smallskip We consider the following scenario: Let $r > r' \geq 1$, such that $r - r' = O(1)$. We want to map $m_r$ logical qubits encoded in one block of $\cC_r$ consisting of $n_r$ physical qubits to $m_r$ logical qubits encoded in $2^{r-r'}$ blocks of $\cC_{r'}$ each consisting of $n_{r'}$ physical qubits. In order to do so, in Lemma~\ref{lem:succ-int}, we present a fault-tolerant interface circuit $\Gamma_{r, r'}$ for $r > r' > 1$, with $n_r$ qubit input and $ 2^{r-r'} n_{r'}$ qubit output. For the special case of $r = O(1)$ and $r' = 1$, we present a fault-tolerant interface $\Gamma_{r, 1}$ in Lemma~\ref{lem:succ-int-1}.

\smallskip We emphasize that $r - r' = O(1)$ is important for the fault-tolerance of the partial interface $\Gamma_{r, r'}$. Moreover,  the overhead of $\Gamma_{r, r'}$ grows with $r$. 
\begin{lemma}\label{lem:succ-int}
There exist polynomials $p_1(\cdot), p_2(\cdot)$, and $p_3(\cdot)$, constants $\mu, \theta,  \lambda, d, c, c',  > 0$, and a fixed threshold value $\delta_{th} > 0$, such that the following holds:
 
\smallskip Consider codes $\cC_r$ and $\cC_{r'}$, where $r > r' > 1$ and $r - r' = O(1)$. Let $\cN: \bL((\bC^2)^{\otimes n_r} \otimes R) \to \bL((\bC^2)^{\otimes n_r}  \otimes R)$, with $R$ being a reference system, be a quantum channel with stabilizer reduced weight relative to $\cC_r$ bounded as $|\cN|_{\mathrm{red}} \leq \mu n_r$ with respect to $n_r$ qubits in its input.

\smallskip Then, for any positive integer $k$, there exists a quantum circuit $\Gamma_{r, r'}$, with $n_r$ qubit input and $2^{r-r'} n_{r'}$ qubit output, working on fewer than $\theta m_r p_1(m_r)$ physical qubits and of depth smaller than $d p_2(m_r)$, such that for the quantum channel $\tilde{\cT}_{\Gamma_{r, r'}}$ corresponding to the noisy realization of $\Gamma_{r, r'}$ under circuit-level stochastic noise with parameter $\delta < \delta_{th}$, we have\footnote{Here $\cE^i_{r'}, i \in [2^{r-r'}]$ denote $2^{r-r'}$ copies of the encoding $\cE_{r'}$ of $\cC_{r'}$.}
\begin{multline} \label{eq:gamma-r-r'}
    (\tilde{\cT}_{\Gamma_{r, r'}} \otimes \cI[R]) \circ \cN \circ (\cE_r \otimes \cI[R])    = (1 - \tau_{r})  \cN' \circ \left((\otimes_{i \in [2^{r-r'}]}\cE^i_{r'}) \otimes  \cI[R]\right)  \\ + \tau_{r} \: (\cZ_{r \to r'} \otimes \cI[R]) \circ \cN \circ (\cE_r \otimes \cI[R]),
\end{multline}
where $\cN': \bL((\bC^2)^{\otimes n_{r'} \: 2^{r-r'}} \otimes R) \to \bL((\bC^2)^{\otimes n_{r'} \: 2^{r-r'}}  \otimes R)$ is a quantum channel with weight\footnote{Note that here we have weight, not stabilizer reduced weight. A bound on the former implies a bound on the latter.} $\mu n_{r'}$ with respect to each $n_{r'}$ qubit system in its input. The parameter $\tau_{r} \in [0, 1]$ is given by 
\begin{equation} \label{eq:bound-tau-rk}
    \tau_{r} \le \lambda p_3(m_{r'}) (c \delta)^{c' m_{r'}}.
\end{equation}
 The map $\cZ_{r  \to r'}$ is an arbitrary channel with $n_{r}$ qubit input and $2^{r-r'}n_{r'}$ qubit output.
\end{lemma}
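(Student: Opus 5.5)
We construct $\Gamma_{r,r'}$ by logical teleportation, as described in the introduction. Let $\ket{\Psi_{r,r'}}$ be the resource state obtained by preparing $m_r$ EPR pairs and applying $V_r$ to the first halves and $V_{r'}^{\otimes 2^{r-r'}}$ to the second halves, grouped in $2^{r-r'}$ registers of $m_{r'}$ qubits (recall $m_r = 2^{r-r'} m_{r'}$ by Eq.~\eqref{eq:m-r-r-1}).

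The circuit $\Gamma_{r,r'}$ has four stages.
\begin{itemize}
\item[(a)] Prepare $\ket{\Psi_{r,r'}}$ with the fault-tolerant state preparation of Theorem~\ref{thm:state-prep}, choosing $k$ proportional to $m_{r'}$. The state is described by a circuit on $O(n_r)$ qubits whose size is polynomial in $m_r$, since Clifford encoders have size $O(n_r^2)$. This gives the polynomials $p_1, p_2$ in the qubit count and depth.
\item[(b)] Apply transversal physical Bell measurements (CNOT, Hadamard, $Z$-measurements) between the input block of $\cC_r$ and the $\cC_r$ half of the resource state.
\item[(c)] While the classical decoder of $\cC_r$ processes the measurement outcomes, apply $\Phi_{\mathrm{EC}^{r',s}}$ to each of the $2^{r-r'}$ output blocks. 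Here $s = O(\log n_r) = \mathrm{poly}(m_r)$ is the decoder's runtime from Theorem~\ref{thm:qldpc-tanner}.
\item[(d)] Apply the logical Pauli correction transversally, as a physical Pauli representative, to the output blocks, and finish with one more round of error correction so that the error weight returns below $\mu n_{r'}$.
\end{itemize}

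The error analysis follows the stages in order, with a failure event tracked at each one.

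\emph{Stage (a).} Theorem~\ref{thm:state-prep} gives $(1-\tau)\cW\circ\cT + \tau\cZ$, with $\tau \le O(\mathrm{poly}(m_r)(c\delta)^{k})$. Here $\cW$ is local stochastic with parameter $2c\delta$, with respect to each block of both halves. Lemma~\ref{lem:ls-lw} converts this, up to probability $O(2^{r-r'})(2^{h_2(\mu_0)/\mu_0}2c\delta)^{\mu_0 n_{r'}}$, into a channel of weight $\mu_0 n$ on each block, for a small constant $\mu_0$.

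\emph{Stage (b).} By Remark~\ref{rem:noisy-g-ls}, the noisy Bell measurements equal ideal ones composed with local stochastic noise, and Lemma~\ref{lem:ls-lw} again bounds its weight per block. Errors on the $\cC_r$ side, namely the input weight $\mu n_r$, the resource errors and the measurement errors, become a flip pattern on the measured classical strings. This pattern has reduced weight at most $(\mu+2\mu_0)n_r < \beta n_r/2$. Since measured $X$- and $Z$-parities of a CSS code are obtained classically from transversal outcomes, the efficient decoder for $\cC_r$ recovers the correct logical Bell outcome whenever this holds. We must choose $\mu,\mu_0,\delta_{th}$ small compared with $\beta$.

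\emph{Stage (c).} Lemma~\ref{lem:mult-ec} keeps the output blocks at reduced weight $\beta_2 n_{r'} + \kappa\delta n_{r'}$, with failure probability $s\,2^{r-r'}(c\delta)^{c' n_{r'}}$.

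\emph{Stage (d).} The Pauli correction is transversal, so it adds only local stochastic noise.

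Finally we collect all failure probabilities by a union bound. Because $r-r' = O(1)$, we have $n_r = \Theta(n_{r'}) = \Theta(m_{r'})$ by Eq.~\eqref{eq:n-r-1-bound}, so every exponent is linear in $m_{r'}$ and every prefactor is $\mathrm{poly}(m_r) = \mathrm{poly}(m_{r'})$. This yields Eq.~\eqref{eq:bound-tau-rk}, absorbing constants into $c,c'$ after shrinking $\delta_{th}$.

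\emph{Main obstacle.} The hardest step is obtaining the statement with \emph{weight} (not merely reduced weight) $\mu n_{r'}$ and in the exact form of Eq.~\eqref{eq:gamma-r-r'}, where the failure branch is $\cZ_{r\to r'}\circ\cN\circ\cE_r$. For the weight, I would take the good branch to be one in which every residual error is an explicit physical error: the local stochastic noise after the last error-correction round plus the faults of the last layer. I would then apply Lemma~\ref{lem:ls-lw} to that residual, rather than using the reduced weight guaranteed by Lemma~\ref{lem:mult-ec}. For the form, I would push the ideal logical teleportation through. Conditioned on a good fault pattern, the correctable errors on the $\cC_r$ side are absorbed by ideal decoding, using $\cD\circ\cV\circ\cE = \cE$. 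The teleportation identity then maps $\cE_r$ to $\otimes_i\cE^i_{r'}$ followed by a low-weight channel $\cN'$. Every bad fault pattern is lumped into $\cZ_{r\to r'}$. Checking that the normalization works out, so that conditioning on good fault patterns still yields valid channels, needs care. I would handle it as in Lemma~\ref{lem:ls-lw}, by splitting the convex decomposition into good and bad parts.
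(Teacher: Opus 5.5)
Your architecture matches the paper's: fault-tolerant preparation of the encoded EPR resource via Theorem~\ref{thm:state-prep}, then transversal Bell measurement decoded by the classical decoder of $\cC_r$. Error correction runs on the $2^{r-r'}$ output blocks during classical processing, and a final union bound uses $r-r'=O(1)$ to make every exponent linear in $m_{r'}$. There is, however, a genuine gap in stage (a): you never say what happens to the input block $Q$ while $\overline{\Psi}_{r,r'}$ runs.

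That preparation has depth $\poly(m_r)$. The input wires are present from layer $0$ of the circuit (Definition~\ref{def:quantum-circuit}), and every layer applies at least a noisy idle gate to them, so you cannot reorder to avoid the wait. If $Q$ merely idles for $D=\poly(m_r)$ layers, the accumulated noise is local stochastic only with parameter of order $D\delta$. Each input qubit is then hit with probability about $1-(1-\delta)^{D}\to 1$ as $r$ grows at fixed $\delta$. Lemma~\ref{lem:ls-lw} gives nothing in this regime. Your stage-(b) claim that the flip pattern on the measured strings has reduced weight at most $(\mu+2\mu_0)n_r$ is therefore false, because the input no longer carries weight $\mu n_r$ at the time of the Bell measurement. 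This is exactly the waiting-time bottleneck the paper's introduction warns about.

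The missing ingredient, which the paper includes, is to run $\Phi_{\mathrm{EC}^{r,s_1}}$ on $Q$ in parallel with the preparation, with $s_1$ equal to the depth of $\overline{\Psi}_{r,r'}$. Lemma~\ref{lem:mult-ec} then keeps the input's reduced weight bounded, at a failure cost $s_1 (c\delta)^{c' n_r}$ that is absorbed into Eq.~\eqref{eq:bound-tau-rk}. With this added, your argument essentially coincides with the paper's.

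A secondary remark concerns your plan to obtain physical weight $\mu n_{r'}$. You propose applying Lemma~\ref{lem:ls-lw} to ``the local stochastic noise after the last error-correction round plus the faults of the last layer''. That does not work as described. After a noisy round of $\Phi_{\mathrm{EC}}$, the residual error includes what imperfect, noisy-syndrome correction leaves behind, and Def.~\ref{def:ft-err-cor} controls this only in reduced weight; it is not a fresh local stochastic channel. The paper does not resolve this point either; it simply carries the weight bound through Lemma~\ref{lem:mult-ec}. So this is not where your proof must be repaired, but you should not rely on that route.
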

Note that the qubit overhead of the interface $\Gamma_{r,r'}$ with respect to the number of logical qubits on which it operates scales as $p_1(m_r)$. Moreover, the noisy implementation of the interface $\Gamma_{r, r'}$ when applied on a code state of $\cC_r$ with a constant fraction errors on top has two possible outcomes:
\begin{itemize}
\item[$(i)$] With probability $\tau_{r}$, the interface fails entirely, in the sense that the logical information is corrupted by an arbitrary noise channel $\cZ_{r \to r'}$ acting on the whole block of $\cC_r$. In this case, we do not control the error.

\item[$(ii)$] With the remaining probability $1 - \tau_{r}$, the interface works as intended: the encoded information in $\cC_r$ is mapped into $2^{r-r'}$ smaller code blocks, each encoded in $\cC_{r'}$ up to a constant fraction of errors, where the constant fraction is at most the input fraction. 
\end{itemize}

In the following lemma, we present the partial interface $\Gamma_{r, r'}$ for the special case $r=O(1), r' = 1$. Compared to Lemma~\ref{lem:succ-int}, we obtain a more explicit characterization of the output noise in Lemma~\ref{lem:succ-int-1}, in particular, it is a local stochastic channel. Note from Lemma~\ref{lem:ls-lw} that a local stochastic channel can be treated as low-weight noise with high probability.
\begin{lemma}\label{lem:succ-int-1}
There exists a fixed threshold value $\delta_{th} > 0$ and constants $\mu, c, c' > 0$ such that the following holds:

\smallskip Consider the code $\cC_r$, where $r = O(1)$. Let $\cN: \bL((\bC^2)^{\otimes n_r} \otimes R) \to \bL((\bC^2)^{\otimes n_r}  \otimes R)$, with $R$ being a reference system, be a quantum channel with stabilizer reduced weight (relative to $\cC_r$) bounded as $|\cN|_{\mathrm{red}} \leq \mu n_r$ with respect to $n_r$ qubits in its input.

\smallskip Then, there exists a quantum circuit $\Gamma_{r, 1}$ with $n_r$ qubit input and $m_r$ qubit output and with a qubit count and depth depending only on $r$, such that for the quantum channel $\tilde{\cT}_{\Gamma_{r, 1}}$ corresponding to the noisy realization of $\Gamma_{r, 1}$ under circuit-level stochastic noise with parameter $\delta < \delta_{th}$, we have
\begin{multline} \label{eq:gamma-r-1}
    (\tilde{\cT}_{\Gamma_{r, r'}} \otimes \cI[R]) \circ \cN \circ (\cE_r \otimes \cI[R])    = (1 - \tau_r)  \cV \circ \left(\cI_{m_r} \otimes  \cI[R]\right)  \\ + \tau_r \: (\cZ_{r \to 1} \otimes \cI[R]) \circ \cN \circ (\cE_r \otimes \cI[R]),
\end{multline}
where the parameter $\tau_r \in [0, 1]$ is bounded as $\tau_r \le \lambda_r (c \delta)^{c' m_r}$, with $\lambda_r$ only depending on $r$. The map $\cV: \bL((\bC^2)^{\otimes m_{r}} \otimes R) \to \bL((\bC^2)^{\otimes m_{r} } \otimes R)$ is a local stochastic channel with parameter $\delta' = \lambda'_{r} \delta$ with respect to $m_r$ qubit system in its input, where $\lambda'_{r} > 0$ only depends on $r$. $\cZ_{r  \to 1}$ is an arbitrary channel with $n_{r}$ qubit input and $m_r$ qubit output.
\end{lemma}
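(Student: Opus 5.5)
We build $\Gamma_{r,1}$ as a teleportation-based decoder, as in Fig.~\ref{fig:dec-tele}, using the resource state $\ket{\Psi}_r = (V_r \otimes \ident_{m_r})\,\ket{\Phi^+}^{\otimes m_r}$ from Eq.~\eqref{eq:ent-state-int}. We prepare it fault-tolerantly with Theorem~\ref{thm:state-prep} applied to a fixed circuit of size depending only on $r$, taking $k = m_r$. Because $r=O(1)$, this circuit has qubit count and depth depending only on $r$. Its noisy realization equals $(1-\tau_1)\cW\circ\ket{\Psi}_r\bra{\Psi}_r + \tau_1\cZ$, where $\cW$ is local stochastic with parameter $2c\delta$ and $\tau_1 \le O_r((c\delta)^{m_r})$.

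\textbf{Logical Bell measurement.} We apply a transversal CNOT between the input block and the $\cC_r$ half of the resource state, followed by transversal Hadamard and $Z$-measurements on the two blocks. By Remark~\ref{rem:noisy-g-ls}, these noisy transversal operations equal the ideal ones composed with a local stochastic channel of parameter $O(\delta)$ on each block. The measurement outcomes are then processed classically:
\begin{itemize}
\item syndromes are computed from the measured bit strings;
\item the efficient classical decoder for $\cC_X$ and $\cC_Z$ corrects them;
\item the logical Bell outcome is read out via $G_q$;
\item a classically controlled Pauli correction is applied to the $m_r$ unencoded output qubits.
\end{itemize}
The classical processing waits a number of steps that depends only on $r$. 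During those steps, and while the resource state is prepared, the $m_r$ bare qubits accumulate local stochastic noise of parameter $O_r(\delta)$.

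\textbf{Error analysis.} We push all noise through the Clifford circuit, which maps each location fault to a constant-size region. We then split it into two parts: errors on the code blocks, which become classical bit-flips on the measurement outcomes, and errors on the bare output qubits.

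For the code blocks, the combined error has weight at most $|\cN|_{\mathrm{red}}$ plus the weight of the local stochastic error on each block. By Lemma~\ref{lem:ls-lw}, except with probability $(c\delta)^{c'n_r}$, the local stochastic part has weight at most $\mu n_r$. Choosing $2\mu < \beta/2$, the decoder then corrects the classical strings correctly, since a transversal measurement of a CSS code state with low-weight $X$ errors gives a noisy codeword of $\cC_Z$. In that event the logical outcome is exact, and the Pauli correction reproduces the logical state up to the output noise. All remaining failures are absorbed into the term $\tau_r\,\cZ_{r\to 1}\circ\cN\circ\cE_r$, and since $n_r \ge m_r$ we get $\tau_r \le \lambda_r (c\delta)^{c'm_r}$.

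The output noise is a composition of the local stochastic noise $\cW$ restricted to the second half of $\ket{\Psi}_r$, idle-gate noise over an $r$-dependent number of steps, and the noisy classically controlled Paulis. Each of these is local stochastic with parameter $O_r(\delta)$, and composing a constant number of such channels gives a local stochastic channel with parameter $\lambda'_r\delta$.

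\textbf{Main obstacle.} Conditioning on the "good" events (weight bounded, decoder succeeds) must leave the bare-qubit noise still local stochastic. That noise is correlated with the code-block errors through $\cW$ and through the multi-qubit transversal gates.

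I would handle this by writing everything as a single joint distribution over fault sets $(A_{\text{code}}, A_{\text{out}})$. The good-event condition depends only on $A_{\text{code}}$, and it has probability at least $1/2$ when $\delta$ is small. Conditioning on it therefore changes the marginal bound $\pr(T\subseteq A_{\text{out}})$ by at most a factor of $2$, which can be absorbed into $\lambda'_r$.
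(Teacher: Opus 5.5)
Your proposal is correct and follows essentially the paper's own argument (Remark~\ref{rem:proof-gamma-r-1}): the teleportation circuit of $\Gamma_{r,r'}$ with $r'=1$, idle gates on the bare output qubits, Lemma~\ref{lem:ls-lw} applied to the state-preparation noise only on the code half, and composition of the remaining local stochastic channels on the output. Your explicit conditioning step, $\pr(T\subseteq A_{\mathrm{out}}\mid G)\le 2p^{|T|}\le(2p)^{|T|}$ for $|T|\ge 1$, is more careful than the paper, which simply asserts that the output part keeps parameter $2c\delta$ after conditioning. Two small things to tidy: the paper runs $\Phi_{\mathrm{EC}^{r,s_1}}$ on the input block during resource preparation, whereas your circuit implicitly lets it idle, which is harmless for $r=O(1)$ but inflates $\lambda_r$; and $\mu$ must be small enough that the errors of both blocks, which add up on each measured string through the transversal CNOT, stay below $\beta n_r/2$, so $2\mu<\beta/2$ undercounts.
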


We now describe the construction of  $\Gamma_{r, r'}$ and prove Lemma~\ref{lem:succ-int}.

\subsection{Description of the partial decoding interfaces} 
We use quantum teleportation to construct the  interface $\Gamma_{r, r'}$. To this end, we consider  $m_r$ pairs of maximally entangled states on systems $A' = \{a'_1, \dots, a'_{m_r}\}$ and $B' = \{b'_1, \dots, b'_{m_r}\}$ as follows,
\begin{equation} \label{eq:gamma-noisy-tau}
  \ket{\psi}_{A', B'}  := \left(\otimes_{j \in [m_r]} \frac{\ket{0}_{a'_j} \otimes \ket{0}_{b'_j} + \ket{1}_{a'_j} \otimes \ket{1}_{b'_j}}{\sqrt{2}} \right).
\end{equation}
Consider the encoding isometry $V_r : (\mathbb{C}^2)^{\otimes m_r} \to (\mathbb{C}^2)^{\otimes n_r}$ for $\cC_r$. We divide $B'$ into subsystems $B'_i$ for $i \in [2^{r-r'}]$, each containing $m_{r'}$ qubits, as follows,
\[ B'_i = \{ b_{(i-1) m_{r'} + 1}, \dots, b_{i m_{r'}}  \} \]
With the system $A'$, we associate an $n_r$ qubit system $A$ and with each $B'_i$, we associate an $n_{r'}$ qubit system $B_i$ and let $B := \cup_{i \in [2^{r-r'}]} B_i$. We now define the following joint state on $A$ and $B$,
\begin{equation}\label{eq:ent-code-state}
 \ket{\psi_{r,r'}}_{A, B} :=    \Big( V_r[A' \to A] \otimes (\otimes_{i  \in [2^{r-r'}]} V_{r'}[B'_i \to B_i]) \Big)  \ket{\psi}_{A', B'}.
\end{equation} 
Note that in Eq.~(\ref{eq:ent-code-state}), the subsystem $A'$ of $\ket{\psi}_{A', B'}$ is encoded in $\cC_r$ while $B'$ is encoded in $2^{r-r'}$ code blocks of $\cC_{r'}$. 

\begin{figure}[!t]
\centering
\resizebox{\linewidth}{!}{
 \begin{tikzpicture}
 \tikzset{
  half circle/.style={
      semicircle,
      shape border rotate=90,
      anchor=chord center
      }
}
\draw
(-1, -3.5) node[draw, half circle, minimum size = 2.5 cm](e){$\overline{\Psi}_{r, r'}$}
;
\draw
(-4.2, 0) node[left](a1){$Q$}
(2.0, 0) node[phase](b1){}
(3, 0) node[draw, inner sep = 8] (c1){$H$}
(4.5, 0) node[meter] (d1) {}
(7.5, 0) node[draw, inner sep = 8] (e1) {Class. Proc.}
;
\draw
(-1.8, 0) node[draw, minimum size = 10mm, minimum width = 4cm](z1){$\Phi_{\mathrm{EC}^{r, s_1}}$}
;
\draw
(-1, -1.5) node[](a2){}
(2.0, -1.5) node[circlewc](b2){}
(3, -1.5) node[] (c2){}
(4.5, -1.5) node[meter] (d2) {}
(7.5, -1.5) node[draw, inner sep = 8] (e2) {Class. Proc.}
;
\draw
(-1, -3) node[](a3){}
(-1, -5.5) node[](a4){}
($(e1.east) + (1, -4.2)$) node[draw, minimum size = 3.0cm, , minimum width = 1.5cm] (e3) {$X^{\bu_1} Z^{\bu_2}$}
;
\draw
(a1) to (z1.west)
(z1.east) to (b1)
(b1) to (c1)
(c1) to (d1)

(a2 -| e.east) to (b2)
       (b2) to (d2)

      (a2 -| e.east)  ++ (0.2, 0)  node[above](){$A$}
       (a3 -| e.east)  ++ (0.3, 0)  node[above](){$B_1$}
        (a4 -| e.east)  ++ (0.6, 0)  node[above](){$B_{2^{r-r'}}$}
      ;

 \draw
 (a3 -| e.east)  ++ (6.0, 0) node[draw, minimum size = 10mm, minimum width = 7cm](z2){$\Phi_{\mathrm{EC}^{r', s_2}}$}
  (a4 -| e.east)  ++ (6.0, 0) node[draw, minimum size = 10mm, minimum width = 7cm](z3){$\Phi_{\mathrm{EC}^{r', s_2}}$}

  (a3 -| e.east)  to (z2.west) 
  (z2.east)  to (a3 -| e3.west)

  (a4 -| e.east) to (z3.west) 
  (z3.east)  to (a4 -| e3.west)
 ;

\draw
(b1) to (b2)
(a3 -| e3.east) to ++(0.5, 0)
(a4  -| e3.east) to ++(0.5, 0)
;
      \draw[double]
      (d1.east) to node[above](){ $\bm_1$} (e1.west)
      (d2.east) to node[above](){$\bm_2$} (e2.west)
      (e1.east) to node[above](){$\bu_1$} ++(1, 0) node[phase](){} to (e3.north)
      (e2.east) to node[above](){$\bu_2$} ++(1, 0) node[phase](){} to (e3.north)
      ;

      \draw[dashed]

       ($(a1) + (5.5, 0.9)$) to ++(0, -7.5)

       ($(a1) + (3.0, 1.0)$) node[left](){$(1)$}
       ($(a1) + (9.5, 1.0)$) node[left](){$(2)$}
      ;

      \draw
      ($0.5*(a3) + 0.5*(a4)$) ++ (1, 0) node[rotate = 90](){$\cdots$}
    ($0.5*(a3) + 0.5*(a4)$) ++ (6, 0) node[rotate = 90](){$\cdots$}
      ;
 \end{tikzpicture}
 }
\caption{The quantum circuit $\Gamma_{r,r'}$. Here $s_1, s_2 = O(\poly(m_r))$.}
\label{fig:tele-circ}
\end{figure}
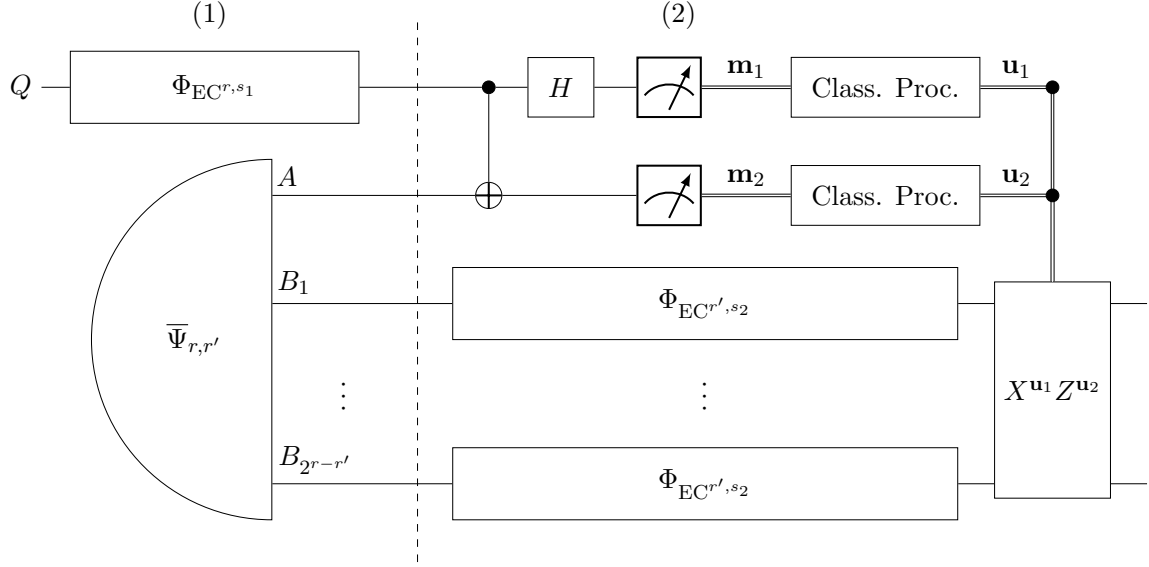

\smallskip The quantum state $\ket{\psi}_{A', B'}$ can be prepared using a  circuit of constant depth (i.e., not depending on the parameter $r$). This corresponds to  first initializing qubits in $A'$ and $B'$ in $\ket{0}$ state, then applying the Hadamard gate in parallel on qubits $A'$ and then performing the $\cnot$ gates in parallel from $a'_j \to b'_j, \forall j \in [m_r]$. Moreover, the isometry $V_r$ can be realized by a circuit, working on $O(m_r)$ qubits and of depth $O(m_r)$~\cite{maslov2018shorter, tamiya2024polylog}. Therefore, $\ket{\psi_{r,r'}}_{A, B}$ can be prepared by a quantum circuit $\Psi_{r,r'}$ of depth $O(m_r)$, operating on $O(m_r)$ qubits.

\smallskip We consider the fault-tolerant circuit $\overline{\Psi}_{r,r'}$ corresponding to the state preparation circuit $\Psi_{r,r'}$ from Theorem~\ref{thm:state-prep}, where we chose the positive integer $k$ as
\begin{equation}
    k = O(m_r).
\end{equation}
The quantum circuit $\overline{\Psi}_{r,r'}$  operates on $O(m_r \mathrm{poly}(m_r))$ qubits and has depth $O(\mathrm{poly}(m_r))$.

\smallskip We now define the interface $\Gamma_{r,r'}$, with $n_r$ qubit input and $n_r = 2^{r-r'} n_{r'}$ qubit output, using the state preparation circuit $\overline{\Psi}_{r, r'}$ and the error correction steps $\Phi_{\mathrm{EC}^{r,s}}$ as follows (see also Fig.~\ref{fig:tele-circ}),
\begin{enumerate}
\item[$(i)$] Firstly, we run the state preparation circuit $\overline{\Psi}_{r, r'}$ while in parallel applying $s_1$ error correction steps $\Phi_{\mathrm{EC}^{r,s_1}}$ on the input qubits $Q := \{q_1, \dots, q_{n_r}\}$, where $s_1 = O(\mathrm{poly}(m_r))$ is given by the depth of $\overline{\Psi}_{r, r'}$. 

\item[$(ii)$] We then apply the logical Bell measurements on systems $Q, A$, which consist of transversal $\cnot$ gates between $Q$ and $A$, transversal Hadamard gates on $Q$, and then transversal computational basis measurements on $Q$ and $A$. These measurements output classical bit strings  $\bm_1, \bm_2 \in \{0, 1\}^{n_r}$, on which a processing is applied to get logical measurement outcomes $\bu_1, \bu_2 \in \{0, 1\}^{m_r}$. While Bell measurements and the classical processing are done, one applies in parallel error correction steps on $B_i, i \in [2^{r-r'}]$. 

The classical processing needs to be efficient; it's time complexity is determined by the efficiency of the classical decoder associated with the QLDPC code. We refer to~\cite[Chapter 6]{grospellier2019constant} for a detailed description of the classical processing procedure corresponding to the logical Bell measurements. As stated in Theorem~\ref{thm:qldpc-tanner}, there exists a parallel classical decoder with complexity  $O(\mathrm{polylog}(m_r))$. However, for the sake of simplicity, we will assume the classical decoder complexity to be $O(\poly(m_r))$ since then we have $s_1, s_2 = O(\poly(m_r))$. Moreover, in our construction of the interface for Theorem~\ref{thm:ft-cons-int-main}, we will need to sequentially apply $\Gamma_{r,r'}, \Gamma_{r',r''}, \dots $; therefore, $O(\poly(m_r))$ error correction steps will in any case will be applied on  $B_i, i \in [2^{r-r'}]$ after applying $\Gamma_{r,r'}$.
\end{enumerate}
Using that the state preparation circuit $\overline{\Psi}_{r, r'}$ operates on $O(m_r \poly(m_r))$ qubits, and the error correction circuit operates on $O(m_r)$ qubits for any $r$, it follows that the interface $\Gamma_{r, r'}$ operates on fewer than $\theta m_r p_1(m_r)$ qubits for some constant $\theta > 0$, and a polynomial $p_1(\cdot)$. Moreover, using that the depth of $\overline{\Psi}_{r, r'}$ is $O(\poly(m_r))$, and the depth of classical processing circuit is $O(\poly(m_r))$, it follows that the depth of $\Gamma_{r, r'}$ is upper bounded by $d p_2(m_r)$, where $d > 0$ is a constant and $p_2(\cdot)$ is a polynomial. Therefore, the number of qubits in $\Gamma_{r, r'}$ and its depth are according to Lemma~\ref{lem:succ-int}. A detailed error analysis of $\Gamma_{r, r'}$ is done in Section~\ref{sec:proof-succ-int}, where it is shown that the noisy version of $\Gamma_{r, r'}$ satisfies Eq.~\eqref{eq:gamma-r-r'} and Eq.~\eqref{eq:bound-tau-rk}, hence, completing the proof of Lemma~\ref{lem:succ-int}.

\smallskip Finally, we briefly explain the choice $r-r' = O(1)$: the failure probability of the error correction $\Phi_{\mathrm{EC}^{r', s_2}}$ is given by $s_2 \varepsilon(r', \delta)$, where $\varepsilon(r', \delta) $ is from 
Lemma~\ref{lem:mult-ec}, which provides the upper bound  
$s_2 \varepsilon(r', \delta) = O(\poly(m_r) (c\delta)^{n_{r'}})$.
Choosing $r- r' = O(1)$ makes sure that $n_{r'}=O(m_r)$ and thus that the error probability goes to zero for large $r$.

\paragraph{Quantum circuit for $\Gamma_{r, 1}$ for Lemma~\ref{lem:succ-int-1}:} 

The structure of the quantum circuit $\Gamma_{r, 1}$ is same as $\Gamma_{r, r'}$ in Fig.~\ref{fig:tele-circ} by taking $r' = 1$. Recall that the code $\cC_{1}$ corresponds to the trivial code encoding one logical qubit in one physical qubit. Therefore, systems $B_i, i \in [2^{r-1}]$ contain only one qubit and $V_{r'}$ in Eq.~\eqref{eq:ent-code-state} corresponds to the identity channel on single qubit. Moreover, the error correction step $\Phi_{\mathrm{EC}^{r',s_2}}$ is replaced by $s_2$ layers of idle gate applied on the single qubit.

\subsection{Error analysis of partial decoding interfaces} \label{sec:proof-succ-int}
In this section, we provide an error analysis of the quantum circuit $\Gamma_{r,r'}$ from Fig.~\ref{fig:tele-circ} and show that its noisy version satisfies Fig.~\ref{fig:tele-circ} satisfies Eq.~\eqref{eq:gamma-r-r'} and Eq.~\eqref{eq:bound-tau-rk}; hence proving  Lemma~\ref{lem:succ-int}. The error analysis of $\Gamma_{r, 1}$ is similar to the analysis of $\Gamma_{r,r'}$ with one key difference. We mention this difference and sketch a proof of Lemma~\ref{lem:succ-int-1} in Remark~\ref{rem:proof-gamma-r-1} at the end of this section.

\smallskip Below, we prove Lemma~\ref{lem:succ-int}.
\begin{proof}[Proof of Lemma~\ref{lem:succ-int}]
We consider the channel $\tilde{\cT}_{\Gamma_{r,r'}}$ corresponding to the noisy version of $\Gamma_{r, r'}$ under circuit-level stochastic channel with parameter $\delta > 0$. We take the input state on system $Q$ as $\cN(\proj{\phi_r})$, where $\proj{\phi_r}$ is a code state $\ket{\phi_r} := V_r\ket{\phi}$ of $\cC_r$, with $\ket{\phi}$ being an arbitrary $m_r$ qubit state, and $\cN$ is a quantum channel of weight $\mu n_r$. 

\smallskip We will show that $\tilde{\cT}_{\Gamma_{r,r'}}\left(\cN(\proj{\phi_r})\right)$ is equal to $\cN'( \proj{\phi_{r'}})$ with a probability $(1 - \tau_{r})$, where $\tau_{r}$ is according to Lemma~\ref{lem:succ-int}, $\ket{\phi_{r'}} = V_{r'}\ket{\phi}$ is the encoded version of $\ket{\phi}$ in $2^{r-r'}$ blocks of $\cC_{r'}$ denoted by $B_1, B_2, \dots, B_{2^{r-r'}}$ and $\cN'$ is a quantum channel having weight $\mu n_{r'}$ with respect to each $B_1, B_2, \dots, B_{2^{r-r'}}$. We will do this in the following three steps: 
\begin{enumerate}
    \item [(i)] This step implements the state preparation circuit $\overline{\Psi}_{r,r'}$, preparing the quantum state $\proj{\psi_{r,r'}}$ on systems $A, B_1, B_2, \dots, B_{2^{r-r'}}$ as defined in Eq.~\eqref{eq:ent-code-state}. While the state preparation circuit $\overline{\Psi}_{r,r'}$ is executed, error correction steps $\Phi_{\mathrm{EC}^{r, s_1}}$, for $s_1 = O(\poly(n_r))$ are applied on system $Q$.

For the noisy version, we have $\cT_{\Phi_{\mathrm{EC}^{r, s_1}}} \to \tilde{\cT}_{\Phi_{\mathrm{EC}^{r, s_1}}}$ and $\cT_{\overline{\Psi}_{r,r'}} \to \tilde{\cT}_{\overline{\Psi}_{r,r'}}$.
Using Lemma~\ref{lem:mult-ec} for fault-tolerant error correction, we can remove $\tilde{\cT}_{\Phi_{\mathrm{EC}^{r, s_1}}}$  while changing $\cN$ to $\cN^{(1)}$, whose weight is also $\mu n_r$ with respect to the code block of $\cC_r$. The error probability associated with this is given by, $\tau_e(1) \leq s_1 \varepsilon(r, \delta) \leq O(\poly(m_r)(c\delta)^{c'n_r})$.

From Theorem~\ref{thm:state-prep},  $\tilde{\cT}_{\overline{\Psi}_{r,r'}}$ prepares the state $\proj{\psi_{r,r'}}$ up to a local stochastic channel $\cW$ with parameter $2 c\delta$ applied on systems $A, B_1, \dots, B_{2^{r-r'}}$, with an error probability $\tau_s(1) = O(|\Psi_{r, r'}|(c\delta)^k) = O(\poly(m_r)(c\delta)^{c'm_r})$ using $|\Psi_{r, r'}| = O(m_r)$ and $k = O(m_r)$. Since our fault-tolerant error correction provides protection against a low-weight error, we approximate the local stochastic channel $\cW$ as a low-weight channel using Lemma~\ref{lem:ls-lw}. In particular, we approximate $\cW$ by a channel $\cN_s$, which has weight $\mu n_{r}$ with respect to system $A$ and weight $\mu n_{r'}$ with respect to each $B_1, \dots, B_{2^{r-r'}}$, with an error probability $\tau_s(2) \leq (2^{r-r'} + 1) (2^{\frac{h_2(\mu)}{\mu}} 2 c \delta)^{\mu n_{r'}}$.

\smallskip Therefore, the noisy version of this step realizes a quantum channel $\tilde{\cT}_1$ as follows,
\begin{equation} \label{eq:step-1-noisy}
\tilde{\cT}_1\left(\cN(\proj{\phi_r})\right) = (1 - \tau_1) {\cN}^{(1)}(\proj{\phi_r}) \otimes \cN_s(\proj{\psi_{r,r'}}) + \tau_1 \cZ_1\left(\cN(\proj{\phi_r})\right), 
\end{equation}
where $\cZ_1$ is an arbitrary error channel and using a union bound, we bound the error probability as
 \begin{align} 
  \tau_1 &\leq  \tau_e(1) + \tau_s(1) + \tau_s(2)  \nonumber \\  
  & \leq O(\poly(m_{r'}) (c\delta)^{c' m_{r'}}) + (2^{r-r'} + 1) (2^{\frac{h_2(\mu)}{\mu}} 2c\delta)^{\mu m_{r'}} \label{eq:ub-tau-1}
 \end{align}
  where in the last line, we have used $\poly(m_r) =   O(\poly(m_{r'}))$ using $r-r' = O(1)$ and $n_r > m_{r'}$.
  
\item [(ii)] The second step corresponds to applying transversal Bell measurements on qubits in $Q, A$ while error correction steps $\Phi_{\mathrm{EC}^{r', s_2}}$ for $s_2 = O(\mathrm{polylog}(n_r))$ are applied on systems $B_1, \dots, B_{2^{r-r'}}$. Note that Bell measurements are applied by first applying transversal $\cnot$, then transversal Hadamard, and then transversal single qubit measurements as depicted in Fig.~\ref{fig:tele-circ}.

Below, we provide an error analysis of this step. To this end, we will post-select on the first ``good" term on the right hand side of Eq.~\eqref{eq:step-1-noisy} in Step (i) and apply the noisy version of this step on it.

We first remove the noisy version of the error correction circuit $\Phi_{\mathrm{EC}^{r', s_2}}$ using Lemma~\ref{lem:mult-ec}, while changing $\cN_s$ to $\cN'_s$, whose error weight is also $\mu n_r$ with respect to $A$ and $\mu n_{r'}$ with respect to $B_1, \dots, B_{2^{r-r'}}$. The error probability associated with this is given by $\tau(2) \leq s_2 2^{r-r'} \varepsilon(r', \delta) \leq O(\poly(m_{r'}) (c\delta)^{c'n_{r'}})$.

We now note that transversal Bell measurements on systems $Q$ and $A$ can be seen as ideal transversal Bell measurements up to a noise channel of weight $\mu n_r$ with respect to each system $Q$ and $A$. This can be seen as follows:

Using Remark~\ref{rem:noisy-g-ls}, we have that noisy version of transversal gates under circuit level stochastic noise with parameter $\delta$ is equal to the ideal gate up to a local stochastic noise with parameter $\delta$, where the local stochastic channel is applied after the ideal gate for unitary gates and before the ideal gate for measurement gates.

For a circuit layer, where only single qubit unitary gates are applied, we can propagate a local stochastic channel through it without changing its parameter. In particular, for a local stochastic channel $\cW$  with parameter $\delta$ with respect to a set of $n$ qubits and $U$ being $n$-fold tensor product of single qubit unitaries, $U W U^\dagger$ is a local stochastic channel with parameter $\delta$ with respect to the set of $n$ qubits. Using this fact, we can accumulate the local stochastic noise from Remark~\ref{rem:noisy-g-ls} associated with $\cnot$, Hadamard, and measurement gates just before the transversal measurements. Using the fact that the composition of two local stochastic channels is again a local stochastic channel with a parameter equal to the sum of parameters of the original channels~\cite{christandl2025fault, bravyi2020quantum}, we have that the accumulated channel is a local stochastic channel with parameter $3 \delta$, with respect to systems $Q$ and $A$.

Using Lemma~\ref{lem:ls-lw}, we can further approximate this local stochastic channel channel as a channel of weight $\mu n_r$ with respect to each system $Q$ and $A$, with an error probability $\tau_b \leq (2^{\frac{h_2(\mu)}{\mu}} 3\delta)^{\mu n_r}$.

Moreover, we can propagate channels $\cN^{(1)}$ acting on $Q$ and $\cN'_s$ acting on $A, B_1, \dots, B_{2^{r-r'}}$ through the transversal $\cnot$ and Hadamard, yielding a joint channel ${\cN}_s^{(1)}$, which has weight $2\mu n_r$ with respect to $Q$ and weight $2\mu n_{r}$ with respect to $A$ and  $\mu n_{r'}$ with respect to $B_1, \dots, B_{2^{r-r'}}$.

\smallskip Combining the error due to Bell measurement with ${\cN}_s^{(1)}$, we have an overall channel $\overline{\cN}$, which has weight $3\mu n_r$ with respect to $Q$ and $A$, and $\mu n_{r'}$ with respect to $B_i, i \in  [2^{r-r'}]$. Hence, we have a quantum circuit that corresponds to the ideal teleportation circuit up to the noise channel $\overline{\cN}$ applied before the single qubit measurements on systems $Q$ and $A$.

Using standard teleportation protocol, the state of the joint system quantum system before the Bell measurements is given by $\overline{\cN}(\proj{\Lambda})$, where
\begin{equation} \label{eq:ovr-T-12}
  \ket{\Lambda} :=  \frac{1}{4^{m_r}} \sum_{\bu, \bv \in \{0, 1\}^{m_r}} \ket{{\bu}_r} \otimes \ket{{\bv}_r} \otimes P^{r'}_{\bu, \bv}\ket{\phi_{r'}},
\end{equation}
where $\ket{\bu_r} := V_r \ket{\bu_r}$, $\ket{\bv_r} :=  V_r \ket{\bv_r}$. $P^{r'}_{\bu, \bv}$ is the logical Pauli in code $\cC_{r'}$ corresponding to the Pauli $P_{\bu, \bv} = X^\bu  Z^\bv$, that is, $P^{r'}_{\bu, \bv}  \circ V_{r'}^{\otimes 2^{r-r'}} = V_{r'}^{\otimes 2^{r-r'}} \circ P_{\bu, \bv}$.

\smallskip Since the weight of $\overline{\cN}$ with respect to systems $Q$, and $A$ is bounded by $5 \mu n_r$, we can get the correct value of the Bell measurements $\bu, \bv \in \{0, 1\}^{m_r}$ from outcomes of (physical) Bell measurements on systems $Q, A$ using classical error correction. 

The error probability corresponding to this step is given by,
\begin{align}
    \tau_2  &\leq \tau_e(2) + \tau_b \nonumber \\
    & = O(\poly(m_{r'}) (c\delta)^{c' m_{r'}}) + (2^{\frac{h_2(\mu)}{\mu}} 3\delta)^{\mu m_{r'}}. \label{eq:ub-tau-2}
\end{align}

\item[(iii)] In the last step, we apply the corresponding logical Pauli correction $P^{r'}_{\bu, \bv}$ corresponding to logical measurement outcomes $\bu, \bv \in \{0, 1\}^{m_r}$.

After the logical Pauli correction, we are left with the state $\cN'(\proj{\phi_{r'}})$, where $\cN'$ has weight $\mu n_{r'}$ with respect to $B_i, i \in [2^{r-r'}]$ with the overall error probability bounded as below,
\begin{align}
 & \tau_{r} \leq \tau_1  + \tau_2  \nonumber\\
  & \leq O(\poly(m_{r'}) (c\delta)^{c' m_{r'}}) +  (2^{r-r'} + 1) (2^{\frac{h_2(\mu)}{\mu}} 2c\delta)^{\mu m_{r'}} + (2^{\frac{h_2(\mu)}{\mu}} 3\delta)^{\mu m_{r'}}, \nonumber \\
\end{align}
where the second inequality uses bound on $\tau_1$ and $\tau_2$ from Eq.~\eqref{eq:ub-tau-1} and Eq.~\eqref{eq:ub-tau-2}, respectively. 

By redefining $c \leftarrow \max(c, 2^{\frac{h_2(\mu)}{\mu}} 2c\delta, 3\delta)$ and $c' \leftarrow  \max(c', \mu)$ and using $r-r' = O(1)$, we can absorb the term $(2^{r-r'} + 1) (2^{\frac{h_2(\mu)}{\mu}} \delta)^{\mu m_{r'}}$ in $O(\poly(m_{r'})(c \delta)^{c' m_{r'}})$ giving us,
\begin{align}
    \tau_{r} &= O(\mathrm{poly}(m_{r'})(c \delta)^{c' m_{r'}}). \label{eq:tau-rk}
\end{align}

\medskip Since steps (i), (ii) and (iii) hold for arbitrary state $\ket{\phi}$, we have
\begin{multline} \label{eq:final-proof}
    (\tilde{\cT}_{\Gamma_{r, r'}} \otimes \cI[R]) \circ \cN \circ (\cE_r \otimes \cI[R])    = (1 - \tau_{r}) \:  \cN' \circ \left((\otimes_{i \in [2^{r-r'}]}\cE^i_{r'}) \otimes  \cI[R]\right)  \\ + \tau_{r} \: (\cZ_{r \to r'} \otimes \cI[R]) \circ \cN \circ (\cE_r \otimes \cI[R]),  
\end{multline}
where $\tau_{r}$ is bounded as in Eq.~\eqref{eq:tau-rk}.

\smallskip Finally, from Eq.~\eqref{eq:final-proof} and Eq.~\eqref{eq:tau-rk}, we have that Eq.~\eqref{eq:gamma-r-r'} and Eq.~\eqref{eq:bound-tau-rk} in Lemma~\ref{lem:succ-int} hold.
\end{enumerate}
\end{proof}

\begin{remark} \label{rem:proof-gamma-r-1}
The system $B_i, \forall i \in [2^{r-1}]$ in the circuit $\Gamma_{r, 1}$ are single qubit systems; therefore, the idle gates are applied on them instead of error correction. The noise on them accumulates when Bell measurements and classical processing on measurement outcomes are performed. Since Bell measurement corresponds to a constant depth circuit and the time of classical processing is $O(\poly(n_r))$, hence $O(\poly(n_r))$ number of layers of idle gates are applied. Using the fact that $r$ is fixed to a constant, we have a fixed number of layers of idle gate depending on $r$. Using Remark~\ref{rem:noisy-g-ls}, we have that the noisy version of idle gates correspond to a local stochastic channel with parameter $\delta$ with respect to systems to the joint system of $2^{r-1}$ qubits corresponding to $B_1, \dots, B_{2^{r-1}}$.

Moreover, for the local stochastic channel $\cW$ due to the noisy realization of state preparation circuit $\overline{\Psi}_{r, r'}$ with $r' = 1$, we only need to approximate it with a low weight error channel $\cN_s$ with respect to system $A$ according to Lemma~\ref{lem:ls-lw}. The channel $\cN_s$ has weight $\mu n_r$ with respect to system $A$ and is local stochastic with parameter $2 c \delta$ with respect to the joint system of $2^{r-1}$ qubits corresponding to $B_1, \dots, B_{2^{r-1}}$. Therefore, combining all the local stochastic channel on the joint system of $2^{r-1}$ qubits, due to state preparation and idle layers, we have a local stochastic channel with parameter $\lambda'_r \delta$ acting on it, where $\lambda'_r$ is a constant depending on $r$.

\smallskip The rest of the proof is same as Lemma~\ref{lem:succ-int-1}.
\end{remark}

%
\subsection{Partial decoding interfaces for multiple code blocks} \label{sec:many-blocks}

The statement of Lemma~\ref{lem:succ-int} is modular, and can be easily extended to multiple input blocks of $\cC_r$ as given in the following corollary.

\begin{corollary} \label{cor:succ-int-h}
There exists a polynomial $p(\cdot)$, a fixed threshold value $\delta_{th} > 0$ and constants $\mu, \lambda, c, c', > 0$ such that the following holds:

\smallskip Consider codes $\cC_r$ and $\cC_{r'}$, where $r > r'$ such that $r - r' = O(1)$. Let $\cN: \bL((\bC^2)^{\otimes n_r h} \otimes R) \to \bL((\bC^2)^{\otimes n_r h}  \otimes R)$, with $R$ being a reference system, be a quantum channel with stabilizer reduced weight (relative to $\cC_r$) bounded as $|\cN|_{\mathrm{red}} \leq \mu n_r$ with respect to each $n_r$ qubit in its input.

\smallskip Then, for the interface circuit $\Gamma_{r, r'}$ from Lemma~\ref{lem:succ-int}, under circuit-level stochastic channel with parameter $\delta$, we have\footnote{$\Gamma^i_{r, r'}$, $\cE^i_r$ and $\cE_{r'}^{i, j}$ are copies of $\Gamma_{r, r'}$, $\cE_r$ and $\cE_{r'}$, respectively.}
\begin{multline} \label{eq:int-h}
    \left((\otimes_{i \in [h]}\tilde{\cT}_{\Gamma^i_{r, r'}}) \otimes \cI[R]\right) \circ \cN \circ \left((\otimes_{i \in [h]} \cE^i_r) \otimes \cI[R]\right)  \\ = \sum_{F \subseteq [h]} (1 - \tau_{r})^{h - |F|} \tau^{|F|}_{r} \: \left( (\otimes_{i \in [h]\setminus F} \cI^i_r) \otimes (\otimes_{i \in F} \cZ^{i}_{r \to r'}) \right)\circ \cN'_F \\ \circ \left((\otimes_{i \in h \setminus F, j \in [2^{r-r'}] }\cE_{r'}^{i, j}) \otimes(\otimes_{i \in F} \cE^i_r) \otimes \cI[R] \right), 
\end{multline}
where\footnote{$\tau_{r}$ is the same bound as in Lemma~\ref{lem:succ-int}.} $\tau_{r} \le \lambda p(m_{r'}) (c \delta)^{c' m_{r'}}$. $\cN'_F$ is a quantum channel with weight $\mu n_r$ with respect to each block of $\cC_r$ and $\mu n_{r'}$ with respect to each block of $\cC_{r'}$ in its input.  $\cZ^{i}_{r \to r'}$ is an arbitrary channel with $n_r$ qubit input and  $2^{r-r'} n_{r'}$ qubit output.
\end{corollary}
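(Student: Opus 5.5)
The plan is to apply Lemma~\ref{lem:succ-int} to the $h$ blocks one at a time, peeling off one copy of $\tilde{\cT}_{\Gamma_{r,r'}}$ per step. In each step, the untouched blocks of $\cC_r$ and everything already produced are absorbed into the reference system. This is permitted because Lemma~\ref{lem:succ-int} holds for an arbitrary reference $R$ and for any input channel whose stabilizer reduced weight is at most $\mu n_r$ on the block being processed. Since the copies $\Gamma^i_{r,r'}$ act on disjoint blocks and in parallel, they commute, so
$\otimes_{i\in[h]}\tilde{\cT}_{\Gamma^i_{r,r'}} = \tilde{\cT}_{\Gamma^h_{r,r'}}\circ\cdots\circ\tilde{\cT}_{\Gamma^1_{r,r'}}$. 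Noisy realization under circuit-level stochastic noise also factorizes over disjoint subcircuits, so this decomposition holds for the noisy channels as well.

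The induction hypothesis after processing blocks $1,\dots,j$ is an expansion of the form \eqref{eq:int-h}, with the sum running over $F\subseteq[j]$. Each term has weight $(1-\tau_r)^{j-|F|}\tau_r^{|F|}$ and contains a channel $\cN'_F$ with the following weights: $\mu n_r$ on every block of $\cC_r$ (both the blocks with index in $F$ and the unprocessed blocks $j+1,\dots,h$), and $\mu n_{r'}$ on every new block of $\cC_{r'}$. The base case $j=0$ is the hypothesis on $\cN$.

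For the inductive step, fix $F\subseteq[j]$ and apply $\tilde{\cT}_{\Gamma^{j+1}_{r,r'}}$. Block $j+1$ still carries an encoding $\cE_r$ with error of reduced weight at most $\mu n_r$ relative to it. Treating all other systems and the outer channels $\cZ^i_{r\to r'}$ as part of an extended reference, Lemma~\ref{lem:succ-int} gives two terms.

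In the first term, with coefficient $1-\tau_r$, block $j+1$ becomes $2^{r-r'}$ encoded blocks of $\cC_{r'}$ under a new channel. By the lemma this channel has weight $\mu n_{r'}$ on each new block. Its action on the reference, which contains all other blocks, must be checked; this is the subtle point discussed below. If it holds, this term contributes the subset $F$.

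In the second term, with coefficient $\tau_r$, the output is $\cZ^{j+1}_{r\to r'}\circ\cN'_F\circ\cE^{j+1}_r$. This is exactly the shape required for the subset $F\cup\{j+1\}$. The failure channel $\cZ^{j+1}_{r\to r'}$ acts only on block $j+1$, since the failing branch of the lemma is $(\cZ_{r\to r'}\otimes\cI[R])$, so it commutes with the channels on the other blocks. Multiplying the coefficients yields the binomial weights, and after $j=h$ we obtain \eqref{eq:int-h}.

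The main obstacle is showing that the weight bounds on the other blocks survive each application. Lemma~\ref{lem:succ-int} only asserts that the output channel $\cN'$ has weight $\mu n_{r'}$ on the new blocks; it says nothing about its structure on $R$. To handle this, I would rely on the proof of Lemma~\ref{lem:succ-int} rather than only its statement. In that proof, the input error $\cN$ is propagated through transversal gates and corrected. Its Kraus operators factor as an operator on block $j+1$ tensored with an operator on $R$, and the operator on $R$ is left untouched. Hence the parts of the Kraus operators acting on the other blocks keep their original supports, and the weights $\mu n_r$ and $\mu n_{r'}$ on those blocks are preserved. Since the weight in Definition~\ref{def:weight-z} is defined per system with everything else treated as reference, this gives a channel $\cN'_{F}$ satisfying all the required per-block weight bounds.

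One further detail is the passage between stabilizer reduced weight on the $\cC_r$ blocks and the weight statement for $\cN'_F$. A weight bound implies a reduced weight bound, so the inductive hypothesis can always be fed back into Lemma~\ref{lem:succ-int}.
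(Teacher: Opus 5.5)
Your proposal matches the paper's proof: you apply Lemma~\ref{lem:succ-int} one block at a time, absorb all other blocks (and the failure channels already applied) into the reference, and let the two branches produce the binomial weights. The paper simply asserts that the weight bounds on the untouched blocks survive each application, whereas you justify this more carefully from inside the lemma's proof. One minor point: the hypothesis gives only stabilizer-reduced weight $\mu n_r$ on the $\cC_r$ blocks, so your invariant and base case should track reduced weight rather than weight on those blocks. That is all Lemma~\ref{lem:succ-int} needs, and your linearity argument preserves it.
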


\begin{proof}
We will apply Lemma~\ref{lem:succ-int} iteratively for $i = 1, 2, \dots, h$ and remove the interface $\tilde{\cT}_{\Gamma^i_{r,r'}}$ from the left hand side of Eq.~\eqref{eq:int-h} one by one.
For $i = 1$, we treat the remaining systems $2, \dots, h$ as part of the system $R$ in Lemma~\ref{lem:succ-int}. Then, we have
\begin{multline} \label{eq:iter-1}
    ( \tilde{\cT}_{\Gamma^{1}_{r, r'}} \otimes (\otimes_{i \in [2:h]} \cI^i_r) \otimes  \cI[R]) \circ \cN \circ ((\otimes_{i \in [h]} \cE^i_r) \otimes \cI[R]) \\ = (1 - \tau_{r}) \cN'^{(1)} \circ ( (\otimes_{j \in [2^{r-r'}]}\cE_{r'}^{1, j}) \otimes (\otimes_{i \in [2:h]} \cE^i_r) \otimes \cI[R]) \\ + \tau_{r}( \cZ^{1}_{r \to r'} \otimes (\otimes_{i \in [2:h]} \cI^i_r) \otimes  \cI[R]) \circ \cN \circ ((\otimes_{i \in [h]} \cE^i_r) \otimes \cI[R]), 
\end{multline}
where $\cN'^{(1)}: \bL((\bC^2)^{\otimes 2^{r-r'} n_{r'}} \otimes (\bC^2)^{\otimes (h-1) n_r} \otimes R) \to  \bL((\bC^2)^{\otimes 2^{r-r'} n_{r'}} \otimes (\bC^2)^{\otimes (h-1) n_r} \otimes R)$ is a channel with weight $\mu n_{r'}$ with respect to each block $\cC_{r'}$ in its input. Moreover,
the weight of $(\tilde{\cT}_{\Gamma^{1}_{r, r'}} \otimes (\otimes_{i \in [2:h]} \cI^i_r) \otimes  \cI[R]) \circ \cN$ remains $\mu n_{r'}$ with respect to the $(h-1)$ blocks of $\cC_r$ labelled by $2, \dots, h$. Therefore, the weight of $\cN'^{(1)}$ with respect to these blocks is also $\mu n_{r}$. Similarly, for the second term on the right hand side, the channel $( \cZ^{1}_{r \to r'} \otimes (\otimes_{i \in [2:h]} \cI^i_r) \otimes  \cI[R]) \circ \cN$ has weight $\mu n_{r}$ with respect to the blocks of $\cC_r$ labelled by $2, \dots, h$. Obviously, the weight of this channel is not controlled on the first block of $\cC_r$ due to the arbitrary channel.

\smallskip For the both terms on the right hand side of Eq.~\eqref{eq:iter-1}, we can apply Lemma~\ref{lem:succ-int} for $i = 2$, while treating systems $(1,j), j \in [2^{r-r'}]$ and $i \in  [3:h]$ as part of the reference system $R$. We now have
\begin{align}
& ( (\otimes_{i \in [2] }\tilde{\cT}_{\Gamma^{i}_{r, r'}})
   \otimes (\otimes_{i \in [3:h]} \cI^i_r)
   \otimes \cI[R]) \circ \cN \circ
   ((\otimes_{i \in [h]} \cE^i_r) \otimes \cI[R])  \nonumber \\
& = (1 - \tau_{r})^2 \: {\cN}^{(1, 2)} \circ
    \left( (\otimes_{i \in [2], j \in [2^{r-r'}]}\cE_{r'}^{ i, j})
    \otimes (\otimes_{i \in [3:h]} \cE^i_r)
    \otimes \cI[R] \right)  \nonumber \\
&\quad + (1 - \tau_{r}) \tau_{r} \:
    \left(\cI^{1}_r \otimes \cZ^{2}_{r \to r'}
    \otimes (\otimes_{i \in [3:h]} \cI^{1}_r) \right)
    \circ \cN^{(1)} \circ
    \left( (\otimes_{j \in [2^{r-r'}]}\cE_{r'}^{1, j})
    \otimes (\otimes_{i \in [2:h]} \cE^i_r)
    \otimes \cI[R] \right)  \nonumber \\
&\quad + \tau_{r} (1 - \tau_{r}) \:
    \left(\cZ^{1}_{r \to r'} \otimes (\otimes_{i \in [2:h]} \cI^i_r)
    \otimes \cI[R]\right)
    \circ \cN^{(2)} \circ
    \left( \cE^{1}_{r}
    \otimes (\otimes_{j \in [2^{r-r'}]} \cE^{2, j}_{r})
    \otimes (\otimes_{i \in [2:h]} \cE^i_r)
    \otimes \cI[R] \right)  \nonumber \\
&\quad + \tau_{r}^2 \:
    \left(\cZ^{1}_{r \to r'} \otimes \cZ^{2}_{r \to r'}
    \otimes (\otimes_{i \in [3:h]} \cI^{1}_r) \right)
    \circ \cN \circ
    ((\otimes_{i \in [h]} \cE^i_r) \otimes \cI[R]) .
\end{align}

where $\cN^{(1, 2)}$, $\cN^{(1)}$ and $\cN^{(2)}$ are channels with weight $\mu n_{r'}$ with respect to blocks of $\cC_{r'}$ in their input and weight $\mu n_{r}$ with respect to blocks of $\cC_r$ in their input.

\smallskip Continuing in the above fashion for $i = 3, 4, \dots, h$, it is easy to see that we get Eq.~\eqref{eq:int-h}. 
\end{proof}
The following Corollary provides an extension of Lemma~\ref{lem:succ-int-1} to multiple blocks of $\cC_r$. Since the argument follows the same steps as in Corollary~\ref{cor:succ-int-h} while using Lemma~\ref{lem:succ-int-1} instead of Lemma~\ref{lem:succ-int}, we have omitted the proof.
\begin{corollary} \label{cor:succ-int-h-1}
There exists a fixed threshold value $\delta_{th} > 0$ and a constant $\mu, c, c' > 0$ such that the following holds:

\smallskip Consider codes $\cC_r$ where $r = O(1)$.
 Let $\cN: \bL((\bC^2)^{\otimes n_r h} \otimes R) \to \bL((\bC^2)^{\otimes n_r h}  \otimes R)$, with $R$ being a reference system, be a quantum channel with stabilizer reduced weight relative to $\cC_r$ bounded as $|\cN|_{\mathrm{red}} \leq \mu n_r$ with respect to each $n_r$ qubit in its input.
Then, for the interface circuit $\Gamma_{r, 1}$ from Lemma~\ref{lem:succ-int-1}, under circuit-level stochastic channel with parameter $\delta < \delta_{th}$, we have
\begin{multline} \label{eq:int-h-1}
    \left((\otimes_{i \in [h]}\tilde{\cT}_{\Gamma^i_{r, 1}}) \otimes \cI[R]\right) \circ \cN \circ \left((\otimes_{i \in [h]} \cE^i_r) \otimes \cI[R]\right)  \\ = \sum_{F \subseteq [h]} (1 - \tau_{r})^{h - |F|} \tau^{|F|}_{r}  \: \left((\otimes_{i \in [h]\setminus F} \cI^i_r) \otimes (\otimes_{i \in F} \cZ^{i}_{r \to 1})  \right)\circ \cV_F \\ \circ \left((\otimes_{i \in [h] \setminus F, j \in [2^{r-r'}] }\cI_{r}^{i, j}) \otimes(\otimes_{i \in F} \cE^i_r) \otimes \cI[R] \right), 
\end{multline}
where $\tau_r \leq \lambda_r (c \delta)^{c'm_r}$, with $\lambda_r$ only depending on $r$\footnote{It is the same bound as in Lemma~\ref{lem:succ-int-1}.}. The map $\cV_F$ is a local stochastic channel with parameter $\delta' = \lambda'_r\delta$ with respect to joint systems of blocks of $m_r$ corresponding to the set $\{(i, j) \mid i \in [h] \setminus F, j \in [2^{r-r'}]  \}$, with $\lambda'_{r} > 0$ only depending on $r$.  $\cZ^{i}_{r \to 1}$ is an arbitrary channel with $n_r$ qubit input and  $m_r$ qubit output.
\end{corollary}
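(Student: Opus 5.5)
We prove Corollary~\ref{cor:succ-int-h-1} the same way as Corollary~\ref{cor:succ-int-h}: we peel off the $h$ noisy interfaces $\tilde{\cT}_{\Gamma^i_{r,1}}$ one at a time, each time invoking Lemma~\ref{lem:succ-int-1}. The new ingredient is that the ``good'' branch now produces a local stochastic channel instead of a low-weight one, so we have to check that local stochasticity survives the iteration.

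For $i=1$, we treat blocks $2,\dots,h$ of $\cC_r$ as part of the reference $R$. The channel $\cN$ still has reduced weight at most $\mu n_r$ on block $1$, so Lemma~\ref{lem:succ-int-1} applies. It splits the left-hand side into two terms:
\begin{itemize}
\item a term with weight $(1-\tau_r)$, equal to $\cV^{(1)}$ applied to the bare $m_r$ qubits of block $1$ together with $\otimes_{i\ge2}\cE^i_r$;
\item a term with weight $\tau_r$, equal to $\cZ^1_{r\to1}\circ\cN\circ(\otimes_i\cE^i_r)$.
\end{itemize}
Here $\cV^{(1)}$ is local stochastic with parameter $\lambda'_r\delta$ on the output qubits of block $1$, and it acts jointly on $R$ and on blocks $2,\dots,h$. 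We must check that both terms still meet the hypothesis for block $2$. The map $\tilde{\cT}_{\Gamma^1_{r,1}}$ acts only on block $1$. Hence, in the decomposition used inside the proof of Lemma~\ref{lem:succ-int-1}, the composite error acting on blocks $2,\dots,h$ keeps reduced weight $\mu n_r$ in each of them. The same holds in the failure branch, where $\cZ^1_{r\to1}$ touches only block $1$.

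Next we apply Lemma~\ref{lem:succ-int-1} to block $2$ in each branch. In this step the bare qubits from block $1$ (and the failed block, if any) are absorbed into the reference. Iterating for $i=3,\dots,h$ gives the binomial expansion over failure sets $F\subseteq[h]$ with weights $(1-\tau_r)^{h-|F|}\tau_r^{|F|}$. In the branch indexed by $F$ we obtain a composition of channels $\cV^{(i)}$, $i\notin F$. Each $\cV^{(i)}$ is local stochastic with parameter $\lambda'_r\delta$ on the $m_r$ output qubits of block $i$, and it acts as part of an arbitrary channel on the reference, which contains the other blocks. The arbitrary maps $\cZ^i_{r\to1}$ are moved to the left of all the $\cV$'s. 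This is allowed because they act on disjoint blocks, and each $\cZ^i$ is applied to the block-$i$ input before the later $\cV^{(j)}$ in our ordering. We can therefore define $\cV_F$ as the composition of these channels, with the $\cZ$'s written outside as in Eq.~\eqref{eq:int-h-1}.

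The main obstacle is showing that $\cV_F$ is local stochastic with parameter $\lambda'_r\delta$ with respect to the \emph{joint} system of all unfailed output blocks. Each $\cV^{(i)}$ is local stochastic only on its own block, and the corresponding statement on the union is not automatic (cf.\ the remark after Def.~\ref{def:loc-stoc-mult}). I would try to resolve this by opening up the proof of Lemma~\ref{lem:succ-int-1}. There, the bare-qubit noise comes from independent sources: idle-gate faults, faults of the state-preparation circuit $\overline{\Psi}_{r,1}$, which is run separately on each block, and the final Pauli corrections, which are single-qubit. In the good branch, all of these are independent across the copies $\Gamma^i_{r,1}$. For independent local stochastic channels on disjoint blocks, the event $T\subseteq A$ with $T=\cup_i T_i$ factorises: its probability is $\prod_i(\lambda'_r\delta)^{|T_i|}=(\lambda'_r\delta)^{|T|}$. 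Conditioning on the good branches does not spoil this factorisation, because the failure events are decided block by block.

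If that argument turns out to be too delicate, the fallback is to accept a weaker conclusion. Each $\cV^{(i)}$ is already local stochastic with respect to the union of its own block and the previously produced bare qubits. The composition rule from~\cite{christandl2025fault, bravyi2020quantum} then only increases the parameter additively. This cost is constant because $r=O(1)$, so it can be absorbed into $\lambda'_r$.
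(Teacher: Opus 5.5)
Your main route is the same as the paper's. The paper omits this proof, saying only that one repeats Corollary~\ref{cor:succ-int-h} with Lemma~\ref{lem:succ-int-1} in place of Lemma~\ref{lem:succ-int}, which is your block-by-block peeling.

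You are right that this black-box iteration does not by itself make $\cV_F$ local stochastic on the \emph{union} of the unfailed output blocks. The proof of Theorem~\ref{thm:ft-cons-int-main} relies on that property. In fact the iteration does not even produce a composition of the $\cV^{(i)}$. At step $i$ the previous good-branch channel plays the role of the input noise $\cN$ in Lemma~\ref{lem:succ-int-1}, so the new $\cV^{(i)}$ absorbs it. The earlier bare qubits then sit in the reference of $\cV^{(i)}$, where it may act arbitrarily. So opening the proof of Lemma~\ref{lem:succ-int-1}, as you propose, is necessary, and the cross-block independence argument is the right fix.

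There is one step you should make explicit. Your list of bare-qubit noise sources leaves out the input channel $\cN$, which couples all blocks and $R$, and you need to say why. In the good branch of block $i$, the logical Bell measurement is decoded correctly. The action of $\cN$ on block $i$ is therefore transferred to the measured registers and to the reference, namely $R$ and the physical qubits of the failed blocks, and never reaches the teleported qubits. Only with this does the noise on the unfailed outputs factorise across blocks.

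Relatedly, your reason for the $\cZ^i$ ending up outermost is stated backwards. Before invoking the lemma on block $j$, you commute the not-yet-applied $\tilde{\cT}_{\Gamma^j_{r,1}}$ past $\cZ^i$, since they act on disjoint registers. That is why $\cZ^i$ stays outside; it is not because $\cZ^i$ is applied before $\cV^{(j)}$.

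Your fallback, however, would fail, for two reasons:
\begin{enumerate}
\item Its premise is not supplied by the lemma. As you note yourself, $\cV^{(i)}$ is controlled only on block $i$ and may act arbitrarily on the bare qubits produced earlier.
\item Even granting the premise, applying the additive composition rule to the $h-|F|$ channels gives a parameter of order $(h-|F|)\lambda'_r\delta$. The number of compositions is the number of blocks, which is unbounded; in Theorem~\ref{thm:ft-cons-int-main} the corollary is applied to $2^{r-r'}h$ blocks. So $r=O(1)$ does not make this cost constant.
\end{enumerate}
The independence argument is therefore not optional.
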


\subsection{Decoding interfaces for QLDPC codes} \label{sec:const-ovr-interface}
%
In this section, we provide the constant overhead decoding interface $\Xi^{[h]}_r$ for Theorem~\ref{thm:ft-cons-int-main}. To do this, we first provide a decoding interface in Lemma~\ref{lem:cons-over-interface}, denoted by $\Xi^{[h]}_{r, r'}$ for arbitrary $r' < r$ ($r-r' = O(1)$ is not needed anymore), mapping $h$ blocks of $\cC_r$ into $2^{r-r'} h$ blocks of $\cC_{r'}$. The interface $\Xi^{[h]}_r$ is then obtained using $\Xi^{[h]}_{r, r'}$. 
\paragraph{Notation:}
Consider codes $\cC_r$ and $\cC_{r'}$, $r > r'$. For a positive integer $h$, consider $h$ quantum systems, labelled by $i \in [h]$, each containing $m_r$ qubits. Further, group each system of $m_r$ qubits into $2^{r-r'}$ subsystems, labelled by $(i, j), i\in h, j \in [2^{r-r'}]$, each containing $m_{r'}$ qubits. For each system $i \in [h]$, we consider the encoding isometry $\cE^i_r$ of $\cC_r$, mapping $m_r$ qubits to $n_r$ qubits. Similarly for the system $(i, j)$, we consider the encoding isometry $\cE^{i,j}_{r'}$, mapping system $m_{r'}$ qubits to $n_{r'}$ qubits. We denote by $\cI^{i}_{n_r}$ and $\cI^{i, j}_{n_{r'}}$, the identity channel on systems $i$ and $(i, j)$, respectively. 
\begin{lemma}  \label{lem:cons-over-interface}
There exists a polynomial $p(\cdot)$, a constant $\mu > 0$, a threshold value $\delta_{th} > 0$, qubit overhead constants $\theta, \theta' > 0$, such that the following holds:

\smallskip Consider circuit-level stochastic noise with parameter $\delta < \delta_{th}$. For $h \geq 1$, let $\cN: \bL((\bC^2)^{\otimes n_r h} \otimes R) \to \bL((\bC^2)^{\otimes n_r h} \otimes R)$, with $R$ being a reference system, be a channel with weight $\mu n_{r}$ with respect to each $n_r$ qubit system in its input. Then, for any target error rate $\overline{\delta}$, there is a decreasing function\footnote{i.e., for larger target error $\overline{\delta}$, the level of protection $\ovr$ is smaller.} $\ovr(\overline{\delta}) > 0$, such for any $r > r' \geq \ovr$, the following holds: 

\smallskip There exists a circuit $\Xi^{[h]}_{r, r'}$, with $n_r h$ qubit input and  $2^{r-r'} n_{r'} h$ qubit output, and working on fewer than $\theta p(m_r) m_r+ \theta' m_r h$ qubits, such that its noisy realization $ \tilde{\cT}_{\Xi^{[h]}_{r, r'}}$ satisfies
\begin{equation} \label{eq:const-over-interface-1}
    \tilde{\cT}_{\Xi_{r, r'}^{[h]}} \circ \cW' \circ \left((\otimes_{i \in [h]} \cE^i_r) \otimes \cI[R] \right) =     \sum_{F \subseteq \{(i, j)| i \in [h], j \in [2^{r-r'}] \} } \: \sum_{\omega \in \Omega_F} \pr(F, \omega) \: \cT_{F, \omega},
\end{equation}
where $\Omega_F$  is a finite set depending on $F$ \footnote{The set $\Omega_F$ arises from the error analysis, where its elements $\omega$ correspond to different ways in which a block $F$ can become erroneous.} and
\begin{equation} \label{eq:const-over-interface}
\cT_{F,\omega} := \Big((\otimes_{(i, j) \not\in F} \cI^{i, j}_{n_{r'}}) \otimes  \cZ_{F, \omega} \otimes  \cI[R] \Big) \circ \cN_{F, \omega} \circ \Big((\otimes_{(i, j) \not\in F} \cE^{i, j}_{r'}) \otimes  \cE_{F, \omega} \otimes  \cI[R] \Big),
\end{equation}
with $\cE_{F, \omega}$ being a quantum channel from $|F| m_{r'}$ qubits to a larger system.\footnote{More precisely: $\cE_{F, \omega}: \bL ((\bC^2)^{\otimes |F| m_{r'}} ) \to
\bL((\bC^2)^{\otimes \sum_{r'' = r'}^{r}  g_{r''} \, n_{r''}})$
for some integers $g_{r''} \in \mathbb{Z}_{\geq 0}$, $r' \le r'' \le r$
such that $\sum_{r'' = r'}^{r} g_{r''} 2^{r''-r'} = |F|$.
Regrouping the blocks in $F$ (each of which contain $m_{r'}$ qubits) into $g_{r''}$ blocks of $m_{r''}$ qubits for all $r' \leq r'' \leq r$,  the channel $\cE_{F, \omega}$ is given by,$
    \cE_{F, \omega} := \otimes_{r'' = r'}^{r}  \: \cE_{r''}^{\otimes g_{r''}}.$
}
$\cZ_{F,\omega}$ is an arbitrary channel from this larger system to $|F| n_{r'}$ qubits. $\cN_{F, \omega}$ being a channel with weight $\mu n_{r'}$ with respect to $n_{r'}$ qubit systems corresponding to encoding maps applied on $(i, j) \not \in F$ in its input.

Importantly, 
$\pr(F) = \sum_{\omega\in \Omega_F} \pr(F, \omega)$ is such that for any $T \subseteq \{(i, j) \mid i \in [h], j \in [2^{r-r'}]\}$, we have
\begin{equation} \label{eq:cons-err-rate}
 \mathrm{Pr}(T \subseteq F) \leq {\overline{\delta}}^{|T|}.
\end{equation}
\end{lemma}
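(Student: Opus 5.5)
We build $\Xi^{[h]}_{r,r'}$ as the level-by-level sequential schedule sketched in the introduction (Fig.~\ref{fig:sequential-interface}). It is a composition of stages $y = 0,1,\dots,r-r'-1$. At stage $y$ we hold $2^y h$ blocks of $\cC_{r-y}$ and partition them into groups of size $g_y$. The group size is chosen so that $g_y$ parallel copies of $\Gamma_{r-y,r-y-1}$ (from Lemma~\ref{lem:succ-int} with $r-r'=1$) fit into a fixed ancilla budget of roughly $\theta p(m_r) m_r$ qubits. Concretely, take $g_y = \lceil p_1(m_r)m_r/(p_1(m_{r-y})m_{r-y})\rceil$, which grows with $y$ since $m_{r-y}=2^{-y}m_r$.

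Within a stage, the groups are processed one after another. While a group undergoes $\Gamma_{r-y,r-y-1}$, every other block runs $\Phi_{\mathrm{EC}}$ at its current level for the corresponding duration. This is $d p_2(m_{r-y})$ steps, applied to the unprocessed level-$(r-y)$ blocks and to the already produced level-$(r-y-1)$ blocks. The qubit count is then at most the ancilla budget plus $O(m_{r-y})$ per block for error correction, i.e.\ $\theta p(m_r)m_r+\theta' m_r h$ with $\theta'$ the EC ancilla constant divided by $\alpha$. Here we use $\sum_y 2^y h\, n_{r-y}=O(h m_r)$ only stage-wise, since stages are sequential. The stage duration is $T_y = (2^y h/g_y)\, d p_2(m_{r-y}) + O(1)$.

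The error analysis proceeds by iterating the expansions of Corollary~\ref{cor:succ-int-h} and Lemma~\ref{lem:mult-ec}, stage by stage and group by group, treating all other blocks as reference, exactly as in the proof of Corollary~\ref{cor:succ-int-h}. The first step is to convert the initial channel $\cW'$ (of weight $\mu n_r$ per block) and all intermediate low-weight channels through the weight-preserving statements. Each elementary failure event (a $\tau$-branch of some $\Gamma$ or of some EC run) is attached to a specific block at a specific level $r''$. Once a block fails, we stop tracking it: the arbitrary channel $\cZ$ is pushed to the left, and all later operations on its descendants are absorbed into $\cZ_{F,\omega}$. Its encoding stays $\cE_{r''}$, which yields the form of $\cE_{F,\omega}$ in the footnote. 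Here $\omega$ records the level and time of each failure, and $F$ consists of all final $(i,j)$ descended from failed blocks. Surviving blocks end in $\cE_{r'}$ with weight $\mu n_{r'}$, which gives Eq.~\eqref{eq:const-over-interface}.

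The main obstacle is Eq.~\eqref{eq:cons-err-rate}. Failures of distinct blocks are produced by a product-like structure: each branching step has failure probability at most $q$ regardless of history, because the expansions are conditionally independent. Hence it suffices to bound, for a single final index $(i,j)$, the probability $P_{(i,j)}$ that some ancestor fails. For a subset $T$ we then get $\pr(T\subseteq F) \le \prod_{\text{distinct ancestor chains}} (\cdot)$, which is at most $\max(P)^{|T|/2^{r-r'}}$. This is too weak, so instead we bound $\pr(T\subseteq F)$ by summing over sets $S$ of failed ancestors covering $T$. A failure at level $r''$ covers $2^{r''-r'}$ final indices. It therefore suffices that the failure probability of a level-$r''$ block is at most $\overline{\delta}^{\,2^{r''-r'}}/(\text{small combinatorial factor})$.

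At level $r''=r-y$, that failure probability is at most $T_y\,\varepsilon(r'',\delta) + \lambda p_3(m_{r''-1})(c\delta)^{c' m_{r''-1}}$, summed over all stages during which the block's ancestor lives at level $r''$. This is where the growing $g_y$ matters. With $p$ chosen large (e.g.\ $h\ge p(m_r)$ is not needed here, but $T_y$ is polynomial in $m_r$ divided by $g_y$), $T_y$ stays $\mathrm{poly}(m_{r''})\cdot h\cdot$(decaying factor). We need $T_y(c\delta)^{c' n_{r''}} \le \overline{\delta}^{\,c'' m_{r''}}$. The dependence on $h$ would be fatal, so the schedule must be refined: process groups of size $g_y \propto 2^y h / \mathrm{poly}(m_{r-y})$ when $h$ is large, i.e.\ fix the number of rounds per stage to be $\mathrm{poly}(m_{r-y})$. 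This still respects the qubit budget $\theta' m_r h$ once $p_1(m_{r-y})m_{r-y}\,g_y\le O(hm_r)+\theta p(m_r)m_r$. With that choice, everything is $\mathrm{poly}(m_{r''})(c\delta)^{c'm_{r''}}$. This is below $(\overline{\delta}^{\,2^{r''-r'}})$ up to constants provided $m_{r'}\ge m_{\ovr}$ is large enough depending on $\overline{\delta}$, since $m_{r''}=2^{r''-r'}m_{r'}$. This defines the decreasing function $\ovr(\overline{\delta})$ and, after a geometric sum over levels and covering patterns, yields $\pr(T\subseteq F)\le\overline{\delta}^{|T|}$.
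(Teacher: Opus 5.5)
Your construction, once refined as in your last paragraph, is the paper's. Your initial fixed-budget choice of $g_y$ fails, for the reason you give. The refined schedule processes $\lceil 2^{r-r''}h/(\theta p_1(m_{r''}))\rceil$ blocks per round, so each level takes $\mathrm{poly}(m_{r''})$ rounds independent of $h$. That is exactly the paper's $h_{r''}$. Your qubit count matches the paper's, and so does your per-level estimate. That estimate turns $\mathrm{poly}(m_{r''})(c\delta)^{c'm_{r''-1}}$ into a bound of the form $\overline{\delta}^{\,2^{r''-r'}}$ via $m_{r''}=2^{r''-r'}m_{r'}$, which is Lemma~\ref{lem:choice-ovr}. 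One minor difference: you keep failures of the waiting error-correction rounds as separate per-level events, whereas the paper absorbs those rounds into $\Gamma_{r'',r''-1}$ through the effective interface.

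The genuine difference is in bounding $\pr(T\subseteq F)$. The paper introduces the variables $X_v\in\{0,1,2\}$ on the binary tree. It then proves Lemma~\ref{lem:rec} and Lemma~\ref{lem:final-upper-bound} by recursively splitting $\overline{T}$: either the chosen node fails, or one of its ancestors does, in which case nodes are merged. This carries conditioning events $E,E'$, tracks node weight, and ends with a $4^{|\overline{T}|}$ factor. You instead union-bound over antichains $S$ of failed ancestors covering $T$. Failed nodes always form an antichain, and given that their strict ancestors succeeded they fail independently. Hence $\pr(\text{all } u\in S \text{ fail})\le\prod_{u\in S}\tau_u$ with no conditioning bookkeeping, and all $h$ trees are handled at once. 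This is cleaner.

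However, your ``small combinatorial factor'' cannot be a pure count. The number of covers is not $C^{|T|}$ uniformly in $r-r'$: two final indices whose lowest common ancestor is the root already admit about $(r-r')^2$ covers. The doubly exponential weights must do the work, and the sum closes as follows.
\begin{itemize}
\item Let $L_u$ be the number of final indices below $u$, and $k_v$ the number of points of $T$ below $v$.
\item Let $Z(v)$ be the weighted sum over covers inside the subtree of $v$. It satisfies $Z(v)=\tau_v+\prod_c Z(c)$, the product running over children $c$ whose subtree meets $T$, with $Z(u)=\tau_u$ for final nodes.
\item Assume $\tau_u\le\epsilon^{L_u}$ and $\epsilon\le 1/2$. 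Then the invariant $Z(v)\le\epsilon^{k_v}\bigl(3^{k_v}-1-2\epsilon^{L_v}\bigr)$ survives both the unary step (where $L_v-k_v\ge L_c$ and $\epsilon^{L_c}\ge 2\epsilon^{2L_c}$) and the binary step.
\item It therefore gives $\pr(T\subseteq F)\le(3\epsilon)^{|T|}$.
\end{itemize}
Choosing $\ovr$ so that $\epsilon=\overline{\delta}/3$ completes your argument.
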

Before giving the construction of the interface $\Xi^{[h]}_{r,r'}$ for Lemma~\ref{lem:cons-over-interface}, we first show how the lemma can be used in order to construct the interface $\Xi^{[h]}_r$ for Theorem~\ref{thm:ft-cons-int-main}.
\begin{proof}[Proof of Theorem~\ref{thm:ft-cons-int-main}]
For a fixed $r' \geq \ovr$ (i.e., $r' = O(1)$), consider the interface $\Gamma_{r', 1}$ from Lemma~\ref{lem:succ-int} for $k  = m_{r'}$. Recall that $\Gamma_{r', 1}$ maps $m_r$ logical qubits encoded in $\cC_{r'}$ to $m_r$ logical qubits that are encoded in $m_r$ blocks of $\cC_1$, with $\cC_1$ being the trivial code, encoding a logical qubit in a physical qubit, i.e., $\cE_1 = \cI_{\bC^2}$.

\smallskip Moreover, consider the interface $\Xi^{[h]}_{r, r'}$ from Lemma~\ref{lem:cons-over-interface}, mapping $h$ blocks of $\cC_r$ into $2^{r-r'} h$ blocks of $\cC_{r'}$. 
For any $r > 0$, we obtain the interface $\Xi^{[h]}_{r}$ by first applying $\Xi^{[h]}_{r, r'}$ on $h$ blocks of $\cC_r$ and then applying $\Gamma_{r', 1}$ on each block of $\cC_{r'}$ corresponding to the output of $\Xi^{[h]}_{r, r'}$ as follow (see also Fig.~\ref{fig:main-inteface})
\begin{equation} \label{eq:def-Xi-r}
    \Xi^{[h]}_{r} := (\otimes_{i \in [h], j \in [2^{r-r'}] } \Gamma^{i, j}_{r', 1})     \circ \Xi^{[h]}_{r, r'}
\end{equation}
From Lemma~\ref{lem:cons-over-interface}, we know that $\Xi^{[h]}_{r, r'}$ operates on fewer than $ \theta p(m_r) m_r+ \theta' m_r h$ qubits, where $p(\cdot)$ is a polynomial.
Note that $\Gamma^{i, j}_{r', 1}$ operates on a fixed number of qubits $\theta''$ , only depending on $r'$ (see also Lemma~\ref{lem:succ-int-1}). Therefore, $h2^{r-r'}$ copies of $\Gamma^{i, j}_{r', 1}$ operate on total $\theta'' h 2^{r-r'} \leq \theta''  h m_r$ (using $m_r = 2^{r-r'} m_{r'}$) qubits. By redefining $\theta' \leftarrow \max(\theta', \theta'')$, we have that $ \Xi^{[h]}_{r}$ operates on less than $\theta p(m_r) m_r+ \theta' m_r h$ qubits.

\smallskip We will now show that $\cV' := \tilde{\cT}_{\Xi_{r}^{[h]}} \circ \cW' \circ \left((\otimes_{i \in [h]} \cE^i_r) \otimes \cI[R] \right)$, which is a channel acting on $m_r h$ qubits and a reference system $R$, is local stochastic with parameter $\overline{\delta}^{\kappa}$ with respect to $m_r h$ qubit, with $\overline{\delta} = O(\delta)$ and some constant $\kappa > 0$. To do this, we will use Corollary~\ref{cor:succ-int-h-1} and Lemma~\ref{lem:cons-over-interface}.

\smallskip From Lemma~\ref{lem:cons-over-interface} and Eq.~\eqref{eq:def-Xi-r}, we have 
\begin{equation} \label{eq:V-prime-1}
   \cV' = \sum_{F} \sum_{\omega \in \Omega_F} \pr(F, \omega)(\otimes_{i \in [h], j \in [2^{r-r'}]} \tilde{\cT}_{\Gamma^{i, j}_{r', 1}}) \circ \cT_{F, w},
\end{equation}
where $\cT_{F, \omega}$ is from Eq.~\eqref{eq:const-over-interface} and $\pr(F) = \sum_{\omega \in \Omega_F} \pr(F, \omega)$ satisfies Eq.~\eqref{eq:cons-err-rate}.

\begin{figure}[t]
\centering
\begin{tikzpicture}[
  wire/.style={draw, thick},
  box/.style={draw, minimum width=2.2cm, minimum height=3.0cm, align=center}, 
  gbox/.style={draw, minimum width=2.2cm, minimum height=1.35cm, align=center} 
]

\def\yTop{1.2}
\def\yBot{-1.2}

\def\gH{0.42} 

\def\xIn{0.0}
\def\xXiL{2.4}
\def\xXiR{4.6}
\def\xMid{5.8}

\def\xGamC{7.2}
\def\xGamW{2.2}    
\def\xGamL{6.1}    
\def\xGamR{8.3}    

\def\xPhys{9.6}

\node[box] (Xi) at (3.5,0) {$\Xi^{[h]}_{r,r'}$};

\draw[wire] (\xIn,\yTop) -- (\xXiL,\yTop);
\draw[wire] (\xIn,\yBot) -- (\xXiL,\yBot);
\node at (1.2,0) {$\vdots$};

\draw[decorate,decoration={brace,mirror,amplitude=5pt}]
  (-0.3,\yTop+0.15) -- (-0.3,\yBot-0.15)
  node[midway,xshift=-0.6cm] {$h$};

\draw[wire] (1.1,\yTop-0.12) -- (1.22,\yTop+0.12);
\node[above] at (1.16,\yTop+0.12) {$r$};
\draw[wire] (1.1,\yBot-0.12) -- (1.22,\yBot+0.12);
\node[below] at (1.16,\yBot-0.12) {$r$};

\draw[wire] (\xXiR,\yTop) -- (\xMid,\yTop);
\draw[wire] (\xXiR,\yBot) -- (\xMid,\yBot);
\node at (5.3,0) {$\vdots$};
\node[right] at (4.8,0.3) {{$2^{\,r-r'}h$}};

\draw[wire] (5.0,\yTop-0.12) -- (5.12,\yTop+0.12);
\node[above] at (5.06,\yTop+0.12) {$r'$};

\draw[wire] (5.0,\yBot-0.12) -- (5.12,\yBot+0.12);
\node[below] at (5.06,\yBot-0.12) {$r'$};

\node[gbox] (Gtop) at (\xGamC,\yTop) {$\Gamma_{r',1}$};
\node[gbox] (Gbot) at (\xGamC,\yBot) {$\Gamma_{r',1}$};
\node at (\xGamC,0) {$\vdots$};

\draw[wire] (\xMid,\yTop) -- (\xGamL,\yTop);
\draw[wire] (\xMid,\yBot) -- (\xGamL,\yBot);

%
\draw[wire] (\xGamR,\yTop+\gH) -- (\xGamR,\yTop-\gH); 
\draw[wire] (\xGamR,\yTop+\gH) -- (\xPhys,\yTop+\gH);
\draw[wire] (\xGamR,\yTop-\gH) -- (\xPhys,\yTop-\gH);
\node at (8.95,\yTop) {$\vdots$};

\draw[decorate,decoration={brace,amplitude=5pt}]
  (\xPhys+0.2,\yTop+\gH+0.15) -- (\xPhys+0.2,\yTop-\gH-0.15)
  node[midway,xshift=0.6cm] {$m_{r'}$};

\draw[wire] (\xGamR,\yBot+\gH) -- (\xGamR,\yBot-\gH); 
\draw[wire] (\xGamR,\yBot+\gH) -- (\xPhys,\yBot+\gH);
\draw[wire] (\xGamR,\yBot-\gH) -- (\xPhys,\yBot-\gH);
\node at (8.95,\yBot) {$\vdots$};

\draw[decorate,decoration={brace,amplitude=5pt}]
  (\xPhys+0.2,\yBot+\gH+0.15) -- (\xPhys+0.2,\yBot-\gH-0.15)
  node[midway,xshift=0.6cm] {$m_{r'}$};

\end{tikzpicture}
\caption{Construction of the interface $\Xi^{[h]}_{r}$: for a fixed $r'$, we first apply $\Xi^{[h]}_{r,r'}$ and then apply $\Gamma_{r',1}$ to each $\mathcal C_{r'}$ block.}
\label{fig:main-inteface}
\end{figure}
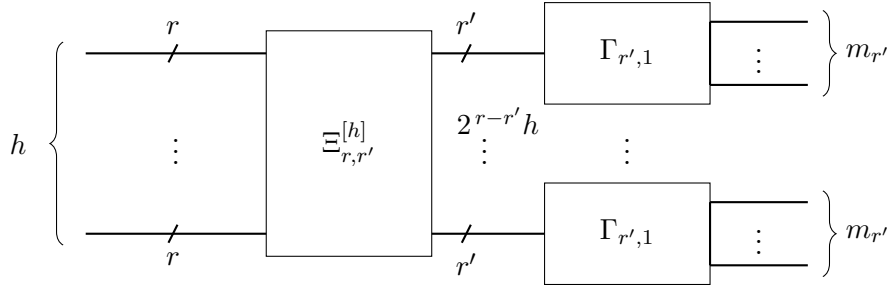

\medskip We have 
\begin{align}
    & (\otimes_{i \in [h], j \in [2^{r-r'}]} \tilde{\cT}_{\Gamma^{i, j}_{r', 1}}) \circ \cT_{F, w} \nonumber  \\
    & \quad = \Big((\otimes_{(i, j) \not\in F} \tilde{\cT}_{\Gamma^{i, j}_{r', 1}}) \otimes  \overline{\cZ}_{F, \omega} \otimes  \cI[R] \Big) \circ  \cN_{F, \omega} \circ \Big( (\otimes_{(i, j) \not\in F} \cE^{i, j}_{r'}) \otimes  \cE_{F, \omega} \otimes  \cI[R] \Big) \nonumber \\
    & = \sum_{F' \subseteq \{(i, j)| i \in [h], j \in [2^{r-r'}] \} \setminus F}  \pr(F' \mid F) \left((\otimes_{(i, j) \not\in F \cup F'} \cI^{i, j}_{m_{r'}}) \otimes  \cZ_{F'} \otimes \overline{\cZ}_{F, \omega} \otimes  \cI[R] \right)  \nonumber \\
    & \quad \quad \quad \quad \quad \quad \quad \quad \quad  \quad \quad \quad \quad \circ {\cV}^{(1)}_{F \cup F', \omega} \circ \Big( (\otimes_{(i, j) \not\in F \cup F'} \cI^{i, j}_{m_{r'}}) \otimes \cE_{F'} \otimes \cE_{F, \omega} \otimes  \cI[R] \Big), \label{eq:gamma-r'-1-int} 
\end{align}
where the first equality uses Eq.~\eqref{eq:const-over-interface}, and   $\overline{\cZ}_{F, \omega} := (\otimes_{(i, j) \in F} \tilde{\cT}_{\Gamma^{i, j}_{r', 1}}) \circ \cZ_{F, \omega}$, with $\cZ_{F, \omega}$ being the arbitrary channel in Eq.~\eqref{eq:const-over-interface}.
The second equality follows from Corollary~\ref{cor:succ-int-h-1}, we push $\overline{\cZ}_{F, \omega}$ to the left and $\cE_{F, \omega}$ to the right and let $F$ be part of the reference; our $r'$ will be the $r$ from the corollary (note that assumption $r'=O(1)$ is satisfied in the corollary). Moreover, $\cE_{F'}: \bL(\mathbb{C}^2)^{\otimes m_{r'} |F'|}) \to (\mathbb{C}^2)^{\otimes n_{r'} |F'|})$, $\cE_{F'} := \otimes_{(i, j) \in F'} \cE^{i, j}_{r'}$ and $\cZ_{F'}: (\mathbb{C}^2)^{\otimes n_{r'} |F'|}) \to (\mathbb{C}^2)^{\otimes m_{r'} |F'|})$ is an arbitrary channel, and  
\begin{equation} \label{eq:F-Fprime}
\pr(F' \mid F) = (1- \tau_{r'})^{2^{r-r'}h - |F| - |F'|} \: \tau_{r'}^{|F'|},
\end{equation}
 where $\tau_{r'} \leq \lambda_{r'} (c \delta)^{c' m_{r'}}$, with constants $c, c' > 0$ and $\lambda_{r'} > 0$ only depending on $r'$ as in Corollary~\ref{cor:succ-int-h-1}. $\cV^{(1)}_{F \cup F', \omega}$ is a local stochastic channel with parameter $\delta' = \lambda'_{r'} \delta$, with respect to $m_{r'}$ qubit systems corresponding to labels $(i, j) \not\in F \cup F'$, where $\lambda'_{r'}$ is another parameter only depending on $r'$.

\smallskip We take $r'$ such that $c' m_{r'} \geq 1$, therefore, we have 
\begin{equation} \label{eq:ub-tau-rprime}
   \tau_{r'} \leq  b \delta, \text{ where } b :=  \lambda_{r'} c.
\end{equation}
For any $T \subseteq \{(i, j)| i \in [h], j \in [2^{r-r'}] \} \setminus F$, from Eq.~\eqref{eq:F-Fprime} and Eq.~\eqref{eq:ub-tau-rprime}, it is easy to see that
\begin{align}
\pr(T \subseteq  F' \mid F) 
& \leq   (b \delta)^{|T|}. \label{eq:F'-T}
\end{align}
At this point, we have managed to remove all the perfect encoders and corresponding noisy decoding interfaces except on blocks, labelled by $F$ and $F'$, where error happened. The total map (our $\cV'$) we arrived at is therefore a locally stochastic channel sandwiched by some perfect encoders on the input and arbitrary errors on the output confined to the erroneous blocks (see Eq.~\ref{eq:gamma-r'-1-int}). 

In the remaining part of the proof, we first obtain a form for $\cV'$ as a probabilistic sum over error maps acting on blocks. Then we convert this to error maps acting on the qubits. Finally, we will analyse the probability distribution to show that the overall channel is locally stochastic.

For the block labelled by $(i, j)$, let $B(i, j)$ denote the corresponding quantum system, containing $m_{r'}$ qubits. For $F \subseteq \{(i, j) \mid i \in [h],  j\in [2^{r-r'}]  \}$, let $B(F) : = \cup_{(i, j) \in F} B(i, j)$ be $m_{r'} |F|$ qubit system corresponding to blocks in $F$. We denote the quantum system corresponding to all the blocks by  $B := \cup_{i \in [h], j\in [2^{r-r'}]} B(i, j)$.
For any set $F$, we consider another quantum system $\overline{B}(F)$ that contains at least $m_{r'} |F|$ qubits, i.e, the size of $\overline{B}(F)$ is at least the size of $B(F)$.

\smallskip We define 
\begin{equation} \label{eq:v-2}
    {\cV}^{(2)}_{F, \omega} := \left((\otimes_{(i, j) \not\in F} \cI^{i, j}_{m_{r'}}) \otimes  \cZ_{F \cup F'} \otimes  \cI[R] \right) \circ  {\cV}^{(1)}_{F \cup F', \omega} \circ \Big( (\otimes_{(i, j) \not\in F} \cI^{i, j}_{m_{r'}}) \otimes \cE_{F \cup F'}, \otimes  \cI[R] \Big),
\end{equation}
where $\cE_{F \cup F'} = \cE_{F'} \otimes \cE_{F, \omega} $ and $\cZ_{F \cup F'} = \cZ_{F'} \otimes \overline{\cZ}_{F, \omega}$.

\smallskip Let $(F \cup F')^C := \{(i, j) \mid i \in [h],  j\in [2^{r-r'}]  \} \setminus F \cup F'$ be the complement of $F$. Let the output systems of $\cE_{F \cup F'}$ denoted by $\overline{B}(F \cup F')$.

\smallskip Note that ${\cV}^{(1)}_{F \cup F', \omega}$ acts on systems $B\left((F \cup F')^C\right) $, $\overline{B}(F \cup F')$, and the reference $R$. Using the local stochastic property of ${\cV}^{(1)}_{F \cup F', \omega}$ with respect to the subsystem $B\left((F \cup F')^C\right) $ and treating systems $\overline{B}(F \cup F')$ also as a part of the reference, we have 
\begin{equation} \label{eq:v-1}
    {\cV}^{(1)}_{F, \omega} = \sum_{A \subseteq B\left((F \cup F')^C\right)} \pr(A) \: \cI[B\left((F \cup F')^C\right) \setminus A] \otimes  \overline{\cN}_\omega [A, \overline{B}(F \cup F'), R].
\end{equation}
Moreover, for all $T \subseteq B(F^C)$, we have
\begin{equation} \label{eq:local-stoc-v-1}
    \pr(T \subseteq A) \leq (\lambda'_r \delta)^{|T|}.
\end{equation}
From Eq.~\eqref{eq:v-2} and Eq.~\eqref{eq:v-1}, we get
\begin{equation} \label{eq:v-3}
    {\cV}^{(2)}_{F, \omega} = \sum_{A \subseteq B\left((F \cup F')^C\right)} \pr(A) \: \cI[B\left((F \cup F')^C\right) \setminus A] \otimes  \cN_\omega[A, B(F \cup F'), R],
\end{equation}
where $\cN_\omega[A, B(F \cup F'), R] = (\cI[A] \otimes \cZ_{F\cup F'}) \circ \overline{\cN}_\omega \circ (\cI[A] \otimes \cE_{F \cup F'})$.

\smallskip We now have (in the following, we write $\sum_F$ in place of $\sum_{F \subseteq \{(i, j) \mid i \in [h], j \in [2^{r-r'}]  \}}$ for brevity)
\begin{align}
  & \cV' = \sum_{F} \sum_{\omega \in \Omega_F} \sum_{F' \subseteq F^C} \pr(F, \omega) \pr(F' \mid F) \: {\cV}^{(2)}_{F \cup F', \omega} \nonumber \nonumber \\
  & =  \sum_{F} \sum_{\omega \in \Omega_F} \sum_{F' \subseteq F^C} \sum_{A \subseteq B\left((F \cup F')^C\right)} \pr(F, \omega) \pr(F' \mid F)  \pr(A) \: \Big( \cI[B\left((F \cup F')^C\right) \setminus A] \nonumber \\
  & \quad \quad \quad \quad \quad \quad \quad \quad \quad \quad \quad \quad \quad \quad \quad \quad \quad \quad \quad \quad \quad \quad \quad \quad \otimes  \cN_\omega[A, B(F \cup F'), R]  \Big)\nonumber \\
  & = \sum_{F} \sum_{F' \subseteq F^C} \sum_{A \subseteq B\left((F \cup F')^C\right)} \pr(F) \pr(F' \mid F)  \pr(A) \: \Big(\cI[B\left((F \cup F')^C\right) \setminus A]  \nonumber \\ 
  & \quad \quad \quad \quad \quad \quad \quad \quad \quad \quad \quad \quad \quad \quad \quad \quad \quad \quad \quad \quad \quad \quad \quad \quad \otimes \cN[A, B(F \cup F'), R] \Big), \label{eq:final-local-stoc}
\end{align}
where the first equality follows from Eq.~\eqref{eq:gamma-r'-1-int} and Eq.~\eqref{eq:v-2}. The second equality follows from Eq.~\eqref{eq:v-3}. In the last equality, we have used
\begin{equation}
  \pr(F) = \sum_{\omega \in \Omega_F} \pr(F, \omega), \text{ and } \cN := \frac{1}{\pr(F)} \sum_{\omega \in \Omega_F} \pr(F, \omega) \cN_\omega   
\end{equation}
We will now convert the distributions that are over blocks in Eq.~\eqref{eq:final-local-stoc}, i.e., $\pr(F)$ and $\pr(F' \mid F)$ to distributions over qubit systems.
For $\pr(F), F \subseteq \{(i, j) \mid i \in [h], j \in [2^{r-r'}]\}$, we define a corresponding probability distribution on the quantum system $B$ as follows: 

\smallskip For any $G \subseteq B$, $\pr(G) = \pr(F)$ if $G = B(F)$ for some $ F \subseteq \{(i, j) \mid i \in [h], j \in [2^{r-r'}]\}$, else $\pr(G) = 0$. 

Similarly, for $\pr(F' \mid F), F' \subseteq F^C$, we define the corresponding conditional probability distribution on the set $B \setminus G$, where $G = B(F)$ for some $F \subseteq \{(i, j) \mid i \in [h], j \in [2^{r-r'}]\}$ as follows:

For any $G' \subseteq B \setminus G$,  $\pr(G' \mid G) = \pr(F' \mid F)$ if $G' = B(F')$ for some $F' \subseteq F^C$ or else $\pr(G' \mid G) = 0$. 

\medskip We rewrite Eq.~\eqref{eq:final-local-stoc} as
\begin{equation} \label{eq:exp-loc-stoc-vprime}
    \cV' = \sum_{G \subseteq B} \: \sum_{G' \subseteq B \setminus G} \: \sum_{A \subseteq B \setminus (A \cup A')} \: \pr(G) \pr(G') \pr(A) \: \left(\cI[B \setminus (G \cup G' \cup A)] \otimes  \cN[A, G, G', R] \right).
\end{equation}
From Eq.~\eqref{eq:exp-loc-stoc-vprime}, we get the expression of $\cV'$ as a convex combination of arbitrary channels acting on subsets of $m_r h$ qubit system $B$ as in Eq.~\eqref{eq:W-exp-1}. In order to show that $\cV'$ is locally stochastic, it remains to analyze the probability distribution. More precisely, it remains to show that $P(T \subseteq G \cup G' \cup A) \leq (\kappa_1 \delta)^{\kappa_2 |T|}$ for some constants $\kappa_1, \kappa_2 > 0$.

\smallskip For any $T \subseteq B$, we define the following set
\begin{equation}
    \overline{T} := \{ (i, j) \mid i \in [h], j \in [2^{r-r'}], \text{ and } T \cap B(i, j) \neq \emptyset \}. 
\end{equation}
Note that   $T \subseteq G$ if and only $\overline{T} \subseteq F$ and $|\overline{T}| \geq \frac{|T|}{m_{r'}}$. By choosing a $\ovr$ in Lemma~\ref{lem:cons-over-interface}, such that $\overline{\delta} \leq b \delta$, we have
\begin{equation}
    \pr(T \subseteq G) = \pr(\overline{T}  \subseteq F) \leq (b \delta)^{|\overline{T}|} = (b \delta)^{\frac{|T|}{m_{r'}}},
\end{equation}
where the inequality uses Eq.~\eqref{eq:cons-err-rate}.
Similarly, using Eq.~\eqref{eq:F'-T}, we can show that for any $T \subseteq B \setminus G$, we have
\begin{equation}
    \pr(T \subseteq G') = (b \delta)^{\frac{|T|}{m_{r'}}}.
\end{equation}
Therefore, for any $T \subseteq B$
\begin{align} \label{eq:T-A-A}
    \pr(T \subseteq G \cup G') &= \sum_{T_1 \subseteq T} \pr(T_1 \subseteq G) \pr(T \setminus T_1 \subseteq G') \nonumber\\
    &   \leq (2b\delta)^{\frac{|T|}{m_{r'}}}.
\end{align}
Similarly, from Eq.~\eqref{eq:local-stoc-v-1} and Eq.~\eqref{eq:T-A-A}, and using $\delta' = \lambda'_{r'}\delta$, it follows that 
\begin{equation} \label{eq:loc-stoc-v-prime}
    \pr(T \subseteq G \cup G' \cup A) \leq (\kappa_1 \delta)^{\kappa_2|T|},
\end{equation}
where $\kappa_1 = 2b + \lambda'_{r'}$, and $\kappa_2 := \frac{1}{m_{r'}}$. Note that since $r'$ is fixed, $\kappa_1, \kappa_2$ are constants not depending on $r$. Finally, from Eq.~\eqref{eq:exp-loc-stoc-vprime} and Eq.~\eqref{eq:loc-stoc-v-prime}, it follows that $\cV'$ is a local stochastic channel with parameter $(\kappa_1 \delta)^{\kappa_2}$ with respect to system $B$.   
\end{proof}

\subsection{Construction of the interface for Lemma~\ref{lem:cons-over-interface}} \label{sec:xi-con}  
We will first construct the decoding interface $\Xi^{[h]}_{r, r-1}$, which maps $h$ copies of $\cC_r$ into $2 h$ copies of $\cC_{r-1}$. We will then obtain $\Xi^{[h]}_{r, r'}$ by  decreasing the encoding level one by one from $r$ to $r'$. We construct the interface $\Xi^{[h]}_{r, r-1}$ in several steps, using the interface $\Gamma_{r, r-1}$ from Lemma~\ref{lem:succ-int}. During each step, we apply $\Gamma_{r, r-1}$ on a fraction of the total blocks of $\cC_r$, mapping these blocks to twice as many blocks of $\cC_{r-1}$. In parallel, we apply the appropriate error correction circuits on the remaining blocks of $\cC_r$ and on the blocks of $\cC_{r-1}$ created in earlier steps.

\smallskip We consider the interface $\Gamma_{r, r-1}$ from Lemma~\ref{lem:succ-int}. We have that $\Gamma_{r, r-1}$ operates on $\theta m_r p_1(m_r)$ qubits. 

\smallskip As the total number of blocks is increasing as we decrease the encoding,  we will denote by $h^{(r'')}$ the total number of blocks of $\cC_{r''}$ at layer $r' \leq r'' \leq r$ that correspond to the input of $\Xi^{[h^{(r'')}]}_{r'', r''-1}$. We start with $h^{(r)} = h$ blocks of $\cC_r$.
Since $\Xi^{[h^{(r'')}]}_{r'', r''-1}$ maps $h^{(r'')}$ blocks of $\cC_{r''}$ into $2 h^{(r'')}$ blocks of $\cC_{r''-1}$, we have $h^{(r'')} = 2^{r-r''} h$.

\medskip We construct the circuit $\Xi^{[h^{(r)}]}_{r,r-1}$ in several steps. In the first step, we apply the following two operations in parallel on the $h$ blocks of $n_r$ qubits, labelled by $i \in [h^{(r)}]$
\begin{itemize}
    \item[(a)]  Consider $h_r < h^{(r)}$ such that
    \begin{equation} \label{eq:h_r-h}
     h_r = \left\lceil \frac{h^{(r)}}{\theta p_1(m_r)} \right\rceil.
    \end{equation}
 We apply $\Gamma_{r, r-1}$ on blocks of $\cC_r$ corresponding to $i = 1, \dots,  h_r$. We denote by $\Gamma^i_{r, r-1}$, the $i^{th}$ copy of $\Gamma_{r, r-1}$.    
    \item[(b)] On the remaining blocks $ i = h_r + 1, \dots, h^{(r)}$, we apply $s$ error correction steps, i.e. $\Phi_{\mathrm{EC}^{r,s}}$ from Eq.~\eqref{eq:s-cor-step}. Since the depth of $\Gamma_{r, r-1}$ is upper bounded by $d p_2(m_r)$ (from Lemma~\ref{lem:succ-int} and using $k = m_r$), we 
    have $s = d p_2(m_r)$. We denote by $\Phi^i_{\mathrm{EC}^{r,s}}$, the $i^{th}$ copy of $\Phi^i_{\mathrm{EC}^{r,s}}$.    
\end{itemize}
The output of $\Gamma_{r, r-1}$ corresponds to two blocks of $\cC_{r, r-1}$. We label the output blocks of $\Gamma^i_{r, r-1}$ by $(i, j_1)$, where $j_1 \in \{0, 1\}$. 
The quantum circuit corresponding to (a) and (b) is given by (here tensor product between two quantum circuits  represents their parallel realization)
\begin{equation}
    \Psi^{[h^{(r)}]}_{r, 1} := (\otimes_{i \in [h_r]} \:\Gamma^i_{r, r-1}) \bigotimes (\otimes_{i \in [h_r+1: h^{(r)}]} \:\Phi^i_{\mathrm{EC}^{r,s}})
\end{equation}
In the second step, we apply the following operations on the output systems of $\Psi_{r,1}^{[h^{(r)}]}$
\begin{itemize}
    \item[(a')] We apply $\Gamma_{r, r-1}$ on blocks of $\cC_r$ corresponding to $ i \in [h_r:2h_r]  = h_r + 1, \dots, 2h_r $.

    \item[(b')] We apply $\Phi_{\mathrm{EC}^{r-1, s}}$ on the blocks of $\cC_{r-1}$ corresponding to $(i, j_1), i \in [h_r], j_1 \in \{0, 1\}$, and $\Phi_{\mathrm{EC}^{r, s}}$ on the blocks of $\cC_r$ corresponding to  $i \in  [2h_r:h^{(r)}]$.  
\end{itemize}
 Then, the circuit corresponding to (a') and (b') is given by,
\begin{multline}
    \Psi^{[h^{(r)}]}_{r, 2} := (\otimes_{i \in [h_r], j_1 \in \{0, 1\}} \: \Phi^{i, j_1}_{\mathrm{EC}^{r-1, s}}) \bigotimes (\otimes_{i \in [h_r:2h_r]} \: \Gamma^i_{r, r-1}) \bigotimes (\otimes_{i \in [2h_r:h^{(r)}]} \: \Phi^i_{\mathrm{EC}^{r, s}}) 
\end{multline}
We continue in the above fashion until the interface $\Gamma^i_{r, r-1}$ has been applied on all the $h^{(r)}$ blocks of $\cC_r$. In particular, we keep applying  $\Psi^{[h^{(r)}]}_{r, l}$ for any $1 \leq l \leq \lceil \frac{h}{h_r} \rceil$, where $\Psi_{r, l}$ applies $\Phi^{i, j_1}_{\mathrm{EC}^{r-1, s}}$ on blocks of $\cC_{r-1}$ corresponding to $i \in [(l-1)h_r]$, $\Gamma^i_{r, r-1}$ on blocks of $\cC_r$ corresponding to $i \in [lh_r: (l+1) h_r]$, and $\Phi^{i, j_1}_{\mathrm{EC}^{r, s}}$ on blocks of $\cC_r$ corresponding to $[(l+1)h_r:h^{(r)}]$, where $s =  \poly(m_r)$, that is,
\begin{multline} \label{eq:psi-r-j}
    \Psi^{[h^{(r)}]}_{r, l} := (\otimes_{i \in [(l-1) h_r], j_1 \in \{0, 1\} } \: \Phi_{\mathrm{EC}^{r-1, s}})  \bigotimes (\otimes_{i \in [(l-1)h_r:lh_r]} \: \Gamma_{r, r-l}) \bigotimes (\otimes_{i \in [lh_r:h^{(r)}]} \: \Phi_{\mathrm{EC}^{r,s}}).
\end{multline}
We now define the interface $\Xi_{r, r-1}^{[h^{(r)}]}$ as follows (here the composition $\Phi_2 \circ \Phi_1$ means that $\Phi_1$ is applied first and $\Phi_2$ is applied to its output, assuming the output systems of $\Phi_1$ match the input systems of $\Phi_2$)
\begin{equation} \label{eq:xi-r-1}
    \Xi_{r, r-1}^{[h^{(r)}]} :=        \Psi^{[h^{(r)}]}_{r,\lceil \frac{h^{(r)}}{h_r} \rceil} \circ \cdots \circ \Psi^{[h^{(r)}]}_{r, l} \circ \cdots \circ \Psi^{[h^{(r)}]}_{r, 2} \circ \Psi^{[h^{(r)}]}_{r, 1}.
\end{equation}
The output of $\Xi_{r, r-1}^h$ corresponds to $h^{(r-1)} = 2h^{(r)}$ blocks of $\cC_{r-1}$ labelled by $(i, j_1),  i \in [h^{(r)}], j_1 \in \{0, 1\}$.
On these blocks, we apply $\Xi_{r-1, r-2}^{[h^{(r-1)}]}$, whose output corresponds to $h^{(r-2)} = 2^2 h^{(r)}$ blocks of $\cC_{r-2}$ labelled by $(i, j_1, j_2), \: i \in [h^{(r)}], \: j_1, j_2 \in \{0, 1\}$. 
Subsequently on $(i, j_1, j_2)$, $i \in [h^{(r)}], j_1, j_2 \in \{0, 1\}$, we apply $\Xi_{r-2, r-3}^{[h^{(r-2)}]}$. To define $\Xi^{[h]}_{r, r'}$, we continue doing this until we have $h^{(r')} = 2^{r-r'} h^{(r)}$ blocks of $\cC_{r'}$, that is,
\begin{equation} \label{eq:xi-r-r'}
  \Xi^{[h^{(r)}]}_{r, r'} = \Xi_{r'+1, r'}^{[h^{(r' + 1)}]} \: \circ \cdots \circ \Xi^{[h^{(r'')}]}_{r'', r''-1} \circ  \cdots \cdots \circ \: \Xi_{r-1, r-2}^{[h^{(r-1)}]}  \circ \Xi_{r, r-1}^{[h^{(r)}]}.
\end{equation}
Note that for any $r' < r'' \leq r$, the corresponding $\Xi^{[h^{(r'')}]}_{r'', r''-1}$  as,
\begin{equation} \label{eq:xi-r-prime2}
\Xi^{[h^{(r'')}]}_{r'', r''-1} =  \Psi^{[h^{(r'')}]}_{r'', \left\lceil \tfrac{h^{(r'')}}{h_{r''}} \right\rceil} \circ \cdots \circ \Psi_{r'', l} \circ \cdots \circ \Psi^{[h^{(r'')}]}_{r'', 2} \circ \Psi^{[h^{(r'')}]}_{r'', 1},
\end{equation}
where for any $\Psi_{r'', l}$, $l \leq \left\lceil \tfrac{2^{r -r''} h}{h_{r''}} \right\rceil$, the number of blocks on which $\Gamma_{r'', r''- 1}$ is applied in parallel is given by, 
\begin{equation} \label{eq:h_r-prime}
     h_{r''} = \left\lceil \frac{h^{(r'')}}{\theta p_1(m_{r''})} \right\rceil. 
\end{equation}
We note that the fraction of qubits on which $\Gamma_{r'', r''-1}$ is applied in parallel for $\Psi_{r'', l}$, that is, $\tfrac{h_{r''}}{h^{(r'')}} \propto \frac{1}{p_1(m_{r''})}$, increases with decreasing $r''$ (see also Fig.~\ref{fig:sequential-interface}). This implies that the depth of the circuit $\Xi^{[h^{(r'')}]}_{r'', r''-1}$, which is determined by $\tfrac{h^{(r'')}}{h_{r''}}$, decreases as $r''$ is reduced from $r$ to $r'$. Therefore, the logical information is decoded faster from an encoding layer $r''$ to the layer below it for smaller values of $r''$.

\smallskip For $\Xi_{r-y, r-y-1}^{[h^{(r-y)}]}$, where $y < r-r'$, its input corresponds to $h^{(r-y)} =2^{y} h^{(r)}$ blocks of $\cC_{r-y}$, which are labelled  by $(i, j_1, j_2, \dots, j_{y})$, $i \in [h^{(r)}], j_1, j_2, \dots, j_{y} \in \{0, 1\}$, and its output corresponds to $h^{(r-y-1)} = 2^{(y+1)} h$ blocks of $\cC_{r-y-1}$, labelled  by $(i, j_1, j_2, \dots, j_y, j_{y+1})$, $i \in [h^{(r)}], j_1, j_2, \dots, j_{y}, j_{y+1}  \in \{0, 1\}$.
\begin{remark} \label{rem:notation-cons-ovr}
To connect with the statement of Lemma~\ref{lem:cons-over-interface}, we  denote the output of $\Xi^{[h]}_{r, r'}$ by $ (i, j), i \in [h], j \in [2^{r-r'}]$,
where $(i, j)$ is the block corresponding to  $(i, j_1, j_2, \dots, j_{r-r'})$, with $j - 1$ being the decimal representation of binary string $j_1, j_2, \dots, j_{r-r'}$ ($j_1$ being the least significant bit), that is,
\begin{equation}
     j - 1  = \sum_{y = 1}^{r-r'} 2^{y-1} j_{y}.
\end{equation}
\end{remark}
\paragraph{Counting qubits in the interface $\Xi^{[h]}_{r, r'}$:}
In this paragraph, we will show that the circuit $\Xi^{[h]}_{r, r'}$ described above operates on fewer than $ \theta p(m_r) m_r+ \theta' m_r h$ qubits for some constants $\theta, \theta' > 0$ and a polynomial $p(\cdot)$.

\smallskip Recall that the circuit $\Xi^{[h]}_{r, r'}$
is constructed using the interface $\Gamma_{r'', r''-1}$ given in Lemma~\ref{lem:succ-int} for $r'' = r, r-1, \dots, r'+1$  and the error correction circuit $\Phi_{\mathrm{EC}^{r'', s}}$ where $r'' = r, r-1, \dots, r'$. The error correction circuit has a constant overhead with respect to its input size. Although $\Gamma_{r'', r''-1}$ does not have a constant qubit overhead with respect to its input, since $\Gamma_{r'', r''-1}$ is applied on only a few blocks in parallel, the overall contribution of $\Gamma_{r'', r''-1}$'s to the overhead is negligible. This is explicitly shown in the following.

Consider $r''$ such that $r''= r, r-1, \dots, r'+1$ and the circuit $\Xi^{[h^{(r'')}]}_{r'', r''-1}$ according to Eq.~\eqref{eq:xi-r-prime2}-~\eqref{eq:h_r-prime}.
It suffices to show that $\Psi_{r'', l}$ from Eq.~\eqref{eq:xi-r-prime2} for any $l \leq \lceil \frac{h^{(r'')}}{h_{r''}} \rceil$ operates on $O(hm_r)$ qubits. The input of $\Psi_{r'', l}$ is given by the following sets,
\begin{itemize}
    \item[$(1)$]  A set of $2 (l-1) h_{r''}$ blocks, each containing $n_{r''-1}$ qubits. These blocks correspond to the set of $(l-1) h_{r''}$ blocks of size $n_{r''}$ on which $\Gamma_{r, r-1}$ has been applied during $\Psi_{r'', l'}, l' < l$.

    \item[$(2)$] A set of $h^{(r'')} - (l-1)  h_{r''}$ blocks, each containing $n_{r''}$ qubits.
\end{itemize}

\smallskip Moreover, during $\Psi^{[h^{(r'')}]}_{r'', l}$, the following operations  are applied in parallel
\begin{itemize}
    \item[$(a)$] $\Gamma_{r'', r''-1}$ is applied on the first $h_{r''}$ blocks of $n_{r''}$ qubits.

    \item[$(b)$] For $s = O(\mathrm{poly}(m_{r''}))$, error correction steps $\Phi_{\mathrm{EC}^{r''-1, s}}$  are applied on $2(l-1)  h_{r''}$ blocks of $n_{r'' - 1}$ qubits, and on the remaining $2^{r-r''} h - l  h_{r''}$ blocks of $n_{r''}$ qubits, error correction steps $\Phi_{\mathrm{EC}^{r'', s}}$ are applied.    
\end{itemize}
Error correction circuits $\Phi_{\mathrm{EC}^{(r''-1, s)}}$ and $\Phi_{\mathrm{EC}^{r'', s}}$ operate on $\theta_1 m_{r''-1}$ and $\theta_1 m_{r''}$ qubits for some constant $\theta_1 > 0$ (see also Def.~\ref{def:ft-err-cor}). Therefore, the total number of qubits corresponding to the blocks on which $\Phi_{\mathrm{EC}^{(r''-1, s)}}$ and $\Phi_{\mathrm{EC}^{r'', s}}$ is applied is given by,
\begin{align}
     \eta_1 &:= 2 (l-1) h_{r''}\: \beta    m_{r''- 1} + (h^{(r'')} - l  h_{r''})  \theta_1 m_{r''} \nonumber \\
     & = \big( (l- 1) h_{r''}  +  2^{r-r''} h   - l  h_{r''} \big) \theta_1 m_{r''}  \nonumber \\
     & \leq   \theta_1 h  2^{r-r''} m_{r''}  \nonumber \\
     & \leq \theta_1 h m_{r} \label{eq:theta1}
\end{align}
where in the second line, we have used $m_{r''} = 2 m_{r''-1}$, and in the last line $m_{r} = 2^{r-r''} m_{r''}$.

\smallskip From Lemma~\ref{lem:succ-int}, we have that $\Gamma_{r'', r''-1}$ operates on fewer than $\theta p_1(m_{r''}) m_{r''}$ qubits, the total number of qubits corresponding to $h_{r''}$ blocks on which $\Gamma_{r, r-1}$ is applied is given by,
\begin{align}
    \eta_2 &:= h_{r''} \theta p_1(m_{r''}) m_{r''} \nonumber \\
   &=   \left\lceil \frac{h^{(r'')}}{\theta p_1(m_{r''})} \right\rceil \theta p_1(m_{r''}) m_{r''} \nonumber \\
& \leq \left(\frac{2^{r-r''} h}{\theta p_1(m_{r''})}  + 1 \right) \theta p_1(m_{r''}) m_{r''} \nonumber \\
& \leq \theta m_r h + \theta p_1(m_{r}) m_{r}
   \label{eq:theta2}
\end{align}
where for the first equality, we have used Eq.~\eqref{eq:h_r-prime}, for the first inequality, we have used $h^{(r'')} = 2^{r-r''} h$, and the second inequality, we have used $m_r = 2^{r-r''} m_{r''}$.  Finally, from Eq.~\eqref{eq:theta1} and Eq.~\eqref{eq:theta2}, the quantum circuit $\Psi_{r'', l}$ operates on fewer than $ \theta p_1(m_{r}) m_{r} + \theta' h m_{r}$ qubits, where $\theta' = \theta + \theta_1$.

\paragraph{Effective interface:}
Using the fact that the interface $\Xi^{[h^{(r'')}]}_{r'', r''-1}$ is a tensor product of quantum circuits applied across the $h^{(r'')}$ blocks  of $\cC_{r''}$, we can define an effective interface circuit for each block $i \in [h^{(r'')}]$. More precisely, the effective interface circuit corresponding to a block $i \in [h^{(r'')}]$ is the circuit in the tensor product that acts on the $i$th block. Although the effective interface circuit is different for each block, they have the same structure which corresponds to sandwiching the interface circuit $\Gamma_{r'', r''-1}$ by a error correction circuit $\Phi_{\mathrm{EC}^{r'',s'_1}}$ from the right and  the error correction circuit $\Phi_{\mathrm{EC}^{r'' - 1,s'_2}}$ from the left. Here $s'_1, s'_2$ can vary depending on the position of block $i \in [h^{(r'')}]$ but they are in $O(\poly(m_{r''}))$. We then obtain the effective interface for $\Xi^{[h^{(r'')}]}_{r, r'}$ by appropriately composing the effective interface circuits corresponding to  $ \displaystyle \Xi^{[h^{(r'')}]}_{r'', r''-1}$, $r < r'' \leq r$. The effective interface will be very helpful in error analysis of the interface $\Xi^{[h^{(r)}]}_{r, r'}$ in Section~\ref{sec:err-analysis} as it allows to analyze each of the blocks $i \in [h]$ separately. We explain this now in more detail.

\smallskip Consider the interface $ \displaystyle \Xi^{[h^{(r'')}]}_{r'',r''-1}$ from Eq.~\eqref{eq:h_r-prime} applied on $h^{(r'')}$ blocks of $\cC_{r''}$. By construction, $\displaystyle \Xi^{[h^{(r'')}]}_{r'',r''-1}$ is a composition of quantum circuits $\displaystyle \Psi^{[h^{(r'')}]}_{r'', l}$, $l = 1, 2, \dots, \lceil \frac{h^{(r'')}}{h_r} \rceil$ as given in Eq.~\eqref{eq:xi-r-prime2}. Moreover, the quantum circuit $\displaystyle \Psi^{[h^{(r'')}]}_{r'', l}$ is a tensor product across $h$ blocks, applying either $\Phi_{\mathrm{EC}^{r'', s'_1}}$ or $\Gamma_{r'', r''-1}$ or on any block of $\cC_{r''}$, and $\Phi_{\mathrm{EC}^{r'' - 1, s'_2}}$ on any block of $\cC_{r''-1}$, where $s'_1, s'_2$ may vary as a function of $i$, however, they are in $O(\poly(m_r))$ for any $i \in [h^{(r'')}]$. In other words, for any $i \in [h^{(r'')}]$, interface $\Gamma_{r'', r''-1}$ is applied on $i^{th}$ block of $\cC_{r''}$ during $\displaystyle \Psi^{[h^{(r'')}]}_{r'', l_i}$ for some $l_i  \leq \lceil \frac{h^{(r'')}}{h_r} \rceil$. For all $l < l_i$, $\Phi^i_{\mathrm{EC}^{r'', s'_1}}$ is applied on it and for all $l > l_i$, the error correction circuit $\Phi^{i, j}_{\mathrm{EC}^{r'' - 1, s'_2}}$ is applied on all the blocks $(i, j_1), j_1 \in \{0, 1\}$ of $\cC_{r''-1}$.

\smallskip Therefore, for any $i \in [h^{(r'')}]$, the following effective circuit acts on the $i^{th}$ block during $\Xi^{[h^{(r'')}]}_{r'',r''-1}$
\begin{equation} \label{eq:eff-circ-1}
\overline{\Gamma}_{r'', r'' - 1} := (\otimes_{j_1 \in \{0, 1\}} \Phi^{i, j_1}_{\mathrm{EC}^{r''-1,s'_2}}) \circ \Gamma^i_{r'', r''-1} \circ \Phi^{i}_{\mathrm{EC}^{r'',s'_1}},
\end{equation}
where $s'_1, s'_2 = O(\poly(m_{r''}))$. Note that the interface circuit $\Gamma^i_{r'', r''-1}$ itself applies a $s_1, s_2 = O(\poly(m_{r''}))$ error correction steps on the input and output, respectively (see also Fig.~\ref{fig:tele-circ}). Therefore, we  absorb $s'_1, s'_2$ error correction circuits in $\Gamma_{r'', r'' - 1}$, and simply replace the effective circuit $\overline{\Gamma}_{r'', r'' - 1}$ by $\Gamma_{r'', r'' - 1}$.

\smallskip By composing the circuits $\Gamma_{r'', r'' - 1} \equiv \overline{\Gamma}_{r'', r'' - 1}$, $r' < r'' \leq r$, we have that the following effective circuit is applied on a block $i \in [h^{(r)}]$ during the execution of $\Xi^{[h^{(r)}]}_{r, r'}$
\begin{multline} \label{eq:eff-circ}
\overline{\Xi}^{i}_{r, r'} := (\otimes_{j_1, \dots, j_{r-r'-1} \in \{0, 1\}} \: \Gamma^{i, j_1, \dots, j_{r-r'-1}}_{r' + 1, r'}) \circ \cdots \circ (\otimes_{j_1 \in \{0, 1\}} \: \Gamma^{i, j_1}_{r-1, r-2})  \circ \Gamma^i_{r, r-1}.
\end{multline}
Finally, we can write the interface circuit $\Xi^{[h^{(r)}]}_{r, r'}$ as a tensor product of the effective interface circuits
\begin{equation} \label{eq:eff-interface}
  \Xi^{[h^{(r)}]}_{r, r'} = \otimes_{i \in [h^{(r)}]} \: \overline{\Xi}^{i}_{r, r'}.
\end{equation}
\section{Error analysis of decoding interfaces} \label{sec:err-analysis}
In this section, we show that Lemma~\ref{lem:cons-over-interface} holds for the quantum circuit $\Xi^{[h]}_{r, r'}$ described in Section~\ref{sec:xi-con}. Recall that the number of qubits in $\Xi^{[h]}_{r, r'}$ has already shown to be upper bounded as required in Lemma~\ref{lem:cons-over-interface} in Section~\ref{sec:xi-con}. Therefore, we only need to show that Eq.~\eqref{eq:const-over-interface-1}-\eqref{eq:cons-err-rate} hold.

To do so, we provide an error analysis of the effective interface circuit $\overline{\Xi}^i_{r, r'}$ from Eq.~\eqref{eq:eff-circ} under circuit-level stochastic noise. Using our error analysis of $\overline{\Xi}^i_{r, r'}$ and tensor product structure of $\Xi^{[h]}_{r, r'}$ in terms of $\overline{\Xi}^i_{r, r'}$ given in Eq.~\eqref{eq:eff-interface}, it follows that Lemma~\ref{lem:cons-over-interface} holds for $\Xi^{[h]}_{r, r'}$ .

The main statement regarding the error analysis of $\overline{\Xi}^i_{r, r'}$ is given as Lemma~\ref{lem:err-eff-int} in Section \ref{subsec:error-analysis-effect-interf}, and its proof is done in Sections~\ref{sec:choice-ovr}, \ref{sec:block-err-path} and \ref{sec:bloc-path-tree}.

\subsection{Error analysis of the effective interface}
 \label{subsec:error-analysis-effect-interf}
We recall that $\overline{\Xi}^i_{r, r'}$ outputs $2^{r-r'}$ blocks of $\cC_{r'}$, denoted by pairs $(i, j), j \in [2^{r-r'}]$. For any $(i,j), j \in [2^{r-r'}]$, we will equivalently denote it by  $(i, j_1, j_2, \dots, j_{r-r'})$, where $j_1, j_2, \dots, j_{r-r'} \in \{0, 1\}$ as in Remark~\ref{rem:notation-cons-ovr}.

\smallskip In Lemma~\ref{lem:err-eff-int} we show that the noisy version of the effective interface $\overline{\Xi}^i_{r, r'}$ maps a code block of $\cC_r$ with a low weight error on top to $2^{r-r'}$ code blocks of $\cC_{r'}$. Some of the output blocks are erroneous (i.e. arbitrary error channels are applied on them); the remaining blocks are affected only by a low-weight error. Importantly, the probability that a subset $T \subseteq [2^{r-r'}]$ of blocks is erroneous decreases exponentially with its size $|T|$. Due to  Eq.~\eqref{eq:eff-interface}, Lemma~\ref{lem:err-eff-int} is a special case of Lemma~\ref{lem:cons-over-interface} for $h = 1$. For $h > 1$, Lemma~\ref{lem:cons-over-interface} can be obtained by iteratively applying Lemma~\ref{lem:err-eff-int} in an analogous fashion to Corollary~\ref{cor:succ-int-h}.
\begin{lemma}  \label{lem:err-eff-int}
There exists a constant $\mu > 0$ and a threshold value $\delta_{th} > 0$, such that the following holds:

Consider circuit-level noise with parameter $\delta < \delta_{th}$. Let $\cN: \bL((\bC^2)^{\otimes n_r} \otimes R) \to \bL((\bC^2)^{\otimes n_r} \otimes R)$ with $R$ being a reference system $R$, be a channel with weight $\mu n_r$ with respect to $n_r$ qubits in its input. Then, for any target error rate $\overline{\delta} > 0$, there exists a decreasing function\footnote{i.e., for larger target error $\overline{\delta}$, the level of protection $\ovr$ is smaller.} $\ovr(\overline{\delta}) > 0$, such for any $r > r' \geq \ovr$, the following holds:

\smallskip The noisy realization $\tilde{\cT}_{\overline{\Xi}^i_{r, r'}}$ of the effective interface $\overline{\Xi}^i_{r, r'}$ satisfies,
\begin{equation}  \label{eq:eff-circ-noise}
(\tilde{\cT}_{\overline{\Xi}^i_{r, r'}} \otimes \cI[R]) \circ \cN \circ \big( \cE^i_r \otimes \cI[R] \big) =  \sum_{F \subseteq \{(i, j) \mid j \in [2^{r-r'}] \}}  \sum_{\omega \in \Omega_F} \mathrm{Pr}(F, \omega) \: \cT_{F, \omega},
\end{equation}
where $\Omega_F$ is a finite set and 
\begin{equation}  \label{eq:can-F}
\cT_{F, w} = \left((\otimes_{j \in [2^{r-r'}] \setminus F} \cI^{i, j}_{n_{r'}}) \otimes \cZ_{F, \omega} \right) \circ  \cN_{F, \omega} \circ \Big((\otimes_{j \in [2^{r-r'}] \setminus F} \cE^{i, j}_{r'}) \otimes  \cE_{F, \omega} \otimes  \cI[R] \Big),
\end{equation}
with $\cE_{F, \omega}$ being a quantum channel from $m_{r'} |F|$ qubit input to a larger system, and $\cZ_{F, \omega}$ being an arbitrary channel from this larger system to $m_{r'} |F|$ qubits. $\cN_{F, \omega}$ is a local stochastic channel with weight $\mu n_{r'}$ with respect to each block of $n_r$ qubit corresponding to $(i, j) \not \in F$. 

\smallskip Finally, $\pr(F) = \sum_{\omega \in \Omega_F} \pr(F, \omega)$ is such that for any $T \subseteq  \{(i, j) \mid j \in [2^{r-r'}]\}$, we have
\begin{equation}  \label{eq:pr-T}
 \mathrm{Pr}(T \subseteq F) \leq (2\overline{\delta})^{|T|}.
\end{equation}
\end{lemma}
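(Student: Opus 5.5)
The effective interface $\overline{\Xi}^i_{r,r'}$ of Eq.~\eqref{eq:eff-circ} is a binary tree of depth $r-r'$. Its root is $\Gamma^i_{r,r-1}$, and each node at depth $y$ is a copy of $\Gamma_{r-y,r-y-1}$ acting on the block labelled $(i,j_1,\dots,j_y)$. The plan is to peel off the noisy interfaces one level at a time using Lemma~\ref{lem:succ-int}, in the same way the proof of Corollary~\ref{cor:succ-int-h} peels off blocks one by one.

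At the root, Lemma~\ref{lem:succ-int} gives two branches. With probability $1-\tau_r$ the output is two blocks of $\cC_{r-1}$ carrying weight $\mu n_{r-1}$ noise, so the hypothesis of the lemma is restored at level $r-1$. With probability $\tau_r$ the root fails; we then stop, keep the perfect encoder $\cE_r$ on the right, and absorb everything to its left into an arbitrary channel. We recurse on every surviving child, treating sibling blocks and the reference as part of $R$. After $r-r'$ levels we obtain a sum over \emph{failure patterns}: sets $S$ of tree nodes that fail, none of which is a descendant of another. A failed node at level $r''$ contributes a factor $\tau_{r''}$, contributes $\cE_{r''}$ to $\cE_{F,\omega}$, and kills its whole subtree of $2^{r''-r'}$ leaves. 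Each surviving node contributes $(1-\tau_{r''})$. We take $F$ to be the set of leaves lying below some failed node, and $\omega$ to record $S$; different $S$ can give the same $F$. Because every step preserves the weight bound, we also get $\cN_{F,\omega}$ of weight $\mu n_{r'}$ on the good blocks. This gives Eq.~\eqref{eq:eff-circ-noise}--\eqref{eq:can-F}, with the regrouping into $g_{r''}$ blocks matching the footnote of Lemma~\ref{lem:cons-over-interface}.

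The real work is the bound~\eqref{eq:pr-T}. Fix a set $T$ of leaves. The event $T\subseteq F$ requires that every $t\in T$ has a failed ancestor, so it is contained in the union, over all antichains $S$ that cover $T$, of the event that all nodes of $S$ fail. Failures at distinct nodes are independent conditioned on their ancestors surviving, so this gives
\[
\pr(T\subseteq F)\le \sum_{S \text{ covering } T}\ \prod_{v\in S}\tau_{\ell(v)},
\]
where we may restrict to minimal covers. A minimal cover can contain a node at level $r''$ only if that node covers at least one element of $T$, and such a node covers at most $2^{r''-r'}$ leaves. The bound $\tau_{r''}\le\lambda p_3(m_{r''-1})(c\delta)^{c'm_{r''-1}}$ decays doubly exponentially in $r''$, since $m_{r''}=2^{r''}$ up to constants. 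So each element of $T$ covered by a node at level $r''$ can be charged a share $\tau_{r''}^{1/2^{r''-r'}}$. This share is at most $((c\delta)^{c'm_{r''-1}/2^{r''-r'}}\mathrm{poly})$, i.e.\ roughly $(c\delta)^{\Omega(m_{r'})}$ times a slowly growing polynomial factor.

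To control the number of covers, we count the nodes that cover $t$ along its root-to-leaf path. We choose $\ovr$ large enough that $\sum_{y}\tau_{r'+y}^{2^{-y}}\le \overline{\delta}$ after absorbing the polynomial prefactors, which is possible because $m_{r'}$ grows with $r'$ and $\delta<\delta_{th}$. The sum then factorizes over $T$ and yields $\pr(T\subseteq F)\le(2\overline{\delta})^{|T|}$; the factor $2$ absorbs the overcounting that arises when one node covers several elements of $T$. One could instead organise the count by a direct induction on the tree: write $f_y(k)$ for the worst-case $\pr$ over sets of size $k$ in a subtree of height $y$, and prove $f_y(k)\le(2\overline{\delta})^k$ from $f_y(k)\le\tau+\sum_{k_1+k_2=k}f_{y-1}(k_1)f_{y-1}(k_2)$ with the cross terms handled carefully.

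The main obstacle is making this combinatorial step precise. We must check that the conditional independence of failures along the tree is legitimate. It is, because we expand the product sequentially and each $\tau$ bound holds uniformly over the state of the rest. We must also verify that the polynomial factors $p_3(m_{r''})$ and the $2^{r''-r'}$ fan-out are beaten by the doubly exponential decay of $\tau_{r''}$. That verification is where the choice of the decreasing function $\ovr(\overline{\delta})$ gets fixed, and where $\delta_{th}$ must satisfy $(c\delta)^{c'\cdot\mathrm{const}}<1$ with room to spare.
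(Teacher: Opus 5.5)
\textbf{There is a genuine gap: the step that sums the union bound to get Eq.~\eqref{eq:pr-T} does not give a bound uniform in $r$.}

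Your decomposition into failure patterns matches the paper's block-error-pattern construction and yields Eqs.~\eqref{eq:eff-circ-noise}--\eqref{eq:can-F}. These are antichains of failed nodes, each absorbing its subtree, with $\omega$ recording the antichain. The union bound $\pr(T\subseteq F)\le\sum_S\prod_{v\in S}\tau_{\ell(v)}$ over antichains covering $T$ is also sound.

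The problem is that your per-element share does not decay with the level. Since $m_{r'+y-1}=2^{y-1}m_{r'}$, the $2^{-y}$-th root exactly cancels the doubly exponential decay:
$\tau_{r'+y}^{2^{-y}}\le(\lambda p_3(m_{r'+y-1}))^{2^{-y}}(c\delta)^{c'm_{r'}/2}$.
This is of the same order for every $y$. In the normalization of Lemma~\ref{lem:choice-ovr} with $r'=\overline{r}$, every share is bounded only by $\overline{\delta}$.

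After factorizing over $T$, each element picks up $\sum_{y=1}^{r-r'}\tau_{r'+y}^{2^{-y}}$. The best bound you can certify for this sum is of order $(r-r')(c\delta)^{c'm_{r'}/2}$, which grows linearly in $r-r'$. So no $\overline{r}$ depending only on $\overline{\delta}$ makes it at most $\overline{\delta}$ for all $r>r'\ge\overline{r}$.

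Uniformity in $r$ is essential. In Theorem~\ref{thm:ft-cons-int-main}, $r'$ is a fixed constant (it sets $\kappa_2=1/m_{r'}$ and the size of $\Gamma_{r',1}$), while $r$ is arbitrary. The loss comes from letting each $t\in T$ choose any of its $r-r'$ ancestors independently at full price. An ancestor at depth $y$ above the leaves that covers only $k\ll 2^{y}$ elements of $T$ fails with probability $\overline{\delta}^{2^{y}}\ll\overline{\delta}^{k}$, so that price is far too high.

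\textbf{What is missing is to pay full price only at branching nodes.} In the paper's Lemma~\ref{lem:rec}, fix $v\in\overline{T}$. Its ancestors strictly below $v^{(-f(v))}$ cover no other element of $\overline{T}$, so their failures form a doubly exponentially decaying series bounded by $2\overline{\delta}^{2^{z-y}}$. The first ancestor that covers another element is merged into $\overline{T}_1$ without losing node weight. Every recursion step removes at least one element, so the combinatorial factor is $4^{|\overline{T}|}$, independent of $r-r'$.

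Your alternative induction over the tree can also be made to work, but not as written.
\begin{itemize}
\item For a fixed $T$ the split into the two subtrees is determined. The recursion is therefore $\pr(T\subseteq F)\le\tau_{r'+y}+\pr(T_L\subseteq F_L)\,\pr(T_R\subseteq F_R)$, with no sum over $k_1+k_2=k$. That sum costs a factor up to $k+1$ per level, which again compounds over $r-r'$ levels.
\item The additive term $\tau_{r'+y}\le\overline{\delta}^{2^y}$ must be absorbed as a multiplicative slack $A_y=A_{y-1}(1+\eta_y)$ with $\prod_y(1+\eta_y)$ bounded. For example, $\eta_y=2^{-2^y}$ works once $2\overline{\delta}\le A_{y-1}\le 1$, using that at most $2^y$ elements of $T$ lie below a level-$(r'+y)$ node.
\end{itemize}
That verification is exactly what ``handled carefully'' leaves out.
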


\subsection{Doubly-exponential decrease of error probability of the partial interface} \label{sec:choice-ovr} 
We consider the error probability $\tau_{r}$ corresponding to the partial interface $\Gamma_{r, r'}$ from Lemma~\ref{lem:succ-int}, with $r' = r-1$. For Lemma~\ref{lem:err-eff-int}, the choice of  $\ovr$ is made according to Lemma~\ref{lem:choice-ovr}. Note from Eq.~\eqref{eq:double-exp} that the error probability decreases doubly exponentially with respect to $r-\ovr$ for $r \geq \ovr$. This scaling of the error probability will be important for the proof  of  Lemma~\ref{lem:err-eff-int} given in Sections~\ref{sec:block-err-path} and~\ref{sec:bloc-path-tree}.
\begin{lemma} \label{lem:choice-ovr}
 Consider the threshold value $\delta_{th} > 0$  and the error probability $\tau_{r}$ from Lemma~\ref{lem:succ-int}. Then, for any $\delta < \delta_{th}$ and any $\overline{\delta} > 0$, there exists an $\ovr(\delta, \overline{\delta}) > 0$, which is an increasing function of $\delta$ and decreasing function of $\overline{\delta}$ such that the following holds for all $r > \ovr$
\begin{equation} \label{eq:double-exp}
   \tau_r \leq {\overline{\delta}}^{2^{r - \ovr}}.
\end{equation}
\end{lemma}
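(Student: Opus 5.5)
We start from the bound of Lemma~\ref{lem:succ-int} with $r' = r-1$, namely $\tau_r \le \lambda p_3(m_{r-1})(c\delta)^{c' m_{r-1}}$. Using Eq.~\eqref{eq:m-r-r-1}, $m_{r-1} = 2^{r-1-r_0} m_{r_0}$ for $r-1 \geq r_0$, so $m_{r-1}$ grows exponentially in $r$. The plan is to absorb the polynomial prefactor into the exponential. This turns $\tau_r$ into something of the form $(c_1\delta)^{c_2 m_{r-1}}$ for slightly worse constants, valid once $r$ exceeds some fixed level. We then rewrite this as a doubly exponential quantity in $r-\ovr$.

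\emph{Step 1: absorbing the prefactor.} Since $\delta < \delta_{th}$ and we may shrink $\delta_{th}$ so that $c\delta_{th} < 1$, write $(c\delta)^{c'm_{r-1}} = (c\delta)^{c'm_{r-1}/2}\,(c\delta)^{c'm_{r-1}/2}$. Because $\lambda p_3(m)(c\delta_{th})^{c'm/2}\to 0$ as $m\to\infty$, there is a level $r_1$, independent of $\delta$ for $\delta<\delta_{th}$, such that $\lambda p_3(m_{r-1})(c\delta)^{c'm_{r-1}/2} \le 1$ for all $r \ge r_1$. Hence
\begin{equation*}
\tau_r \le (c\delta)^{c' m_{r-1}/2} =: \epsilon^{m_{r-1}}, \qquad \epsilon := (c\delta)^{c'/2} < 1 .
\end{equation*}

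\emph{Step 2: doubling structure.} For $r > \ovr \ge \max(r_0, r_1)+1$ we have $m_{r-1} = 2^{r-\ovr} m_{\ovr-1}$. This gives
\begin{equation*}
\tau_r \le \big(\epsilon^{m_{\ovr-1}}\big)^{2^{r-\ovr}} .
\end{equation*}
It therefore suffices to choose $\ovr$ so that $\epsilon^{m_{\ovr-1}} \le \overline{\delta}$, i.e.\ $m_{\ovr-1} \ge \log(1/\overline{\delta})/\log(1/\epsilon)$. Since $m_r\to\infty$, we define $\ovr(\delta,\overline{\delta})$ as the smallest integer at least $\max(r_0,r_1)+1$ satisfying this inequality. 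The right-hand side increases with $\delta$, because $\epsilon$ grows toward $1$, and decreases with $\overline{\delta}$. So $\ovr$ is increasing in $\delta$ and decreasing in $\overline{\delta}$, as required. If $\overline{\delta}\ge 1$ the claim is trivial, since $\tau_r\le 1$.

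\emph{Main obstacle.} The only delicate point is Step 1. We must make sure the polynomial $p_3(m_{r-1})$ and the constant $\lambda$ are beaten uniformly in $\delta<\delta_{th}$, so that $r_1$ does not depend on $\delta$ and monotonicity is not spoiled. We handle this by bounding with $c\delta_{th}<1$ in the split factor, which gives a $\delta$-independent threshold. All remaining steps are direct substitutions.
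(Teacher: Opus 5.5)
Your proof is correct. It shares the paper's skeleton: both use $m_{r-1} = 2^{r-\ovr} m_{\ovr-1}$ to turn the single-exponential bound of Lemma~\ref{lem:succ-int} into the $2^{r-\ovr}$-th power of a quantity at level $\ovr$, and both pick $\ovr$ as the first level where that quantity is at most $\odelta$.

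The difference is how the polynomial prefactor is handled. The paper does not absorb it. It defines $\ovr$ as the smallest level $\geq 3$ with $\lambda_1 m_{\ovr-1}^{\lambda_2}(c\delta)^{c'm_{\ovr-1}} \le \odelta$. It then shows that the full bound at level $r$ is dominated by the $2^{r-\ovr}$-th power of this quantity, using $2^{k}m \le m^{2^{k}}$ for $m = m_{\ovr-1} \geq 2$ (which is why it takes $\ovr \geq 3$) and tacitly $\lambda_1 \geq 1$. You instead give up half of the exponent to absorb $\lambda p_3(m_{r-1})$ once and for all above a $\delta$-independent level $r_1$, leaving the clean form $\epsilon^{m_{r-1}}$.

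The paper's route avoids the auxiliary level $r_1$ and the factor-$2$ loss in the exponent, a difference that only affects constants. Your route has three advantages:
\begin{itemize}
\item the condition defining $\ovr$ becomes upward closed;
\item monotonicity of $\ovr$ in $\delta$ and $\odelta$ is visible by inspection;
\item it makes explicit the requirement $c\delta_{th} < 1$, which the paper needs implicitly for its $\ovr$ to exist at all.
\end{itemize}

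One small point: Lemma~\ref{lem:succ-int} with $r' = r-1$ requires $r' > 1$, so you should also ensure $\ovr \geq 2$. Taking $r_1 \geq 3$ suffices.
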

\begin{proof}
    \smallskip From Corollary~\ref{cor:succ-int-h}, the probability $\tau_r$ is upper bounded as follows
\begin{equation} \label{eq:tau-r}
    \tau_r \le \lambda p_3(m_{r-1}) (c \delta)^{c' m_{r-1}} .
\end{equation}
Thus, for some constants $\lambda' > 0$, we have 
\begin{equation}
   \tau_r \leq \lambda m^{\lambda'}_{r-1} (c \delta)^{c' m_{r-1}}
\end{equation}
For any given $\overline{\delta} >0$, we consider the smallest $\overline{r} \geq 3$, such that
\begin{equation} \label{eq:bound-ovr}
 \lambda_1 m^{\lambda_2}_{\overline{r}-1} (c\delta)^{c' m_{\overline{r}-1}} \leq \overline{\delta}.
\end{equation}
Note that, for any fixed $\delta$, the value of $\ovr(\delta,\overline{\delta})$ increases as $\overline{\delta}$ decreases.
Moreover, for any fixed $\overline{\delta}$, the value of $\ovr(\delta,\overline{\delta})$ decreases as $\delta$ decreases.

Consider $r \geq \ovr$. Then, we have for $r > \overline{r}$
\begin{align} 
   \tau_r & \leq   \lambda_1 m^{\lambda_2}_{r-1} (c\delta)^{m_{r-1}} \nonumber\\
   &= \lambda_1 (2^{r - \ovr} m_{\ovr - 1   })^{\lambda_2} (c \delta)^{c' 2^{r - \ovr} m_{\ovr - 1}}  \nonumber \\
   &  <  (\lambda_1  m_{\ovr - 1}^{\lambda_2} \: (c\delta)^{c' m_{\ovr - 1}})^{2^{r - \ovr}} 
\nonumber   \\ 
   &\leq {\overline{\delta}}^{2^{r - \ovr}}, \label{eq:bound-ovr-y}
\end{align}
where in the first equality, we have used $m_{r - 1} = 2^{r -\ovr} m_{\ovr - 1}$ and in the second inequality, we have used $2^{r - \ovr} m_{\ovr-1} < (m_{\ovr-1})^{2^{r - \ovr}}$ as $ m_{\ovr-1} \geq 2$ when $\ovr \geq 3$ (see the property of $\cC_r$ stated in Eq.~\eqref{eq:m-r-r-1}). The last inequality follows from Eq.~\eqref{eq:bound-ovr}. 
\end{proof}

\subsection{Block error pattern: error analysis of the effective interface} \label{sec:block-err-path}
In this section, we introduce a notion of block error pattern, as a tool in the error analysis of the effective circuit $\overline{\Xi}^i_{r, r'}$. This notion will play a crucial role in the proof of Lemma~\ref{lem:err-eff-int}.

\smallskip We have the following for the left hand side of Eq.~\eqref{eq:eff-circ-noise} using Eq.~\eqref{eq:eff-circ}
\begin{multline} \label{eq:eff-circ-noisy}
 (\tilde{\cT}_{\overline{\Xi}^i_{r, r'}} \otimes \cI[R]) \circ \cN \circ \big( \cE^i_r \otimes \cI[R] \big) \\ = \Big(  \big( (\otimes_{j_1, \dots, j_{r-r'-1} \in \{0, 1\}} \: \tilde{\cT}_{\Gamma^{ i, j_1, \dots, j_{r-r'-1}}_{r' + 1, r'}}) \circ \cdots \circ (\otimes_{ j_1 \in \{0, 1\}} \: \tilde{\cT}_{\Gamma^{i, j_1}_{r-1, r-2}})  \circ \tilde{\cT}_{\Gamma^i_{r, r-1}} \big) \otimes \cI[R] \Big) \circ \cN \circ \big( \cE_r \otimes \cI[R] \big),
\end{multline}
To evaluate the right hand side of Eq.~\eqref{eq:eff-circ-noisy}, we will apply Corollary~\ref{cor:succ-int-h} many times in sequence; thereby, decreasing the encoding level of the ideal encoder as well as removing the noisy interface one by one.

\smallskip From Lemma~\ref{lem:succ-int}, we have
\begin{multline} \label{eq:step-1}
\big(\tilde{\cT}_{\Gamma^i_{r, r-1}}  \otimes \cI[R] \big) \circ \cN \circ \big(\cE^{i}_r \otimes \cI[R] \big) \\ = (1 - \tau_r)  \cN^{(1)} \circ (\otimes_{j_1 \in \{0, 1\} }\cE^{i,j_1}_{r-1} \otimes \cI[R])  + \tau_r \: (\cZ^i_{r \to r-1} \otimes \cI[R]) \circ \cN \circ (\cE^i_r \otimes \cI[R]),  
\end{multline}
where $\cN^{(1)}$ is a channel with weight $\mu n_{r-1}$ with respect to each $n_{r-1}$ qubit system $(i, j_1), j_1\in \{0, 1\}$ and $\cZ^i_{r \to r-1}$ is an arbitrary channel.

\smallskip We note that in the second term on the right hand side of Eq.~\eqref{eq:step-1}, the logical information encoded in $\cC_r$ is corrupted by an arbitrary error channel $\cZ^i_{r \to r-1}$. In this event, we can not further apply Corollary~\ref{cor:succ-int-h} as the weight of the effective channel $(\cZ^i_{r \to r-1} \otimes \cI[R]) \circ \cN $ is not controlled, and therefore, absorb the remaining interfaces in $\cZ^i_{r \to r-1}$ leading to an overall error
\begin{equation}
    \cZ^i_{r \to r'} := \big( (\otimes_{j_1, \dots, j_{r-r'-1} \in \{0, 1\}} \: \tilde{\cT}_{\Gamma^{i, j_1, \dots, j_{r-r'-1}}_{r' + 1, r'}}) \circ \cdots \circ (\otimes_{j_1 \in \{0, 1\}} \: \tilde{\cT}_{\Gamma^{i, j_1}_{r-1, r-2}})  \circ \cZ^i_{r \to r - 1} \big).
\end{equation}
 On the first term on the right hand side of Eq.~\eqref{eq:step-1}, the logical information encoded in $\cC_r$ is successfully decoded into two blocks of $\cC_{r-1}$, where we may further apply Corollary~\ref{cor:succ-int-h} as follows:
\begin{multline}
    \big((\otimes_{j_1 \in \{0, 1\}} \: \tilde{\cT}_{\Gamma^{i, j_1}_{r-1, r-2}}) \otimes \cI[R] \big) \circ \cN^{(1)} \circ (\otimes_{j_1 \in \{0, 1\} }\cE^{i, j_1}_{r-1} \otimes \cI[R]) \\ = \sum_{F_1 \subseteq \{(i, j_1) \mid j_1 \in \{0, 1\} \}} (1 - \tau_{r-1})^{2 - |F_1|} \: \tau^{|F_1|}_{r-1} \: \left((\otimes_{\substack{(i, j_1) \not\in F_1, \\ j_2 \in \{0, 1\}} } \: \cI_{n_{r-2}}^{i, j_1, j_2}) \otimes ( \otimes_{(i, j_1) \in F_1} \cZ^{i, j_1}_{r-1 \to r- 2})\right) \\ \circ \cN_{F_1} \circ \left((\otimes_{\substack{(i, j_1) \not\in F_1, \\ j_2 \in \{0, 1\} }} \: \cE_{r-2}^{i, j_1, j_2}) \otimes(\otimes_{(i, j_1) \in F_1} \: \cE^{i, j_1}_{r-1}) \otimes \cI[R]\right),  
\end{multline}
where $\cN_{F_1}$ is a quantum channel with weight $\mu n_{r-2}$ with respect to each $n_{r-2}$ qubit system labelled by $(i, j_1) \not \in F_1$ and $\cZ^{i, j_1}_{r-1 \to r- 2}$ is an arbitrary channel applied on the output of $\cE^{i, j_1}_{r-1}$. The logical information corresponding to $\cE^{i, j_1}_{r-1}, (i, j_1) \in F_1$ is corrupted by an arbitrary error channel $\cZ^{i, j_1}_{r-1 \to r- 2}$, and on these blocks, we absorb the remaining interfaces leading to the following overall error
\begin{equation}
\cZ^{i, j_1}_{r-1 \to r'} := \left( (\otimes_{j_2, \dots, j_{r-r'-1} \in \{0, 1\}} \: \tilde{\cT}_{\Gamma^{i, j_1, \dots, j_{r-r'-1}}_{r' + 1, r'}}) \circ \cdots \circ (\otimes_{j_2 \in \{0, 1\}} \: \tilde{\cT}_{\Gamma^{i, j_1, j_2}_{r-2, r-3}})  \circ \cZ^{i, j_1}_{r-1 \to r - 2} \right).
\end{equation}
An element from $\{(i, j_1) \mid j_1 \in \{0, 1\}\}$ is independently selected to be in $F_1$ with probability $\tau_{r-1}$. If $(i, j_1) \not\in F_1$, the logical information is successfully decoded into two blocks of $\cC_{r-2}$ (up to a low-weight error), with the corresponding encoding denoted by $\cE^{i, j_1, j_2}_{r-2}, (i, j_1) \not\in F_1, j_2 \in \{0, 1\}$.

\smallskip We further apply Corollary~\ref{cor:succ-int-h} for $\tilde{\cT}^{i, j_1, j_2}_{\Gamma_{r-2, r-3}}$ on the blocks of $\cC_{r-2}$ corresponding to $(i, j_1) \not\in F_1, j_2 \in \{0, 1\}$. Continuing as above, for a subset $F_2 \subseteq \{ (i, j_1, j_2) \mid (i, j_1) \not\in F_1, j_2 \in \{0, 1\} \}$, the corresponding $\tilde{\cT}^{i, j_1, j_2}_{\Gamma_{r-2, r-3}}$ applies an arbitrary channel $\cZ^{j_1, j_2}_{r-2, r-3}$  on the output system corresponding to $\cE^{i, j_1, j_2}_{r-2}$.  An element from $\{ (i, j_1, j_2) \mid (i, j_1) \not\in F_1, j_2 \in \{0, 1\} \}$ is independently selected to be in $F_2$ with probability $\tau_{r-2}$. For the remaining $\tilde{\cT}^{i, j_1, j_2}_{\Gamma_{r-2, r-3}}, (i, j_1, j_2) \not\in  F_2$, the logical information corresponding to $\cE^{i, j_1, j_2}_{r-2}$  is successfully decoded into two blocks of $\cC_{r-3}$, denoted by $\cE^{i, j_1, j_2, j_3}_{r-2}, (i, j_1, j_2) \not\in F_2, j_3 \in \{0, 1\}$.

\smallskip Continuing in this fashion, we have random sets $(F_0, F_1, F_2, \dots, F_{r-r'-1})$, where $F_0 \subseteq \{ i\}$ and for $0 < y \leq r-r'-1$, $F_y \subseteq \{ (i, j_1, j_2, \dots, j_y) \mid j_1, j_2, \dots, j_y \in \{0, 1\}\}$, such that the noisy interface $\Gamma^{i, j_1, \dots, j_y}_{r-y, r-y-1}$ applies an arbitrary channel on the code block of $\cC_{r-y}$ denoted by $\Gamma^{i, j_1\dots j_y}_{r-y}$ for all $(i, j_1, j_2, \dots, j_y) \in F_y$. The sets $F_0, F_1, F_2, \dots, F_{r-r'-1}$ are not independent. In particular, $F_y$ for any $0 < y \leq r-r'-1$ depends on $F_0, \dots, F_{y-1}$ in the following way:

\smallskip If $(i, j_1, \dots, j_{y'})  \in F_{y'}, 0 \leq y' < y$, then the corresponding noisy $\Gamma^{i, j_1, \dots, j_{y'}}_{r-y', r-y'-1}$ applies an arbitrary channel $\cZ^{i, j_1, \dots, j_{y'}}_{r-y' \to r-y'-1}$ on the output system of $\cE^{i, j_1, \dots, j_y}_{r-y}$. In this event, one can not further apply Corollary~\ref{cor:succ-int-h} on the output of $\cZ^{i, j_1, \dots, j_{y'}}_{r-y' \to r-y'-1}$, which would have otherwise sequentially yielded blocks of $r-y'-1, \dots, r-y$ labelled by $(i, j_1, \dots, j_{y'}, j_{y'+1}), \break (i, j_1, \dots, j_{y'}, j_{y'+1}, j_{y'+2}), \dots, (i, j_1, \dots, j_{y'}, j_{y'+1}, \dots, j_y)$, where $j_{y'+1}, \dots, j_y \in \{0, 1\}$. Hence, $(i, j_1, \dots, j_{y'}, j_{y'+1}, \dots, j_y)$ is not anymore available for selection in $F_y$. 
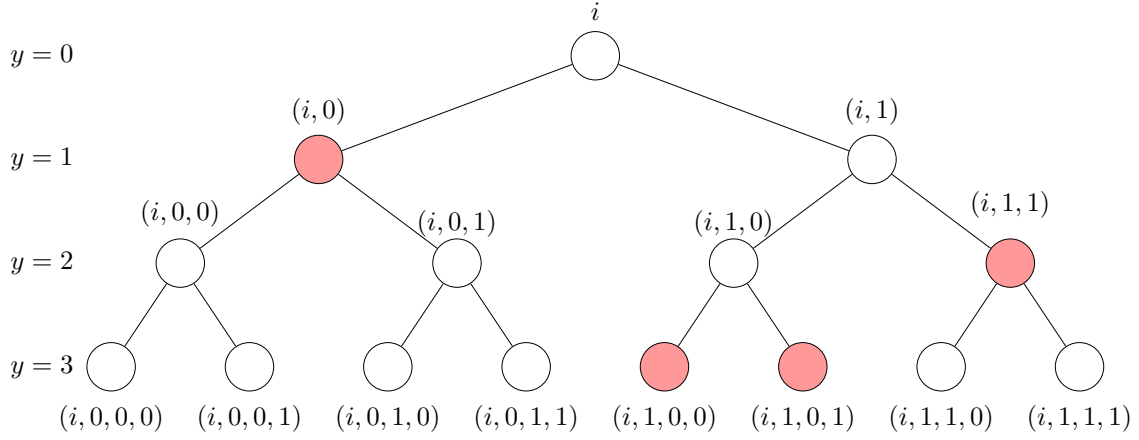
\begin{figure}[!t]
\centering
\resizebox{\linewidth}{!}{
\begin{tikzpicture}[
  level distance=1.5cm,
  level 1/.style={sibling distance=8cm},
  level 2/.style={sibling distance=4cm},
  level 3/.style={sibling distance=2cm},
  every node/.style={draw,circle,minimum size=7mm},
  edge from parent/.style={draw,-}]
  
\node[label={[label distance = -2pt] above:$i$}] {}
  child {node[fill=red!40, label= {[label distance = -8pt] above:{$(i, 0)$}}] {} 
    child {node[label= {[label distance = -12pt]above:{$(i, 0, 0)$}}]{ } 
      child {node[label= {[label distance = -16pt] below:{$(i, 0, 0, 0)$}}] { }} 
      child {node[label= {[label distance = -16pt] below:{$(i, 0, 0, 1)$}}] { }}}
    child {node[label= {[label distance = -16pt] above:{$(i, 0, 1)$}}] { } 
      child {node[label= {[label distance = -16pt] below:{$(i, 0, 1, 0)$}}]{ }}
      child {node[label= {[label distance = -16pt] below:{$(i, 0, 1, 1)$}}] { }}}}
  child {node[label= {[label distance = -8pt] above:{$(i, 1)$}}] { } 
    child {node[label= {[label distance = -16pt] above:{$(i, 1, 0)$}}] { } 
      child {node [label= {[label distance = -16pt] below:{$(i, 1, 0, 0)$}}][fill=red!40]{ }}
      child {node[label= {[label distance = -16pt] below:{$(i, 1, 0, 1)$}}] [fill=red!40]{ }}}
    child {node[fill=red!40, label= {[label distance = -8pt] above:{$(i, 1, 1)$}}] { } 
      child {node[label= {[label distance = -16pt] below:{$(i, 1, 1, 0)$}}] { }}
      child {node[label= {[label distance = -16pt] below:{$(i, 1, 1, 1)$}}] { }}}};

\node[draw= none] at (-8.0,0) {$y=0$};
\node[draw= none] at (-8.0,-1.5) {$y=1$};
\node[draw= none] at (-8.0,-3) {$y=2$};
\node[draw= none] at (-8.0,-4.5) {$y=3$};
\end{tikzpicture}
}
\caption{The figure illustrates an example of block error pattern for $r-r' -1 = 3$. The set of nodes at depth $0 \leq y \leq r-r'-1$ is given by $\{ (i, j_1, \dots, j_y) \mid j_1, \dots, j_y \in \{0, 1\} \}$. The red nodes denote the nodes where error maps are applied; therefore, the block error pattern is given by $(F_0, F_1, F_2)$, where $F_0 = \emptyset$, $F_1 = \{ (i, 0)\}$, $F_2 = {(i, 1, 1)}$, and $F_3 = \{{(i, 1, 0, 0), (i, 1, 0, 1)}\}$.}
\label{fig:block-error-path}
\end{figure}

\medskip Therefore,  the set $F_y$ is obtained by independently selecting elements from  $\{ (i, j_1, \dots, j_y) \mid (i, j_1, \dots, j_{y'}) \not\in F_{y'}, \forall y' < y\}$, with probability $\tau_{r-y}$. This leads to the definition of block error pattern in Def.~\ref{def:block-err-path} (see also Fig.~\ref{fig:block-error-path}). 
\begin{definition}[Block error pattern] \label{def:block-err-path}
Consider $(F_{0}, F_{1}, \dots, F_{r-r'- 1})$, where $F_0 \subseteq \{ i\}$ and for $0 < y \leq r-r'-1$, $F_{y} \subseteq \{ (i, j_1, j_2, \dots, j_{y}) \mid j_1, j_2, \dots, j_{y} \in \{0, 1\}  \}$. We say $(F_{0}, F_{1}, \dots, F_{r-r'-1})$ is a block error pattern if the following condition holds for any $y < r-r'-1$: 
\begin{multline} \label{eq:bl-err-path-cond}
  \text{If } (i, j_1, \dots, j_{y}) \in F_{y}, \text{ then }  (i, j_1, \dots, j_{y}, j_{y+1}, \dots, j_{y'})  \not\in F_{y'}, \: \forall y' \text{with } y < y' \leq r-r'-1 \\ \text{ and } \forall j_{y+1}, \dots, j_y \in \{0, 1\}.    \end{multline}
\end{definition}
\smallskip Define $\bF_{r-r'-1} : = \{ (F_0, F_1, \dots, F_r-r'-1) \text{ is a block error pattern}  \} $. Then, it follows that we can write $(\tilde{\cT}_{\overline{\Xi}^i_{r, r'}} \otimes \cI[R]) \circ \cW' \circ \big( \cE^i_r \otimes \cI[R] \big)$ as convex combination over block error patterns, which denote failure of interfaces, as follows, 
\begin{multline} \label{eq:noisy-interface} (\tilde{\cT}_{\overline{\Xi}^i_{r, r'}} \otimes \cI[R]) \circ \cN \circ \big( \cE^i_r \otimes \cI[R] \big)  \\ = \sum_{(F_{0} F_1, \dots, F_{r-r'-1}) \in \bF_{r-r'-1}} \pr(F_{0}, F_1, \dots, F_{r-r'-1})  \cT_{(F_{0}, F_1, \dots, F_{r-r'-1})}, 
\end{multline}
where $\pr(F_{0}, F_1, \dots, F_{r-r'-1})$ is the probability that the block error pattern $(F_{0}, \dots, F_{r-r'-1})$ happens (see Section~\ref{sec:prob-block-err} below). The channel $\cT_{(F_{0}, F_1, \dots, F_{r-r'-1})}$ is obtained by applying an arbitrary channel $\cZ^{(i, j_1, \dots, j_y)}_{r-y \to r-y-1}$ on the encoding blocks corresponding to $(i, j_1, \dots, j_{y}), \: 0 \leq y \leq r-r'-1$ as follows
\begin{align}
\cT_{(F_{0}, \dots, F_{r-r'-1})}
&=
\Bigg[
\Bigg(
\otimes_{\substack{
(i, j_1, \dots, j_{r-r'-1}) :\\
(i, j_1, \dots, j_y)\notin F_y,\ \forall\, 0 \le y \le r-r'-1,\\
j_{r-r'} \in \{0,1\}
}}
\cI^{i, j_1, \dots, j_{r-r'}}_{r'}
\Bigg)
\nonumber\\[-0.2em]
&\qquad \otimes
\Bigg(
\otimes_{y=0}^{r-r'-1}
\ \otimes_{(i, j_1, \dots, j_y)\in F_y}
\cZ^{(i, j_1, \dots, j_y)}_{r-y \to r-y-1}
\Bigg) \otimes \cI[R]
\Bigg] \notag
\\[0.8em]
&\qquad\circ\, \cN_{F_1, \dots, F_{r-r'-1}} \,\circ \Bigg[
\Bigg(
\otimes_{\substack{
(i, j_1, \dots, j_{r-r'-1}) :\\
(i, j_1, \dots, j_y)\notin F_y,\ \forall\, y \le r-r'-1,\\
j_{r-r'} \in \{0,1\}
}}
\cE^{i, j_1, \dots, j_{r-r'}}_{r'}
\Bigg)
\nonumber\\[-0.2em] 
&\qquad \qquad \otimes
\Bigg(
\otimes_{y=0}^{r-r'-1}
\ \otimes_{(i, j_1, \dots, j_y)\in F_y}
\cE^{(i, j_1, \dots, j_y)}_{r-y}
\Bigg) \otimes \cI[R]
\Bigg],\label{eq:can-block-path}
\end{align}
where $\cN_{F_0, F_1, \dots, F_{r-r'-1}}$ is a quantum channel with weight $\mu n_{r'}$ with respect to each block $(i, j_1, \dots, j_{r-r'})$ of $\cC_{r'}$, where $(i, j_1, \dots, j_{r-r'-1})$ is such that $(i, j_1, \dots, j_y) \not \in F_y$, for all $0 \leq y \leq r-r'-1$.

\paragraph{Probability associated with a block error pattern:} \label{sec:prob-block-err}
By the chain rule of probability, 
\begin{equation} \label{eq:pr-err-path}
   \pr(F_{0}, F_{1}, \dots, F_{r-r'-1})
   = \pr(F_{0}) \prod_{y = 1}^{r-r'-1} 
   \pr(F_{y} \mid F_{0}, F_{1}, \dots, F_{y-1}),
\end{equation}
where 
\begin{equation} \label{eq:pr-f-0}
   \pr(F_{0}) = 
   \begin{cases}
     1 - \tau_r, & \text{if } F_{0} = \varnothing, \\[6pt]
     \tau_r, & \text{if } F_{0} = \{i\}.
   \end{cases}
\end{equation}
For $y > 0$, conditioned on $(F_0, F_{1}, \dots, F_{y-1})$, the set $F_y$ is obtained by independently selecting elements from $\{ (i, j_1, \dots, j_y) \mid (i, j_1, \dots, j_{y'}) \not \in F_{y'}, \: \forall y' < y\}$, where the probability of an element being selected is given by $\tau_{r-y}$. Note that $|\{ (i, j_1, \dots, j_y) \mid (i, j_1, \dots, j_{y'}) \not \in F_{y'}, \: \forall y' < y\}| = 2^y - \sum_{y'=0}^{y-1} 2^{(y-y')} |F_{y'}|$. Therefore, we have
\begin{align} 
   \pr(F_{y} \mid F_{0}, F_{1}, \dots, F_{y-1})
   &= (1 - \tau_{r-y})^{2^{y} - \sum_{y'=0}^{y-1} 2^{(y-y')} |F_{y'}| - |F_y|}
     \: \tau_{r-y}^{|F_{y}|} \nonumber\\
   & = (1 - \tau_{r-y})^{2^{y} - \sum_{y'=0}^{y} 2^{(y-y')} |F_{y'}|}
     \: \tau_{r-y}^{|F_{y}|} \label{eq:pr-f-y}.   
\end{align}
Using the notion of block error pattern, we will now show that Eq.~\eqref{eq:eff-circ-noise}-\eqref{eq:can-F} hold. We first consider the following definitions of \emph{extension} and \emph{partition}.
\begin{definition}[Extension] \label{def:proj}
Consider $F_{y} \subseteq \{(i, j_1, \dots, j_y) \mid j_1, \dots, j_y \in \{0, 1\} \}$. For any $y' > y$, we define the extension set $F_{y \to y'}$ by extending all binary strings in $F_y$ to length $y'$ as follows 
\begin{equation} \label{eq:proj-def}
 F_{y \to y'} :=  \{(i, j_1, \dots,j_y, j_{y+1}, \dots, j_{y'}) \mid  (i, j_1, \dots,j_y) \in F_y \text{ and }  j_{y+1}, \dots, j_{y'} \in \{0, 1\}\}. 
\end{equation}  
\end{definition}
\begin{definition}[Partition] \label{def:partition}
Consider $r-r'-1 > 0$ and $\overline{F} \subseteq \{(i, j_1, \dots, j_{r-r'-1}) \mid  j_1, \dots, j_y \in \{0, 1\} \}$.  We say a block error pattern $(F_{0}, F_1, \dots, F_{r-r'-1})$ induces a partition of $\overline{F}$ and denote it by $(F_{0}, \dots, F_{r-r'-1}) \triangleright \overline{F}$ if
\begin{equation} \label{eq:part-cond}
 \overline{F} = \cup_{y = 0}^{r-r'-1} \: F_{y \to r-r'-1},
 \end{equation}
 where $F_{y \to r-r'-1}$ is according to Def.~\ref{def:proj}.
\end{definition}

\smallskip Let $F = \{(i, j_1, \dots, j_{r-r'}) \mid (i, j_1, \dots, j_{r-r'-1}) \in \overline{F},  j_{r-r'} \in \{0, 1\} \}$ be the extension of $\overline{F}$ from layer $r-r'-1$ to $r-r'$. Using Remark~\ref{rem:notation-cons-ovr}, we identify $j_1, \dots, j_{r-r'}$ with a $j \in [2^{r-r'}]$; therefore $F \subseteq \{(i, j) \mid i \in [h], j \in [2^{r-r'}] \}$. This set $F$ will play the role of the random set $F$ in Lemma~\ref{lem:err-eff-int} as we will see below.

\vspace*{2mm}
\begin{proof}[Proof of Eq.~\eqref{eq:eff-circ-noise}-\eqref{eq:can-F}:] 
Consider a subset $\overline{F} \subseteq \{(i, j_1, \dots, j_{r-r'-1}) \mid j_1, \dots, j_{r-r'-1} \in \{0, 1\}\}$ and let $(F_0, \dots, F_{r-r'-1})$ be a block error pattern such that $(F_0, \dots, F_{r-r'-1}) \triangleright \overline{F}$. Let $F$ be the extension of $\overline{F}$ from layer $r-r'-1$ to $r-r'$. Note that sets $F$ and $\overline{F}$ completely determine each other. We define the set $\Omega_F$ in Lemma~\ref{lem:err-eff-int} as
\begin{equation} \label{eq:omega-F}
  \Omega_F := \{ (F_0, \dots, F_{r-r'-1}) \in \bF_{r-r'-1} \mid (F_0, \dots, F_{r-r'-1}) \triangleright \overline{F}
 \}  
\end{equation}
We define the channel $\cT_{F, \omega}$ for any $\omega = (F_0, \dots, F_{r-r'-1}) \in \Omega_F$ as $\cT_{(F_{0}, \dots, F_{r-r'-1})}$ from Eq.~\eqref{eq:can-block-path}.

\smallskip Then, we can rewrite Eq.~\eqref{eq:noisy-interface} as
\begin{align}
(\tilde{\cT}_{\overline{\Xi}^i_{r, r'}} \otimes \cI[R]) \circ \cN \circ \big( \cE^i_r \otimes \cI[R] \big)  
 &= \sum_F \: \sum_{(F_{0}, \dots, F_{r-r'-1}) \triangleright \overline{F}} \pr(F_{0}, \dots, F_{r-r'-1})  \cT_{(F_{0}, \dots, F_{r-r'-1})} \nonumber \\
 & \qquad = \sum_F \: \sum_{\omega \in \Omega_{F, \omega}} \pr(F, \omega) \cT_{F, \omega},  \label{eq:can-T-F}
\end{align}
where $\pr(F, \omega) := \pr(F_{0}, \dots, F_{r-r'-1})$ for  $\omega = (F_0, \dots, F_{r-r'-1}) \in \Omega_F$ and $\cT_{F, \omega} :=\cT_{(F_{0}, \dots, F_{r-r'-1})} $.
From Eq.~\eqref{eq:can-block-path}, we have 
\begin{multline}  \label{eq:T-path}
  \cT_{F, \omega} =\cT_{(F_{0}, \dots, F_{r-r'-1})} = \left(\otimes_{(i, j_1, \dots, j_{r-r'}) \not \in F } \: \cI^{i, j_1, \dots, j_{r-r'}}_{r'}) \otimes \cZ_F \otimes \cI[R] \right) \\ \circ \cN_{F, \omega} \circ \left( (\otimes_{(i, j_1, \dots, j_{r-r'}) \not \in F} \: \cE^{i, j_1, \dots, j_z}_{r'}) \otimes \cE_{F} \otimes \cI[R] \right),
\end{multline}
where $\cE_F = \otimes_{y = 0}^{r-r'-1}(\otimes_{(i, j_1, \dots, j_y) \in F_{y}} \cE^{i, j_1, \dots, j_y}_{r-y})$ and $\cZ_F = \otimes_{y = 0}^{r-r'-1}(\otimes_{(i, j_1, \dots, j_y) \in F_{y}} \cZ^{i, j_1, \dots, j_y}_{r-y \to r-y-1})$. 

\medskip From Eq.~\eqref{eq:can-T-F} and Eq.~\eqref{eq:T-path},  and using the binary representation of $(i, j)$ as in Remark~\ref{rem:notation-cons-ovr}, we get Eq.~\eqref{eq:eff-circ-noise} and  Eq.~\eqref{eq:can-F}. 
\end{proof}
We note that 
\begin{equation} \label{eq:pr-F}
    \pr(F) = \sum_{\omega \in \Omega_F} \pr(F, \omega) = \sum_{(F_0, F_1, \dots, F_{r-r'-1}) \triangleright \overline{F}}  \pr(F_0, F_1, \dots, F_{r-r'-1}).
\end{equation}

\medskip To prove Lemma~\ref{lem:err-eff-int}, it remains to show that for any $T \subseteq  \{(i, j_1, \dots, j_{r-r'}) \mid j_1, \dots, j_{r-r'} \break \in \{0, 1\} \}$, the inclusion probability $\pr(T \subseteq F)$ satisfies Eq.~\eqref{eq:pr-T}. Note that since the random set $F$ is an extension of the $\overline{F}$, the inclusion probability $ \pr(T \subseteq F)$ is upper bounded by $\pr( \overline{T} \subseteq \overline{F})$ for a $\overline{T} \subseteq \{(i, j_1, \dots, j_{r-r'-1}) \mid j_1, \dots, j_{r-r'-1} \in \{0, 1\} \}$ such that its extension from layer $r-r'-1$ to $r-r'$ includes the set $T$. In the following, we will show that for any $\overline{T} \subseteq \{(i, j_1, \dots, j_{r-r'-1}) \mid j_1, \dots, j_{r-r'-1} \in \{0, 1\} \}$ 
\begin{equation} \label{eq:up-T-bar}
    \pr( \overline{T} \subseteq \overline{F}) \leq (2\overline{\delta})^{2|\overline{T}|}.
\end{equation}
Using Eq.~\eqref{eq:up-T-bar} and  $|\overline{T}| \geq \frac{|T|}{2}$, we get $\pr(T \subseteq F) \leq (2\overline{\delta})^{|T|}$; hence proving Eq.~\eqref{eq:pr-T}.

\subsection{Block error pattern on a binary tree} \label{sec:bloc-path-tree}
In this section, we simulate block error pattern $(F_0, \dots F_{r-r'-1})$ from Eq.~\eqref{eq:pr-err-path}-\eqref{eq:pr-f-y} and $\overline{F}$ from Eq.~\ref{eq:pr-F} on a binary tree. We will then use this correspondence to upper bound $\pr( \overline{T} \subseteq \overline{F})$ for any $\overline{T} \subseteq \{(i, j_1, \dots, j_{r-r'-1}) \mid j_1, \dots, j_{r-r'-1} \in \{0, 1\}\}$.

\paragraph{Mapping interfaces to a perfect binary tree:} 
We represent the interfaces $\Gamma^i_{r, r-1}, \break \Gamma^{i, j_1}_{r-1, r-2}, \dots, \Gamma^{i, j_1, \dots, j_{r-r'-1}}_{r'+1, r'}$ on a perfect binary tree\footnote{A perfect binary tree is a binary tree where each node has two children except the leaf nodes and all the leaf nodes are at the same depth.} of depth $r-r'-1$ using their indices $j_1, \dots, j_{r-r'-1} \in \{0, 1\}$ as follows (see Fig.~\ref{fig:block-error-path}):

\smallskip Let $i$ denote the root node, and let nodes at depth $0< y \leq r-r'-1$ be denoted by $(i, j_1, \dots, j_y)$, where $j_1, \dots, j_y \in \{0, 1\}$. The leaf nodes correspond to $(i, j_1, \dots, j_{r-r'-1}), \break j_1, \dots, j_{r-r'-1} \in \{0, 1\}$. For any $y < r-r'-1$, each node $(i, j_1, \dots, j_y)$ has two child nodes, namely left node $(i, j_1, \dots, j_y, 0)$ and right node $(i, j_1, \dots, j_y, 1)$, respectively.

\smallskip  For any node $v = (i, j_1, \dots, j_y), 0 \leq y \leq r-r'-1$ (note that for $y = 0$, $(i, j_1, \dots, j_y)$ corresponds to the root node $i$), the descendants of $v$ correspond to the perfect ``subtree" of depth $r-r'-1 - y$, with $(i, j_1, \dots, j_{y})$ being the root node of the subtree and nodes $(i, j_1, \dots, j_{y}, j_{y + 1}, \dots, j_{y'})$, where $j_{y + 1}, \dots, j_{y'} \in \{0, 1\}$ and $y < y' \leq r-r'-1$, being the nodes at depth $y' - y$ of the subtree. Moreover, its ancestors are given by the sequence $i \leftarrow (i, j_1) \leftarrow (i, j_1, j_2) \leftarrow \dots \leftarrow (i, j_1, j_2, \dots, j_{y-1}) \leftarrow (i, j_1, j_2, \dots, j_{y})$. 

\paragraph{Notation:} For any node $v = (i, j_1, \dots, j_y)$, we will use $\mathscr{C}(v)$ and $\tilde{\mathscr{C}}(v)$ to denote the set of its descendants and ancestors including the node $v$, that is,
\begin{align}
    \mathscr{C}(v) &:= \{ (i, j_1, \dots, j_y, j_{y+1}, \dots, j_{y'}) \mid j_{y+1}, \dots, j_{y'} \in \{0, 1\}, y \leq y' \leq r-r'-1  \}. \\
    \tilde{\mathscr{C}}(v) &:= \{ (i, j_1, \dots, j_y), (i, j_1, \dots, j_{y-1}), (i, j_1, \dots, j_{y-2}), \dots, i \}.
\end{align}
\paragraph{Failure events on the perfect binary tree:} We will now represent on the perfect binary tree, the block error pattern $(F_0, \dots, F_{r-r'-1})$ and the random set $F$, which is induced by the block error pattern. We recall that the probabilities associated with the random sets $(F_0, \dots, F_{r-r'-1})$ and $F$ are given in Eq.~\eqref{eq:pr-err-path}-\eqref{eq:pr-f-y} and Eq.~\eqref{eq:pr-F}, respectively.

\smallskip  Recall from Section~\ref{sec:block-err-path} that for the partial interface $\Gamma_{r-y, r-y-1}^{i, j_1, \dots, j_y}$ we have a notion of success and failure when applied on a valid input state, i.e., code state up to a low-weight channel. However, when $\Gamma_{r-y, r-y-1}^{i, j_1, \dots, j_y}$ fails on a valid input, an arbitrary error channel is applied on the corresponding block; hence, remaining interfaces $\Gamma_{r-y-1, r-y-2}^{i, j_1, \dots, j_y, j_{y + 1}}, \Gamma_{r-y-2, r-y-3}^{i, j_1, \dots, j_y, j_{y + 1}, j_{y + 2}},...$ do not receive a valid input for any $j_{y+1}, j_{y + 2}, \dots  \in \{0, 1\}$, and are simply absorbed in the error channel.

\smallskip We now associate a random variable $X_{v}$, taking values in $\{0, 1, 2\}$, with each node $v = (i, j_1, \dots, j_{y})$ in the perfect binary tree. 

\paragraph{Notation:} For any $0 \leq b \leq y$, we shall denote 
\begin{equation}
 v^{(-b)} := (i, j_1, \dots, j_{y-b}).
\end{equation}

\medskip\noindent Sequentially, starting from the root to the leaf nodes, $X_v$ is defined as follow,
\begin{enumerate}
\item[(i)]  \textbf{Root node:}  We start with a valid input on the root node. The root node takes only two values $0$ and $1$. If $\Gamma^i_{r, r-1}$ succeeds, $X_v = 0$. If it fails, $X_v = 1$.

\item[(ii)]  \textbf{Intermediate nodes:}
If the interface $\Gamma^{i, j_1, \dots, j_y}_{r-y, r-y-1}$ receives a valid input, i.e., $X_{v^{(-1)}} = 0$, with $v^{(-1)} = (i, j_1, \dots, j_{y-1})$ being the parent of $v$, the corresponding $X_{v}$ takes values in $\{0, 1\}$ and is independent of any $v' = (i, j'_1, \dots, j'_y)$ such that $v' \neq v$. In particular, if $\Gamma^{i, j_1, \dots, j_{y}}_{r-y, r-y-1}$ succeeds, we have $X_{v} = 0$, otherwise $X_{v} = 1$. We have $X_{v} = 2$ if the interface $\Gamma^{i, j_1, \dots, j_y}_{r-y, r-y-1}$ does not receive a valid input, i.e., $X_{v^{(-1)}} \in \{1, 2\}$.
\end{enumerate}

\smallskip We take a fixed constant $\overline{\delta} > 0$ and the corresponding $\overline{r}(\overline{\delta})$ as given in Lemma~\ref{lem:choice-ovr}. We compute the probability distribution corresponding to $X_{v}$ for all nodes $v$ in the perfect binary tree.

\smallskip In the remaining part of the paper, we will use the notation $z := r-r'$. When $v$ is the root node, we have 
\begin{align*}
\pr(X_i = 1) &= \tau_r \leq {\odelta}^{2^z}   \\
\pr(X_i = 0) &= 1 - \tau_r  \\
\pr(X_i = 2) &= 0,
\end{align*}
where the upper bound in the first line is due to Lemma~\ref{lem:choice-ovr}. When $v$ is an intermediate node, that is, $v = (i, j_1, \dots, j_y), y> 0$, the value of $X_v$ is determined by its parent $v^{(-1)}$ as follows
\begin{align}
  &\pr(X_{v} = 1  \mid X_{v^{(-1)}} = 0) = \tau_{r-y} \leq {\odelta}^{2^{z-y}} \label{eq:prob-one}\\
 &   \pr(X_{v} = 0   \mid X_{v^{(-1)}} = 0) = 1 - \tau_{r-y} \label{eq:prob-zero} 
    \\
 &   \pr(X_{v} = 2 \mid X_{v^{(-1)}} = 0) = 0 \label{eq:prob-two} \\
&    \pr(X_{v} = 2 \mid X_{v^{(-1)}} \in \{1, 2\}) = 1 \\
&    \pr(X_{v} \in \{0, 1\} \mid X_{v^{(-1)}} \in \{1, 2\}) = 0 \label{eq:prob-one-one}
\end{align}
We note the probability that $X_{v} = 1$ for any node $v$ can be upper bounded,
\begin{align}
    \pr(X_{v} = 1) & =   \pr(X_{v} = 1  \mid X_{v^{(-1)}} = 0) \pr(X_{v^{(-1)}} = 0) \nonumber \\ 
    & \qquad \qquad + \pr(X_{v} = 1  \mid X_{v^{(-1)}} = 1) \pr(X_{v^{(-1)}} = 1) \nonumber \\
    & \leq \pr(X_{v} = 1  \mid X_{v^{(-1)}} = 0) \leq {\odelta}^{2^{z-y}}, \label{eq:up-xvone}
\end{align}
where the second inequality uses Eq.~\eqref{eq:prob-one-one} and the last inequality uses Eq.~\eqref{eq:prob-one}.

\smallskip Let $F_0, F_1, \dots, F_{z-1}$ be random events such that for any $0 \leq y \leq z - 1$,
\begin{equation}
    F_y := \{ v = (i, j_1, \dots, j_y) \mid X_{v} = 1 \}.
\end{equation}
Note that if $(F_0, F_1, \dots, F_{z-1})$ is not a block error pattern, the probability that the events $F_0, F_1, \dots, F_{z-1}$ happen is given by $\pr(F_0, F_1, \dots, F_{z-1}) = 0$.  If $(F_0, F_1, \dots, F_{z-1})$ is a block error pattern, the corresponding probability $\pr(F_0, F_1, \dots, F_{z-1})$ is obtained to be the same as given in Eq.~\eqref{eq:pr-err-path}-\eqref{eq:pr-f-y}.

\smallskip Let $\overline{F}$ be the random set
\begin{equation} \label{eq:random-set-F}
    \overline{F} := \{v = ( i, j_1, \dots, j_{z-1}) \mid X_{v} \in \{1, 2\} \}
\end{equation}
Using the fact that $X_{v} \in \{1,2\}$ if and only if $X_{v'} = 1$ for some $v' \in \tilde{\mathscr{C}}(v)$, it follows that 
$F$ happens if and only if the events $(F_0, F_1, \dots, F_{z-1})$ happen such that  $F = \cup_{y=0}^{z-1} F_{y \to z-1}$, where $F_{y \to z-1}$ is the extension set of $F_y$ as given in Def.~\ref{def:proj}. Therefore, $(F_0, F_1, \dots, F_{z-1})$ must induce a partition of $F$, giving us,
\begin{equation}
    \pr(\overline{F}) = \sum_{(F_0, F_1, \dots, F_{z-1}) \triangleright F }\pr(F_0, \dots, F_{z-1}).
\end{equation}
\paragraph{Upper bound on the inclusion probability:}
For any set $\overline{T} \subseteq \{( i, j_1, \dots, j_{z-1}) \mid j_1, \dots, j_{z-1} \break \in \{0, 1\} \} $, we will now upper bound $\pr(\overline{T} \subseteq \overline{F})$.
From Eq.~\eqref{eq:random-set-F}, we have
\begin{equation} \label{eq:incl-prob-ub}
   \pr(\overline{T} \subseteq \overline{F}) = \pr(X_{v'} \in \{1, 2\}, \forall v' \in \overline{T}).
\end{equation}
We will show that as claimed in Eq.~\eqref{eq:up-T-bar}
\begin{equation} \label{eq:pr-ub-T-bar}
   \pr( X_{v'} \in \{1, 2\}, \forall v' \in \overline{T}) \leq (2\overline{\delta})^{2|\overline{T}|}.
\end{equation}
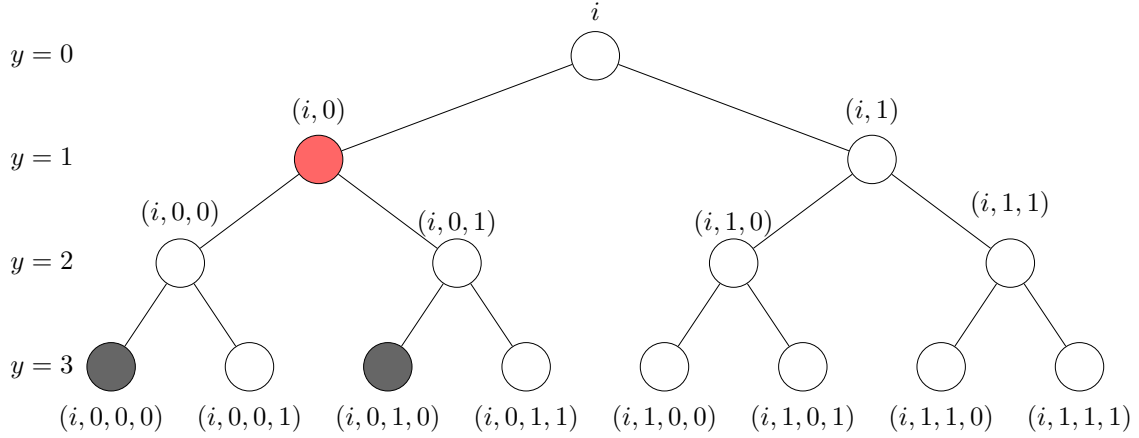
\begin{figure}[!t]
\centering
\resizebox{\linewidth}{!}{
\begin{tikzpicture}[
  level distance=1.5cm,
  level 1/.style={sibling distance=8cm},
  level 2/.style={sibling distance=4cm},
  level 3/.style={sibling distance=2cm},
  every node/.style={draw,circle,minimum size=7mm},
  edge from parent/.style={draw,-}]  
\node[label={[label distance = -2pt] above:$i$}] {}
  child {node[label= {[label distance = -8pt] above:{$(i, 0)$}}][fill=red!60]{} 
    child {node[label= {[label distance = -12pt]above:{$(i, 0, 0)$}}]{ } 
      child {node[label= {[label distance = -16pt] below:{$(i, 0, 0, 0)$}}][fill=black!60] { }} 
      child {node[label= {[label distance = -16pt] below:{$(i, 0, 0, 1)$}}] { }}}
    child {node[label= {[label distance = -16pt] above:{$(i, 0, 1)$}}] { } 
      child {node[label= {[label distance = -16pt] below:{$(i, 0, 1, 0)$}}] [fill=black!60]{ }}
      child {node[label= {[label distance = -16pt] below:{$(i, 0, 1, 1)$}}] { }}}}
  child {node[label= {[label distance = -8pt] above:{$(i, 1)$}}] { } 
    child {node[label= {[label distance = -16pt] above:{$(i, 1, 0)$}}] { } 
      child {node [label= {[label distance = -16pt] below:{$(i, 1, 0, 0)$}}]{ }}
      child {node[label= {[label distance = -16pt] below:{$(i, 1, 0, 1)$}}]{ }}}
    child {node[label= {[label distance = -8pt] above:{$(i, 1, 1)$}}] { } 
      child {node[label= {[label distance = -16pt] below:{$(i, 1, 1, 0)$}}] { }}
      child {node[label= {[label distance = -16pt] below:{$(i, 1, 1, 1)$}}] { }}}};
\node[draw= none] at (-8.0,0) {$y=0$};
\node[draw= none] at (-8.0,-1.5) {$y=1$};
\node[draw= none] at (-8.0,-3) {$y=2$};
\node[draw= none] at (-8.0,-4.5) {$y=3$};
\end{tikzpicture}
}
\caption{The figure depicts a perfect binary tree of depth $z-1 = r-r'-1 = 3$ and a set of leaf nodes $\overline{T} = \{v_0 = (i, 0, 0, 0), v_1 = (i, 0, 1, 0) \}$ are colored in black. For $v_0$, the corresponding sets $S_b(v_0)$ for $b= 0, 1, 2, 3$ from Eq.~\eqref{eq:s-b} are given by $S_0(v_0) = S_1(v_0) = \{v_1\}$ and $S_2 = S_3 = \emptyset$. It follows that $f(v_0)$ from Eq.~\eqref{eq:f-v-a} is equal to $2$. The node $v_0^{(-f(v_0))}$ corresponds to $(i, 0)$, which is colored in red. Note that events $X_{v_0} \in \{1, 2\}$ and $X_{v_1} \in \{1, 2\}$ are not independent since $X_{v_0^{(-f(v_0))}} \in \{1, 2\}$ implies that $X_{v_0} = X_{v_1} = 2$.}
\label{fig:block-error-path-1}
\end{figure}

We will provide the proof of Eq.~\eqref{eq:pr-ub-T-bar} in Lemma~\ref{lem:rec} and Lemma~\ref{lem:final-upper-bound}. First, for the purpose of illustration, we will consider the following example.
\begin{example}[Inclusion probability]
Consider the perfect binary tree of depth $z-1  = 3$ and the set $\overline{T} = \{v_0 = (i, 0, 0, 0), v_1 = (i, 0, 1, 0) \}$ as in Fig.~\ref{fig:block-error-path-1}. We have $X_{v_0}, X_{v_1} \in \{1, 2\}$ if $X_{v'} = 1$ for some $v' \in \tilde{\mathscr{C}}(v_0)$ and  $X_{v'} = 1$ for some $v' \in \tilde{\mathscr{C}}(v_1)$. Note that the node $(i, 0)$ and the root node $(i)$ are elements of both $\tilde{\mathscr{C}}(v_0)$ and $\tilde{\mathscr{C}}(v_1)$. Therefore, $X_{v_0}, X_{v_1} \in \{1, 2\}$ is true if and only if one of the following events happen;
\begin{enumerate}
    \item [(1)] $X_{v'} = 1$ for some $v' \in \{(i, 0), (i)\}$.

    \item [(2)] Given that the event $ E = \{X_{v'} = 0$ for all $v' \in \{(i, 0), (i)\} \}$ happens, we have $X_{v'} = 1$ for some $v' \in \{v_0 = (i, 0, 0, 0), v_0^{(-1)} = (i, 0, 0)\}$ and $X_{v'} = 1$ for some $v' \in \{v_1 = (i, 0, 1, 0), v_1^{(-1)} = (i, 0, 1)\}$. 
\end{enumerate}
Using a union bound, the probability associated with the (1)st event is upper bounded by (here $z = 4$)
\begin{align}
  p_1 &\leq   \pr(X_{v'} = 1, v' = (i, 0)) + \pr(X_{v'} = 1, v' =  (i))  \nonumber \\
& \leq \overline{\delta}^{2^{z-1}} + \overline{\delta}^{2^{z}} \leq 2 \overline{\delta}^{2^{z-1}} \leq 2 \delta^8.  
\end{align}
Given that the event $E$ happens,  the value of $X_{v}$ for any $v \in \{v_0, v_0^{(-1)}\}$ is independent of $X_{v'}$ for all $v' \in \{v_1 = (i, 0, 1, 0), v_1^{(-1)} = (i, 0, 1)\}$ each other. Therefore, the probability associated with (2)nd event is given by,
\begin{align}
    p_2 &\leq \pr(X_{v'} = 1 \text{ for some } v' \in \{v_0, v_0^{(-1)}\} \mid E) \pr(X_{v'} = 1 \text{ for some } v' \in \{v_1, v_1^{(-1)}\} \mid E) \nonumber \\
    &\leq (\odelta^{2^{z-3}} + \odelta^{2^{z-2}}) (\odelta^{2^{z-3}} + \odelta^{2^{z-2}}) \nonumber \\
    & \leq (2 \odelta^2)^2
\end{align}
Therefore, we have
\begin{equation}
    \pr(X_{v_0}, X_{v_1} \in \{1, 2\}) \leq (2 \odelta^2)^2 + 2 \odelta^8 \leq 6 \odelta^{4}.
\end{equation}
Hence, the bound in Eq.~\eqref{eq:pr-ub-T-bar} is satisfied.
\end{example}
To obtain the upper bound in Eq.~\eqref{eq:pr-ub-T-bar}, we will use a recursive procedure for which it will be convenient to consider a more general set $\overline{T}$, where 
$v := (i, j_1, \dots, j_y) \in \overline{T}$
is not necessarily a leaf node, $i.e.$, $0 \leq y \leq z-1$; however, the following conditions on set $\overline{T}$ are satisfied,
\begin{equation} \label{eq:cond-not-child}
 v_1 \not\in \mathscr{C}(v_2), \text{ for any two distinct nodes $v_1, v_2 \in \overline{T}$},
\end{equation}
or equivalently, 
\begin{equation} \label{eq:cond-not-child-1}
  v_1 \not \in \tilde{\mathscr{C}}(v_2), \text{ for any two distinct nodes $v_1, v_2 \in \overline{T}$}.
\end{equation}
In other words any two distinct nodes $v_1, v_2 \in \overline{T}$ are not related by ancestor/descendant relationship. Note that a set $\overline{T}$ containing only leaf nodes (in fact any set of nodes chosen from the same depth) satisfies Eq.~\eqref{eq:cond-not-child}.

\smallskip For any $v = (i, j_1, \dots, j_{y}) \in \overline{T}$, consider the sequence $\{S_b(v)\}_{b=0}^y$, where (see also Fig.~\ref{fig:block-error-path-1})
\begin{equation} \label{eq:s-b}
   S_b(v) = \{v' \in \overline{T} \mid v' \not \in \mathscr{C}(v^{(-b)})\},
\end{equation}
where recall that $v^{(-b)} = (i, j_1, \dots, j_{y-b})$. Note that $|S_b(v)| \leq |\overline{T}| - 1, \forall 0 \leq b \leq y$ as $v\notin S_b(v)$. Moreover, for $b < b'$, we have $|S_{b'}(v)| \leq |S_b(v)|$, with $|S_y(v)| = 0$ since $v^{(-b)}$ corresponds to the root node for $b = y$.
We define (see also Fig.~\ref{fig:block-error-path-1})
\begin{equation} \label{eq:f-v-a}
    f(v) := \min \{b > 0: |S_b(v)| < |\overline{T}| - 1\}.
\end{equation}
In other words, $f(v)$ corresponds to the first index $b > 1$, where the size of $S_b(v)$ strictly decreases compared to $S_{b-1}(v)$. 

\smallskip In Lemma~\ref{lem:rec}, we upper bound the probability that $X_{v'} \in \{1, 2\}$ for all $v' \in \overline{T}$ in terms of the sum of probabilities that $X_{v'} \in \{1, 2\}$ for all $v' \in T_0$ and for all $v' \in T_1$, for some sets $T_0, T_1$ with smaller size $|T_0|, |T_1| \leq |\overline{T}| - 1$. Moreover, the condition in Eq.~\eqref{eq:cond-not-child} is again satisfied for both sets $T_0$ and $T_1$. When $|T| = 1$, Lemma~\ref{lem:rec} gives a upper bound on the probability that $X_v \in \{1, 2\}$ for the element $v \in T$.

\smallskip Since the sets $T_0$ and $T_1$ satisfy condition in Eq.~\eqref{eq:cond-not-child}, this implies that we can apply Lemma~\ref{lem:rec} on them; hence, further reducing the size. As shown in Lemma~\ref{lem:final-upper-bound}, we can apply Lemma~\ref{lem:rec} $|T|$ times to obtain a upper bound on the probability that $X_{v'} \in \{1, 2\}$ for all $v' \in \overline{T}$ in terms of the parameter $\overline{\delta}$ and a quantity called node weight of the set $\overline{T}$ (see Def.~\ref{def:nweight} below). If $\overline{T}$ only contains leaf nodes, the node weight relates with the size of the set $\overline{T}$ and this upper bound is the same as the one in Eq.~\eqref{eq:pr-ub-T-bar}.
\begin{lemma} \label{lem:rec}
Consider an arbitrary set $\overline{T}\subseteq \{( i, j_1, \dots, j_y) \mid j_1, \dots, j_y \in \{0, 1\}, 0 \leq y \leq z-1\}$, whose elements satisfy Eq.~\eqref{eq:cond-not-child}. For any fixed $v = (i, j_1, \dots, j_{y}) \in \overline{T}$, consider the node $v^{(-f(v))} = (i, j_1, \dots, j_{y - f(v)}) \in \tilde{\mathscr{C}}(v)$, where $f(v)$ is according to  Eq.~\eqref{eq:f-v-a}. For $v$ and $v^{(-f(v))}$, we define sets $\ot_{0}$ and $\ot_{1}$ as follows\footnote{Note that $|\ot_0|, |\ot_1| \leq |\ot| - 1$.} 
\begin{align}
 \ot_{0} &:= \overline{T} \setminus \{ v \}. \label{eq:T-0}\\
 \ot_{1} &:= \{ v' \in \overline{T} \mid v' \not \in \mathscr{C}(v^{(-f(v))})\} \cup \{v^{(-f(v))}\}. \label{eq:T-1}
\end{align}
Let $S$ be another set of nodes on the perfect binary tree of depth $z-1$ such that $\overline{T} \cap S = \emptyset$ and define the event $E := \{v' \in S \mid X_{v'} = 0 \}$.
Then, we have 
\begin{align} 
    \pr(X_{v'} \in \{1, 2\}, \forall v' \in \overline{T} \mid E) &\leq 2 {\odelta}^{2^{z - y}}  \pr\big( X_{v'} \in \{1, 2\}, \forall v' \in \overline{T}_0 \mid E' \big) \nonumber\\ 
    & \qquad + \pr(X_{v'} \in \{1, 2\}, \forall v' \in \overline{T}_1 \mid E), \label{eq:rec-inc}
\end{align}
where $E' = \{ X_{v'} = 0, v' \in S' \}$, with $S' \supseteq S$ being a specific extension of $S$, defined using the ancestors of the fixed node $v$, such that $\overline{T}_0 \cap S' = \emptyset$\footnote{Note that we also have $\ot_1 \cap S = \emptyset$.}. Moreover, if $|\overline{T} | = 1$, we have the following upper bound,
\begin{equation} \label{eq:size-T-one}
    \pr(X_{v'} \in \{1, 2\}, \forall v' \in \overline{T} \mid E) \leq 2 {\odelta}^{2^{z - y}}.
\end{equation}
\end{lemma}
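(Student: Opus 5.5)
The plan is to condition on the behaviour of the ancestors of the fixed node $v$, and split the event $\{X_{v'} \in \{1,2\},\ \forall v' \in \overline{T}\}$ according to where along the ancestor path $\tilde{\mathscr{C}}(v)$ the first failure (a node with $X=1$) occurs. By construction, $X_v \in \{1,2\}$ holds iff $X_{u}=1$ for some $u \in \tilde{\mathscr{C}}(v)$. We split the ancestor path at $w := v^{(-f(v))}$ into a \emph{lower part} $L := \{v, v^{(-1)}, \dots, v^{(-(f(v)-1))}\}$ and an \emph{upper part} $\tilde{\mathscr{C}}(w)$. By the definition of $f(v)$, for $b < f(v)$ we have $|S_b(v)| = |\overline{T}|-1$. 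Hence no other element of $\overline{T}$ lies in $\mathscr{C}(v^{(-b)})$, so the subtrees rooted at nodes of $L$ contain no other element of $\overline{T}$. Moreover, the nodes of $L$ lie outside $S$ whenever they matter, and otherwise the conditioning only helps.

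First case: some $u \in \tilde{\mathscr{C}}(w)$ has $X_u = 1$. Then $X_w \in \{1,2\}$. Any $v' \in \overline{T}$ lying in $\mathscr{C}(w)$ then automatically has $X_{v'} = 2$. So this case is contained in the event $\{X_{v'} \in \{1,2\},\ \forall v' \in \overline{T}_1\}$, where $\overline{T}_1$ is as in Eq.~\eqref{eq:T-1}. We keep the conditioning on $E$ unchanged; this gives the second term of Eq.~\eqref{eq:rec-inc}. We must also check that $\overline{T}_1$ still satisfies Eq.~\eqref{eq:cond-not-child} and that $\overline{T}_1 \cap S = \emptyset$. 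Both follow because the elements of $\overline{T}$ outside $\mathscr{C}(w)$ are not descendants or ancestors of $w$; this is the point where $f(v)$ must be chosen minimal. One must also handle the fact that $w$ might lie in $S$; I would argue that in that case the probability of this branch is zero or the term is trivially dominated.

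Second case: $X_u = 0$ for all $u \in \tilde{\mathscr{C}}(w)$, and the failure occurs in $L$. We set $S' := S \cup \tilde{\mathscr{C}}(w)$ and $E' := \{X_{u}=0,\ u \in S'\}$. We then bound
$\pr(\cdot \mid E) \le \pr(\exists u \in L: X_u = 1 \mid E')\cdot \pr(X_{v'}\in\{1,2\},\ \forall v' \in \overline{T}_0 \mid E')$.
The key tool here is conditional independence. Given that $w$ and its ancestors are successful, the random variables in the subtree below the child of $w$ toward $v$ are independent of those in the rest of the tree. This follows from the sequential definition of $X$, in which children of a node with $X=0$ are independent. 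None of $\overline{T}_0$ lies in that subtree, by minimality of $f(v)$. For the first factor we use a union bound and Eq.~\eqref{eq:prob-one}: a node at depth $y-b$ fails with conditional probability at most $\overline{\delta}^{2^{z-y+b}}$. Summing over $b=0,\dots,f(v)-1$ gives at most $\sum_{b \ge 0}\overline{\delta}^{2^{z-y+b}} \le 2\overline{\delta}^{2^{z-y}}$ for $\overline{\delta}\le 1/2$. The case $|\overline{T}|=1$ in Eq.~\eqref{eq:size-T-one} follows directly from a union bound over the whole path $\tilde{\mathscr{C}}(v)$ with the same geometric sum.

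The main obstacle is making the conditional-independence step rigorous when conditioning on $E$. The set $S$ may contain nodes inside the subtree below $w$ or inside $L$, and conditioning on $X=0$ there could correlate the factors. I would handle this by noting that conditioning on successes only fixes values to $0$, which can be propagated upward: success at a node forces success of all its ancestors. This lets the event factor as a product over disjoint subtrees, with the root-to-$w$ path already fixed to $0$. Alternatively, I would restrict $S$ to be a union of ancestor paths, which is all the recursion in Lemma~\ref{lem:final-upper-bound} needs, and prove the lemma in that form.
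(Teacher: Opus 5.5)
Your proof is correct and is essentially the paper's argument. You use the same split of the ancestor path of $v$ at $w=v^{(-f(v))}$: failures at or above $w$ are absorbed into the $\overline{T}_1$ term, and failures strictly below $w$ give the factor $2\overline{\delta}^{2^{z-y}}$ times the $\overline{T}_0$ probability.

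The one difference is in the lower part. The paper handles the lower ancestors one at a time with conditioning $E^{(b)}$ and then takes the worst $b$; since the $\overline{T}_0$ event is increasing in failures, that worst case is $b=f(v)-1$, which gives exactly your $S'=S\cup\tilde{\mathscr{C}}(w)$. You instead condition once and factor by independence of disjoint subtrees. This is rigorous once $E$ is read as fixing the independent per-node failure indicators to $0$ on an ancestor-closed set. Your observation that $w\in S$ simply makes the $\overline{T}_1$ branch vanish covers a case the paper's footnote ($\overline{T}_1\cap S=\emptyset$) glosses over.
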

\begin{proof}
Using a union bound, we have the following for the fixed node $v = (i, j_1, \dots, j_{y})\in \overline{T}$
\begin{align}
    &\pr(X_{v'}  \in \{1, 2\},  \forall v' \in \overline{T} \mid E) \leq     \pr(X_v = 1, X_{v'} \in \{1, 2\}, \forall v' \in \ot_0 \mid E) \nonumber\\ 
    & \quad \quad \quad \quad \quad \quad \quad \quad  \quad \quad \quad \quad \quad + \pr(X_v = 2, X_{v'} \in \{1, 2\}, \forall v' \in \ot_0 \mid E) \label{eq:ub-step-1}
\end{align}
We upper bound the first term on the right hand side of the inequality in Eq.~\eqref{eq:ub-step-1} as
\begin{align}
&\pr(X_v = 1, X_{v'} \in \{1, 2\}, \forall v' \in \ot_0 \mid E) = \pr(X_v = 1) \pr(X_{v'} \in \{1, 2\}, \forall v' \in \ot_0 \mid E, X_v = 1)   \nonumber \\
& \leq {\odelta}^{2^{z-y}} \pr(X_{v'} \in \{1, 2\}, \forall v' \in \ot_0 \mid E, X_v = 1, X_{v'} = 0, \forall v' \in \tilde{\mathscr{C}}(v)\setminus  \{v\} ) \nonumber \\
 &= {\odelta}^{2^{z - y}} \pr(X_{v'} \in \{1, 2\}, \forall v' \in \ot_0 \mid E, X_{v'} = 0, \forall v' \in \tilde{\mathscr{C}}(v)\setminus \{v\} ) \nonumber \\
& = {\odelta}^{2^{z-y}} \pr(X_{v'} \in \{1, 2\}, \forall v' \in \ot_0 \mid E^{(0)}), \label{eq:ub-step21}
\end{align}
where the first equality uses the Bayes' theorem, the first inequality uses the upper bound on $\pr(X_v = 1)$ from~\eqref{eq:up-xvone} and the fact that $X_v = 1$ implies $X_{v'} = 0, \forall v' \in \tilde{\mathscr{C}}(v)\setminus \{v\}$. In the second equality, we have removed $X_v = 1$ from the conditioning variable (side information) using the fact that for any $v' \in \ot_0$, we have $v' \not\in \mathscr{C}(v)$ and $v \not \in  \mathscr{C}(v')$ (see Eq.~\ref{eq:cond-not-child}); therefore the value of $X_{v'}$ is not directly affected by the value of $X_v$. Finally in the last equality, we have
\begin{equation}
E^{(0)} := \{X_{v'} = 0, v' \in S \cup (\tilde{\mathscr{C}}(v) \setminus \{v\})  \}.  
\end{equation}
Recall that $S$ is the original set that is used to define $E$.

\smallskip We will now upper bound the second term on the right hand side of Eq.~\eqref{eq:ub-step-1}. We consider $f(v)$ corresponding to the fixed node $v \in \overline{T}$ defined in Eq.~\eqref{eq:f-v-a} and using it define the following set which corresponds to ancestors of $v$ between $v = (i, j_1, \dots, j_y)$ and $v^{(-f(v))} = (i, j_1, \dots, j_{y-f(v)})$ 
\begin{align}
\tilde{\mathscr{C}}_0(v) &:= \{(i, j_1, \dots, j_{y-f(v) + 1}), \dots, (i, j_1, \dots, j_{y-1})\} \subseteq \tilde{\mathscr{C}}(v) \setminus \{v\}
\end{align}
Note that $X_{v} = 2$ holds if and only if $X_{(v^{(-b)})} = 1$ for a $b$, $1 \leq b \leq y$.  Therefore, using a union bound, we have
\begin{align}
    \pr(X_v = 2, X_{v'} \in \{1, 2\}, \forall v' \in \ot_0 \mid E) &\leq \sum_{\overline{v} \in \tilde{\mathscr{C}}_0(v)}   \pr(X_{\overline{v}} = 1, X_{v'} \in \{1, 2\}, \forall v' \in \ot_0 \mid E)   \nonumber \\
    & \qquad + \pr(X_{v^{(-f(v))}} \in \{1, 2\}, X_{v'} \in \{1, 2\}, \forall v' \in \ot_0 \mid E)  \label{eq:ub-step22}
\end{align}
For any $v_b := (i, j_1, \dots, j_{y-b}) \in \tilde{\mathscr{C}}_0(v)$, where $1 \leq b \leq f(v)-1$, note that the condition in Eq.~\eqref{eq:cond-not-child} is satisfied for the set $\ot_0 \cup \{v_b\}$. 
Hence, we can upper bound similarly to Eq.~\eqref{eq:ub-step-1} for any $\overline{v}_b = (i, j_1, \dots, j_{y-b}) \in \mathscr{C}_0(v)$ (here $b \leq f(v)-1$)
\begin{equation} \label{eq:ub-step23}
    \pr(X_{\overline{v}} = 1, X_{v'} \in \{1, 2\}, \forall v' \in \ot_0 \mid E) \leq {\odelta}^{2^{z-y + b}} \pr(X_{v'} \in \{1, 2\}, \forall v' \in \ot_0 \mid E^{(b)}),
\end{equation}
where
\begin{equation}
E^{(b)} := \{X_{v'} = 0, v' \in S \cup (\tilde{\mathscr{C}}(v_b) \setminus \{v_b\}) \}.  
\end{equation}
We take the set $S'$ to be $S \cup \tilde{\mathscr{C}}(v_{b_0}) \setminus \{v_{b_0}\}$, for $b_0$, $0 \leq b_0 \leq f(v)-1$, such the probability $\pr(X_{v'} \in \{1, 2\}, \forall v' \in \ot_0 \mid E'^{(b)})$, with $0 \leq b \leq f(v)-1$, is largest for $b = b_0$ . Then, we have $E' = E^{(b)}$.

\smallskip For $v^{(-f(v))}$, we have that $\mathscr{C}(v^{(-f(v))}) \cap \ot_0 \neq \emptyset$; therefore, the bound in Eq.~\eqref{eq:ub-step23} does not necessarily hold. However, using the fact that $X_{\overline{v}} \in \{1, 2\}$ implies $X_{v'} = 2$ for all $v' \in \mathscr{C}(
\overline{v})\setminus \{\overline{v}\}$, we have 
\begin{align}
    &\pr(X_{v^{(-f(v))}} \in \{1, 2\}, X_{v'} \in \{1, 2\}, \forall v' \in \ot_0 \mid E) \nonumber \\
    & \qquad= \pr(X_{v^{(-f(v))}} \in \{1, 2\}, X_{v'} \in \{1, 2\}, \forall v' \in \ot_0 \setminus (\mathscr{C}(v^{(-f(v))}) \cap \ot_0) \mid E) \nonumber \\
    & \qquad = \pr(X_{v'} \in \{1, 2\}, \forall v' \in \ot_1) \mid E), \label{eq:ub-step24}
\end{align}
where for the last equality note that $\left(\ot_0 \setminus (\mathscr{C}(v^{(-f(v))}) \cap \ot_0)\right)  \cup \{v^{(-f(v))}\} = \ot_1$.

\smallskip Finally, from Eq.~\eqref{eq:ub-step-1}, Eq.~\eqref{eq:ub-step21}, Eq.~\eqref{eq:ub-step22}, Eq.~\eqref{eq:ub-step23}, and Eq.~\eqref{eq:ub-step24}, and using ${\odelta}^{2^{z-y}} + {\odelta}^{2^{z-y+ 1}}  + {\odelta}^{2^{z-y+2}} + \dots \leq 2 {\odelta}^{2^{z-y}}$ (for $\odelta \leq \frac{1}{2}$), we get Eq.~\eqref{eq:rec-inc}.

\smallskip For the last claim about $|\overline{T}| = 1$, note that for $v =  (i, j_1, \dots, j_y) \in \overline{T}$, we have $\tilde{\mathscr{C}}_0(v) = \tilde{\mathscr{C}}(v)$. Therefore,
\begin{align}
    \pr(X_v \in \{1, 2\} \mid E) &\leq \sum_{\overline{v} \in \tilde{\mathscr{C}}(v)}   \pr(X_{\overline{v}} = 1 \mid E) \nonumber \\
    & \leq {\odelta}^{2^{z-y}} + {\odelta}^{2^{z-y+ 1}}  + \dots +  {\odelta}^{2^{z}}  \nonumber \\
    &\leq 2 {\odelta}^{2^{z-y}}
\end{align}
\end{proof}
We will now apply Lemma~\ref{lem:rec} in order to obtain an upper bound on $\pr( X_{v'} \in \{1, 2\}, \forall v' \in T)$ for any $T$ whose elements satisfy Eq.~\eqref{eq:cond-not-child}. Before we do this, we introduce a quantity, namely the ``node weight", of the set $T$ as follows.
\begin{definition}[Node weight] \label{def:nweight}
  For any set $\overline{T}$ of nodes on the perfect binary tree of depth $z-1$, it's node weight is defined as the number of leaf nodes that have an ancestor in $\overline{T}$, that is,
  \begin{equation}
      W(\ot) = |\{ v := (i, j_1, \dots, j_{z-1})    \mid \tilde{\mathscr{C}}(v) \cap \ot \neq \emptyset \}|
  \end{equation} 
\end{definition}

\begin{remark} \label{rem:node-weight}
   Note that for $\ot$ with only one element $v$, the corresponding node weight $W(\ot)$, which we also denote as $W(v)$, is given by $2^{(z - 1- y(v))}$.  Consider a set $\ot$, satisfying Eq.~\eqref{eq:cond-not-child}. Note that for any two distinct $v_1, v_2 \in \ot$, we have $\mathscr{C}(v_1) \cap \mathscr{C}(v_2) = \emptyset$, since otherwise either $v_1 \in \tilde{\mathscr{C}}(v_2)$ or $v_2 \in \tilde{\mathscr{C}}(v_2)$. Therefore, any leaf node can have only one ancestor in $\ot$. This implies that
\begin{equation}
      W(\ot) = \sum_{v \in \ot} W(v) = \sum_{v \in \ot} 2^{(z - 1- y(v))}
\end{equation}
If $\ot$ is a set of leaf nodes, we simply have $W(\ot) = |\ot|$.
\end{remark}

\smallskip In Lemma~\ref{lem:final-upper-bound} below, we show using Lemma~\ref{lem:rec} that $\pr( X_{v} \in \{1, 2\}, \forall v \in \overline{T})$ decreases exponentially in terms of its node weight $W(\overline{T})$. When the set $\overline{T}$ corresponds to the set of leaf nodes, we have $W(\overline{T}) = |\overline{T}|$,  giving us the desired bound in Eq.~\eqref{eq:cons-err-rate}.
\begin{lemma} \label{lem:final-upper-bound}
Let $\overline{T}$ be a set of nodes on the perfect binary tree of depth $z-1$. Then, we have
\begin{equation} \label{eq:inc-prob-1}
   \pr( X_{v} \in \{1, 2\}, \forall v \in \overline{T}) \leq 4 (4^{|\overline{T}|} \:{\odelta}^{2W(\overline{T})} ),
\end{equation}
In particular, if $\overline{T}$ is a set of leaf nodes, we have 
\begin{equation} \label{eq:inc-prob-2}
   \pr( X_{v} \in \{1, 2\}, \forall v \in \overline{T}) \leq  (2\odelta)^{2|\overline{T}|}.
\end{equation}
\end{lemma}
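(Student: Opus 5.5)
We prove Eq.~\eqref{eq:inc-prob-1} by strong induction on $|\overline{T}|$, applying Lemma~\ref{lem:rec} with the conditioning event $E$ kept general. The inductive statement is: for every set $\overline{T}$ satisfying Eq.~\eqref{eq:cond-not-child}, and every set $S$ with $S\cap\overline{T}=\emptyset$ and $E=\{X_{v'}=0,\ v'\in S\}$,
\begin{equation*}
\pr(X_{v'}\in\{1,2\},\ \forall v'\in\overline{T}\mid E)\le a_{|\overline{T}|}\,\odelta^{2W(\overline{T})}.
\end{equation*}
Here $a_t$ is a sequence to be chosen with $a_t\le 4\cdot 4^{t}$. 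The unconditional claim is the case $S=\emptyset$. For a general $\overline{T}$ we may first discard descendants of other elements, since $X_{v}\in\{1,2\}$ forces all descendants of $v$ into $\{1,2\}$; the leaf-node case, which is what we ultimately need, already satisfies Eq.~\eqref{eq:cond-not-child}.

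\emph{Base case.} For $|\overline{T}|=1$, with $\overline{T}=\{v\}$ and $v$ at depth $y$, Eq.~\eqref{eq:size-T-one} gives the bound $2\odelta^{2^{z-y}}$. By Remark~\ref{rem:node-weight}, $W(v)=2^{z-1-y}$, so $2^{z-y}=2W(v)$ and we may take $a_1=2$.

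\emph{Inductive step.} Fix $v\in\overline{T}$ at depth $y$ and apply Eq.~\eqref{eq:rec-inc}. For the term involving $\overline{T}_0=\overline{T}\setminus\{v\}$, the induction hypothesis applies with the enlarged event $E'$, since $S'\cap\overline{T}_0=\emptyset$. Combined with the prefactor $2\odelta^{2^{z-y}}=2\odelta^{2W(v)}$, this term is at most $2a_{t-1}\odelta^{2W(\overline{T})}$, because $W(\overline{T})=W(\overline{T}_0)+W(v)$ by Remark~\ref{rem:node-weight}.

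For the term involving $\overline{T}_1$, we first check that $\overline{T}_1$ satisfies Eq.~\eqref{eq:cond-not-child}. This holds because we removed exactly the elements lying in $\mathscr{C}(v^{(-f(v))})$ and added their common ancestor $v^{(-f(v))}$. Moreover, $\overline{T}_1\cap S=\emptyset$, which is the footnote in Lemma~\ref{lem:rec}; the justification is that $S$ consists of nodes conditioned to be $0$, and a node conditioned to be $0$ can only be an ancestor of elements we require to be in $\{1,2\}$ if the probability is already zero, so such configurations may be dropped. The key point for the weight is that $W(\overline{T}_1)\ge W(\overline{T})$: the node $v^{(-f(v))}$ has as leaf-descendants every leaf below the removed elements. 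By the minimality in Eq.~\eqref{eq:f-v-a}, at least one element other than $v$ is removed, so $|\overline{T}_1|\le t-1$. Since $\odelta<1$, the induction hypothesis gives at most $a_{t-1}\odelta^{2W(\overline{T})}$ for this term.

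Putting the two terms together gives the recursion $a_t\le 3a_{t-1}$, so $a_t\le 2\cdot 3^{t-1}\le 4\cdot4^{t}$, which proves Eq.~\eqref{eq:inc-prob-1}. The bound is crude, but it is all that is required.

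\emph{Leaf case.} For leaves, $W(\overline{T})=|\overline{T}|$ and the bound is $2\cdot 3^{t-1}\odelta^{2t}$. We need this to be at most $(2\odelta)^{2t}=4^{t}\odelta^{2t}$, which holds since $2\cdot3^{t-1}\le 4^{t}$. This gives Eq.~\eqref{eq:inc-prob-2} directly from the sharper constant, avoiding the extra factor $4$.

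\emph{Main obstacle.} The delicate step is the $\overline{T}_1$ branch. There we must verify that replacing a cluster of nodes by their ancestor never decreases the node weight, and that the recursion genuinely shrinks $|\overline{T}|$. Both rely on the choice of $f(v)$: it is the first level at which the subtree above $v$ captures another element of $\overline{T}$.
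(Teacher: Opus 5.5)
Your proof is correct and is essentially the paper's argument: both repeatedly apply Lemma~\ref{lem:rec} and rely on $W(\overline{T}_0)+W(v)=W(\overline{T})$ and $W(\overline{T}_1)\ge W(\overline{T})$. The difference is that you package the recursion as a strong induction over an arbitrary conditioning event, whereas the paper unrolls it over split strings $(u_1,\dots,u_s)$; your version gives the cleaner constant $2\cdot 3^{|\overline{T}|-1}$, which is consistent with the $6\odelta^4$ in the paper's example. One fix is needed in the $\overline{T}_1$ branch: $\overline{T}_1\cap S=\emptyset$ can fail, because $S$ may contain ancestors of $v$ and hence $v^{(-f(v))}$, and your stated reason is inaccurate (an ancestor conditioned to be $0$ does not force its descendants out of $\{1,2\}$); however, in that case the $\overline{T}_1$ term is simply zero, since it requires $X_{v^{(-f(v))}}\in\{1,2\}$ while conditioning on $X_{v^{(-f(v))}}=0$, so your recursion $a_t\le 3a_{t-1}$ still holds.
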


\begin{proof}
For the sake of simplicity, we will use the notation $\delta' := \odelta^2$.

\smallskip If $|\overline{T}| = 1$, we have from Lemma~\ref{lem:rec} that $\pr(X_v \in \{1, 2\}, v \in \overline{T} \mid E ) \leq 2\odelta^{2^{z - y(v)}} \leq 2 {\delta'}^{z-1-y(v)} = 2{\delta'}^{W(\overline{T})}$. Therefore, Eq.~\eqref{eq:inc-prob-1} is satisfied.

\smallskip For $|\overline{T}| > 1$, we will use Lemma~\ref{lem:rec}. Let $S$ be a set such that $S \cap T = \emptyset$ and use it to define an event $E := \{v' \in S \mid X_{v'} = 0 \}$. We may start with $S = \emptyset$. We define $\zeta(\overline{T}, E) := \pr(X_{v'} \in \{1, 2\}, \forall v' \in \overline{T} \mid E)$ and by choosing a fixed $v \in \overline{T}$ and applying Lemma~\ref{lem:rec}, we have
\begin{equation} \label{eq:zeta-rec}
    \zeta(\overline{T}, E) \leq 2 {\delta'}^{W(v)}    \zeta(\ot_0, E') +  \zeta(\ot_1, E),
\end{equation}
where $\delta' = {\odelta}^2$, $W(v)$ is the node weight of $v$. The sets $\ot_0$ and $\ot_1$ are given by (see Eq.~\eqref{eq:T-0} and Eq.~\eqref{eq:T-1}), respectively. 
\begin{align}
 \ot_{0} & = \overline{T} \setminus \{ v \}.\\
 \ot_{1} & = \{ v' \in \overline{T} \mid v' \not \in \mathscr{C}(v^{(-f(v))})\} \cup \{v^{(-f(v))}\}, 
\end{align}
where $f(v)$ is according to Eq.~\eqref{eq:f-v-a}. $E' = \{ X_{v'} = 0, v' \in S' \}$, with $S' \supseteq S$ being a set of nodes such that $\overline{T}_0 \cap S' = \emptyset$.

\smallskip The node $v$ is not included in the set $\ot_0$; however, it has been compensated by the corresponding multiplicative factor $2{\delta'}^{W(v)}$ in the first term on the right hand side of Eq.~\eqref{eq:zeta-rec}. Note also that 
\begin{equation} \label{eq:node-w-split}
  W(\ot_0) = W(\ot)  - W(v)  
\end{equation}

\smallskip Although the size of the set $\ot_1$ is smaller than $\ot$, the set $\ot_1$ effectively contains all the nodes in $\ot$: namely, if $v' \in \ot$ but $v' \not\in \ot_1$, $v'$ must be a descendant of $v^{(-f(v))}$, which is contained in $\ot_1$. Therefore, the node $v'$ has been absorbed in $v^{(-f(v))}$ rather than being completely removed. Moreover, it can be seen as follows that the node weight of the set $\ot_1$ is at least the node weight of $\ot$:

\smallskip Using the fact that both sets $\ot$ and $\ot_1$ satisfy Eq.~\eqref{eq:cond-not-child}, and by Remark~\ref{rem:node-weight}, we have
\begin{align}
    W(\ot_1) &=  2^{z - 1- y(v^{(-f(v))})}   + \sum_{v' \in \overline{T}, v' \not\in \mathscr{C}(v^{(-f(v))})} 2^{(z -1 - y(v'))} \nonumber\\
    &=  2^{z - 1 - y(v^{(-f(v))})} 
     + \sum_{v' \in \overline{T}} 2^{(z - 1 - y(v'))} -   \sum_{v' \in \ot \cap \mathscr{C}(v^{(-f(v))})} 2^{( z - 1 - y(v'))} \nonumber\\
     & \geq  W(\overline{T}), \label{eq:node-w-merge},
\end{align}
where the last inequality uses that $2^{z - 1 - y(v^{(-f(v))})} \geq  \sum_{v' \in \ot \cap \mathscr{C}(v^{(-f(v))})} 2^{( z - 1 - y(v'))}$ using the fact that $\overline{T}$ satisfies Eq.~\eqref{eq:cond-not-child}.

\smallskip We may again apply Lemma~\ref{lem:rec} to upper bound $\zeta(\ot_{u_1}, E' \emph{ or } E)$, where $u_1 \in \{0, 1\}$, by choosing some $v \in \ot_{u_1}$.  

\smallskip If $|\ot_{u_1}| = 1$, it does not get further split it into smaller sets, but rather $\ot_{u_1}$ remains unchanged until the last step, where we can upper bound it using  Lemma~\ref{lem:rec}.

\smallskip If $|\ot_{u_1}| > 1$, we apply Lemma~\ref{lem:rec} in order to split it into smaller sets $\ot_{u_1 0}$ and $\ot_{u_1 1}$. Again if the set $|\ot_{u_1, u_2}| = 1$, where $u_1 \in \{0, 1\}$, it does not get further split, else we apply Lemma~\ref{lem:rec} splitting it into smaller sets $\ot_{u_1, u_2, 0}$ and $\ot_{u_1, u_2, 1}$.

\smallskip For $t \leq |\overline{T}|-1$, let $B_s \subseteq \{0, 1\}^s, s < t$ be a subset of binary strings of length $s$ such that for any $(u_{1}, u_{2}, \dots, u_{s}) \in B_s$, the corresponding set $\ot_{u_{1}, u_{2}, \dots, u_{s}}$ has size one; therefore, $\ot_{u_{1}, u_{2}, \dots, u_{s}}$ will not be further split in the next $(s+1)$th iteration. Let $B'_s \subseteq \{0, 1\}^s$ be the subset such that for any $(u_{1}, u_{2}, \dots, u_{s}) \in B'_s$, we have $|\ot_{u_{1}, u_{2}, \dots, u_{s}}| > 1$.

\smallskip Let $a \equiv a(u_{ 1}, u_{ 2}, \dots, u_{ s})$ denote the number of zeros $(u_{ 1}, u_{ 2}, \dots, u_{ s})$. Then, we have the following upper bound after $t$ iterations of  Eq.~\eqref{eq:zeta-rec},
\begin{align}
    \zeta(\overline{T}, E) &\leq \sum_{s \leq t}  \: \sum_{(u_{ 1}, u_{ 2}, \dots, u_{ s}) \in B_s} 2^{a(u_{ 1}, u_{ 2}, \dots, u_{ s})} {\delta'}^{\sum_{s' = 0}^{s-1} (1 \oplus u_{ s'+1} ) W(v_{s'})} \zeta(\ot_{u_{ 1}, u_{ 2}, \dots, u_{ s}}, E^{\prime\prime \stackrel{\: a \times }{\cdots}\prime}) \nonumber \\
    & + \sum_{(u_{1}, u_{2}, \dots, u_{t}) \in B'_t} 2^{a(u_{1}, u_{2}, \dots, u_{t})} {\delta'}^{\sum_{s' = 1}^{t-1} (1 \oplus u_{ s'+1} ) W(v_{s'})} \zeta(\ot_{u_{ 1}, u_{ 2}, \dots, u_{ t}}, E^{\prime\prime \stackrel{\: a \times }{\cdots}\prime}), \label{eq:zeta-rec-1}
\end{align}
where $v_{s'}$ corresponds to the chosen node $v_{s'} \in \ot_{u_{ 1}, u_{ 2}, \dots, u_{s'}}$ with respect to which Lemma~\ref{lem:rec} is applied. Note that in the exponent of $\delta'$, the node weights of $v_{s'}$ are counted only when the value of $u_{ s'+1}$ is zero. Here, $E^{\prime\prime \stackrel{\: a \times }{\cdots}\prime} = \{ X_{v'} = 0 \mid v' \in S^{\prime\prime \stackrel{\: a \times }{\cdots}\prime} \}$ for some set $S^{\prime\prime \stackrel{\: a \times }{\cdots}\prime}$ such that $\ot_{u_{ 1}, u_{ 2}, \dots, u_{ s-1}} \cap S^{\prime\prime \stackrel{\: a \times }{\cdots}\prime} = \emptyset$.  We note that if $t = |T| - 1$, the set $B'_t$ is empty; hence the second term in Eq.~\eqref{eq:zeta-rec-1} will vanish.

\smallskip For any $s > 1$, consider any sequence $(u_{ 1}, u_{ 2}, \dots, u_{ s-1}) \in B'_{s-1}$; therefore, we have $|\ot_{u_{ 1}, u_{ 2}, \dots, u_{ s-1}}| > 1$.  Consider the corresponding node $v_{s-1} \in \ot_{u_{ 1}, u_{ 2}, \dots, u_{ s-1}}$ with respect to which Lemma~\ref{lem:rec} is applied. For $u_{s} = 0$, using Eq.~\ref{eq:node-w-split}, we have
\begin{equation} \label{eq:split-rec}
  W(\ot_{u_{ 1}, u_{ 2}, \dots, u_{s}}) + W(v_{s'}) = W(\ot_{u_{1}, u_{2}, \dots, u_{s-1}})  
\end{equation}
For $u_{s} = 1$, using Eq.~\eqref{eq:node-w-merge}, we have
\begin{equation} \label{eq:merge-rec}
  W(\ot_{u_{1}, u_{2}, \dots, u_{s}})  \geq W(\ot_{u_{1}, u_{2}, \dots, u_{s-1}})  
\end{equation}
Therefore, applying Eq.~\eqref{eq:node-w-split} and Eq.~\eqref{eq:node-w-merge} iteratively, we have
\begin{equation} \label{eq:martingale}
   W(\ot_{u_{1}, u_{2}, \dots, u_{s}})  +  \sum_{s' = 0}^{s-1} (1 \oplus u_{s'+1} ) W(v_{s'}) \geq  W(\overline{T}), \: \forall (u_{1}, u_{2}, \dots, u_{s}) \in B_s \cup  B'_s
\end{equation}
%
After $t = |\overline{T}|-1$, all the sets $\ot_{u_{1,1}}, \ot_{u_{2,1}, u_{2,2}}, \dots, \ot_{u_{1}, u_{2}, \dots, u_{t}}$, where $(u_{1}, u_{2}, \dots, u_{s}) \in B_s, s \leq t$ have size one, i.e. $B'_t = \emptyset$. In the last $t = |\overline{T}|$ iteration, we bound $ \zeta(\ot_{u_{1}, u_{2}, \dots, u_{s}}) \leq 2 {\delta'}^{W(\ot_{u_{1}, u_{2}, \dots, u_{s}})}$ for any $(u_{1}, u_{2}, \dots, u_{s}) \in B_s$. Therefore, after $t = |\overline{T}|$ iterations, we have
\begin{align}
   \zeta(\overline{T}) &\leq \sum_{s \leq |\overline{T}|-1} \: \sum_{(u_{1}, u_{2}, \dots, u_{s}) \in B_s} 2^{a(u_{1}, u_{2}, \dots, u_{s}) + 1} {\delta'}^{ W(\ot_{u_{1}, u_{2}, \dots, u_{s}}) + \sum_{s' = 1}^s (1 \oplus u_{s'+1} ) W(v_{s'}) }  \nonumber \\
&   \leq  \sum_{s \leq |\overline{T}| -1} 2^{2s + 1} {\delta'}^{W(\overline{T})} \nonumber \\
& \leq  4^{|\overline{T}|}  {\delta'}^{W(\overline{T})},
\end{align}
where the second inequality uses Eq.~\eqref{eq:martingale}, and $a(u_{1}, u_{2}, \dots, u_{s}) \leq s$ and $|B_s| \leq 2^s$, and the last inequality uses a bound on sum of geometric series.

\smallskip Finally, if $\overline{T}$ is a set of leaf nodes, we have $W(\overline{T}) = |\overline{T}|$; therefore,
\begin{equation}
    \zeta(\overline{T}) \leq  (4\delta')^{|\overline{T}|} = (2 \odelta)^{2 |\overline{T}|},
\end{equation}
where the last equality uses $\delta' = \odelta^2$.
\end{proof}

\section{Fault-tolerant quantum state preparation with constant overhead}\label{sec:FT-state-prep}
In this section, we will show how to fault-tolerantly realize any state preparation circuit, i.e, a circuit with no input and quantum output, with constant overhead. In particular, using earlier work on constant overhead fault-tolerant quantum computing~\cite{gottesman2013fault, fawzi2020constant}, we first realize the state preparation circuit in the code space of a QLPDC code, and then use our decoding interface from Theorem~\ref{thm:main-stprep} to get out of the code space. Since fault-tolerant quantum computation is realized in multiple code blocks of a QLDPC code, we apply our decoding interface circuit for multiple code blocks to decode them, ensuring that the overhead in state preparation remains constant (see also Fig.~\ref{fig:ft-state_prep}). 
\begin{figure}[!b]
    \centering
\begin{tikzpicture}[x=1cm,y=1cm, line cap=round, line join=round]

  \def\xL{0}
  \def\xA{1.2}   
  \def\xB{3.2}   
  \def\xC{5.2}   
  \def\xD{7.2}   
  \def\xR{8.6}

  \def\yOne{0.90}
  \def\yTwo{0.50}
  \def\yThree{-0.75}


  \draw (\xA,-1.10) rectangle (\xB,1.10);
  \node at ({(\xA+\xB)/2},0) {$\Phi^r_{\mathrm{FT}}$};

  \draw (\xB,\yOne)   -- (\xC,\yOne);
  \draw (\xB,\yTwo)   -- (\xC,\yTwo);
  \draw (\xB,\yThree) -- (\xC,\yThree);

\foreach \t in {0.2,0.5,0.8} {
  \fill ({(\xB+\xC)/2}, {(1-\t)*(\yTwo-0.12) + \t*(\yThree+0.12)}) circle (0.6pt);
}
\node at ({(\xB+\xC)/2+0.25}, {(\yTwo+\yThree)/2}) {$h$};

\def\xDot{\xD+0.7} 

\foreach \t in {0.2,0.5,0.8} {
  \fill (\xDot, {(1-\t)*\yTwo + \t*\yThree}) circle (0.6pt);
}

\foreach \yy in {\yOne,\yTwo,\yThree} {
  \draw ({(\xB+\xC)/2-0.03},\yy-0.06) -- ({(\xB+\xC)/2+0.09},\yy+0.14);
}
  \node[above] at ({(\xB+\xC)/2+0.2},\yOne) {$r$};
  \node[above] at ({(\xB+\xC)/2+0.2},\yTwo) {$r$};
\node[above] at ({(\xB+\xC)/2+0.2},\yThree) {$r$};

  \draw (\xC,-1.10) rectangle (\xD,1.10);
  \node at ({(\xC+\xD)/2},0) {$\Xi^{[h]}_r$};

  \draw (\xD,\yOne)   -- (\xR,\yOne);
  \draw (\xD,\yTwo)   -- (\xR,\yTwo);
  \draw (\xD,\yThree) -- (\xR,\yThree);

\end{tikzpicture}
\caption{The figure illustrates the fault-tolerant realization of a state preparation circuit $\Phi$ with constant overhead. First $\Phi^r_{\mathrm{FT}}$ realizes $\Phi$ in the code space of $h$ code blocks of $C_r$, then the interface $\Xi^{[h]}_r$ takes them out of the code space.}
\label{fig:ft-state_prep}
\end{figure}
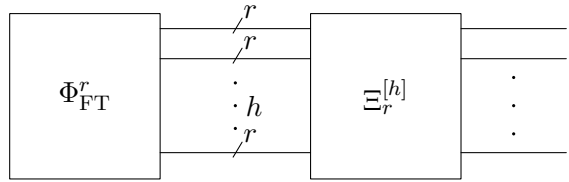

\subsection{Construction and main statement} \label{sec:gottesman-cons}
In Theorem~\ref{thm:gottesman-cons} below, we first state the main result of~\cite{gottesman2013fault, fawzi2020constant}. We note that Theorem~\ref{thm:gottesman-cons} is stated in a slightly different language than in~\cite{gottesman2013fault, fawzi2020constant}. In particular, we state it as a state preparation protocol in the code space of a QLDPC code instead of as computation with classical input and output. However, we emphasize that this is directly implied by the result of in~\cite{gottesman2013fault, fawzi2020constant} since their protocol involves preparing quantum states in the code space of a QLDPC code and then measuring it in the computational basis. 
\begin{theorem} \label{thm:gottesman-cons}
There exists a threshold value $\delta_{th} > 0$, such that the following holds:

\smallskip Consider the QLDPC code family $\{ \cC_r \mid r = 1, 2, \dots \}$ with constant rate $\alpha >0$. Let $\Phi$ be a state preparation circuit, with $x$ qubit output and operating on $O(x)$ qubits and of size $|\Phi| = \mathrm{poly}(x)$. Then, there exists an encoding level $r$ (depending on $x$) and a fault-tolerant state preparation circuit $\Phi^r_{\mathrm{FT}}$, which operates on $O(x)$ qubits, such that for the channel $\tilde{\cT}_{\Phi_{\mathrm{FT}}}$ corresponding to the noisy realization of $\Phi_{\mathrm{FT}}$ under  circuit-level stochastic noise with parameter $\delta < \delta_{th}$, we have
\begin{equation} \label{eq:gottesman-cons}
\dnorm{\tilde{\cT}_{\Phi^r_{\mathrm{FT}}} -  \cN \circ \cE_r^{\otimes \frac{x}{m_r}} \circ \cT_{\Phi}} \leq \epsilon(x),
\end{equation}
where $\cN$ is a channel with weight $\mu n$ with respect to each block of $\cC_r$, and $\epsilon(x) \to 0$ as $x \to \infty$. Moreover, the value of $r$ may be chosen such that\footnote{Here $\Theta(\cdot)$ is the usual big-Theta asymptotic scaling, that is, $f(n) = \Theta (g(n))$ iff $ c_1 g(n) \leq f(n) \leq c_2 g(n)$ for some constants $c_1, c_2 > 0$.} $m_r = \Theta \left( x^{\frac{1}{l}} \right)$, $l > 1$ is an arbitrary fixed integer, i.e., not depending on $r$.
\end{theorem}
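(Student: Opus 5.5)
The plan is to obtain Theorem~\ref{thm:gottesman-cons} as a restatement of the constant-overhead scheme of~\cite{gottesman2013fault, fawzi2020constant}, run without its final logical measurement. First I would sequentialize $\Phi$ into $\Phi_{\mathrm{seq}}$ on $x' = O(x)$ logical qubits, at the cost of a factor $O(x)$ in size and depth, so that $|\Phi_{\mathrm{seq}}| = \mathrm{poly}(x)$. I would then pick $r$ with $m_r = \Theta(x^{1/l})$; Eq.~\eqref{eq:m-r-r-1} makes this possible up to a factor $2$. Padding with ancilla qubits in $\ket{0}$ if needed, I would split the logical qubits into $h = x/m_r$ blocks, each encoded in $\cC_r$.

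The circuit $\Phi^r_{\mathrm{FT}}$ is built layer by layer. Idle blocks receive $\Phi_{\mathrm{EC}^r}$. The $O(1)$ blocks touched by a nontrivial gate receive a gate-teleportation gadget whose resource state is prepared by Theorem~\ref{thm:state-prep} with $k = \Theta(m_r)$. Initial encoding is done the same way: each block starts as a logical $\ket{0}$ prepared by Theorem~\ref{thm:state-prep}, one block at a time or a few in parallel.

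The qubit count then has two parts. The error-correction part uses $O(hm_r) = O(x)$ qubits. The gadget part uses $O(\mathrm{poly}(m_r)) = O(x^{c/l})$ qubits, which is $O(x)$ once $l$ is larger than the degree of that polynomial. Taking $l$ large enough is allowed, since the statement says $l$ may be an arbitrary fixed integer.

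For the error bound, I would follow the "good event" bookkeeping of Lemma~\ref{lem:mult-ec} and Corollary~\ref{cor:succ-int-h}. Each error-correction round on each block keeps the reduced weight below $\mu n_r$ except with probability $\varepsilon(r,\delta) = (c\delta)^{c'n_r}$. Each gadget succeeds, in the sense of Lemma~\ref{lem:succ-int}, except with probability $O(\mathrm{poly}(m_r)(c\delta)^{c'm_r})$.

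A union bound over all $\mathrm{poly}(x)$ locations then gives total failure probability
\[
\epsilon(x) \le \mathrm{poly}(x)\,(c\delta)^{c' \Theta(x^{1/l})},
\]
which tends to $0$ as $x \to \infty$. On the complementary event, I would propagate low-weight errors through transversal and Clifford steps exactly as in the proof of Lemma~\ref{lem:succ-int}. This leaves $\cN \circ \cE_r^{\otimes h} \circ \cT_\Phi$ with $\cN$ of weight $\mu n_r$ per block. The diamond-norm statement follows because the bad-event channel has norm at most $2$.

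The main obstacle is that~\cite{gottesman2013fault, fawzi2020constant} are phrased with classical output, so I must check that their guarantee holds \emph{before} the final measurement. That is, the error must be controlled in weight on every block simultaneously, not merely on the decoded classical outcome.

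To handle this, I would cut their analysis at the last error-correction layer. Their invariant at every time step is precisely that all blocks carry errors of reduced weight below a constant fraction. The one remaining point is the conversion from reduced weight to weight: I would replace each error operator by an equivalent representative of minimal weight modulo stabilizers, which acts identically on the code space.
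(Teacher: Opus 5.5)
Your overall route matches the paper's. The paper gives no separate argument: it asserts that Theorem~\ref{thm:gottesman-cons} follows from \cite{gottesman2013fault, fawzi2020constant} because that protocol prepares the state in the QLDPC code space before the final measurement, which is exactly your ``cut before measuring'' point.

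The gap is in your qubit count. With $k=\Theta(m_r)$, Theorem~\ref{thm:state-prep} gives each resource-state factory $O(n_r\,\mathrm{poly}(m_r))$ qubits. That is $x^{D/l}$ for some fixed degree $D>1$ inherited from the concatenation in Theorem~\ref{thm:state-prep}, and nothing there bounds $D$ by $2$. So the gadget part is $O(x)$ only when $l\ge D$. You dispose of this by choosing $l$ large, but in the statement $l>1$ is \emph{arbitrary}, i.e.\ universally quantified. The claim must hold for, say, $l=2$, where your factories are in general not $O(x)$. Your large-$l$ version does suffice for how the paper uses the theorem, since Remark~\ref{rem:number-of-blocks} takes $l$ large anyway, but it is not the stated result.

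The fix is to stop demanding stretched-exponential reliability from the factories, which the statement does not need. Take $k=\lceil C\log x\rceil$, i.e.\ polylog-overhead ancilla preparation as in \cite{gottesman2013fault}. Each gadget then uses $O(n_r\,\mathrm{polylog}(x))=o(x)$ qubits for every $l>1$. Each fails with probability $O(\mathrm{poly}(m_r)(c\delta)^{C\log x})\le \mathrm{poly}(m_r)\,x^{-C\log(1/(c\delta_{th}))}$. Choose $C$ large compared with the degree of the polynomial bounding the number of gadgets, which counts both the $h$ initial encodings and the $|\Phi_{\mathrm{seq}}|$ gate gadgets. Your union bound then still vanishes, now with $\epsilon(x)$ decaying polynomially, consistent with Remark~\ref{rem:number-of-blocks}.

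A smaller point concerns your reduced-weight-to-weight conversion. Replacing each Pauli component of a Kraus operator by a minimal-weight stabilizer-equivalent gives a completely positive map that agrees with $\cN$ on the code space. It is, however, in general not trace preserving off the code space: a Kraus operator mixing two Paulis with different syndromes can already break completeness after such a replacement. So this step does not literally produce a weight-$\mu n_r$ channel. It is harmless here, because only $\cN\circ\cE_r^{\otimes h}$ is used later, and Theorem~\ref{thm:ft-cons-int-main} only needs the reduced-weight bound you already have from Def.~\ref{def:ft-err-cor}.
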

We note that an analogous version of Theorem~\ref{thm:gottesman-cons} has been proven under a general noise model that includes non-stochastic noise models such as coherent and amplitude damping noise~\cite{christandl2025fault}.

\begin{remark}\label{rem:number-of-blocks}
For the constructions presented in~\cite{gottesman2013fault,fawzi2020constant}, the parameter $\epsilon(x)$ vanishes polynomially with respect to $x$. From Eq.~\eqref{eq:gottesman-cons} and Fig.~\ref{fig:ft-state_prep}, we note that the noisy realization of the fault-tolerant circuit $\Phi^r_{\mathrm{FT}}$ realizes the state preparation circuit $\Phi$ in multiple code blocks of $\cC_r$ (up to a local stochastic channel applied on them). In particular, the number of blocks is given by $h \leq \frac{x}{m_r}$. Using Theorem~\ref{thm:gottesman-cons}, $x = \Theta \left(m_r^{l}\right)$ for $l  > 1$, we have that $h = \Theta\left(m_r^{l-1}\right)$. By choosing the integer $l$ to be large enough, we can ensure that $h \geq p(\cdot)$, as required in Theorem~\ref{thm:main-stprep} for constant overhead. 
\end{remark}
In Theorem~\ref{thm:main-stprep}, we construct a fault-tolerant circuit $\overline{\Phi}$ for any state preparation circuit $\Phi$, such that $\overline{\Phi}$ has the same input and output systems as $\Phi$, and simulates $\Phi$ up to a local stochastic noise on the output.
\begin{theorem} \label{thm:main-stprep}
There exists a threshold value $\delta_{th} > 0$, constants $\kappa_1, \kappa_2 > 0$, such that the following holds:

\smallskip  Consider a state preparation circuit $\Phi$, with $x$ qubit output and operating on $O(x)$ qubits and  having size $|\Phi| = \mathrm{poly}(x)$. Then, there exists a quantum circuit $\overline{\Phi}$, with the same input and output systems as $\Phi$ and working on $O(x)$ qubits, such that its noisy realization $\tilde{\cT}_{\overline{\Phi}}$, under circuit level stochastic noise with parameter $\delta < \delta_{th}$ satisfies the following 
\begin{equation} \label{eq:main-stprep}
 \dnorm{\tilde{\cT}_{\overline{\Phi}} - \cV \circ \cT_{\Phi}} \leq \epsilon(x), 
\end{equation}
where $\cV$ is a local stochastic channel with parameter $(\kappa_1\delta)^{\kappa_2}$. The error parameter\footnote{$\epsilon(x)$ is the same as in Theorem~\ref{thm:gottesman-cons}.} $\epsilon(x) \to 0$ as $x \to \infty$.
\end{theorem}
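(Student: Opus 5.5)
The plan is to define $\overline{\Phi}:=\Xi^{[h]}_r\circ\Phi^r_{\mathrm{FT}}$ as in Fig.~\ref{fig:ft-state_prep}. Here $\Phi^r_{\mathrm{FT}}$ is the circuit of Theorem~\ref{thm:gottesman-cons}, with $r$ chosen so that $m_r=\Theta(x^{1/l})$, and $h=x/m_r$ is the number of blocks. We pad $x$ up to a multiple of $m_r$ if necessary; this costs only a constant factor. The integer $l$ is fixed once and for all, large enough that $h=\Theta(m_r^{l-1})\geq p(m_r)$, where $p$ is the polynomial of Theorem~\ref{thm:ft-cons-int-main} (see Remark~\ref{rem:number-of-blocks}). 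For all sufficiently large $x$, the interface $\Xi^{[h]}_r$ then works on fewer than $(\theta+\theta')m_rh=O(x)$ qubits. Since $\Phi^r_{\mathrm{FT}}$ works on $O(x)$ qubits, so does $\overline{\Phi}$. Finitely many small $x$ can be absorbed into the constant or handled by making $\epsilon(x)$ large there.

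For the error analysis, the noisy circuit factorises as $\tilde{\cT}_{\overline{\Phi}}=\tilde{\cT}_{\Xi^{[h]}_r}\circ\tilde{\cT}_{\Phi^r_{\mathrm{FT}}}$, because the locations of the two subcircuits are disjoint and the noise is independent across locations. We take $\delta_{th}$ to be the minimum of the thresholds of Theorem~\ref{thm:gottesman-cons} and Theorem~\ref{thm:ft-cons-int-main}. The triangle inequality and the contractivity of the diamond norm under the channel $\tilde{\cT}_{\Xi^{[h]}_r}$ then give
\begin{equation*}
\dnorm{\tilde{\cT}_{\overline{\Phi}}-\tilde{\cT}_{\Xi^{[h]}_r}\circ\cN\circ\cE_r^{\otimes h}\circ\cT_\Phi}\leq\epsilon(x).
\end{equation*}
It remains to identify $\tilde{\cT}_{\Xi^{[h]}_r}\circ\cN\circ\cE_r^{\otimes h}$. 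Theorem~\ref{thm:ft-cons-int-main} needs the input channel to have reduced weight at most $\mu n_r$ per block. The channel $\cN$ from Theorem~\ref{thm:gottesman-cons} has weight $\mu n_r$ per block, and weight bounds stabilizer reduced weight. We need the two $\mu$'s to be compatible, so we choose the $\mu$ in Theorem~\ref{thm:gottesman-cons} to be at most the $\mu$ of Theorem~\ref{thm:ft-cons-int-main}. This is possible by lowering the threshold, since the weight guaranteed there is of order $\kappa\delta n_r$ plus a residual fraction that can be made small.

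With this in hand, Theorem~\ref{thm:ft-cons-int-main}, applied with $\cT_\Phi$'s output treated as the input to the encoders and with a trivial reference $R$, gives $\tilde{\cT}_{\Xi^{[h]}_r}\circ\cN\circ\cE_r^{\otimes h}=\cV$. Here $\cV$ is local stochastic with parameter $(\kappa_1\delta)^{\kappa_2}$ on the $m_rh=x$ output qubits. Substituting yields Eq.~\eqref{eq:main-stprep}.

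The main obstacle is this interface-compatibility step. First, the weight fraction $\mu$ produced by the constant-overhead computation scheme must fall within the regime accepted by the interface. Second, $\cE_r^{\otimes h}\circ\cT_\Phi$ must be of the form to which Theorem~\ref{thm:ft-cons-int-main} applies. For the second point, Theorem~\ref{thm:ft-cons-int-main} is a channel identity valid for arbitrary reference systems, so it holds when composed with any input state, in particular $\cT_\Phi$. The remaining checks are the overhead constant and the block-count condition $h\ge p(m_r)$, both of which are ensured by the choice of $l$.
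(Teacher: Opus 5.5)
Your proposal is correct and follows the paper's proof. As in the paper, you take $\overline{\Phi}=\Xi^{[h]}_r\circ\Phi^r_{\mathrm{FT}}$ with $l$ large enough that $h\geq p(m_r)$. You then combine the triangle inequality, the fact that composing with the channel $\tilde{\cT}_{\Xi^{[h]}_r}$ cannot increase diamond-norm distance, and Theorem~\ref{thm:ft-cons-int-main} applied to the output of Theorem~\ref{thm:gottesman-cons}. The extra checks you add (padding $x$, factorising the noisy circuit over disjoint locations, matching the two values of $\mu$) are left implicit in the paper and do not change the argument.
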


\begin{proof}
The quantum circuit $\overline{\Phi}$ is constructed as follows: we first realize the quantum circuit $\Phi$ in the code space of $h = \frac{x}{m_r}$ copies of $\cC_r$ using Theorem~\ref{thm:gottesman-cons}, and then apply the interface circuit $\Xi^{[h]}_{r}$ from Theorem~\ref{thm:ft-cons-int-main}. Therefore,
\begin{equation}
\overline{\Phi} :=  \Xi^{[h]}_{r} \circ \: \Phi^r_{\mathrm{FT}}
\end{equation}
From Theorem~\ref{thm:gottesman-cons}, $\Phi^r_{\mathrm{FT}}$ operates on $O(x')$ qubits. As discussed in Remark~\ref{rem:number-of-blocks}, we can chose $r$ in a way that $h \geq p(\cdot)$ as required for constant overhead in Theorem~\ref{thm:ft-cons-int-main}; therefore, $\Xi^{[h]}_{r}$ operates on $O(m_r h) = O(x)$ qubits. Hence, $\overline{\Phi}$ operates on $O(x)$ qubits.

\smallskip We will now show that Eq.~\eqref{eq:main-stprep} holds. From Theorem.~\eqref{thm:gottesman-cons}, we have
\begin{equation} \label{eq:phi-r-FT}
\dnorm{\tilde{\cT}_{\Phi^r_{\mathrm{FT}}} -  \cN \circ \cE_r^{\otimes h} \circ \cT_{\Phi}} \leq \epsilon(x),
\end{equation}

\smallskip For $\Xi^{[h]}_{r}$ from Theorem~\ref{thm:ft-cons-int-main}, we have Eq.~\eqref{eq:interface-xi-r},  
\begin{equation} \label{eq:xi-r}
    \tilde{\cT}_{\Xi^{[h]}_r}  \circ \cN \circ \cE_r^{\otimes h}  =  \cV ,  
\end{equation}
where $\cV$ is a local stochastic channel with parameter $(\kappa_1\delta)^{\kappa_2}$ for some constants $\kappa_1, \kappa_2 > 0$. 

\smallskip Finally, we have 
\begin{align}
    \dnorm{\tilde{\cT}_{\overline{\Phi}} - \cV \circ \cT_{\Phi}} &= \dnorm{\tilde{\cT}_{\Xi^{[h]}_r} \circ \tilde{\cT}_{\Phi^r_{\mathrm{FT}}} - \cV \circ \cT_{\Phi}} \nonumber\\
    & = \dnorm{\tilde{\cT}_{\Xi^{[h]}_r} \circ \tilde{\cT}_{\Phi^r_{\mathrm{FT}}} -\tilde{\cT}_{\Xi^{[h]}_r} \circ \cN \circ \cE_r^{\otimes h} \circ \cT_\Phi  + \tilde{\cT}_{\Xi_r}  \circ \cN \circ \cE_r^{\otimes h} \circ \cT_\Phi -  \cV \circ \cT_{\Phi}} \nonumber \\
    & \leq \dnorm{\tilde{\cT}_{\Xi^{[h]}_r}} \dnorm{\tilde{\cT}_{\Phi^r_{\mathrm{FT}}} -  \cN \circ \cE_r^{\otimes h} \circ \cT_{\Phi}} \nonumber \\
    & \leq \epsilon(x) ,
\end{align}
where in the first inequality uses Eq.~\eqref{eq:xi-r} and the sub-multiplicativity of the diamond norm: $\dnorm{\cW_1 \circ \cW_2} \leq \dnorm{\cW_1} \dnorm{\cW_2}$. For the inequality in the last line, we have used Eq.~\eqref{eq:phi-r-FT}, and the fact that $\dnorm{\cW}= 1$ for any quantum channel $\cW$.
\end{proof}

\subsection{Applications} \label{sec:apps} 
We below remark on applications of state preparation with constant overhead to fault-tolerant quantum computation and communication.

\paragraph{Fault-tolerant quantum computation:} The state preparation method from Theorem~\ref{thm:state-prep} has the following two applications in fault-tolerant quantum computation: 

\begin{enumerate}
\item[(i)] Gate teleportation is a widely used scheme in fault-tolerant quantum computation to realize universal logic gates in the code space of error correcting codes~\cite{gottesman1999demonstrating, zhou2000methodology, nielsen2003quantum,  raussendorf2001one, bravyi2005universal, gottesman2013fault}. However, gate teleportation requires preparation of suitable resource states. Usually this state preparation is costly and incurs a significant overhead that grows as the target infidelity decreases. Since for larger computation, we will need a state with smaller infidelity, the overhead grows with the size of computation. 

\smallskip For example, magic state distillation is a standard method of preparing high fidelity eigenstates of non-Clifford gates in code space of an error correcting code~\cite{bravyi2005universal}. It takes many copies of a noisy quantum state and distills a fewer number of quantum states with high fidelity. Magic state distillation along with Clifford gate gives a universal set of logic gates. However, magic distillation traditionally incurs a qubit overhead that grows polylogarithmically with respect to the inverse of target infidelity. It was a long standing question whether magic state distillation with constant overhead is possible~\cite{bravyi2012magic, haah2018codes, meier2012magic, campbell2012magic, jones2013multilevel, hastings2018distillation, krishna2018towards}. This has been recently resolved affirmatively in~\cite{wills2025constant}. 

\smallskip Our state preparation method from Theorem~\ref{thm:main-stprep} can be used for preparing resource states for gate teleportation  with low overhead; providing an alternative to magic state distillation. 

\item[(ii)] Our state-preparation method yields a direct route to achieve fault-tolerant quantum computation with constant overhead. Concretely, we can encode all logical qubits throughout the computation into a single QLDPC code block with constant rate. Logical gates are implemented using our state-preparation method together with gate teleportation. This approach is more compact than the protocol of Ref.~\cite{gottesman2013fault}, where the logical qubits are stored across many QLDPC code blocks and each block must be tracked and addressed during the computation.

We note that our state-preparation procedure itself encodes logical qubits into multiple QLDPC blocks. However, because the state-preparation circuits are fixed in advance, this multi-block structure does not need to be adapted according to the target computation that we wish to run. The target computation can be executed on a single large QLDPC block by simply implementing the original logical circuit on that block.
\end{enumerate}

\paragraph{Fault-tolerant quantum communication:}
The theory of fault-tolerant quantum communication is concerned with channel coding for noisy channels, while assuming circuit-level noise in the encoder and decoder circuits of the channel code. The fault-tolerant quantum communication was initiated in Ref.~\cite{CChMH-FT2022}, establishing that it is possible to communicate with non-zero communication rates, under circuit-level noise. Moreover, the commutation rate approaches the channel capacity as the circuit-level noise rate $\delta$ approaches to zero. The fault-tolerant communication was later extended to entanglement assisted scenarios in Ref.~\cite{belzig2023fault}, and explicit constructions of fault-tolerant communication codes was provided under a general non-Pauli noise in Ref.~\cite{christandl2024fault}.

\smallskip In these earlier works, a model of fault-tolerant communication is used, where the encoder of the channel code corresponds to preparing a quantum state, which is then transferred using copies of a noisy quantum channel. Albeit this state preparation is realized fault-tolerantly, its qubit overhead grows with respect to the size of its output system, i.e., the number of channel uses. Using our state preparation method from Theorem~\ref{thm:main-stprep}, we can make the qubit overhead constant in the state preparation circuit; therefore, significantly reducing the overhead in the encoder.

\smallskip We note that to make the overall overhead constant in fault-tolerant quantum communication, we need to realize the decoder circuit of the channel code with constant overhead too. To do this, we need an encoding interface with constant overhead that fault-tolerantly encodes an arbitrary quantum state in an arbitrary level $r$ of the QLDPC code. We believe that this can be achieved by running our decoding interface procedure in reverse, i.e., first encoding in some fixed level $\ovr$, and then increasing it one by one to reach $r$. A full analysis of the encoding interface is left for future works.

\section*{Acknowledgements}
MC and AG acknowledge financial support from Villum Fonden (Grant 10059 'QMATH') and the Novo Nordisk Foundation (Grant NNF20OC0059939 ‘Quantum for Life’).  
OF acknowledges financial support from the European Research Council (ERC Grant, Agreement No.~851716) and from a government grant managed by the Agence Nationale de la Recherche under the Plan France 2030 with the reference ANR-22-PETQ-0006. 
\printbibliography
\end{document}